% ****** Start of file apssamp.tex ******
%
%   This file is part of the APS files in the REVTeX 4.2 distribution.
%   Version 4.2a of REVTeX, December 2014
%
%   Copyright (c) 2014 The American Physical Society.
%
%   See the REVTeX 4 README file for restrictions and more information.
%
% TeX'ing this file requires that you have AMS-LaTeX 2.0 installed
% as well as the rest of the prerequisites for REVTeX 4.2
%
% See the REVTeX 4 README file
% It also requires running BibTeX. The commands are as follows:
%
%  1)  latex apssamp.tex
%  2)  bibtex apssamp
%  3)  latex apssamp.tex
%  4)  latex apssamp.tex
%
% \documentclass[%
%  reprint,
%  superscriptaddress,
%  twocolumn,
% %groupedaddress,
% %unsortedaddress,
% %runinaddress,
% %frontmatterverbose, 
% %preprint,
% %preprintnumbers,
% %nofootinbib,
% %nobibnotes,
% %bibnotes,
%  %amsmath,
%  %amssymb,
%  aps, pra,
% %prb,
% %rmp,
% %prstab,
% %prstper,
% floatfix,
% ]{revtex4-2}
\documentclass[aps,pra,twocolumn,nofootinbib,floatfix,longbibliography,superscriptaddress]{revtex4-2}

\usepackage{bm}% bold math
%\usepackage{amsmath} % for equation* environment
%\usepackage{hyperref}% add hypertext capabilities
%\usepackage[mathlines]{lineno}% Enable numbering of text and display math
%\linenumbers\relax % Commence numbering lines

%\usepackage[showframe,%Uncomment any one of the following lines to test 
%%scale=0.7, marginratio={1:1, 2:3}, ignoreall,% default settings
%%text={7in,10in},centering,
%%margin=1.5in,
%%total={6.5in,8.75in}, top=1.2in, left=0.9in, includefoot,
%%height=10in,a5paper,hmargin={3cm,0.8in},
%]{geometry}

\usepackage[utf8]{inputenc}
\usepackage[english]{babel}
\babeladjust{autoload.bcp47 = on}
\usepackage[T1]{fontenc}
\usepackage{listings}
\usepackage{hyperref}
\usepackage{thm-restate}
\usepackage{thmtools}

\usepackage{amssymb}
\usepackage{amsthm}
\usepackage{bbm}
\usepackage{qcircuit}
\usepackage{adjustbox}

%solving the WTF error?
\expandafter\let\csname equation*\endcsname\relax

\expandafter\let\csname endequation*\endcsname\relax

%\newcommand{\newblock}{}
%solving the WTF error?

\newtheorem{defi}{Definition}
\newtheorem{thm}{Theorem}
\newtheorem{lem}{Lemma}

\newtheorem{obs}{Observation}
\newtheorem{cor}{Corollary}
\newtheorem{conj}{Conjecture}

\usepackage{natbib}
\usepackage{physics}
\usepackage{xcolor}
\usepackage{graphicx}
\usepackage{wrapfig}
\usepackage{qcircuit}

\usepackage{tikz}
\usepackage{tikz-cd}
\usetikzlibrary{positioning}
\tikzset{
  symbol/.style={
    draw=none,
    every to/.append style={
      edge node={node [sloped, allow upside down, auto=false]{$#1$}}}
  }
}

\def\fC{\mathbb{C}}

\def\ra{\rangle}
\def\la{\langle}
\def\id{\mathbb{I}}
\def\Tr{\text{Tr}}
\newcommand{\bigzero}{\mbox{\normalfont\Large\bfseries 0}}
\newcommand{\rvline}{\hspace*{-\arraycolsep}\vline\hspace*{-\arraycolsep}}

\newcommand{\rd}[1]{{#1}}
\newcommand{\bl}[1]{{#1}}

%START THEOREMS/DEFINITIONS
\theoremstyle{plain}
% \newtheorem{obs}{Observation}
% \newtheorem{thm}{Theorem}
% \newtheorem{conj}{Conjecture}
% \newtheorem{lem}[thm]{Lemma}
% \newtheorem{cor}[thm]{Corollary}
% \newtheorem{defn}{Definition}
%END THEOREMS/DEFINITIONS

\begin{document}

\preprint{}

\title[Local Thermal Operations and Classical Communication]{Local Thermal Operations and Classical Communication}

\author{Rafał Bistroń}
\email{rafal.bistron@doctoral.uj.edu.pl}
\affiliation{Doctoral School of Exact and Natural Sciences, Jagiellonian University, ul. Łojasiewicza 11, 30-348 Kraków, Poland}
\affiliation{Faculty of Physics, Astronomy and Applied Computer Science, Jagiellonian University, ul. Łojasiewicza 11, 30-348 Kraków, Poland}
\author{Jakub Czartowski}
\affiliation{Faculty of Physics, Astronomy and Applied Computer Science, Jagiellonian University, ul. Łojasiewicza 11, 30-348 Kraków, Poland}
\affiliation{School of Physical and Mathematical Sciences, Nanyang Technological University, 21 Nanyang Link, 637361 Singapore, Republic of Singapore}
	
\date{\today}% It is always \today, today,
             %  but any date may be explicitly specified

\begin{abstract}
    In quantum thermodynamics, understanding the interplay between locality, thermal constraints, and communication remains an open challenge. In this manuscript, we introduce Local Thermal Operations and Classical Communication (LTOCC), a novel operational framework that unifies the distant laboratories paradigm with thermodynamic restrictions, defining the fundamental limits on transformations between spatially separated systems. We establish a hierarchy of LTOCC protocols, demonstrating inclusion relations between different levels and revealing their deep connection to semilocal thermal operations. To formalize this framework, we develop thermal tensors and bithermal tensors, extending stochastic and tristochastic tensors to thermodynamic settings and providing new mathematical tools for constrained quantum processes. Finally, we present limitations imposed by LTOCC on single- and multi-copy CHSH scenario, demonstrating no violation in former and a gap between thermal and athermal local operations in the latter with respect to their capability to detect entanglement.
    % Furthermore, we explore the role of LTOCC in approximating logic gates such as CNOT and SWAP and investigate its potential to generate correlations between distant systems.
\end{abstract}

\keywords{Suggested keywords}%Use showkeys class option if keyword
                              %display desired
\maketitle

%\tableofcontents

\section{Introduction}

 % \jcz{Dodatki na \bl{niebiesko}, \textbf{nie zmieniać koloru przed obustronną akceptacją!}}
		
	Amongst the pillars on which modern physics stands, quantum mechanics and thermodynamics present themselves as two of the most robust ones. As Sir Arthur Eddington has famously stated, \textit{``(...) if your theory is found to be against the second law of thermodynamics, I can give you no hope; there is nothing for it but to collapse in deepest humiliation~\cite{eddington1929nature}.''} Similarly, despite significant controversies regarding its ontological status, quantum mechanics has withstood all experimental inquiry it has been subject to over the century of its progress.
	
    Recent years have seen a growth of the efforts focused on joining these two fields in what is called quantum thermodynamics~\cite{Big_book1, Big_book2, vinjanampathy2016quantum}, with results applicable to the problems of quantum battery charging~\cite{Alicki2013entanglement, Hovhannisyan2013entanglement, Binder2015, campaioli2017enhancing, Mitchison2021chargingquantum, QUACH2023quantum, campaioli2024colloquium, Morrone2023, razzoli2024cyclic}, cooling~\cite{Boykin2002algorithmic, RodriguezBriones2017, Baugh2005, Brassard2014, Elias2011semioptimal, zeng2021universal, laflamme2022algorithmic}, autonomous quantum machines~\cite{linden2010howSmall,  brunner2012virtual, lipka2024thermodynamic, guzman2023useful} and more \cite{campbell2025roadmap}. Most important in the current context, however, are efforts to provide ultimate bounds on transformations of low-dimensional systems in contact with a heat bath that would respect the conservation of thermal equilibrium. This approach has resulted in the introduction of the resource-theoretic framework of thermal operations, with a wide body of related works already present within the literature~\cite{Janzing2000, horodecki2013, horodecki2013fundamental, brandao2015secondlaws, Lostaglio_2018, deoliveira2022geometric, czartowski2023thermalrecall, czartowski2024catalyse}, extended by a concept of semilocal thermal operations, where local thermal baths in different temperatures interact only via local proxy systems, but interactions between the said systems can be arbitrarily nonlocal, classical or quantum, within the thermodynamical limits \cite{Bera2021, Bera2022}. 
    % \jcz{Make sure semilocal are referenced.}
	
	In parallel, the discovery of quantum entanglement and the related notion of quantum
    nonlocality lead to the realization of the fundamental difference between classical and quantum distributed systems, with diverse applications, such as quantum teleportation~\cite{PhysRevLett.70.1895}, quantum key distribution~\cite{PhysRevLett.67.661}, quantum secret sharing~\cite{PhysRevA.59.1829} and quantum network nonlocality~\cite{Cavalcanti2011}, to name but a few. Early on it has also been realised that local operations and classical communication (LOCC) constitute a class of transformations that, when applied on a nonlocal state, do not create entanglement. 
    % Thus, a resource theory of entanglement based on LOCC as the set of free operations has been developed and studied in depth~\cite{Chitambar2014}, leading to results on entanglement distillation~\cite{rozpedek2018optimizing}, conversion~\cite{Nielsen1999} and others~\cite{H4entrev}.

    In this work, following Sir Eddington's advice, we propose a framework for operations limited both by locality and communication restricted to classical means, as well as incorporating thermodynamic laws locally for all involved parties, thus adhering to the basic requirements of thermodynamics. For simplicity, we will refer to such operations as \textit{local thermal operations and classical communication} (LTOCC). In Section~\ref{sec:prelim} we recall all the notions necessary to introduce our framework, including: noisy and thermal operations and cooling maps, local operations and classical communication, and concepts connected to stochastic tensors and multistochasticity. Furthermore, in Section~\ref{sec:SLTO} we discuss the concept of semilocal thermal operations (SLTO) and present new results, which are required to establish a connection between SLTO and LTOCC. In Section~\ref{sec:framework} we introduce the proper framework of LTOCC together with a hierarchy of protocols requiring different amounts of communication and control, demonstrating their relationship with semilocal thermal operations. In Section~\ref{sec:BTTensors} we study the basic building blocks of the framework -- thermal tensors -- with a special focus on the nontrivial, symmetric subset of bithermal tensors, which are thermal counterparts of tristochastic tensors.
    % Finally, in Section~\ref{sec:app_ex} we present thermal counterparts of standard logic operations which fall within the introduced framework and discuss the correlating abilities of one-round LTOCC and symmetric LTOCC.
    Section \ref{sec:nonloc_corr} is devoted to study of nonlocal correlations under LTOCC. More specifically, in Section~\ref{sec:sltocc_sepstates} we discuss correlation generation under LTOCC from energy-incoherent product states, highlighting significant correlating power of LTOCC with memory. Finally, in Section \ref{sec:bell_LTOCC} we consider CHSH scenario under LTOCC and demonstrate gap between full and thermally restricted LOCC, thus showing possiblity of detection of athermality usage in the context of Bell nonlocality.
    Section \ref{Sec:out} serves as a brief summary of the proposed framework and sets up a list of open problems for future investigation. 
    
    Appendix~\ref{App:cold_original} and Appendix~\ref{App:cold} present a discussion of cooling maps and their alternative derivation, Appendix~\ref{App:Geo_evol} provides a detailed discussion of one- and two-round LTOCC protocols with and without memory, and Appendix~\ref{App:semilocal} presents new, auxiliary results on semilocal thermal operations and proof of inclusion relation between LTOCC and SLTO. Appendix~\ref{App:coch_evol} presents additional auxiliary results on coherence evolution under SLTO and LTOCC protocols. Finally Appendix~\ref{app:subsec_logic_gates} discusses realisability of exemplary classical gates, CNOT and SWAP, within the framework of LTOCC.

 \begin{figure}[h]
     \centering
     \includegraphics[width=\linewidth]{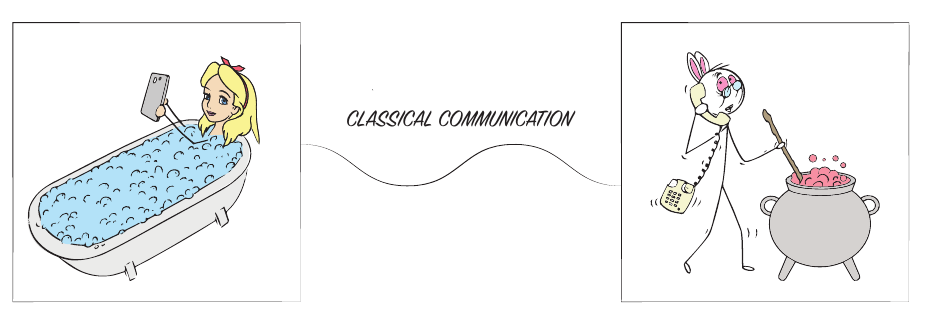}
     \caption{Artist's depiction of Alice and Bob using classical communication, in contact with local thermal baths, thus realising the backbone of LTOCC, Local Thermal Operations and Classical Communication \textit{(Courtesy of A. de Oliveira Junior)}.}
     \label{fig:Famework_sheme}
 \end{figure}

    \subsection*{Summary of original results}

        Due to the extent of this manuscript, we provide a concise summary of the novel results provided:% collected in Sections \ref{sec:framework}, \ref{sec:BTTensors} and \ref{sec:app_ex}: \raf{Wyiki z SLTo sa w sekcji 2}
        \begin{enumerate}
            \item Proofs of convexity and closedness under composition for semilocal thermal operations (SLTO) (Theorems \ref{SLTO_closed} and \ref{SLTO_convex}) and additional auxiliary results on coherence evolution under SLTO (Appendix~\ref{App:coch_evol}),
            % \item Demonstration of realisability of multiround LTOCC under SLTO paradigm (\ref{App:semilocal})\raf{to sie powtarza de facto z (iv)}
            \item Introduction of a new paradigm of local thermal operations with classical communication (LTOCC) and corresponding communication complexity hierarchy, together with proofs of inclusion between different strata of hierarchy (Section \ref{sec:hierarchy}),
            \item Proof of inclusion of LTOCC with shared randomness in the SLTO in a general quantum scenario for any number of rounds (Theorem \ref{thm:subset_of_SLTO}) and conjecture concerning equality of the sets in the infinite-round limit for energy-incoherent states (Conjecture \ref{conj:subset_of_SLTO}),
            \item Detailed study of basic building blocks of the framework -- thermal and bithermal tensors -- with particular attention to the latter as analogues of tristochastic tensors, with preliminary results on the structure of thermal analogue of tristochastic Birkhoff polytope (Section \ref{sec:BTTensors}),
            
            \item Degree of correlations that can be generated under LTOCC with and without memory  (Section \ref{sec:sltocc_sepstates})
            \item Limitations on Bell nonlocality with LTOCC-restricted operations, including no CHSH inequality breaking in single copy scenario and appearance of gap between LTOCC and Tsirelson bound in multicopy setting (Section \ref{sec:bell_LTOCC})
            
            % \bl{
            % \item Achievable breaking of CHSH inequality under protocols restricted to LTO and 1-round LTOCC. \jcz{Refka do sekcji}
            % }
            %\item Case studies of the use of no-memory LTOCC to approximate CNOT and SWAP logic gates (Section \ref{subsec_logic_gates}) and the possibility of creating significant correlations between the involved systems when using memory-extended LTOCC (Section \ref{sec:sltocc_sepstates})
        \end{enumerate}
        Additionally, in Appendix~\ref{App:Geo_evol} we present a detailed consideration of 1- and 2-round LTOCC together with placing them in the context of sets of free states for composite resource theories as defined in \cite{son2024robust}, and in Appendix~\ref{app:subsec_logic_gates} we present the limitations of approximating classical logic gates as CNOT and SWAP by LTOCC.
	
    \section{Preliminaries}\label{sec:prelim}

    \subsection{Noisy operations, thermal operations and cooling maps}
	
	Let us begin by introducing the essential frameworks that describe the evolution of finite-dimensional quantum states under thermodynamical constraints resulting from contact with a thermal bath, characterised by inverse temperature $\beta = (k_B T)^{-1}$ which will also govern local evolution of our subsystem.
    The primary system of dimension $d$, prepared in an initial state $\rho$ is characterised by its Hamiltonian $H = \sum_{i} E_i\op{E_i}$, where we assume the energy levels $E_i$ to be non-degenerate. 
	The total system-bath initial state is assumed to be uncorrelated, $\rho\otimes\gamma_E$, with the bath characterised by the Hamiltonian $H_E$ starting in the Gibbs state defined as
	\begin{equation}\label{eq:gibbs_state}
		\gamma_E = \frac{e^{-\beta H_E}}{\Tr(e^{-\beta H_E})}.
	\end{equation}
    Hereafter, subscript $E$ denotes the thermal environment. 
	If not stated otherwise, in what follows we will be restricting our considerations of states under thermodynamic evolution to the energy-incoherent states, ie. such that they commute with the Hamiltonian of the primary system, $\comm{\rho}{H} = 0$. With this restriction, there is a one-to-one correspondence between states $\rho = \sum_i p_i\op{E_i}$ and their population vectors $\vb{p} = \text{diag}_H(\rho)$. Furthermore, we define the thermal equilibrium state of the system ${\gamma}$ by Eq.~\eqref{eq:gibbs_state} with $H_E$ replaced with $H$.
	
	Thermal operations framework (TOs)~\cite{Janzing2000, Brando2015, Lostaglio2019}, with its infinite temperature version, $\beta = 0$, referred to as noisy operations (NOs), works with a minimal assumption that system and bath undergo joint unitary evolution $U$ after which the bath is discarded, thus defining the channel
	\begin{equation}
		\mathcal{E}(\rho) = \Tr_E\qty[U\qty(\rho\otimes\gamma_E)U^\dagger].
	\end{equation}
	Moreover, to ensure total energy conservation,
 it is assumed that the unitary operation $U$ commutes with the joint Hamiltonian,
	\begin{equation}
		\label{E_prev}
		\comm{U}{H\otimes\mathbbm{1}_E+\mathbbm{1}\otimes H_E} = 0,
	\end{equation}
	so it preserves the energy of the joint system.
	
	In order to provide proper background to the main topic of this manuscript we need to consider the conditions for the convertibility of states. In the case of finite temperatures, given a pair of states $\rho$ and $\sigma$, the transition is possible if and only if there exists a thermal operation $\mathcal{E}$ such that $\mathcal{E}(\rho) = \sigma$ and $\mathcal{E}(\gamma) = \gamma$.
	% \begin{align}
		%     \mathcal{E}(\rho) = \sigma, &&
		%     \mathcal{E}(\gamma) = \gamma.
		% \end{align}
	However, as we limit ourselves to energy-incoherent states, $\comm{\rho}{H} = \comm{\sigma}{H} = 0$, the condition reduces to existence of a stochastic matrix $\Lambda$ acting on the respective populations $\vb{p} = \operatorname{diag}_H(\rho),\,\vb{q} = \operatorname{diag}_H(\sigma)$ such that~\cite{horodecki2013}
	\begin{align}
		\Lambda\vb{p} = \vb{q},&&
		\Lambda\boldsymbol{\gamma} = \boldsymbol{\gamma}.
	\end{align}
    In the following paragraphs, we discuss the resulting restrictions on possible transitions between energy-incoherent states, $\vb{p}\rightarrow\vb{q}$. Note that the following considerations apply as necessary conditions also whenever the original state $\rho$ is coherent and $\operatorname{diag}(\rho) = \vb{p}$~\cite{horodecki2013fundamental}.
	
	In the particular case of infinite temperature (noisy operations), $\beta = 0$, the Gibbs state reduces to the maximally mixed state, with constant populations $\eta_i = \frac{1}{d}$, and the Gibbs-preserving condition $\Lambda \boldsymbol{\eta} = \boldsymbol{\eta}$ defines the set of noisy operations acting on the energy-incoherent states equivalent to the set of bistochastic matrices. %otherwise known as the Birkhoff polytope. 
    The state convertibility is then governed by \textit{majorization relation}.
	
	\begin{defi}[Majorisation]\label{def_Majorisation} 
		Given two $d$-dimensional probability distributions $\vb{p}$ and $\vb{q}$, we say that $\vb{p}$ \emph{majorises} $\vb{q}$, and denote it by $\vb{p} \succ \vb{q}$, if and only if the following condition holds:
		\begin{equation}
			\label{eq_majorisation}
			\sum_{i=1}^k p_i^{\downarrow}\geq\sum_{i=1}^k q_i^{\downarrow} \quad \text{for all} \quad  k\in\{1\dots d\},
		\end{equation}
		where $\vb{p}^{\downarrow}$ denotes the vector $\vb{p}$ rearranged in a non-increasing order. 
	\end{defi}
	
	\begin{thm}[Theorem~II.1.10 of Ref.~\cite{bhatia1996matrix}]\label{thm_HLP}
		There exists a bistochastic matrix $\Lambda$, $\Lambda \boldsymbol{\eta}=\boldsymbol{\eta}$, mapping $\vb{p}$ to $\vb{q}$ if and only if $\vb{p} \succ \vb{q}$.
	\end{thm}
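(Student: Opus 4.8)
The plan is to prove the two implications of Theorem~\ref{thm_HLP} separately, and I would begin with the ``only if'' direction since it is the more mechanical one. Suppose $\vb{q} = \Lambda\vb{p}$ with $\Lambda$ bistochastic (note that, for a column-stochastic $\Lambda$, the constraint $\Lambda\boldsymbol{\eta} = \boldsymbol{\eta}$ with $\eta_i = 1/d$ is equivalent to all row sums also being $1$, i.e.\ to bistochasticity). By the Birkhoff--von Neumann theorem I would write $\Lambda = \sum_\alpha \lambda_\alpha P_\alpha$ as a convex combination of permutation matrices, so that $\vb{q}$ is a convex mixture of permutations of $\vb{p}$. For each $k\in\{1,\dots,d\}$ I would introduce the functional $\phi_k(\vb{x}) = \sum_{i=1}^{k} x_i^{\downarrow} = \max_{|S| = k}\sum_{i\in S} x_i$, which is convex (a pointwise maximum of linear functionals) and invariant under permutations of its argument. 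Jensen's inequality then yields $\phi_k(\vb{q}) = \phi_k\big(\sum_\alpha \lambda_\alpha P_\alpha\vb{p}\big) \le \sum_\alpha \lambda_\alpha \phi_k(P_\alpha\vb{p}) = \phi_k(\vb{p})$, which is exactly Eq.~\eqref{eq_majorisation}; the matching total sums $\sum_i q_i = \sum_i p_i$ are automatic since $\Lambda$ is column-stochastic. Hence $\vb{p}\succ\vb{q}$.

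For the converse I would give an explicit construction of $\Lambda$ as a finite product of elementary bistochastic matrices. First I would reduce to the sorted case $\vb{p} = \vb{p}^{\downarrow}$, $\vb{q} = \vb{q}^{\downarrow}$ by conjugating with the relevant permutation matrices, which are bistochastic and whose products remain bistochastic. The key tool is the T-transform (Robin Hood / Pigou--Dalton transfer): for indices $i<j$ and $t\in[0,1]$ the matrix $T = (1-t)\mathbbm{1} + t\,P_{ij}$ is bistochastic and moves probability mass from the larger of two coordinates toward the smaller. I would then prove, by induction on the number of coordinates in which $\vb{p}$ and $\vb{q}$ differ, that $\vb{q}$ is reachable from $\vb{p}$ in at most $d-1$ such T-transforms: pick the largest index $j$ with $p_j < q_j$ and the smallest index $i<j$ with $p_i > q_i$ (both exist when $\vb{p}\neq\vb{q}$, by the majorization inequalities together with equal total mass), and choose $t$ so that the transformed vector agrees with $\vb{q}$ in at least one additional coordinate while still majorising $\vb{q}$. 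Composing these T-transforms with the initial and final sorting permutations produces the desired bistochastic $\Lambda$ with $\Lambda\vb{p} = \vb{q}$, and, being bistochastic, it fixes $\boldsymbol{\eta}$.

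The hard part will be the inductive step in the converse: one must verify that $t$ can be calibrated so that the single T-transform on $(i,j)$ simultaneously (i) eliminates at least one of the coordinate discrepancies $p_\ell - q_\ell$ and (ii) keeps the intermediate vector majorising $\vb{q}$, so that the induction hypothesis applies. This needs a short case analysis according to whether $p_i - q_i$ is smaller or larger than $q_j - p_j$, together with a check that the partial-sum inequalities of Definition~\ref{def_Majorisation} are preserved; all the remaining ingredients (Birkhoff decomposition, convexity of $\phi_k$, closure of bistochastic matrices under products) are standard. Since the statement is quoted from Ref.~\cite{bhatia1996matrix}, the full details can be left to that reference.
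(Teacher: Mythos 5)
The paper does not prove this statement at all---it is quoted verbatim from Bhatia (Theorem~II.1.10) as a standard ingredient---so the comparison is really with the textbook proof you are reconstructing, and your outline is correct. Your ``only if'' direction is sound: the observation that $\Lambda\boldsymbol{\eta}=\boldsymbol{\eta}$ for a column-stochastic $\Lambda$ forces unit row sums, the Birkhoff--von Neumann decomposition, and the convexity and permutation invariance of $\phi_k(\vb{x})=\max_{|S|=k}\sum_{i\in S}x_i$ give exactly Eq.~\eqref{eq_majorisation}. Note, though, that this is slightly heavier machinery than necessary (and than Bhatia uses, since in that reference Birkhoff's theorem comes only later): one can argue directly that $\sum_{i=1}^k q_i^{\downarrow}=\sum_{i\in S}(\Lambda\vb{p})_i=\sum_j c_j p_j$ with $c_j=\sum_{i\in S}\Lambda_{ij}\in[0,1]$ and $\sum_j c_j=k$, which is bounded by the $k$ largest entries of $\vb{p}$. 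Your converse is the standard Hardy--Littlewood--P\'olya/T-transform induction, and your index choice works: if $j$ is the largest index with $p_j<q_j$ and no $i<j$ had $p_i>q_i$, the $j$-th partial sum of $\vb{p}$ would be strictly below that of $\vb{q}$, contradicting $\vb{p}\succ\vb{q}$; transferring $\delta=\min(p_i-q_i,\,q_j-p_j)$ then kills at least one discrepancy while preserving majorisation. You correctly flag that this preservation check (the short case analysis on whether $p_i-q_i$ or $q_j-p_j$ is smaller, plus verifying the partial-sum inequalities) is the only nontrivial remaining step; it is exactly what the cited reference carries out, so deferring it there is legitimate.
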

	
	% \jcz{Możemy to potem podzielić na jakieś rozdzialiki, zobaczymy potem}
	
	The finite temperature case of $\beta > 0$ has been considered in detail in~\cite{horodecki2013fundamental}, and here we present only the resulting relation of thermomajorisation. 
    % In order to do so, we first begin by defining slope vector $\vb{r}(\vb{p})$ with entries $r_i(\vb{p}) = p_i/\gamma_i$. Then, the $\beta$-ordering of a state.
    together with the corresponding notion of $\beta$-ordering.
	\begin{defi}[$\beta$-ordering]
		Consider a $d$-dimensional population vector $\vb{p}$ and a corresponding Gibbs state $\boldsymbol{\gamma}$. A permutation $\pi_{\vb{p}} \in\mathcal{S}$ is called a $\beta$-order of the state $\vb{p}$ if it orders the ratios between populations of $\vb{p}$ and $\boldsymbol{\gamma}$ in a non-increasing order,
		\begin{equation}
			% i < j \Longrightarrow r_{\pi_{\vb{p}}^{-1}(i)}(\vb{p}) \bl{ = \frac{p_{\pi_{\vb{p}}^{-1}(i)}}{\gamma_{\pi_{\vb{p}}^{-1}(i)}}} \geq  \bl{\frac{p_{\pi_{\vb{p}}^{-1}(j)}}{\gamma_{\pi_{\vb{p}}^{-1}(j)}} = } r_{\pi_{\vb{p}}^{-1}(j)}(\vb{p}).
			i < j \Longrightarrow \frac{p_{\pi_{\vb{p}}(i)}}{\gamma_{\pi_{\vb{p}}(i)}} \geq  \frac{p_{\pi_{\vb{p}}(j)}}{\gamma_{\pi_{\vb{p}}(j)}} 
            .
		\end{equation}
		Furthermore, we will call the reordered version $p^\beta_i = p_{\pi_{\vb{p}}}(i)$ as the $\beta$-ordered state.
	\end{defi}
	
	A thermomajorisation curve \mbox{$f^{\, \beta}_{\vb{p}}:\left[0,1\right]\rightarrow\left[0,1\right]$} is defined as a piecewise linear curve composed of linear segments connecting the point $(0,0)$ and the points defined by consecutive subsums of the $\beta$-ordered form of the probability $\vb{p}^\beta$ and the Gibbs state~$\vb{\gamma}^\beta$,
	\begin{equation}
    \label{cures_def}
		\left(\sum_{i=1}^k\gamma^{\, \beta}_i,~\sum_{i=1}^k p^{\, \beta}_i\right):=\left(\sum_{i=1}^k\gamma_{\vb \pi_{\vb{p}}(i)},~\sum_{i=1}^k p_{\vb \pi_{\vb{p}}(i)}\right),
	\end{equation}
	for $k\in\{1,\dots,d\}$. 
    % \bl{The slope vector $\vb{r}(\vb{p})$ thus defines the slopes of the thermomajorisation curve $f_{\vb{p}}^\beta$; its non-increasing ordering ensures that $f_{\vb{p}}^\beta$ is convex.}
	
	Using the concept of thermomajorisation curves one can define the relation of thermomajorisation. Given two $d$-dimensional probability distributions $\vb p$ and $\vb q$, and a fixed inverse temperature $\beta$, we say that $\vb p$ \emph{thermomajorises} $\vb q$ and denote it as $\vb p \succ_{\beta} \vb q$, if the thermomajorisation curve $f^{\, \beta}_{\vb{p}}$ is above $f^{\, \beta}_{\vb{q}}$ everywhere, i.e.,
	\begin{equation}
		\vb p \succ_{\beta} \vb q \iff \forall x\in[0,1]:~ f^{\, \beta}_{\vb{p}}(x) \geq f^{\, \beta}_{\vb{q}}(x) \, .
	\end{equation}
	It may happen that a given pair of vectors, $\vb p$ and $\vb q$, are incomparable, meaning that $f^{\, \beta}_{\vb p}$ and $f^{\, \beta}_{\vb q}$ cross at some point (see Fig.~\ref{fig:Majorization_curves}). Furthermore, the thermomajorisation order cannot be seen as a lattice in the sense that for $\vb p$ and $\vb q$, which do not share the same $\beta$-order, there is no unique join or meet, corresponding to the unique first point in the joint future and last point in the joint past~\cite{Korzekwa2017}. 
    
    \begin{figure*}[t]
     \centering
     \includegraphics[width=1\linewidth]{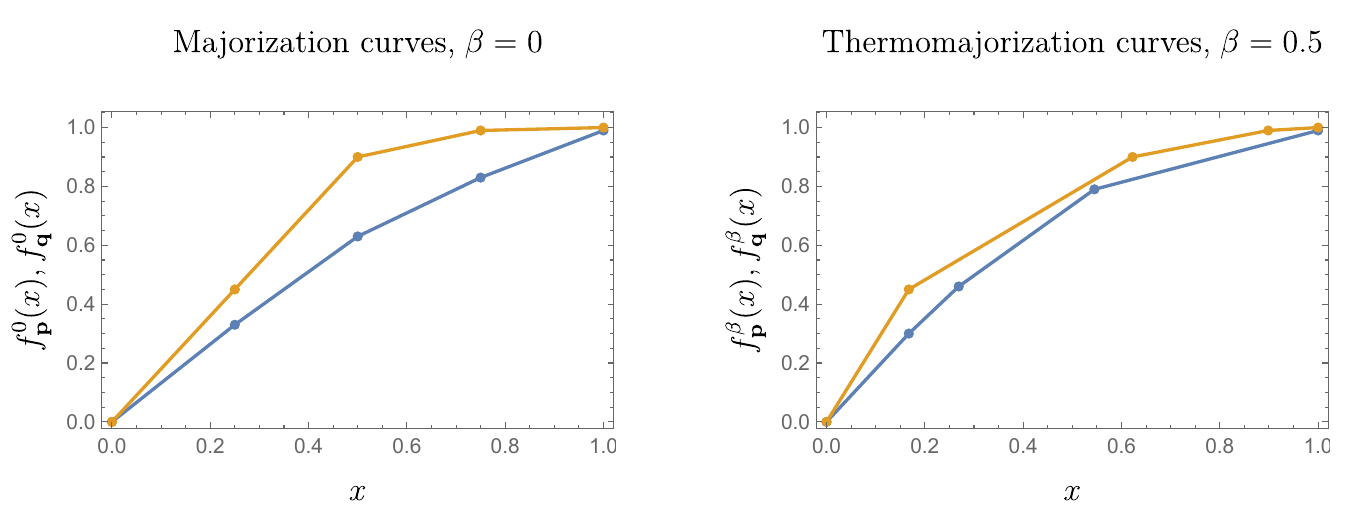}
     \caption{Example of majorization and thermomajorization curves for a pair of distributions $\vb{p} = (0.2, 0.33, 0.3, 0.16),\,\vb{q}=(0.45, 0.09, 0.45, 0.01)$ in $d = 4$ with energy levels $E_i = i$ as defined in eq~\eqref{cures_def}. In both examples, the distribution $\vb{p}$ corresponding to the blue curve majorizes the distribution $\vb{q}$ corresponding to the orange curve, $\vb{p}\succ \vb{q},\,\vb{p}\succ_\beta \vb{q}$, since the blue curve lies entirely above the orange one. Notice that in infinite temperature $\beta = 0$, the thermomajorization problem reduces to majorization one since all elbows of the curves happen for the same values of $x$, so it is sufficient to consider only these points.}
     \label{fig:Majorization_curves}
    \end{figure*}

	We can now state the generalisation of Theorem~\ref{thm_HLP} from bistochastic matrices to Gibbs-preserving stochastic matrices which provides the necessary and sufficient conditions for the existence of a thermal operation between two energy-incoherent states~\cite{horodecki2013fundamental,Rusch1978}.
	\begin{thm}[Theorem~1.3 of Ref.~\cite{horodecki2013}]
		\label{thm_HLPgeneralisation}
		There exists a stochastic Gibbs-preserving matrix $\Lambda$, $\Lambda \vb \gamma=\vb \gamma$, mapping $\vb{p}$ to $\vb{q}$ if and only if $\vb{p} \succ_{\beta} \vb{q}$.
	\end{thm}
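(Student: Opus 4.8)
\emph{Overall plan.} I would prove the two implications separately, following the standard argument (cf.\ Refs.~\cite{horodecki2013fundamental,Rusch1978}) but routed through the Hardy--Littlewood--P\'olya theorem already stated as Theorem~\ref{thm_HLP}. Throughout, $(x)_+:=\max(x,0)$.

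\emph{Plan for necessity ($\Lambda$ exists $\Rightarrow \vb p\succ_\beta\vb q$).} The idea is to trade the thermomajorisation curve for a one-parameter family of linear functionals that is monotone under any column-stochastic Gibbs-preserving map. First I would record the geometric lemma that, for any population vector $\vb p$ and every $\lambda\ge0$,
\begin{equation}
  h_{\vb p}(\lambda):=\sum_{i=1}^d (p_i-\lambda\gamma_i)_+=\max_{S\subseteq\{1,\dots,d\}}\Big(\sum_{i\in S}p_i-\lambda\sum_{i\in S}\gamma_i\Big),
\end{equation}
and that $f^{\,\beta}_{\vb p}$ and $h_{\vb p}$ are mutually Legendre-conjugate: the curve \eqref{cures_def} is concave, since its segment slopes $p^{\,\beta}_i/\gamma^{\,\beta}_i$ are non-increasing by the definition of the $\beta$-order, so $f^{\,\beta}_{\vb p}(x)=\inf_{\lambda\ge0}\big(\lambda x+h_{\vb p}(\lambda)\big)$ and $h_{\vb p}(\lambda)=\sup_{x\in[0,1]}\big(f^{\,\beta}_{\vb p}(x)-\lambda x\big)$. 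Because the transform is order preserving on this class, $\vb p\succ_\beta\vb q\iff h_{\vb p}\ge h_{\vb q}$ pointwise. Granting this, necessity is immediate: from $\Lambda\vb{\gamma}=\vb{\gamma}$ one has $q_i-\lambda\gamma_i=\sum_j\Lambda_{ij}(p_j-\lambda\gamma_j)$, so subadditivity and positive homogeneity of $(\cdot)_+$ together with $\Lambda_{ij}\ge0$ give $(q_i-\lambda\gamma_i)_+\le\sum_j\Lambda_{ij}(p_j-\lambda\gamma_j)_+$, and summing over $i$ using $\sum_i\Lambda_{ij}=1$ yields $h_{\vb q}(\lambda)\le h_{\vb p}(\lambda)$ for all $\lambda\ge0$.

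\emph{Plan for sufficiency ($\vb p\succ_\beta\vb q\Rightarrow\Lambda$ exists).} The idea is a reduction to ordinary majorisation by a ``Gibbs embedding''. Assume first $\gamma_i\in\mathbb{Q}$, write $\gamma_i=g_i/N$ with $g_i\in\fN$, $N=\sum_i g_i$, replace level $i$ by a block $B_i$ of $g_i$ fictitious levels of equal Gibbs weight $1/N$, and send $\vb p$ to the $N$-vector $\widehat{\vb p}$ equal to $p_i/g_i$ on $B_i$. Since $p_i/g_i=(1/N)(p_i/\gamma_i)$, sorting $\widehat{\vb p}$ reproduces the $\beta$-order, and comparison with \eqref{cures_def} shows the Lorenz curve of $\widehat{\vb p}$ equals $f^{\,\beta}_{\vb p}$; hence $\vb p\succ_\beta\vb q$ becomes $\widehat{\vb p}\succ\widehat{\vb q}$ in $\fR^N$, and Theorem~\ref{thm_HLP} supplies a bistochastic $N\times N$ matrix $B$ with $B\widehat{\vb p}=\widehat{\vb q}$. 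Averaging $B$ over left and right multiplication by the permutations in $G=\prod_i S_{g_i}$ (a convex operation under which $\widehat{\vb p},\widehat{\vb q}$ are fixed) yields a bistochastic $\overline B$, still sending $\widehat{\vb p}$ to $\widehat{\vb q}$, whose entries depend only on the pair of blocks. Setting $\Lambda_{IJ}:=g_I\,\overline B_{ab}$ for $a\in B_I$, $b\in B_J$, one checks directly from the row and column sums of $\overline B$ that $\Lambda$ is column-stochastic, that $\Lambda\vb{\gamma}=\vb{\gamma}$, and that $\Lambda\vb p=\vb q$. For irrational $\gamma$ I would either pass to the limit along rationals $\gamma^{(n)}\to\gamma$ (the set of column-stochastic Gibbs-preserving matrices is compact, so a convergent subsequence works once the thermomajorisation data are shown to converge) or use the analogous continuous embedding, splitting $[0,1)$ into intervals of length $\gamma_i$ and invoking the $L^1$ version of Hardy--Littlewood--P\'olya majorisation of step functions.

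\emph{Expected main obstacle.} The convexity estimate in necessity and the index bookkeeping in the embedding are routine. The two genuinely delicate points are: (i) the curve-versus-functionals lemma, i.e.\ proving cleanly that domination of the piecewise-linear concave curves \eqref{cures_def} is equivalent to domination of every map $\lambda\mapsto\sum_i(p_i-\lambda\gamma_i)_+$ -- this is exactly where the concave geometry and the role of the $\beta$-order are used; and (ii) the passage to irrational Gibbs weights, for which the exact embedding fails and one must either argue by compactness/continuity with care (the $\beta$-orders can jump under perturbation of $\gamma$) or import the $L^1$ Hardy--Littlewood--P\'olya theorem. An alternative sufficiency proof that avoids (ii) is an explicit iteration lowering $f^{\,\beta}_{\vb p}$ toward $f^{\,\beta}_{\vb q}$ one elbow at a time by elementary two-level Gibbs-preserving ``$\beta$-swaps'' (the thermal analogue of Robin-Hood transfers), but making the termination of that iteration rigorous is itself some work.
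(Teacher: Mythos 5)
This theorem is quoted by the paper as an imported result (Theorem~1.3 of Ref.~\cite{horodecki2013}, with roots in Ref.~\cite{Rusch1978}); the manuscript itself contains no proof of it, so there is no internal argument to compare against. Your plan is, in substance, the standard proof from that literature, and it is sound. The necessity direction is essentially complete as written: the identity $q_i-\lambda\gamma_i=\sum_j\Lambda_{ij}(p_j-\lambda\gamma_j)$ plus positivity and column-normalisation of $\Lambda$ gives $h_{\vb q}\le h_{\vb p}$, and the Legendre-type equivalence between pointwise domination of the curves of Eq.~\eqref{cures_def} and pointwise domination of $\lambda\mapsto\sum_i(p_i-\lambda\gamma_i)_+$ is correct precisely because the $\beta$-order makes the segment slopes non-increasing, so the curves are concave and the biconjugation argument applies. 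The sufficiency direction via the rational Gibbs embedding also checks out line by line: the Lorenz curve of the embedded vector coincides with $f^{\,\beta}_{\vb p}$, Theorem~\ref{thm_HLP} supplies the bistochastic $B$, the twirl over block permutations fixes $\widehat{\vb p},\widehat{\vb q}$, and the coarse-graining $\Lambda_{IJ}=g_I\overline{B}_{ab}$ is column-stochastic, Gibbs-preserving and maps $\vb p$ to $\vb q$.

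The one genuine remaining gap is the irrational-$\gamma$ case, which you flag yourself. A naive compactness limit along rational $\gamma^{(n)}\to\vb\gamma$ does not work verbatim, because $\vb p\succ_\beta\vb q$ with respect to $\vb\gamma$ need not imply the corresponding relation with respect to $\gamma^{(n)}$ (the $\beta$-orders and the curve comparison are not stable under perturbing $\vb\gamma$ alone). The standard repairs are exactly the ones you name: either first replace $\vb q$ by $(1-\epsilon)\vb q+\epsilon\vb\gamma$, which has the same $\beta$-order as $\vb q$ and is strictly dominated away from the region where the curves touch the diagonal, so that the relation survives a sufficiently fine rational approximation of $\vb\gamma$, and then let $\epsilon\to0$ using compactness and closedness of the conditions $\Lambda\vb\gamma=\vb\gamma$, $\Lambda\vb p=\vb q$; or route the sufficiency through the continuous ($L^1$) Hardy--Littlewood--P\'olya embedding. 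Either completes the argument, so with that step carried out your proposal is a correct proof, identical in spirit to the one in the cited reference.
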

	
	Bistochastic matrices correspond to thermal operations on energy-incoherent states in infinite temperature. However one can also study the opposite zero temperature limit.
	In Appendix~\ref{App:cold_original} we summarize the concept of \textit{cooling maps}, originating for such a study~\cite{Low_temerature_cooling_maps}, in a scenario with diagonal quantum states. Furthermore in Appendix  \ref{App:cold} we show an alternative derivation of cooling maps for diagonal quantum states as a zero temperature limit of Gibbs-preserving matrices. 
    \medskip
 
	In further discussion, following the study of thermal operations,
	we restrict ourselves to the states that are diagonal in the Hamiltonian eigenbasis, unless stated otherwise. This lets us identify quantum states with their populations $\rho = \sum_i p_i \; |E_i\ra\la E_i| \Longleftrightarrow \bf{p}$.

	\subsection{Local operations and classical communication}
		
	Whenever considering a multipartite quantum system composed of spatially separated subsystems, each with local Hilbert space $\mathcal{H}_i$, it is natural to limit the operations that can be effected on the total system. Such a scenario is often described as a ``distant laboratories'' paradigm. The most rigorous restriction comes from allowing the involved parties to perform operations only locally. Thus, for a bipartite system $\mathcal{H}_A\otimes\mathcal{H}_B$ the allowed operations would be limited to channels of the form $\mathcal{E}_A\otimes\mathcal{E}_B$. 
	
	It is, however, natural to note that distant laboratories should not be necessarily equated to disconnected laboratories, which leads to inclusion of classical communication, such as shared randomness or exchange of results of experiments. Together with local operations, possibly conditioned on the measurement results, it is referred to as \textit{local operations and classical communication} (LOCC)~\cite{Chitambar2014}. Some of the most prominent protocols in quantum information, including quantum teleportation~\cite{Bennett1993} and superdense coding~\cite{Bennett1992}, rely on LOCC operations; furthermore, the study of convertibility of entangled states under LOCC, which are known not to increase entanglement, is one of the cornerstones of the quantum resource theories~\cite{Nielsen1999, Vidal1999}.
	
	Despite their widespread use and intuitive understanding, the structure of the set of LOCC operations is intricate -- we refer the reader interested in detailed description to~\cite{Chitambar2014}, where LOCC is introduced with full rigorousness within the framework of quantum instruments. In this work, however, the most important aspect of LOCC is a possibility of performing local operations conditioned on results of measurement in a distant laboratory.

	\subsection{Multistochastic tensors and multi stochastic Birkhoff polytope}

	Another important foundation for our work are stochastic tensors. Similarly, as stochastic matrix $\Lambda_{ij}$ encompasses the mapping of probability distributions, 
	a stochastic tensor $T_{ijk}$ describes the mapping  of two distributions into one,
	\begin{equation}
		T(\vb{p},\vb{q})_i = \sum_{jk} T_{ijk} p_j q_k~,
	\end{equation}
	and the concept can be extended for more distributions by considering tensors of higher rank.
	To ensure, that the operation is well-defined, all elements of the stochastic tensor must be positive, $T_{ijk} \geq 0$, and sums of its hypercolumns are normalised, $\sum_i T_{ijk} = 1$, for any $j,k$. 
    
    Stochastic tensors arise naturally in the study of higher-order Markov chains \cite{Li2013, Hu2014, Liu2019}, in which the next probability  distribution $\vb{p}^{(n+1)}$ depends not only on the current $\vb{p}^{(n)}$, but also previous one $\vb{p}^{(n-1)}$:
    \begin{equation*}
        p_i^{(n+1)} = \sum_{j,k} T_{ijk} p_j^{(n)} p_k^{(n-1)}. 
    \end{equation*}

	Typically, physically interesting phenomena are described using restricted scenarios such as bistochasticity, which appears naturally in noisy operations or as transition probabilities arising from unitary evolution of quantum systems~\cite{KZ_2003, korzekwa2018coherifying}. 
	In the case of tensors instead of one additional condition for the normalisation of rows (hypercolumns in relation to a selected index), one can demand similar conditions to be satisfied by all hypercolumns, which gives rise to \textit{multistochastic} tensors.
	
	\begin{defi}%[Tristochastic tensor]
		A tensor $T$ is called \textit{tristochastic} if $T_{ijk} \geq 0$ and $\sum_{i} T_{ijk} = \sum_{j} T_{ijk} = \sum_{k} T_{ijk} =  1$ for  any $i,j,k$.
	\end{defi}
	
	The additional conditions correspond to the preservation of flat distribution $\boldsymbol{\eta}$, if it appears as any input:
	\begin{equation}
    \label{tristoch_def0}
		\forall\;\vb{p} ~~ T(\vb{p},\boldsymbol{\eta}) = T(\boldsymbol{\eta},\vb{p}) = \boldsymbol{\eta} 
	\end{equation}
    These extra properties were leveraged in works such as \cite{Aniello2019, bistron2023tristochastic} to generalize the notion of convolution of probability vectors
    \begin{equation*}
        \vb{p} * \vb{q} = T(p,q),
    \end{equation*}
    with $T_{ijk} = \delta_{j+k-i}$ in stardard case,
    and subsequently obtain its quantum analogues.  
    \medskip
    
	The resemblance between bistochastic matrices and multistochastic tensors suggests, that many of the properties of the former should translate to the latter. However, such a connection is not entirely straightforward and becomes even more troublesome when considering quantum counterparts for both of them~\cite{RBJCKZ24}.
 
    For example, sets of both bistochastic matrices and tristochastic tensors are convex, forming the so-called Birkhoff polytopes~\cite{stoch_multistoch}.  
	All extremal bistochastic matrices, vertices of Birkhoff polytope, are permutation matrices~\cite{Birkhoff_thm} -- all its entries are either one or zero.
	Thus one would expect, that permutation tensors, in which all entries are equal to either one or zero, also fully describe the extremal tensors. However, as proven in~\cite{Birkhoff_tensors}, there are also other extremal multistochastic tensors that do not correspond to permutation tensors. 
	Note, that while layers of permutation tensors are extremal bistochastic matrices, it no longer holds for other extremal multistochastic tensors.
	
	An example of a tristochastic permutation tensor for dimension $d = 3$ is provided below~\cite{bistron2023tristochastic}:
	\begin{equation}
		\label{example_permutation_tensor}
		\begin{aligned}
			& T = (T_{0jk},T_{1jk},T_{2jk}) = \left(\begin{matrix}
				1 & 0 & 0\\
				0 & 1 & 0\\
				0 & 0 & 1\\
			\end{matrix}\right.
			\left|\begin{matrix}
				0 & 1 & 0\\
				0 & 0 & 1\\
				1 & 0 & 0\\
			\end{matrix}\right.
			\left|\begin{matrix}
				0 & 0 & 1\\
				1 & 0 & 0\\
				0 & 1 & 0\\
			\end{matrix}\right)\,, \\
		\end{aligned}
	\end{equation}
	where the reader should imagine the square sub-matrices arranged in a $3\times 3\times 3$ cube, while an example of a non-permutation extremal tensor is presented below
	\begin{equation}
		\label{example_strange_tensor}
		\begin{aligned}
			& T = (T_{0jk},T_{1jk},T_{2jk}) = \left(\begin{matrix}
				0 & \frac{1}{2} & \frac{1}{2}\\
				\frac{1}{2} & \frac{1}{2} & 0\\
				\frac{1}{2} & 0 & \frac{1}{2}\\
			\end{matrix}\right.
			\left|\begin{matrix}
				\frac{1}{2} & \frac{1}{2} & 0\\
				0 & \frac{1}{2} & \frac{1}{2}\\
				\frac{1}{2} & 0 & \frac{1}{2}\\
			\end{matrix}\right.
			\left|\begin{matrix}
				\frac{1}{2} & 0 & \frac{1}{2}\\
				\frac{1}{2} & 0 & \frac{1}{2}\\
				0 & 1 & 0\\
			\end{matrix}\right)\,. \\
		\end{aligned}
	\end{equation}

    \section{Interlude: Auxiliary results on semilocal thermal operations}\label{sec:SLTO}

    Recently, there has been an effort to formulate a notion similar to thermal operations, which would be applicable to distributed systems interacting with different local thermal baths \cite{Bera2021}. This approach resulted in the construction of the theory named semilocal thermal operations (SLTO), which is suitable to consider the ultimate limitations of finite-size thermal engines. This framework is very general, incorporating simultaneous interactions between all system and bath elements. 
        % On the other hand, no structure of interactions has been imposed, which in some physical scenarios may lead to non-implementable protocols.

    In our work we treat SLTO as an important background, which operates within minimal assumptions similar to thermal operations involving two systems in contact with their own local thermal environments, but does not provide any additional structural restrictions on the manner in which the involved systems interact.
        %-- a gap that we fill with a specific, classically conditioned framework for nonlocal thermal operations.
        The exact definition of semilocal thermal operation is as follows:

        \begin{defi}[\cite{Bera2021} Definiton 1]
            Let us consider a system $S$ consisting of two subsystems $S^{(A)}$, $S^{(B)}$ with Hamiltonian $H = H^{(A)}$ + $H^{(B)}$. An operation $\Lambda^{(AB)}$ with respect to local inverse temperatures $\beta^{(A)}$ and $\beta^{(B)}$ is called \textit{semilocal thermal operation} if there exist two local thermal baths $\mathcal{B}^{(A)}$, $\mathcal{B}^{(B)}$ with Hamiltonians $H^{\mathcal{B}^{(A)}}$, $H^{\mathcal{B}^{(B)}}$ such that
            \begin{equation}\label{SLTO_def}
            \Lambda^{(AB)}\qty(\rho^{(AB)}) = \Tr_{\mathcal{B}^{(A)},\mathcal{B}^{(B)}}\left[U(\gamma^{\mathcal{B}^{(A)}} \otimes \gamma^{\mathcal{B}^{(B)}} \otimes \rho^{(AB)})U^\dagger \right]
            \end{equation}
            where the global unitary matrix $U$ satisfies the following commutation relations:
            \begin{subequations}\label{SLTO_constr}
            \begin{align}
            & [U,H^{(A)} + H^{(B)} + H^{\mathcal{B}^{(A)}} + H^{\mathcal{B}^{(B)}}] = 0 ,\label{SLTO_constr1}\\
            & \qty[U, \beta^{(A)}( H^{(A)} + H^{\mathcal{B}^{(A)}}) + \beta^{(B)}( H^{(B)} + H^{\mathcal{B}^{(B)}})] = 0 \label{SLTO_constr2}
            \end{align}
            \end{subequations}
        \end{defi}
        The first constraint \eqref{SLTO_constr1} imposes conservation of total system-bath energy across both systems. To understand the second one, one should consider the systems and baths in the thermodynamic limit. Then, the energy flow from or to the thermal bath corresponds to heat exchange,  
        which in this limit is equal to entropy change times bath's temperature. Thus, the second equation \eqref{SLTO_constr2}, conservation of weighted heat flow, is in fact encoding the ``second law of thermodynamics'' -- entropy conservation.
        We stress that when both thermal baths have the same temperature, semilocal thermal operations are equivalent to thermal operations on a joint system with one joint bath.
    
        The authors of \cite{Bera2021} provided a strong operational bound on possible transformations of bipartite states using SLTO:
    
        \begin{thm}[\cite{Bera2021}, Supplementary Information, Theorem 11]\label{SLTO_majo}
         The transition between two energy-incoherent states $\rho^{(AB)} \to \sigma^{(AB)}$ can occur under semi-local thermal operation if, and only if, the diagonal of $\rho^{(AB)}$ thermo-majorizes the diagonal of $\sigma^{(AB)}$ with respect to state $\gamma^{(A)}\otimes\gamma^{(B)}$.
        \end{thm}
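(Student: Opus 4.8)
\emph{Proof proposal.} The statement is the Horodecki--Oppenheim characterisation (Theorem~\ref{thm_HLPgeneralisation}) applied to the composite system $AB$, but with the \emph{product} reference vector $\gamma^{(A)}\otimes\gamma^{(B)}$ in place of a single Gibbs state. The plan is therefore to show that, on energy-incoherent inputs, a semilocal thermal operation acts on the joint population vector precisely as a stochastic matrix that preserves $\gamma^{(A)}\otimes\gamma^{(B)}$, and conversely that every such Gibbs-preserving stochastic matrix is induced by some SLTO; the theorem then follows from Theorem~\ref{thm_HLPgeneralisation} read for the system $AB$ with reference vector $\gamma^{(A)}\otimes\gamma^{(B)}$.

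For the necessity direction I would first establish that any SLTO is covariant with respect to the free evolution generated by $H=H^{(A)}+H^{(B)}$: constraint~\eqref{SLTO_constr1} makes $U$ commute with $H+H^{\mathcal{B}^{(A)}}+H^{\mathcal{B}^{(B)}}$ while the bath state commutes with $H^{\mathcal{B}^{(A)}}+H^{\mathcal{B}^{(B)}}$, so the usual dilation computation gives $\Lambda^{(AB)}\big(e^{-iHt}\rho\, e^{iHt}\big)=e^{-iHt}\,\Lambda^{(AB)}(\rho)\,e^{iHt}$. Hence an energy-incoherent state is mapped to an energy-incoherent state and, on diagonals, $\Lambda^{(AB)}$ reduces to a stochastic matrix $\Lambda$ on population vectors. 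Next, the full initial state $\gamma^{(A)}\otimes\gamma^{(B)}\otimes\gamma^{\mathcal{B}^{(A)}}\otimes\gamma^{\mathcal{B}^{(B)}}$ is proportional to $\exp\!\big[-\big(\beta^{(A)}(H^{(A)}+H^{\mathcal{B}^{(A)}})+\beta^{(B)}(H^{(B)}+H^{\mathcal{B}^{(B)}})\big)\big]$, i.e.\ a function of the very operator appearing in~\eqref{SLTO_constr2}, so $U$ leaves it invariant; tracing out the baths yields $\Lambda^{(AB)}(\gamma^{(A)}\otimes\gamma^{(B)})=\gamma^{(A)}\otimes\gamma^{(B)}$, i.e.\ $\Lambda$ preserves the reference vector $\gamma^{(A)}\otimes\gamma^{(B)}$. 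Theorem~\ref{thm_HLPgeneralisation} then forces $\operatorname{diag}(\rho^{(AB)})\succ_\beta\operatorname{diag}(\sigma^{(AB)})$.

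For sufficiency, the thermomajorisation relation together with Theorem~\ref{thm_HLPgeneralisation} supplies a stochastic $\Lambda$ with $\Lambda(\gamma^{(A)}\otimes\gamma^{(B)})=\gamma^{(A)}\otimes\gamma^{(B)}$ and $\Lambda\operatorname{diag}(\rho^{(AB)})=\operatorname{diag}(\sigma^{(AB)})$, and it remains to realise $\Lambda$ by an SLTO. I would decompose $\Lambda$ into elementary two-level Gibbs-stochastic moves ($\beta$-swaps, $T$-transforms) between pairs of joint levels $(a,b)$ and $(a',b')$ --- the standard generators of thermomajorisation --- and implement each by a small bath carrying one mode in wing $A$ and one in wing $B$, whose gaps $(\delta E_A,\delta E_B)$ are pinned by the two requirements that the \emph{physical} energy and the $\beta$-weighted energy both be conserved; this is a $2\times2$ linear system with determinant $\beta^{(B)}-\beta^{(A)}$, uniquely solvable for $\beta^{(A)}\neq\beta^{(B)}$ and trivially handled when $\beta^{(A)}=\beta^{(B)}$ (where a single bath mode and the ordinary thermal-operations realisation suffice, since an SLTO with equal temperatures is a thermal operation on the joint system). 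On the two-dimensional subspace exchanged by such a move both conserved operators are scalar, so the (partial) swap unitary commutes with~\eqref{SLTO_constr1} and~\eqref{SLTO_constr2}; closedness of SLTO under composition (Theorem~\ref{SLTO_closed}) then glues these moves into a single SLTO realising $\Lambda$.

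I expect sufficiency to be the main obstacle: upgrading an abstract Gibbs-preserving stochastic matrix to an honest unitary dilation that respects \emph{both} commutation constraints simultaneously requires care with the bath bookkeeping --- handling possible degeneracies of the joint spectrum $\{E^{(A)}_a+E^{(B)}_b\}$, treating $T$-transforms that are not full swaps (which need larger or iterated baths), and, exactly as in the single-system theory, passing through an embedding/approximation argument so that a general $\Lambda$ is recovered only in a limit. The necessity direction, by contrast, is essentially bookkeeping with the two commutation relations followed by a direct appeal to Theorem~\ref{thm_HLPgeneralisation}.
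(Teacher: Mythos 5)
First, note that the manuscript never proves this statement: it is imported verbatim from the Supplementary Information of Ref.~\cite{Bera2021} (Theorem~11 there), and the only related in-text reasoning is the Hamiltonian-rescaling remark used to pass to Corollary~\ref{cor:SLTO_sim}. So the comparison is really with the cited proof, not with anything in this paper. Your \emph{necessity} direction is sound and matches the standard argument: covariance under $e^{-i(H^{(A)}+H^{(B)})t}$ follows from \eqref{SLTO_constr1}, so energy-incoherent states stay energy-incoherent and the channel acts as a stochastic matrix on populations; the four-party state $\gamma^{(A)}\otimes\gamma^{(B)}\otimes\gamma^{\mathcal{B}^{(A)}}\otimes\gamma^{\mathcal{B}^{(B)}}$ is a function of the operator in \eqref{SLTO_constr2} and hence invariant under $U$, giving preservation of $\gamma^{(A)}\otimes\gamma^{(B)}$; the easy direction of Theorem~\ref{thm_HLPgeneralisation} (read for the reference vector $\gamma^{(A)}\otimes\gamma^{(B)}$, or equivalently after the rescaling to a common $\tilde\beta$ that the paper itself performs) then yields thermomajorization. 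Your side observation that for $\beta^{(A)}\neq\beta^{(B)}$ the two constraints force $[U,H^{(A)}+H^{\mathcal{B}^{(A)}}]=[U,H^{(B)}+H^{\mathcal{B}^{(B)}}]=0$, i.e.\ separate conservation of each wing's energy, is also correct and useful.

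The genuine gap is in sufficiency, and it is exactly where you flagged difficulty but then leaned on a false premise. Your construction assumes that any $\gamma^{(A)}\otimes\gamma^{(B)}$-preserving stochastic matrix, or at least any transition allowed by thermomajorization, can be decomposed into two-level Gibbs-stochastic moves ($\beta$-swaps, $T$-transforms) between pairs of joint levels. For ordinary majorization ($\beta=0$) $T$-transforms do generate all transitions, but at finite temperature this fails in dimension $\geq 3$: sequences of two-level Gibbs-stochastic operations (``elementary thermal operations'', cf.\ Ref.~\cite{Lostaglio_2018}) reach a strictly smaller set of states than thermomajorization permits, so the decomposition you invoke simply does not exist for a general $\Lambda$, and the ``embedding/approximation in a limit'' caveat cannot repair it because the deficit is one of reachability, not of accuracy. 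A correct proof of the ``if'' direction has to dilate a genuinely multi-level Gibbs-stochastic matrix directly --- e.g.\ rescale the Hamiltonians to a common $\tilde\beta$ so that Theorem~\ref{thm_HLPgeneralisation} applies, then run a Horodecki--Oppenheim-type coarse-grained-bath construction with energy-conserving permutations arranged so that, when $\beta^{(A)}\neq\beta^{(B)}$, each wing's energy is conserved separately --- which is the substantive content of Theorem~11 in \cite{Bera2021} and is not recovered by composing two-level wing-wise swaps. Your equal-temperature fallback (SLTO $=$ TO on the joint system) is fine, but it does not help the generic case.
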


        The above theorem can be extended by an important corollary. Let us perform an artificial rescaling of Hamiltonian $\tilde{H}^{(A)}  = \sqrt{\frac{\beta^{(A)}}{\beta^{(B)}}} H^{(B)},\,\tilde{H}^{(B)}  = \sqrt{\frac{\beta^{(B)}}{\beta^{(A)}}} H^{(B)}$, so that the Gibbs states with common inverse temperature $\tilde{\beta} =\sqrt{\beta^{(A)}\beta^{(B)}}$ and Hamiltonians $\tilde{H}^{(A)}$ and $\tilde{H}^{(B)}$ are the same as the original Gibbs states $\gamma^{(A)}$ and $\gamma^{(B)}$. Then the condition from Theorem \ref{SLTO_majo}, by the Theorem \ref{thm_HLPgeneralisation}, is equivalent to the preservation of global Gibbs state $\gamma^{(AB)} = \gamma^{(A)}\otimes\gamma^{(B)}$ at a common temperature $\tilde{\beta} =\sqrt{\beta^{(A)}\beta^{(B)}}$. Thus, we have.

        \begin{cor}[\cite{Bera2021}, Supplementary Information, Corollary 4]\label{cor:SLTO_sim}
        The transition between two energy-incoherent states $\rho^{(AB)} \to \sigma^{(AB)}$ can occur under semi-local thermal operation if, and only if, there exist stochastic matrix preserving product of Gibbs states $M(\gamma^{(A)}\otimes\gamma^{(B)}) = \gamma^{(A)}\otimes\gamma^{(B)}$, that maps the spectrum of one state into the other $M(\text{diag}(\rho^{(AB)})) = \text{diag}(\sigma^{(AB)})$. 
        \end{cor}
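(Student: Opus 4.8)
The plan is to derive Corollary~\ref{cor:SLTO_sim} directly from Theorem~\ref{SLTO_majo} by translating the thermomajorisation statement into an equivalent statement about a single Gibbs-preserving stochastic matrix, using the artificial rescaling of Hamiltonians described in the paragraph preceding the corollary. First I would fix notation: write $\vb{p} = \operatorname{diag}(\rho^{(AB)})$ and $\vb{q} = \operatorname{diag}(\sigma^{(AB)})$, and let $\gamma^{(AB)} = \gamma^{(A)}\otimes\gamma^{(B)}$ be the product Gibbs vector, which by definition has entries proportional to $e^{-\beta^{(A)}E_a^{(A)}}e^{-\beta^{(B)}E_b^{(B)}}$. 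The content of Theorem~\ref{SLTO_majo} is that $\rho^{(AB)}\to\sigma^{(AB)}$ is possible under SLTO if and only if $\vb{p}\succ_{\gamma^{(AB)}}\vb{q}$, where thermomajorisation is taken with respect to the reference vector $\gamma^{(AB)}$ rather than a genuine single-temperature Gibbs state.

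The key observation is that the definitions of $\beta$-ordering, thermomajorisation curves, and the thermomajorisation relation in Section~\ref{sec:prelim} never use the inverse temperature $\beta$ and the Hamiltonian $H$ separately -- they depend only on the combined reference vector $\boldsymbol{\gamma}$ through the ratios $p_i/\gamma_i$ and the partial sums of $\gamma_i$. Hence I would introduce the rescaled Hamiltonians $\tilde H^{(A)} = \sqrt{\beta^{(A)}/\beta^{(B)}}\,H^{(A)}$ and $\tilde H^{(B)} = \sqrt{\beta^{(B)}/\beta^{(A)}}\,H^{(B)}$ together with the common inverse temperature $\tilde\beta = \sqrt{\beta^{(A)}\beta^{(B)}}$, and verify the one-line computation that $e^{-\tilde\beta\tilde H^{(A)}} = e^{-\beta^{(A)}H^{(A)}}$ and $e^{-\tilde\beta\tilde H^{(B)}} = e^{-\beta^{(B)}H^{(B)}}$, so the Gibbs vector of the composite system $\tilde H^{(A)}\otimes\mathbbm{1} + \mathbbm{1}\otimes\tilde H^{(B)}$ at temperature $\tilde\beta$ equals exactly $\gamma^{(A)}\otimes\gamma^{(B)} = \gamma^{(AB)}$. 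Consequently, thermomajorisation with respect to $\gamma^{(AB)}$ as it appears in Theorem~\ref{SLTO_majo} is literally thermomajorisation $\succ_{\tilde\beta}$ for the genuine single-temperature system $(\tilde H^{(A)}+\tilde H^{(B)},\tilde\beta)$. Then Theorem~\ref{thm_HLPgeneralisation} applies verbatim: $\vb{p}\succ_{\tilde\beta}\vb{q}$ if and only if there exists a stochastic matrix $M$ with $M\gamma^{(AB)} = \gamma^{(AB)}$ and $M\vb{p} = \vb{q}$. Chaining the two equivalences gives the corollary.

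I would present this as: (i) state the rescaling and check $\gamma_{\tilde H,\tilde\beta} = \gamma^{(AB)}$; (ii) observe that the thermomajorisation relation depends on the reference vector only, so the hypothesis of Theorem~\ref{SLTO_majo} coincides with $\vb{p}\succ_{\tilde\beta}\vb{q}$; (iii) invoke Theorem~\ref{thm_HLPgeneralisation} to replace this with the existence of the Gibbs-preserving matrix $M$; (iv) conclude. The only mild subtlety -- and the step I would be most careful about -- is point (ii): one must make sure that $\beta$-ordering and the thermomajorisation curve are genuinely functions of the pair $(\vb{p},\boldsymbol{\gamma})$ alone, with no hidden dependence on an underlying Hamiltonian or on $\beta$ being a ``physical'' temperature; inspecting the definitions in Section~\ref{sec:prelim} confirms this, but it is worth stating explicitly since the whole argument rests on it. Everything else is a direct citation of already-established results, so no serious obstacle is expected.
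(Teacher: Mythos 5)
Your proposal is correct and follows essentially the same route as the paper: the artificial rescaling $\tilde H^{(A)},\tilde H^{(B)}$ with common $\tilde\beta=\sqrt{\beta^{(A)}\beta^{(B)}}$ turning the product Gibbs vector into a genuine single-temperature Gibbs state, then chaining Theorem~\ref{SLTO_majo} with Theorem~\ref{thm_HLPgeneralisation}. Your explicit check in step (ii) that thermomajorisation depends only on the reference vector is a worthwhile clarification (and you even state the rescaling of $H^{(A)}$ correctly, where the paper's text contains a typo), but the argument is otherwise identical.
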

        Therefore, if one discusses only energy-incoherent states, SLTO can be effectively treated as stochastic Gibbs-preserving matrices. Note that those matrices are defined by simple linear equations from Corollary \ref{cor:SLTO_sim} and stochasticity conditions, which let us state new corollary:
        \begin{cor}[Set-closure of SLTO]
        The set of possible transitions between finite-dimensional energy-incoherent states $\rho^{(AB)} \to \sigma^{(AB)}$ under SLTO is closed.
        \end{cor}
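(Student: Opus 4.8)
\section*{Proof proposal}

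The plan is to use Corollary~\ref{cor:SLTO_sim} to trade the \emph{a priori} hard-to-handle definition of SLTO (global unitaries on baths of unbounded dimension) for a statement about a finite, linearly described set of stochastic matrices, after which the result follows from elementary compactness. By that corollary, writing $d=d_Ad_B$ and $\vb{g}=\operatorname{diag}(\gamma^{(A)}\otimes\gamma^{(B)})$, the transition $\rho^{(AB)}\to\sigma^{(AB)}$ between energy-incoherent states is realisable under SLTO if and only if there is a stochastic matrix $M$ with $M\vb{g}=\vb{g}$ that sends $\vb{p}=\operatorname{diag}(\rho^{(AB)})$ to $\vb{q}=\operatorname{diag}(\sigma^{(AB)})$. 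Hence it suffices to study the set
\[
\mathcal{M}_{\vb{g}}=\{\,M\ :\ M_{ij}\geq 0,\ \textstyle\sum_i M_{ij}=1,\ M\vb{g}=\vb{g}\,\},
\]
with $M$ ranging over real $d\times d$ matrices, together with its action on population vectors.

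First I would note that $\mathcal{M}_{\vb{g}}$ is compact: it is contained in $[0,1]^{d\times d}$, hence bounded, and it is cut out of $\mathbb{R}^{d\times d}$ by finitely many non-strict linear inequalities ($M_{ij}\geq 0$) and linear equalities (column-stochasticity and the Gibbs fixed-point condition $M\vb{g}=\vb{g}$), hence closed; in fact it is a polytope. Next, let $\Delta_d$ denote the (compact) simplex of admissible population vectors for a $d$-dimensional state, and consider the evaluation map $\Psi\colon\mathcal{M}_{\vb{g}}\times\Delta_d\to\mathbb{R}^d\times\mathbb{R}^d$, $\Psi(M,\vb{p})=(\vb{p},M\vb{p})$. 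This map is continuous (bilinear in its two arguments) and its domain is a product of compact sets, so the image $\Psi(\mathcal{M}_{\vb{g}}\times\Delta_d)$ is compact, in particular closed. By Corollary~\ref{cor:SLTO_sim} this image is precisely the set of population pairs $(\vb{p},\vb{q})$ for which the SLTO transition is possible, which is the claim. If one instead fixes the input $\rho^{(AB)}$ and asks for the reachable set of outputs, the same argument applies to the continuous map $M\mapsto M\vb{p}$ on the compact set $\mathcal{M}_{\vb{g}}$, whose image is again compact.

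I do not expect a genuine obstacle once Corollary~\ref{cor:SLTO_sim} is invoked; the single point worth flagging is that closedness is \emph{not} obvious at the level of the original definition, since there the admissible operations are parametrised by unitaries on environments of unbounded dimension, and one must pass to the finite, linearly described family $\mathcal{M}_{\vb{g}}$ of Gibbs-preserving stochastic matrices before compactness can be used. One could also remark that the argument yields slightly more than closedness, namely that the set of transitions is compact, being a continuous image of a compact set.
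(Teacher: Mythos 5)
Your proposal is correct and follows essentially the same route as the paper: the paper deduces the corollary directly from Corollary~\ref{cor:SLTO_sim}, noting that the relevant Gibbs-preserving stochastic matrices are cut out by finitely many linear (in)equalities and hence form a compact polytope, which is exactly your $\mathcal{M}_{\vb{g}}$. Your explicit compactness argument via the continuous evaluation map simply spells out the step the paper leaves implicit, and your remark that closedness is not obvious at the level of unbounded-dimension bath unitaries matches the paper's motivation for invoking the matrix characterisation.
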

        The study of SLTO action on coherent states remains largely unexplored, thus in Appendix \ref{App:coch_evol} we develop preliminary result in this direction, which will prove useful in a later part of the manuscript.

        \medskip
        Finally, we present two additional properties of SLTO, which are crucial from the perspective of resource theories and important in further part of this work but, to our knowledge, were not proven before.
    
        \begin{restatable}{thm}{SLTOclosed}
        \label{SLTO_closed} \emph{(Composition-closure of SLTO)}
        The set of semilocal thermal operations is closed under composition, i.e if $\Lambda_1^{(AB)}$ and $\Lambda_2^{(AB)}$ are SLTO then $\Lambda_2^{(AB)} \circ \Lambda_1^{(AB)}$ is SLTO with the same pair of inverse temperatures $\beta^{(A)},\,\beta^{(B)}$.
        \end{restatable}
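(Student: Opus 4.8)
The plan is to prove composition-closure by a standard dilation argument: given two SLTO $\Lambda_1^{(AB)}$ and $\Lambda_2^{(AB)}$ realised with unitaries $U_1,U_2$ and local baths $\mathcal{B}_1^{(A)},\mathcal{B}_1^{(B)}$ and $\mathcal{B}_2^{(A)},\mathcal{B}_2^{(B)}$ respectively, one should build a single SLTO realising $\Lambda_2^{(AB)}\circ\Lambda_1^{(AB)}$ by using the \emph{union} of the baths, $\mathcal{B}^{(A)} = \mathcal{B}_1^{(A)}\otimes\mathcal{B}_2^{(A)}$ and $\mathcal{B}^{(B)} = \mathcal{B}_1^{(B)}\otimes\mathcal{B}_2^{(B)}$, with Hamiltonian the sum of the four bath Hamiltonians. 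The candidate global unitary is $U = U_2 U_1$, where each $U_i$ is understood to act as the identity on the baths it does not involve (so $U_1$ acts on $S^{(A)}S^{(B)}\mathcal{B}_1^{(A)}\mathcal{B}_1^{(B)}$ and trivially on $\mathcal{B}_2^{(A)}\mathcal{B}_2^{(B)}$, and symmetrically for $U_2$).

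First I would check that $U$ satisfies the two commutation constraints \eqref{SLTO_constr1}--\eqref{SLTO_constr2}. Since $U_1$ commutes with $H^{(A)}+H^{(B)}+H^{\mathcal{B}_1^{(A)}}+H^{\mathcal{B}_1^{(B)}}$ and acts trivially on the bath-2 factors, it commutes with the full Hamiltonian $H^{(A)}+H^{(B)}+\sum_{i=1,2}(H^{\mathcal{B}_i^{(A)}}+H^{\mathcal{B}_i^{(B)}})$; the analogous statement holds for $U_2$, hence so does it for the product $U = U_2 U_1$. The same reasoning, applied verbatim with the $\beta$-weighted combination in \eqref{SLTO_constr2}, gives the second constraint. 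Second, I would verify that the induced channel is correct: since each bath is initialised in its own Gibbs state and the Gibbs state of the union factorises, $\gamma^{\mathcal{B}^{(A)}}\otimes\gamma^{\mathcal{B}^{(B)}} = \bigotimes_{i}\gamma^{\mathcal{B}_i^{(A)}}\otimes\gamma^{\mathcal{B}_i^{(B)}}$, one can trace out the bath-2 factors first (they are untouched by $U_1$) and then the bath-1 factors, recovering exactly $\Lambda_2^{(AB)}(\Lambda_1^{(AB)}(\rho^{(AB)}))$ by the definition \eqref{SLTO_def}; this is the routine partial-trace bookkeeping that I would not spell out in full.

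Alternatively, and more cheaply, one may invoke Corollary~\ref{cor:SLTO_sim}: for energy-incoherent states an SLTO with temperatures $\beta^{(A)},\beta^{(B)}$ is equivalent to a stochastic matrix preserving $\gamma^{(A)}\otimes\gamma^{(B)}$, and the product of two such matrices is again stochastic and $\gamma^{(A)}\otimes\gamma^{(B)}$-preserving, which gives composition-closure immediately on the incoherent sector. However, since the theorem as stated concerns the operations themselves (not merely transitions between incoherent states), the dilation argument above is the one that proves the full claim, and the Corollary route only handles the restricted case.

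The main obstacle I anticipate is purely a matter of care rather than of depth: one must make sure that the ``act as identity on the other baths'' extension of $U_1$ and $U_2$ genuinely preserves the commutation relations when the two dilations use \emph{different} bath systems with a priori unrelated Hamiltonians, and that nothing forces the two realisations to share a common bath. Since the constraints \eqref{SLTO_constr1}--\eqref{SLTO_constr2} are additive in the Hamiltonians and a unitary acting trivially on a tensor factor commutes with any operator supported there, this goes through, but it is the step most easily fumbled. A secondary point worth a sentence is that the pair of inverse temperatures is manifestly unchanged by the construction, which is exactly the ``with the same pair of inverse temperatures'' clause in the statement.
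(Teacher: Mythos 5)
Your proposal is correct and follows essentially the same route as the paper's proof: compose the two dilations by taking the tensor product of the respective baths (with summed Hamiltonians), set $U = (\id\otimes U_2)(U_1\otimes\id)$, verify the two commutation constraints using additivity of the Hamiltonians and triviality of each $U_i$ on the other pair of baths, and recover $\Lambda_2^{(AB)}\circ\Lambda_1^{(AB)}$ by the partial-trace bookkeeping you sketch (which the paper spells out explicitly). The only difference is presentational: the paper writes out the trace manipulation in full, whereas you defer it, and your aside about Corollary~\ref{cor:SLTO_sim} correctly notes that route would only cover the energy-incoherent sector.
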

    
        \begin{restatable}{thm}{SLTOconvex}
        \label{SLTO_convex} \emph{(Convexity of SLTO)}
        The set of semilocal thermal operations is convex, i.e if $\Lambda_1^{(AB)}$ and $\Lambda_2^{(AB)}$ are SLTO then $ \alpha \Lambda_1^{(AB)}  + (1-\alpha) \Lambda_2^{(AB)}$ is SLTO with the same inverse temperatures for any $\alpha \in [0,1]$.
        \end{restatable}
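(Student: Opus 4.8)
The plan is to exhibit the convex combination $\alpha\Lambda_1^{(AB)}+(1-\alpha)\Lambda_2^{(AB)}$ as a single semilocal thermal operation whose dilating unitary simply runs whichever of the two given dilating unitaries is selected by a classical flag that is \emph{itself part of one of the baths}. The reason one cannot just append an arbitrary two-level ancilla is that in the definition \eqref{SLTO_def} every register introduced must enter in its \emph{local} Gibbs state, so the mixing weight is forced to be a Gibbs population of the flag at the temperature of the bath hosting it. The observation that makes this work for \emph{every} $\alpha\in(0,1)$ (the endpoints being trivial), rather than only for dyadic weights, is that a flag qubit $C$ hosted in bath $A$ with Hamiltonian $H^C=\operatorname{diag}(0,E_C)$ and $E_C=-\tfrac{1}{\beta^{(A)}}\ln\tfrac{1-\alpha}{\alpha}$ has local Gibbs state $\gamma^C=\operatorname{diag}(\alpha,1-\alpha)$, with $E_C$ finite for all such $\alpha$.

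Concretely, I would take unitaries $U_1$ on $S^{(A)}S^{(B)}\mathcal{B}_1^{(A)}\mathcal{B}_1^{(B)}$ and $U_2$ on $S^{(A)}S^{(B)}\mathcal{B}_2^{(A)}\mathcal{B}_2^{(B)}$ dilating $\Lambda_1^{(AB)}$ and $\Lambda_2^{(AB)}$ as in the definition, with their respective local baths, each satisfying \eqref{SLTO_constr1}--\eqref{SLTO_constr2} for the common pair $\beta^{(A)},\beta^{(B)}$. I would then form the combined baths $\mathcal{B}^{(A)}=C\otimes\mathcal{B}_1^{(A)}\otimes\mathcal{B}_2^{(A)}$ and $\mathcal{B}^{(B)}=\mathcal{B}_1^{(B)}\otimes\mathcal{B}_2^{(B)}$ with Hamiltonians the corresponding sums, initialised in the products of the respective local Gibbs states (which factorise because the Hamiltonians are sums), and take
\[
U=\op{0}\otimes\qty(U_1\otimes\mathbbm{1}_2)+\op{1}\otimes\qty(U_2\otimes\mathbbm{1}_1),
\]
where $\op{0},\op{1}$ act on $C$ and $\mathbbm{1}_j$ is the identity on $\mathcal{B}_j^{(A)}\mathcal{B}_j^{(B)}$. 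This $U$ is unitary since the two branches are controlled on orthogonal flag states. Tracing out $C$ and all bath registers, and using that $\gamma^C$ is diagonal in the flag basis so the branches do not interfere, yields $\alpha\Lambda_1^{(AB)}(\rho)+(1-\alpha)\Lambda_2^{(AB)}(\rho)$: the branch $\ket{0}$ comes with weight $\alpha$ and returns $\Lambda_1^{(AB)}(\rho)$ while its idle bath is discarded, and symmetrically for $\ket{1}$.

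It then remains to check the two SLTO constraints for $U$. As $U$ is block-diagonal in $C$, it commutes with $H^C$ and with every operator supported on $C$, so the content reduces to the statement that each branch $U_1\otimes\mathbbm{1}_2$ and $U_2\otimes\mathbbm{1}_1$ commutes with $H^{(A)}+H^{(B)}+H^{\mathcal{B}^{(A)}}+H^{\mathcal{B}^{(B)}}$ and with $\beta^{(A)}\qty(H^{(A)}+H^{\mathcal{B}^{(A)}})+\beta^{(B)}\qty(H^{(B)}+H^{\mathcal{B}^{(B)}})$. Splitting each of these into a bath-$1$ part, a bath-$2$ part, and the flag term ($\beta^{(A)}H^C$ in the second constraint), the branch $U_1\otimes\mathbbm{1}_2$ commutes with the bath-$1$ part exactly because $U_1$ dilates an SLTO for the temperatures $\beta^{(A)},\beta^{(B)}$, and with the remaining parts trivially since it acts there as the identity; the same holds for the other branch — and this is precisely where the hypothesis of a common temperature pair enters. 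I do not expect a genuine obstruction anywhere in the argument: the only non-routine ingredient is the choice of flag Hamiltonian above, which tunes the local Gibbs populations to an arbitrary mixing weight, and the remainder is bookkeeping of which Hamiltonian term lands in which combined bath — the whole construction being the bi-temperature analogue of the standard proof of convexity of thermal operations.
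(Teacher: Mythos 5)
Your proposal is correct and follows essentially the same route as the paper's proof: both encode the mixing weight $\alpha$ into the Gibbs populations of an enlarged bath on Alice's side (the paper via a direct-sum bath $\mathcal{B}_1^{(A)}\oplus\mathcal{B}_2^{(A)}$ with energy shifts $\Delta_i^{\mathcal{B}^{(A)}}=-\tfrac{1}{\beta^{(A)}}\log(\cdot)$ tuned so the Gibbs state is $\alpha\gamma_1^{\mathcal{B}^{(A)}}\oplus(1-\alpha)\gamma_2^{\mathcal{B}^{(A)}}$, you via a flag qubit with a tuned gap), then apply a block-diagonal unitary that runs $U_1$ or $U_2$ depending on the bath sector and verify the two commutation constraints branch by branch. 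One small wording correction: $U$ commutes with $H^C$ because $H^C$ is diagonal in the flag basis, not because $U$ commutes with \emph{every} operator supported on $C$ (it does not commute with off-diagonal flag operators), but this does not affect the argument.
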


        \noindent Proofs of Theorems \ref{SLTO_closed} and \ref{SLTO_convex} are provided in Appendix~\ref{App:semilocal}. \qed

        In what follows we will establish relation between the general framework of SLTO and a communication-constrained scenario of local thermal operations and classical communication, which will be described at length in the sections to follow.

        % We present the proof of both theorems in the \ref{App:semilocal} due to their length and convolution. 
	
	\section{Framework of Local Thermal Operations and Classical Communication}\label{sec:framework}
	
	% Now we may proceed to uncover all ingredients of proposed framework: \textit{Local Thermal Operations and Classical Communication}: LTOCC.
	
	Having presented all the necessary foundational ideas and concepts, we may proceed to the introduction of the central idea of this work, combining spatially separated parties with thermal restrictions on the local operations that they can effect, with classical communication between them -- \textit{Local Thermal Operations and Classical Communication} (LTOCC).
	
	\subsection{Local thermal operations and preservation of product Gibbs state}
	
	Let us start by considering a scenario in which Alice and Bob have two separate physical subsystems and each of them is able to perform thermal operations locally.
	Since two subsystems are physically separated, we assume that the interactions between them are negligible. Otherwise, the parties might leverage those interactions, effectively treating two subsystems as one joint system.
	
	Therefore, the total Hamiltonian can be expressed as a sum of local Hamiltonians $H = H^{(A)} + H^{(B)}$. Hence, the thermal state of the entire system is a product of thermal states on each subsystem, ${\gamma}^{(AB)} = {\gamma}^{(A)} \otimes {\gamma}^{(B)}$ together with their corresponding local inverse temperatures $\beta^{(A)}$ and $\beta^{(B)}$. This assumption, natural for separated subsystems, results in a crucial property: the product of any local thermal operations is (semilocal) thermal as well. 
	
	\subsection{Classical communication and Landauer cost of measurement} \label{sec:landauer_cost}
	
	The final ingredient one requires before going into the framework of LTOCC is the problem of local measurements and classical communication between the parties, as there are certain subtleties that need to be discussed, especially the thermodynamic cost of measuring the system in the energy eigenbasis. Such measurements can be emulated using the SLTO framework given memory erasure after each round of measurement and forwarding of information, which we demonstrate explicitly in Appendix~\ref{App:semilocal}. This result is similar in spirit to the incoherently conditioned thermal operations as a subset of TO, as demonstrated in \cite{Watanabe2024}. Below, we lay down more heuristic arguments and point to a possible hidden cost connected with Landauer erasure. 
	
	The energy-incoherent quantum states \mbox{$\rho = \sum_i p_i \; \op{E_i}$} can be interpreted as a probabilistic combination or a statistical ensemble of pure Hamiltonian eigenstates. The projective measurement onto the energy eigenbasis collapses it to a random eigenstate $\ket{E_i}$ with probability $p_i = \ev{\rho}{E_i}$, thus single-shot measurement is not a thermal operation -- it can project a Gibbs state onto a high or low energy eigenstate, resulting in a gain of free energy, otherwise inaccessible by thermal operations alone. 
    % \jcz{Tutaj recenzent nam zarzucił dziurę logiczną (Rev 1, pkt 4), i trochę ją podłatałem}
    However, statistically, incoherent measurements leave diagonal states invariant $\sum_i |E_i\ra\la E_i| \la E_i|\rho|E_i\ra = \sum_i p_i \; \op{E_i} = \rho$. Thus while performing measurement on consecutive states from a Gibbs ensemble, with respect to energy eigenbasis $N$ times,
	the discrepancy between an ensemble of measurement outcomes and initial distribution vanishes on average as we go into the limit of a large number of rounds, $N\rightarrow\infty$. 
    \medskip

    The second element is the transfer and storage of information. According to Landauer’s principle~\cite{Landauer,BENNETT2003501}, erasing information in a classical memory incurs a fundamental thermodynamic cost. In our analysis, we assume that the classical computers used to store measurement outcomes are thermodynamically isolated from the quantum system, except for the moment of saving the outcome. Thus, although the erasure of classical memory (required between rounds or runs of the protocol) carries a cost of approximately $N k_B T \log d$ for $N$ measurements (with $d$ being the dimension of the quantum system), this cost does not affect the quantum system itself. Here, ``erasure'' refers solely to the resetting of the classical memory used in the LOCC protocol, not to any direct manipulation of the quantum state. Consequently, the Landauer cost can be disregarded when considering the quantum system’s thermodynamics. 
    % \jcz{Ewentualnie odnośnik do Appendixu, że pomiary da się symulować w ramach semilocal}

    Additionally, we note that arguments made by Watanabe and Takagi in~\cite{Watanabe2024} do not translate to our setting, as Alice and Bob do not have access to shared bath, and both systems may be in different temperatures. Thus, in order to formalise heuristic considerations put forward above we explicitly consider measurement in energy eigenbasis as a channel $\Phi: \mathcal{H}_A \mapsto \mathcal{H}_A\otimes \mathcal{M}$ represented by a set of operators $K_i = \qty(\ket{E_i}\otimes\ket{i}_\mathcal{M})\bra{E_i}$ for all $i$, which imprint the detected information about the energy level $E_i$ on an auxiliary (classical) memory system $\mathcal{M}$ and assume that energies are non-degenerate. It is straightforward to see that after discarding memory the map returns pinched state $\Tr_\mathcal{M}\qty[\Phi(\rho)] = \mathcal{P}(\rho) = \sum_i \ev{\rho}{E_i} \op{E_i}$. Pinching map, in general, preserves all energy-incoherent states and, in particular, the Gibbs state.  Furthermore, the use of classical information obtained from measurement to apply a classically-conditioned map is described by channel $\Xi:\mathcal{M}\otimes\mathcal{H}_B\mapsto\mathcal{M}\otimes\mathcal{H}_B$ defined as $\Xi(\mu\otimes\sigma) = \sum_i \la i |\mu|i\ra | i\ra\la i| \otimes\Xi_i(\sigma)$, with $\qty{\Xi_i:\mathcal{H}_B\mapsto \mathcal{H}_B}$ being a collection of CPTP maps. From this discussion one can infer that any further manipulation of subsystems $\mathcal{M}$ and $\mathcal{H}_B$ cannot alter the marginal belonging to the system $\mathcal{H}_A$:
    \begin{equation*}
    \begin{aligned}
    & \Tr_{B,\mathcal{M}}\qty[(\mathbb{I}_A \otimes\Xi)\,\circ\,(\Phi\otimes\mathbb{I}_B)\qty(\rho\otimes\sigma)] = \Tr_\mathcal{M}\qty[\Phi(\rho)] \\
    & = \mathcal{P}(\rho)  = \rho.
    \end{aligned}
    \end{equation*}
    We highlight that the above consideration assumes free access to, updating and erasure of a memory register -- which is far from obvious, but explained and justified in the prior paragraphs.
    \medskip

    Furthermore, let us highlight that one could extend the possible measurements to more general POVMs, e.g. with operators of the form $K_i = \sum_j a_{ij}\op{E_j}$ such that for all $j$ $\sum_i a_{ij}^2 = 1$, while satisfying the Gibbs preservation condition. There are three points to be made in this context. Firstly, when restricted to energy-incoherent states, such operations can be easily achieved equivalently by classical post-processing on the memory side. The second, more important aspect is that such measurements would in general provide partial coherence and/or entanglement preservation while giving only imperfect information about the system. 
    Thirdly, althou such energy-incoherent measurements preserve Gibbs state, in case of general quantum states it may not be strong enought restriction to consider it ``thermal'', due to the distinction between thermal operations and Gibbs-preserving operations, which emerge on the quantum level.
    However, due to their connection to coherence such measurements fall outside of the scope of this manuscript, and will be considered in detail in a future work~\cite{Czartowski2025QuantumLTOCC}.

	Finally, because statistically neither (conditional) thermal operations nor measurement in energy eigenbasis can create any coherences, the description of energy-incoherent states by probability distributions is complete, and instead of quantum channels, we can continue our discussion using stochastic maps.
    
	\subsection{Framework outline}
    \label{sec:hierarchy}
    
    In what follows for simplicity of exposition, we will focus on the bipartite scenario, with extension of the general multipartite definitions following in a similar manner.
	
	We start our discussion of local thermal operations aided by classical communication with one round scenario restricted by operational consideration, where Alice and Bob are restricted to thermal operations in each of their respective subsystems, but can also perform incoherent measurements. We will denote local Gibbs state as $\gamma^{(A)}$ and $\gamma^{(B)}$, respectively.
    \medskip
    
    The protocol proceeds as follows: in the first step, Alice can perform a measurement on her subsystem and send the information to Bob who, depending on the received information, can perform different thermal operations; in other words, Bob performs local thermal operations conditioned on the information obtained by Alice from measurement. If we stack Bob operations conveniently in one stochastic tensor $T_{jkl}^{(B)}$, the discussed scheme corresponds to a stochastic matrix of the form  
	\begin{equation}
		M_{ij,kl} = \delta_{ik} T_{jkl}^{(B)}~,
	\end{equation}
	since the ensemble of Alice's states remains unchanged; keep in mind that the superscript ${}^{(B)}$ refers to Bob, the party performing thermal operations. 
    Combining that with possible local thermal operation $\Lambda^{(A)}$ on Alice's side
	\begin{equation}
    \label{new_one_ounds1}
		M_{ij,kl} = \Lambda_{ik}^{(A)} T_{jkl}^{(B)}
	\end{equation} 
	we obtain one-round \textit{Local Thermal Operations and Classical communication} (LTOCC).\footnote{Note that $\text{LTOCC} \;\text{``}\neq\text{''}\; \text{LOCC}_{d^2}+\text{TO}_{d^2}$, but rather $\text{LTOCC} \;\text{``}=\text{''}\; \text{LTO}_{d}^{\otimes 2} + \text{CC}_d^2$, with subscripts pointing to dimensions of the involved subsystems. In particular, the former would result in a 
    theory identical to LOCC when restricted to classical states.
    }. The mirror scenario of measurement on Bob's side corresponds to
    \begin{equation}
    \label{new_one_ounds2}
        M_{ij,kl} = T_{ikl}^{(A)} \Lambda_{jl}^{(B)}~
    \end{equation}  
    From the thermality of each conditioned thermal operation in~\eqref{new_one_ounds1} and~\eqref{new_one_ounds2} one obtains thermal restrictions for the tensors $T$: 
	\begin{equation}
    \label{thermal_cond}
		\forall_l\sum_{k} T_{ikl}^{(A)} \gamma_k^{(A)} = \gamma_i^{(A)} \;,~~ 
		\forall_k \sum_{l} T_{jkl}^{(B)} \gamma_l^{(B)} = \gamma_j^{(B)} \;.
	\end{equation}
	We name any stochastic tensor satisfying one of the above requirements a \textit{thermal tensor}.

	The framework of local thermal operations and classical communication can be easily extended to multiple rounds without memory
	\begin{align}
		M = \prod_{\alpha}  M^{(\alpha)} ~~\text{where}~~& M_{ij,kl}^{(\alpha)} = \Lambda_{ik}^{(A,\alpha)} T_{jkl}^{(B,\alpha)} \nonumber \\ ~~\text{or}~~&M_{ij,kl}^{(\alpha)} =T_{ikl}^{(A,\alpha)} \Lambda_{jl}^{(B,\alpha)}~,\label{eq:LTOCC_n_no_mem}
	\end{align}
	with $\alpha$ indexing the consecutive rounds. We denote all operations realizable using $n$-round protocol of the form~\eqref{eq:LTOCC_n_no_mem} as $\text{LTOCC}_n$. A schematic representation of this multiround protocol is provided in Fig.~\ref{fig:LTOCC2}.
    \medskip
    
    Further extension can be made by allowing both Alice and Bob to retain a memory of measurement results from previous rounds, referred to as LTOCC+M.
    \begin{equation}
        \vb{M}\qty(\vb{p}\otimes\vb{q}) = \Tr_{\mathcal{M}_1,\hdots,\mathcal{M}_n}\qty(\qty[\bigcirc_{\alpha} \vb{M}^{(\alpha)}]\qty(\vb{p}\otimes\vb{q})).
    \end{equation}

    Each round consists of measurement on one subsystem combined with local thermal operations conditioned by all previous measurements, ie: 
    
    % \begin{widetext}
    \begin{align}
        & r_{ij,m_1\hdots m_{\alpha}}^{\alpha} = \left[\vb{M^{(\alpha)}}(\vb{r}^{(\alpha-1)})\right]_{ijm_1\hdots m_{\alpha}} \nonumber \\
        = & \sum_{kl} {T^{(A,\alpha)}}_i^{k;\,m_1 \hdots m_{\alpha-1}} \delta_{k,m_{\alpha}} \; {T^{(B,\alpha)}}_j^{l;\,m_1 \hdots m_{\alpha-1} m_{\alpha}} \times \nonumber \\
        {\color{white}=} & {\color{white}\sum_{kl}}\times r_{kl,m_1\hdots m_{\alpha-1}}^{(\alpha-1)} 
    \label{LTOCCM_def}
    \end{align}
    % \end{widetext}
    where $m_{i},\hdots, m_{\alpha}$ are indices related to the memory registers, $k,l$ are system-related input state indices, and $i,j$ are system-related output-state indices. 
    Here, $\delta_{k,m_\alpha}$ is a measurement on Alice's subsystem, recorded on a new classical register $m_\alpha$. This is followed by thermal operation ${T^{(A)}}_i^{k;\,m_1 \hdots m_{\alpha-1}}$ conditioned on all previous measurement results $m_1\hdots,m_{\alpha - 1}$.
    Furthermore  ${T^{(B)}}_j^{l;\,m_1 \hdots m_{\alpha-1} m_{\alpha}}$ is a thermal operation on Bob's side conditioned by the measurement's outcome in previous rounds $m_{1},\hdots,m_{\alpha-1}$ and new measurement result on the Alice side $m_{\alpha}$. 
    We provide a diagrammatic representation of this protocol in Fig.~ \ref{fig:LTOCC2M}. Additional detailed discussion of 1- and 2-round LTOCC protocols with and without protocols is provided in Appendix~\ref{App:Geo_evol}. 
    
    \medskip

We supplement the exhibition of multiround LTOCC protocols with and without memory with two simple observations regarding inclusions.

    \begin{obs} \label{obs:more_rounds_inclusion}
        $\text{LTOCC}_n$(+M) forms a superset of $\text{LTOCC}_{n'}$(+M) for any $n'\leq n$. 
    \end{obs}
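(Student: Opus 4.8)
The plan is to prove the inclusion by \emph{padding}: any protocol realized with $n'$ rounds can be reproduced by an $n$-round protocol in which the extra $n-n'$ rounds act as the identity. The single fact that makes this work is that the identity channel is itself a thermal operation — it preserves any Gibbs state trivially — so appending rounds in which both parties do nothing (and classical messages are ignored) cannot take us outside the class. Hence it suffices to exhibit, for each padding round, valid choices of local thermal operations and thermal tensors whose combined single-round block equals the identity stochastic matrix.

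For $\text{LTOCC}_n$ without memory, recall from~\eqref{eq:LTOCC_n_no_mem} that an $n$-round operation is a product $M = \prod_\alpha M^{(\alpha)}$ of single-round blocks. Given an $n'$-round protocol realizing $M' = \prod_{\alpha=1}^{n'} M^{(\alpha)}$, I would define the rounds $\alpha = n'+1,\dots,n$ by taking Alice's local thermal operation to be $\Lambda^{(A,\alpha)} = \mathbb{I}$ and Bob's conditioned tensor to be $T^{(B,\alpha)}_{jkl} = \delta_{jl}$. This is a legitimate thermal tensor: it is stochastic, $\sum_j \delta_{jl} = 1$, and it satisfies the Gibbs-preservation condition~\eqref{thermal_cond}, $\sum_l \delta_{jl}\gamma_l^{(B)} = \gamma_j^{(B)}$ for every $k$. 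Then $M^{(\alpha)}_{ij,kl} = \delta_{ik}\delta_{jl}$ is the identity matrix, so $\prod_{\alpha=1}^{n} M^{(\alpha)} = M'$ and the $n$-round protocol realizes exactly the same transformation; thus $\text{LTOCC}_{n'} \subseteq \text{LTOCC}_n$.

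For $\text{LTOCC}_n$+M the argument is the same in spirit, with the only extra bookkeeping being that each padding round still nominally performs a measurement. In round $\alpha > n'$ one introduces a fresh register via $\delta_{k,m_\alpha}$ as in~\eqref{LTOCCM_def} and then takes both conditioned thermal tensors to be identities, ${T^{(A,\alpha)}}_i^{k;\,m_1\cdots m_{\alpha-1}} = \delta_{ik}$ and ${T^{(B,\alpha)}}_j^{l;\,m_1\cdots m_{\alpha}} = \delta_{jl}$, which are manifestly valid thermal tensors for every value of the conditioning registers. Since the systems are left untouched and the new register $m_\alpha$ is traced out at the end along with all the others, $\Tr_{\mathcal{M}_\alpha}$ of a freshly copied-then-idle register returns the system marginal exactly, so these rounds contribute the identity to $\vb{M}$ and the $n$-round-with-memory protocol reproduces the $n'$-round-with-memory one. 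There is no genuine obstacle here; the only point requiring (minimal) care is this last observation in the memory case, namely that a mandatory per-round measurement followed by trivial dynamics and a subsequent partial trace over the same register is the identity on the joint state.
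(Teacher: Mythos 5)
Your proposal is correct and is essentially the paper's argument: both pad the shorter protocol with rounds in which the conditioned tensors and local post-processing are (unconditional) identities, which are trivially valid thermal operations; the paper phrases this as a one-round extension plus induction, while you pad all $n-n'$ rounds at once and additionally spell out the (correct) bookkeeping that in the memory case the fresh register created by the mandatory measurement is simply traced out, leaving the joint state untouched. No gap.
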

    \begin{proof}
        The observation follows by considering $\text{LTOCC}_{n+1}$(+M), where Bob measures in $n+1$-st round, and setting $T^{(A,n+1)}$ and $\Lambda^{(B,n+1)}$ ($T^{(B,n+1)}$) to unconditional identity operations, thus realising arbitrary $n$-round protocol. The claim follows by induction.
    \end{proof}

\begin{figure}[h]
\centering
% \begin{subfigure}[h]{0.49\textwidth}
% \centering
% \caption{$\text{LTOCC}_2$}
% \scalebox{0.7}{
% \Qcircuit @C=1.0em @R=1.0em @!R { \\
% 	\lstick{{m}_{1} :  }&  \ar @{<=} [1,0] & \control \cw \cwx[2]  & \barrier[0em]{2} &  & \ar @{<=} [2,0] & \control \cw \cwx[1] & &\\
% 	\lstick{A :  }& \meter & \qw & \gate{\Lambda^{(A,1)}} & \qw & \qw & \gate{T^{(A,2)}} & \qw & \qw \\
% 	\lstick{B :  }& \qw & \gate{T^{(B,1)}} & \qw & \qw & \meter & \qw & \gate{\Lambda^{(B,2)}} & \qw \\  \\ \\}}
% \label{fig:LTOCC2}
% \end{subfigure}
% %
% \hfill
% \begin{subfigure}[h]{0.49\textwidth}
% \centering
% \caption{$\text{LTOCC}_2+\text{M}$}
% \scalebox{0.7}{
% \Qcircuit @C=1.0em @R=1.0em @!R { \\
% 	\lstick{{m}_{1} :  }&  \ar @{<=} [1,0] & \control \cw^(0.0){^{\mathtt{}}} \cwx[2]  & \cw \barrier[0em]{3} & \cw & \cw & \control \cw^(0.0){^{\mathtt{}}} \cwx[1] & \control \cw^(0.0){^{\mathtt{}}} \cwx[2] \\
% 	\lstick{A :  }& \meter & \qw & \gate{\Lambda^{(A,1)}} & \qw & \qw & \gate{T^{(A,2)}} & \qw & \qw \\
% 	\lstick{B :  }& \qw & \gate{T^{(B,1)}} & \qw & \qw & \meter & \qw & \gate{T^{(B,2)}} & \qw \\
% 	\lstick{{m}_{2} :  }&   & &  &  &  \ar @{<=} [-1,0] & \control \cw^(0.0){^{\mathtt{}}} \cwx[-2] &  &  \\
% \\ }}
% \label{fig:LTOCC2M}
% \end{subfigure}
%\captionsetup[minipage]{labelformat=empty}
\begin{minipage}[h]{0.49\textwidth}
\centering
    \renewcommand{\thefigure}{3.a}
\caption{$\text{LTOCC}_2$}
\scalebox{0.7}{
\Qcircuit @C=1.0em @R=1.0em @!R { \\
	\lstick{{m}_{1} :  }&  \ar @{<=} [1,0] & \control \cw \cwx[2]  & \barrier[0em]{2} &  & \ar @{<=} [2,0] & \control \cw \cwx[1] & &\\
	\lstick{A :  }& \meter & \qw & \gate{\Lambda^{(A,1)}} & \qw & \qw & \gate{T^{(A,2)}} & \qw & \qw \\
	\lstick{B :  }& \qw & \gate{T^{(B,1)}} & \qw & \qw & \meter & \qw & \gate{\Lambda^{(B,2)}} & \qw \\  \\ \\}}
\label{fig:LTOCC2}
\end{minipage} 
\hfill
\begin{minipage}[h]{0.49\textwidth}
\centering
    \renewcommand{\thefigure}{3.b}
\caption{$\text{LTOCC}_2+\text{M}$}
\scalebox{0.7}{
\Qcircuit @C=1.0em @R=1.0em @!R { \\
	\lstick{{m}_{1} :  }&  \ar @{<=} [1,0] & \control \cw^(0.0){^{\mathtt{}}} \cwx[2]  & \cw \barrier[0em]{3} & \cw & \cw & \control \cw^(0.0){^{\mathtt{}}} \cwx[1] & \control \cw^(0.0){^{\mathtt{}}} \cwx[2] \\
	\lstick{A :  }& \meter & \qw & \gate{\Lambda^{(A,1)}} & \qw & \qw & \gate{T^{(A,2)}} & \qw & \qw \\
	\lstick{B :  }& \qw & \gate{T^{(B,1)}} & \qw & \qw & \meter & \qw & \gate{T^{(B,2)}} & \qw \\
	\lstick{{m}_{2} :  }&   & &  &  &  \ar @{<=} [-1,0] & \control \cw^(0.0){^{\mathtt{}}} \cwx[-2] &  &  \\
\\ }}
\label{fig:LTOCC2M}
\end{minipage}
    \setcounter{figure}{2}
    \renewcommand{\thefigure}{\arabic{figure}}
\caption{
\textbf{Two-round LTOCC with and without memory:} In both versions of $\text{LTOCC}_2$ protocol first round consists of Alice measuring her subsystem, storing the result in the memory register $m_1$, which is then sent to Bob to implement conditional thermal operation, represented by the thermal tensor $T^{(B,1)}$, which is followed with Alice implementing a local post-processing TO $\Lambda^{(A,1)}$.  In the no-memory case (a) in the second round the memory register $m_1$ is overwritten with Bob's measurement of his system and the operations are continued with the roles swapped. For memory case (b) the register $m_1$ is retained and Bob stores his result in the new register $m_2$. Subsequently, Alice performs operations conditioned on both $m_1$ and $m_2$, whereas Bob post-processes his subsystem conditioned on $m_1$ only.
}
\label{fig:ciruits}
\end{figure}
\newpage

    \begin{obs} \label{obs:memory_inclusion}
        $\text{LTOCC}_n+\text{M}$ forms a superset of $\text{LTOCC}_{n}$.
    \end{obs}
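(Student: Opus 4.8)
The plan is to exhibit any $\text{LTOCC}_n$ operation as the special case of an $\text{LTOCC}_n+\text{M}$ operation in which the additional memory registers are written but never consulted by the conditional thermal tensors. First I would fix an arbitrary $M=\prod_{\alpha}M^{(\alpha)}\in\text{LTOCC}_n$ of the form \eqref{eq:LTOCC_n_no_mem}; this specifies, for every round $\alpha$, which party measures, as well as the data of the tensors $T^{(\cdot,\alpha)}$ and the post-processing maps $\Lambda^{(\cdot,\alpha)}$, each subject to the thermal constraint \eqref{thermal_cond}.

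Next I would build the candidate memory protocol \eqref{LTOCCM_def} on the same round schedule: in round $\alpha$ the same party measures, its outcome is copied onto a fresh register $m_{\alpha}$ via the $\delta$-gate, and the conditional thermal tensors are chosen \emph{independent} of the older registers, so that ${T^{(A,\alpha)}}_i^{k;\,m_1\ldots m_{\alpha-1}}$ collapses to a matrix in $(i,k)$ and ${T^{(B,\alpha)}}_j^{l;\,m_1\ldots m_{\alpha-1}m_{\alpha}}$ to a tensor in $(j,l,m_{\alpha})$ (and symmetrically when Bob is the measuring party). Identifying these reduced tensors with $\Lambda^{(A,\alpha)}$ and $T^{(B,\alpha)}_{jkl}$ (respectively the mirror objects) is admissible, since trivial dependence on a control index is always allowed and the thermal condition is inherited tensor-by-tensor; with these identifications, contracting the fresh register reproduces exactly the factor $M^{(\alpha)}$.

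The key — and essentially only — step is then to observe that a classical register written once and never subsequently used as a control decouples from the remaining dynamics, so that tracing it out at the end of the protocol is equivalent to tracing it out immediately after the round that created it. Applying this to $m_1,\ldots,m_n$ shows that the memory protocol just constructed, after the final partial trace, acts on Alice's and Bob's populations exactly as $\prod_{\alpha}M^{(\alpha)}=M$; hence $M\in\text{LTOCC}_n+\text{M}$, which is the claimed inclusion. Combined with Observation~\ref{obs:more_rounds_inclusion}, this pins down the inclusion structure of the whole hierarchy.

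I do not expect a genuine obstacle here, only a bookkeeping subtlety worth spelling out: in the no-memory description \eqref{eq:LTOCC_n_no_mem} the post-measurement state of the measured subsystem itself plays the role of the ``record'' that is overwritten each round, whereas in \eqref{LTOCCM_def} the outcome is additionally copied onto an explicit classical register. One must check that the two descriptions agree on the quantum marginals — they do, since the energy-incoherent projective measurement collapses the measured subsystem identically in both, and the descriptions differ only in whether a passive classical copy is retained — after which the decoupling argument above applies verbatim.
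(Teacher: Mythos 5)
Your construction---writing each outcome to a fresh register while letting the conditional tensors depend only on the current round's register, so that the measuring party's post-processing collapses to an unconditioned $\Lambda^{(\cdot,\alpha)}$ and the other party's tensor to the no-memory $T^{(\cdot,\alpha)}_{jkl}$---is essentially the paper's own proof, which states it as conditioning $T^{(A,n)}$ (resp.\ $T^{(B,n)}$) nontrivially only on $m_n$ and leaving the other map unconditioned. Your additional remark that a register written once and never consulted can be traced out round-by-round is correct and simply makes explicit the bookkeeping the paper leaves implicit.
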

    \begin{proof}
        The observation follows by setting $T^{(A,n)}$ to be conditioned nontrivially only on $m_{n}$ and $T^{(B,n)}$ to be unconditioned in rounds when Bob measures, and similarly for the rounds when Alice measures.
    \end{proof}

    Note that we do not provide the arguments for proper inclusion relations between protocols with and without memory, as well as for protocols of different depths. Such a problem is nontrivial even for standard LOCC~\cite{Chitambar2014}.
\medskip
    
    Utilising two-round LTOCC+M one can construct a particularly interesting example, in which Alice and Bob measure their respective subsystems in parallel, and then operate on their respective systems based on information received from their counterpart. It can be realised in two rounds as follows. The first round consists of Alice measuring her system and storing the result in memory register $m_1$ without performing any postprocessing $T^{(A,1)}:=\Lambda^{(A,1)} = \id$. Bob also does not perform any action $T^{(B,1)} = \mathbb{I}$ (unconditionally). When the second round begins, Bob measures his system and stores the result in a second memory register $m_2$. Then we set Alice operation $T^{(A,2)}$ to be conditioned on $m_2$ only and, similarly, Bob's postprocessing $T^{(B,2)}$  on previously stored $m_1$. Evaluating~\eqref{LTOCCM_def} according to the above description leads to a compact 1-step transition matrix
	\begin{equation}
		\label{one_round}
		M_{ij,kl} = T_{ikl}^{(A)} T_{jkl}^{(B)}.
	\end{equation}
    In a slight abuse of terminology, we refer to this particular restricted scenario as \textit{1-round parallel LTOCC}, as it does not involve any feedback. This procedure is depicted in Fig.~ \ref{fig:ParLTOCC}.

    \begin{figure}[h]
    \centering
    \scalebox{.8}{
    \Qcircuit @C=1.0em @R=1.0em @!R { \\
    	  \lstick{{m}_{1}:}&  \ar @{<=} [1,0] \barrier[0em]{3} & \cw & \cw & \cw & \control \cw^(0.0){^{\mathtt{}}} \cwx[2] &  &  \\
    	\lstick{A:}&\meter & \qw & \qw & \gate{T^{(A)}} & \qw & \qw & \qw \\
    	\lstick{B:}& \qw & \qw & \meter & \qw & \gate{T^{(B)}} & \qw & \qw \\
    	  \lstick{{m}_{2}:}&  & &   \ar @{<=} [-1,0] & \control \cw^(0.0){^{\mathtt{}}} \cwx[-2]  &   \\
    \\ }}
    \caption{\textbf{Parallel $\text{LTOCC}$ as a special case of $\text{LTOCC}_2+\text{M}$:} Parallel protocol can be implemented as a variant of 2-round protocol with memory by limiting first round of the protocol to measurement only and then conditioning the local operations on the memory register of the other party -- Alice on Bob's measurement and Bob on Alice's measurement.}
    \label{fig:ParLTOCC}
    \end{figure}

    \medskip
	Finally, one may further extend the framework by introducing shared randomness (LTOCC+R). Alice and Bob can share a random variable $\lambda_\alpha$ and perform different LTOCC depending on its value:
	
	\begin{equation}
		M_{ij,kl} = \sum_\alpha \lambda_\alpha \Lambda_{ik}^{(A,\alpha)} T_{jkl}^{(B,\alpha)}  ~~\text{or}~~ M_{ij,kl} = \sum_\alpha \lambda_\alpha T_{ikl}^{(A,\alpha)} \Lambda_{jl}^{(B,\alpha)}
	\end{equation}

    Below, we demonstrate that shared randomness provides an actual advantage over LTOCC without shared randomness, showing that there exist operations which can be realised by LTOCC+R and not by LTOCC.

    \begin{thm}\label{thm:strict_sup_LTOCC}
        Local Thermal Operations and Classical Communication (LTOCC) form a strict subset of LTOCC with randomness (LTOCC+R).
    \end{thm}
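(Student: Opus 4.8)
The plan is to exhibit a single target stochastic map $M_{ij,kl}$ that is manifestly realisable by a convex combination of two LTOCC protocols but cannot be written in any of the (finitely many) product forms~\eqref{new_one_ounds1}–\eqref{new_one_ounds2} or their multi-round concatenations~\eqref{eq:LTOCC_n_no_mem}. The natural candidate is something with a ``two-branch'' structure: with probability $\tfrac12$ Alice measures and Bob responds, and with probability $\tfrac12$ Bob measures and Alice responds, with the two conditional responses chosen to be genuinely incompatible in the sense that no deterministic ordering of measurements reproduces the induced correlations. Concretely, I would work in the simplest nontrivial setting — qubits $d_A = d_B = 2$, possibly at infinite temperature so that the thermal tensors are just bistochastic tensors and the obstruction is purely combinatorial — and pick $\Lambda^{(A,1)}, T^{(B,1)}$ and $T^{(A,2)}, \Lambda^{(B,2)}$ so that $M = \tfrac12 M^{(1)} + \tfrac12 M^{(2)}$ is symmetric under the swap $A\leftrightarrow B$ but each single protocol (in either direction) breaks that symmetry in an essential way.

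The key steps, in order, are: (i) recall that any $\text{LTOCC}_n$ map has, in its \emph{last} round, a distinguished party that measures — say Alice — so that the map factors as $M_{ij,kl} = \sum_{k'} (\text{something})_{ik',kl}\,\delta_{i k'}$ in the appropriate block, i.e. it carries a structural constraint that one marginal is untouched conditionally on that round's measurement outcome; (ii) by Observation~\ref{obs:more_rounds_inclusion} and the fact that composing finitely many such maps still yields a stochastic matrix of a constrained tensor-product form, argue that the \emph{boundary} of $\bigcup_n \text{LTOCC}_n$ (as a subset of the polytope of stochastic matrices on $\mathbb{C}^{d_A d_B}$) consists of maps satisfying at least one such ``one-sided invariance'' identity; (iii) show the chosen $M$ violates \emph{every} such identity — neither $\Tr_B$ nor $\Tr_A$ of the output is a function of the corresponding input marginal alone — hence $M\notin\text{LTOCC}_n$ for any $n$; (iv) conversely verify $M\in\text{LTOCC}+\text{R}$ directly from its definition as an equal mixture of a "measure-$A$-first" and a "measure-$B$-first" one-round protocol, both of which are bona fide LTOCC.

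The main obstacle is step (iii), and more precisely making step (ii) airtight: LTOCC without randomness is \emph{not} obviously closed, and even its closure need not be a union of finitely many ``nice'' algebraic varieties, so I cannot simply say ``every LTOCC map satisfies one of $k$ polynomial identities.'' The honest route is to avoid characterising the whole set and instead use a separating-functional / extreme-point argument: find a linear functional $\phi$ on stochastic matrices such that $\phi(M') \le c$ for every one-round-primitive LTOCC map $M'$ in either direction (this is a finite check over the building blocks~\eqref{new_one_ounds1}–\eqref{new_one_ounds2}, using the thermal constraints~\eqref{thermal_cond}), that $\phi$ is \emph{super}additive-or-monotone under the composition in~\eqref{eq:LTOCC_n_no_mem} so the bound $\phi(\cdot)\le c$ survives to all $\text{LTOCC}_n$, yet $\phi(M) > c$ for the target. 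Designing $\phi$ — most likely a linear combination of the ``marginal-disturbance'' quantities $\sum_{ijkl}|M_{ij,kl}-\delta_{ik}(\cdots)|$ measuring how far each round is from one-sided trivial — is where the real work lies; the $\text{LTOCC}+\text{R}$ inclusion and the qubit bookkeeping are routine. As a fallback, if a clean composition-monotone $\phi$ proves elusive, I would instead prove the weaker-but-sufficient statement that the mixture lies outside $\text{LTOCC}_1+\text{M}$ and outside the \emph{closure} of $\bigcup_n\text{LTOCC}_n$ by a dimension/volume argument — showing $\text{LTOCC}+\text{R}$ has full dimension near $M$ while $\bigcup_n\text{LTOCC}_n$ is contained in a finite union of proper submanifolds in a neighbourhood of $M$ — which still yields strict inclusion.
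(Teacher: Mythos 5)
Your proposal does not reach a proof: the decisive step -- showing that your mixture $M$ lies outside LTOCC without randomness -- is exactly the part you leave open. Step (ii) is unsubstantiated and, as stated, dubious: a one-round map of the form \eqref{new_one_ounds1} does satisfy a one-sided invariance (the measuring party's conditional output marginal $\sum_j M_{ij,kl}=\Lambda^{(A)}_{ik}$ is independent of $l$), but compositions \eqref{eq:LTOCC_n_no_mem} of rounds with alternating directions need not satisfy any single such identity, so ``the boundary of $\bigcup_n\text{LTOCC}_n$ consists of maps with one-sided invariance'' is not available to you. The separating functional $\phi$ that is monotone under composition is never constructed, and the dimension/volume fallback is only a hope that $\bigcup_n\text{LTOCC}_n$ is locally contained in finitely many proper submanifolds -- again unproven. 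You also overshoot the target: as the paper clarifies immediately after the theorem, the comparison is between protocols with the \emph{same} number of rounds (LTOCC$_n$ vs.\ LTOCC$_n+$R), so excluding $M$ from the union over all $n$ is unnecessary and much harder than what is required; moreover, working only at $\beta=0$ would establish strictness only for that special case, and a mixture of a ``measure-$A$-first'' with a ``measure-$B$-first'' protocol is not literally of the one-round LTOCC$+$R form in the paper's definition (it sits naturally in the two-round variant), so even step (iv) needs care.

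The paper's own argument is far more local and avoids all of this. Fixing the round structure, an LTOCC map has the bilinear product form $M_{ij,kl}=\Lambda^{(A)}_{ik}T^{(B)}_{jkl}$; decomposing $\Lambda=\sum_{\alpha_2}u_{\alpha_2}\Lambda^{\alpha_2}$ and $T=\sum_{\alpha_1}t_{\alpha_1}T^{\alpha_1}$ into extreme points of the respective polytopes, every LTOCC map carries \emph{product} weights $t_{\alpha_1}u_{\alpha_2}$ over the extremal pairs $M^{\alpha_1,\alpha_2}$, whereas shared randomness allows arbitrary joint weights $\lambda_{\alpha_1,\alpha_2}$; since product distributions are a proper subset of joint ones, LTOCC (at fixed $n$) is not convex while LTOCC$+$R is its convex hull, giving strict inclusion without ever analysing closures, boundaries, or compositions across different $n$. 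If you want to salvage your route, the workable kernel is to make the non-product obstruction concrete -- e.g.\ note that for each fixed $k$ the matrix $M_{ij,kl}=\Lambda_{ik}T_{jkl}$ is rank one in $(i)\times(j,l)$, a property a generic convex combination violates -- which is essentially a quantitative version of the paper's argument rather than your global exclusion from $\bigcup_n\text{LTOCC}_n$.
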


    \begin{proof}
    To show that LTOCC is a proper subset of LTOCC+R we first show that LTOCC is not a convex set and then that LTOCC+R is its convex hull. Hereafter, for the sake of clarity, we focus on one round of the protocol, however, the presented arguments hold for any number of rounds. 
    
    Let us first note that, similarly to thermal operation $\Lambda$, a set of thermal tensors $T_{ikl}$ forms a convex polytope since it is defined by linear relations together with the linear inequalities $0\leq T_{ikl} \leq 1$. Thus each tensor $T_{ikl}$ can be decomposed as a convex combination of extremal tensors $T_{ikl} = \sum_{\alpha_1} t_{\alpha_1} T_{ikl}^{\alpha_1}$ with some weights $t_{\alpha_1}$; similarly for $\Lambda_{ik} = \sum_{\alpha_2} u_{\alpha_2} \Lambda_{ik}^{\alpha_2}$. However, the set of LTOCC maps $M_{ij,kl} = \Lambda_{ik}^{(A)}T_{jkl}^{(B)}$ is not convex, since each map can be decomposed into extremal LTOCC using product distribution $M_{ij} = \sum_{\alpha_1, \alpha_2} t_{\alpha_1}u_{\alpha_2} \Lambda_{ik}^{\alpha_2} T_{jkl}^{\alpha_1} =: \sum_{\alpha_1,\alpha_2} t_{\alpha_1}u_{\alpha_2} M_{ij,kl}^{\alpha_1,\alpha_2}$. Since the product distributions form proper subsets of all two-dimensional distributions we infer some combinations of extremal maps are not in $LTOCC$, thus it is not a convex set. 
    
    On the other hand, the introduction of shared randomness enables one to perform arbitrary convex combinations of extremal channels $M_{ij} = \sum_{\alpha_1, \alpha_2} \lambda_{\alpha_1, \alpha_2}  \Lambda_{ik}^{\alpha_1} T_{jkl}^{\alpha_2}$, thus enlarging the framework by enabling arbitrary convex combinations. This shows that $\text{LTOCC}+\text{R} \supset \text{LTOCC}$.
    \end{proof}

    Similar statements follow \textit{mutatis mundis} when considering multi-round protocols with the same number of rounds $n$ both with and without memory, when comparing variants with and without randomness (R).
    \medskip

    Finally, we present the main property of most of the introduced protocols, which is simultaneously a main restriction of their capabilities:

    \begin{restatable}{thm}{subsetSLTO}
    \label{thm:subset_of_SLTO}
    Local thermal Operations and Classical communication with randomness (LTOCC$_n$ +R) (with projective energy-incoherent measurements), form a subset of semilocal thermal operations (SLTO) (even for non energy-incoherent states).
    \end{restatable}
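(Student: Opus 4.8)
The plan is to show that every building block of an LTOCC$_n+$R protocol can be reproduced by a semilocal thermal operation, and then to invoke closure of SLTO under composition (Theorem~\ref{SLTO_closed}) and convexity (Theorem~\ref{SLTO_convex}) to assemble the whole protocol. The natural strategy is to work round by round, so that it suffices to prove the claim for a single round of the form $M_{ij,kl} = \Lambda^{(A)}_{ik} T^{(B)}_{jkl}$ (and its mirror, and the parallel variant), possibly together with the memory registers. The key observation is that the right-hand side of the second SLTO commutation constraint~\eqref{SLTO_constr2} is a \emph{sum} of an $A$-only term and a $B$-only term, so any unitary that acts as a controlled operation --- measuring $A$ in the energy eigenbasis and then applying, conditioned on the classical outcome, an energy-conserving (hence $\beta^{(A)}$- and $\beta^{(B)}$-respecting) interaction between $B$ and $\mathcal{B}^{(B)}$ --- will satisfy both~\eqref{SLTO_constr1} and~\eqref{SLTO_constr2}, provided the control register is given a trivial (degenerate) Hamiltonian. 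This is precisely the content alluded to in the discussion preceding the theorem and spelled out in Appendix~\ref{App:semilocal}: a conditional thermal operation on $B$, controlled by an incoherent measurement on $A$, is an SLTO, because an incoherent measurement on $A$ is itself the pinching map $\mathcal{P}$ which preserves $\gamma^{(A)}$, and a thermal operation on $B$ preserves $\gamma^{(B)}$.

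Concretely, I would proceed as follows. First, fix the local thermal baths and dilate each conditional thermal operation $T^{(B,\alpha)}$ (respectively $\Lambda^{(A,\alpha)}$, $T^{(A,\alpha)}$) to a global unitary on $S^{(AB)}\otimes\mathcal{B}^{(A)}\otimes\mathcal{B}^{(B)}$ together with the classical memory registers $\mathcal{M}_1,\dots,\mathcal{M}_n$, assigning the memory registers degenerate Hamiltonians so they contribute nothing to either commutator. Second, verify the two commutation relations~\eqref{SLTO_constr}: \eqref{SLTO_constr1} follows because the measurement step is a projection in the $A$-energy eigenbasis (commutes with $H^{(A)}$) and the conditioned step is, for each outcome, an energy-conserving unitary on $B\cup\mathcal{B}^{(B)}$; \eqref{SLTO_constr2} follows from the same facts together with the block structure of the generator. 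Third, use Theorem~\ref{SLTO_closed} to compose the $n$ rounds into a single SLTO --- crucially with the \emph{same} pair $(\beta^{(A)},\beta^{(B)})$ --- and trace out the memory registers, which is an allowed (partial-trace) operation within the SLTO framework since $\mathcal{M}$ carries a trivial Hamiltonian. Fourth, handle the shared-randomness layer: an LTOCC$_n+$R channel is by definition a convex combination $\sum_\alpha \lambda_\alpha M^{(\alpha)}$ of LTOCC$_n$ channels, each of which is SLTO by the previous steps, so convexity (Theorem~\ref{SLTO_convex}) closes the argument. I would also remark that the statement is asserted for arbitrary (not necessarily energy-incoherent) input states, which is automatic here because the dilation I constructed is a genuine quantum channel and the measurement/conditioning structure never assumed incoherence of the input --- only the measurement itself is incoherent.

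The main obstacle I anticipate is the careful bookkeeping of the \emph{memory registers and their conditioning structure} across rounds, especially in the LTOCC$_n+$M case where a round-$\alpha$ operation on $B$ is conditioned on all of $m_1,\dots,m_\alpha$. One must make sure that (i) each memory register, once written, is only read (classically controlled-on) and never coherently manipulated, so that the copy/pinch structure that makes the incoherent measurement Gibbs-preserving is genuinely maintained throughout; (ii) the global unitary realizing a multiply-controlled thermal operation still commutes with the two relevant Hamiltonians --- this is where one needs the memory Hamiltonians to be trivial and needs each ``leaf'' unitary (for a fixed tuple of classical outcomes) to be separately energy-conserving and temperature-respecting on $B\cup\mathcal{B}^{(B)}$; and (iii) the final partial trace over $\mathcal{M}_1,\dots,\mathcal{M}_n$ is legitimate. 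A secondary subtlety is that when one party measures in a later round, the earlier conditional operations of the other party must be shown to commute appropriately with that later measurement --- but since these act on disjoint tensor factors ($A$ versus $B$, and distinct memory registers) this is immediate. None of these steps is deep; the work is in presenting the dilation cleanly, and I would relegate the explicit unitary construction and the two commutator checks to Appendix~\ref{App:semilocal}, citing the single-round case proven there and then extending by the composition/convexity theorems.
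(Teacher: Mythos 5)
Your overall route is the paper's: realize a single memoryless round (an energy-incoherent projective measurement on one side followed by conditioned thermal operations on the other, plus local thermal post-processing) as an SLTO, then assemble the $n$ rounds with composition-closure (Theorem~\ref{SLTO_closed}) and absorb shared randomness with convexity (Theorem~\ref{SLTO_convex}); this is exactly how the paper derives Theorem~\ref{thm:subset_of_SLTO} from Lemma~\ref{lem:one_round_LTOCC_in_SLTO}. However, two points need correction. First, and most importantly, you treat the memory case (LTOCC$_n$+M, with round-$\alpha$ operations conditioned on all of $m_1,\dots,m_\alpha$, and the ``parallel variant'') as mere bookkeeping to be absorbed by the same construction. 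That cannot work: the theorem is stated --- and in the paper explicitly emphasized --- only for memoryless protocols, because LTOCC+M and parallel/$\mathcal{S}$LTOCC can map the product Gibbs state $\gamma^{(A)}\otimes\gamma^{(B)}$ to a correlated state, whereas every SLTO preserves it (Theorem~\ref{SLTO_majo}, Corollary~\ref{cor:SLTO_sim}). So the inclusion you anticipate proving for the memory case is simply false, and any attempted extension of your dilation must break down rather than merely require careful bookkeeping.

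Second, your single-round dilation is under-specified at exactly the point where the real work lies: in SLTO there is no extra ancilla available --- the only auxiliary systems are the two local baths, initialized in their Gibbs states. A ``control register with a trivial Hamiltonian'' therefore has to be realized inside a degenerate subspace of Alice's bath, whose Gibbs state is maximally mixed, not a blank register you may write to and later trace out at will. The paper handles this with the cyclic-shift unitary $U=\sum_{k,l}P_k^{(A)}\otimes U_k\otimes\id_{\mathcal{B}^{(B)}\setminus\mathcal{B}_k^{(B)}}\otimes|(k+l)\bmod n\ra\la l|$, which records the outcome only modulo a uniformly random offset yet still reproduces $\sum_k(\Lambda_k^{(B)}\otimes\id)(P_k^{(A)}\rho P_k^{(A)})$ after tracing the baths, and then embeds this degenerate block into a bona fide thermal bath by splitting each degenerate level into blocks of size $n$ and controlling the remainder in the thermodynamic limit. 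This construction also explains why your memory intuition fails: a record stored this way in a maximally mixed bath register cannot be reliably read back in later rounds, and the round-by-round composition of Theorem~\ref{SLTO_closed} uses fresh baths and traces the record out. With the memory claims removed and the register-inside-bath construction supplied, your argument coincides with the paper's proof.
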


    \begin{proof}
        Proof of this theorem is given in Appendix~\ref{App:semilocal}. 
    \end{proof}

    We stress that the following theorem describes only protocols without memory. If we allow the use of outcomes of some measurements in multiple rounds, the class of obtained protocols might be noticeably more powerful, as presented in Appendix~\ref{App:Geo_evol}.

    \begin{cor}[Thermomajorisation for LTOCC] \label{corr:SLTO_major}
        Consider two systems $A$ and $B$ with local Gibbs states $\gamma^{(A)}$ and $\gamma^{(B)}$, respectively. Given an input state $\vb{p}_{AB}$ and output $\vb{q}_{AB}$, the transition can be achieved using LTOCC$_n$ +R only if 
        \begin{equation}
            \vb{p} \underset{\gamma_{AB}}{\succ} \vb{q}
        \end{equation}
        where the thermomajorization is taken with respect to $\gamma_{AB} = \gamma^{(A)}\otimes\gamma^{(B)}$. In particular, the Gibbs state is preserved
        \begin{equation*}
            M\qty(\gamma^{(A)}\otimes \gamma^{(B)}) = \gamma^{(A)}\otimes \gamma^{(B)}
        \end{equation*}
    \end{cor}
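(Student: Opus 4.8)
The plan is to read this off directly from Theorem~\ref{thm:subset_of_SLTO} together with the operational characterisation of SLTO transitions. Let $M$ be the stochastic map realising a given $\text{LTOCC}_n$+R protocol. By Theorem~\ref{thm:subset_of_SLTO}, $M$ is a semilocal thermal operation with respect to the inverse temperatures $\beta^{(A)},\beta^{(B)}$. Hence if $M\vb{p}=\vb{q}$, the transition $\vb{p}\to\vb{q}$ is in particular achievable by \emph{some} SLTO, so the ``only if'' direction of Theorem~\ref{SLTO_majo} immediately forces $\vb{p}\underset{\gamma_{AB}}{\succ}\vb{q}$ with $\gamma_{AB}=\gamma^{(A)}\otimes\gamma^{(B)}$ -- which is exactly the asserted necessary condition. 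Equivalently, one may route this through Corollary~\ref{cor:SLTO_sim}: $M$ is then a stochastic matrix preserving $\gamma^{(A)}\otimes\gamma^{(B)}$, and Theorem~\ref{thm_HLPgeneralisation} (applied after the Hamiltonian rescaling $\tilde H^{(A)},\tilde H^{(B)}$ at common inverse temperature $\tilde\beta=\sqrt{\beta^{(A)}\beta^{(B)}}$ introduced just above Corollary~\ref{cor:SLTO_sim}, which recasts $\gamma^{(A)}\otimes\gamma^{(B)}$ as a genuine single-temperature Gibbs state so that a single thermomajorisation curve is well defined) yields the same conclusion.

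For the ``in particular'' claim on Gibbs-state preservation I would give a self-contained computation rather than invoke SLTO. For one round of the form~\eqref{new_one_ounds1}, the thermal-tensor conditions~\eqref{thermal_cond} together with Gibbs-preservation of $\Lambda^{(A)}$ give
\begin{equation*}
\begin{aligned}
\sum_{kl} M_{ij,kl}\,\gamma^{(A)}_k\gamma^{(B)}_l
&= \sum_{k}\Lambda^{(A)}_{ik}\gamma^{(A)}_k\sum_{l}T^{(B)}_{jkl}\gamma^{(B)}_l\\
&= \gamma^{(B)}_j\sum_{k}\Lambda^{(A)}_{ik}\gamma^{(A)}_k = \gamma^{(A)}_i\gamma^{(B)}_j,
\end{aligned}
\end{equation*}
and symmetrically for the mirrored round~\eqref{new_one_ounds2}; the multi-round case~\eqref{eq:LTOCC_n_no_mem} follows because a composition of product-Gibbs-preserving stochastic maps preserves the product Gibbs state, and the shared-randomness extension follows because it is a convex combination of such maps. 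Thus $M(\gamma^{(A)}\otimes\gamma^{(B)})=\gamma^{(A)}\otimes\gamma^{(B)}$. Combined with Theorem~\ref{thm_HLPgeneralisation} this in fact also furnishes an independent proof of the thermomajorisation part, bypassing Theorem~\ref{thm:subset_of_SLTO} altogether.

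Being a corollary that essentially repackages Theorem~\ref{thm:subset_of_SLTO} (and, in the alternative route, Theorem~\ref{thm_HLPgeneralisation} applied to an elementary one-line identity), this statement presents no genuine obstacle. The only point worth flagging is making sure that ``thermomajorisation with respect to $\gamma_{AB}$'' is meaningful when $\beta^{(A)}\neq\beta^{(B)}$; this is precisely handled by the artificial rescaling to the common inverse temperature $\tilde\beta=\sqrt{\beta^{(A)}\beta^{(B)}}$ already introduced in the text, after which Theorems~\ref{SLTO_majo} and~\ref{thm_HLPgeneralisation} apply verbatim to the population vectors $\vb{p}$ and $\vb{q}$.
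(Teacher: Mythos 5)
Your main route is exactly the paper's intended argument: the corollary is stated without a separate proof precisely because it follows by combining Theorem~\ref{thm:subset_of_SLTO} (LTOCC$_n$+R $\subset$ SLTO) with the SLTO thermomajorisation criterion of Theorem~\ref{SLTO_majo}/Corollary~\ref{cor:SLTO_sim}, including the rescaling to the common inverse temperature $\tilde\beta=\sqrt{\beta^{(A)}\beta^{(B)}}$ that makes thermomajorisation relative to $\gamma^{(A)}\otimes\gamma^{(B)}$ well defined, so this part is correct and matches. Your second, self-contained route --- checking directly from the thermal-tensor conditions~\eqref{thermal_cond} that each round preserves $\gamma^{(A)}\otimes\gamma^{(B)}$, that composition and convex mixing preserve this property, and then invoking Theorem~\ref{thm_HLPgeneralisation} --- is also sound for the energy-incoherent, no-memory setting in which the corollary lives, and it is more elementary in that it bypasses the SLTO machinery (including the bath constructions of Appendix~\ref{App:semilocal}); the paper only gestures at this kind of argument heuristically in Appendix~\ref{App:Geo_evol} before deferring to the SLTO inclusion. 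What the paper's route buys in exchange is generality: Theorem~\ref{thm:subset_of_SLTO} holds at the level of quantum channels (not just stochastic matrices on populations), so the corollary is obtained as a restriction of a stronger statement, whereas your direct computation is confined to the classical description. No gaps.
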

    We stress that the above corollary describes the implication in only one direction. The implication in the opposite direction would imply that LTOCC $=$ SLTO.
    \medskip

	Before finishing this section, we can impose an additional symmetry requirement. Consider a scenario in which Alice's and Bob's thermal states coincide $\gamma^{(A)} = \gamma^{(B)} =: \gamma$, and a parallel LTOCC protocol in which if Alice received a thermal state as an input, then the result of Bob's action would also be thermal and vice versa
	\begin{equation}
		\label{tristo_def}
		\begin{aligned}
		\forall_k\sum_{l} T_{ikl}^{(A)} \gamma_l = \gamma_i \;,~~ 
		\forall_l\sum_{k} T_{ikl}^{(A)} \gamma_k = \gamma_j \;
        \end{aligned}
	\end{equation}
	and analogously for $T_{jkl}^{(B)}$. Thus Alice and Bob could exchange their tensor without going outside of the framework of LTOCC.
	We call any stochastic tensor satisfying both properties~\eqref{tristo_def} a \textit{bithermal} tensor. The above scheme constructed from a bithermal tensor gives one-round \textit{Symmetric Thermal Operations and Classical Communication} ($\mathcal{S}$LTOCC). 

    \begin{thm}\label{thm:stric_sup_LTOCC_SLTOCC}
        Symmetric Local Thermal Operations and Classical Communication with $n$ rounds ($\mathcal{S}$LTOCC$_n$) form a strict subset of $2n$-rounds Local Thermal Operations and Classical Communication (LTOCC$_{2n}$).
    \end{thm}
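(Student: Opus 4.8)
The plan is to prove the two containments separately: first that $\mathcal{S}\text{LTOCC}_n \subseteq \text{LTOCC}_{2n}$, and then that the inclusion is strict. For the containment, the key observation is that a single round of $\mathcal{S}$LTOCC is built from a bithermal tensor producing a transition matrix of the form $M_{ij,kl} = T_{ikl}^{(A)} T_{jkl}^{(B)}$ (Eq.~\eqref{one_round}), and we have already seen in the discussion preceding Eq.~\eqref{one_round} that exactly this ``parallel'' transition matrix is realised by a two-round LTOCC+M protocol. However, the target here is LTOCC \emph{without} memory, so the first thing I would do is re-examine that construction: in the parallel protocol Alice's round-2 operation is conditioned on $m_2$ (Bob's fresh measurement) and Bob's round-2 operation on $m_1$ (Alice's round-1 measurement). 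Since in round 2 Bob measures, $m_1$ has already been ``used'' and can be overwritten; the only genuine memory demand is that Alice in round 2 recalls $m_1$... but in the parallel scenario Alice's round-2 tensor depends only on $m_2$, not $m_1$. So the parallel protocol, despite being phrased as LTOCC+M, actually lies in plain $\text{LTOCC}_2$. This is the crux: I would carefully write out that one round of $\mathcal{S}$LTOCC equals a valid $\text{LTOCC}_2$ protocol, where the bithermality conditions \eqref{tristo_def} guarantee that both $T^{(A)}$ and $T^{(B)}$ satisfy the thermal-tensor constraint \eqref{thermal_cond} needed for each conditioned map to be a legitimate (conditioned) thermal operation.

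Next I would handle the $n$-round case by concatenation: an $n$-round $\mathcal{S}$LTOCC protocol is the product $M = \prod_\alpha M^{(\alpha)}$ of $n$ parallel transition matrices, each of which—by the single-round argument—equals a product of two LTOCC-without-memory rounds; composing these gives a $2n$-round LTOCC protocol of the form \eqref{eq:LTOCC_n_no_mem}. I would invoke Observation~\ref{obs:more_rounds_inclusion} only if padding is needed, but it should not be. So $\mathcal{S}\text{LTOCC}_n \subseteq \text{LTOCC}_{2n}$.

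For strictness, I would exhibit an operation in $\text{LTOCC}_{2n}$ (indeed already in $\text{LTOCC}_1$, hence in $\text{LTOCC}_{2n}$ by Observation~\ref{obs:more_rounds_inclusion}) that cannot be written with a bithermal tensor. The natural candidate is a one-round LTOCC map $M_{ij,kl} = \Lambda_{ik}^{(A)} T_{jkl}^{(B)}$ where $T^{(B)}$ is a thermal tensor that is \emph{not} bithermal—i.e., it satisfies $\forall_k \sum_l T_{jkl}^{(B)} \gamma_l = \gamma_j$ but violates the second condition $\forall_l \sum_k T_{jkl}^{(B)} \gamma_k = \gamma_j$. Such tensors exist whenever $d \geq 2$: for instance take $T_{jkl}^{(B)}$ independent of $k$, equal to a fixed Gibbs-preserving stochastic matrix $\Lambda_{jl}$ in the index pair $(j,l)$; then $\sum_l T_{jkl}\gamma_l = \sum_l \Lambda_{jl}\gamma_l = \gamma_j$ holds, but $\sum_k T_{jkl}\gamma_k = d\cdot\gamma_j\cdot(\text{something})$... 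I would instead pick a cleaner explicit example, e.g. a ``replace'' tensor $T_{jkl}^{(B)} = \delta_{jl}$ (output copies the second input, ignoring the first), which is thermal in the required sense but for which $\sum_k T_{jkl}\gamma_k = \delta_{jl}\sum_k\gamma_k = \delta_{jl} \neq \gamma_j$ in general. One then argues that the resulting $M$ is not reproducible by any $\mathcal{S}$LTOCC protocol. The main obstacle—and the part requiring real care—is this last step: ruling out \emph{all} $\mathcal{S}$LTOCC realisations of the chosen $M$, since an $n$-round $\mathcal{S}$LTOCC protocol is a product of several bithermal-parallel rounds and one must show no such product reproduces the target. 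I would argue via an invariant: for instance, show that every $\mathcal{S}$LTOCC map has a symmetry or a Gibbs-preservation property under exchange of the roles of $A$ and $B$ (which is exactly the motivation for bithermality—"Alice and Bob could exchange their tensor"), or that it preserves $\gamma^{(A)}\otimes\gamma^{(B)}$ when restricted to each marginal in a way the target map does not, and verify the target $M$ breaks it. Identifying precisely which invariant is both preserved by all of $\mathcal{S}$LTOCC and violated by a simple $\text{LTOCC}_1$ map is the technical heart of the argument; everything else is bookkeeping with the tensor index conventions of \eqref{eq:LTOCC_n_no_mem}–\eqref{one_round}.
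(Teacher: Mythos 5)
Your containment step fails at its crux. In the parallel protocol of Fig.~\ref{fig:ParLTOCC} it is \emph{Bob's} round-2 operation $T^{(B)}$ that is conditioned on $m_1$, and this happens \emph{after} Bob's measurement has created $m_2$, so the two registers must coexist --- exactly what the memoryless form \eqref{eq:LTOCC_n_no_mem} excludes (when Bob measures, his post-processing is an unconditioned $\Lambda^{(B)}$ and $m_1$ is overwritten). Your reading ``$m_1$ has already been used and can be overwritten'' misidentifies which party needs the memory, and the claim cannot be repaired by letting Bob consume $m_1$ already in round~1 and Alice condition on Bob's subsequent measurement: Alice then sees Bob's post-processed value $j$ rather than his input $l$, yielding $T^{(A,2)}_{ikj}T^{(B)}_{jkl}$ instead of $T^{(A)}_{ikl}T^{(B)}_{jkl}$. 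In fact the claim is false, not merely unproven: at $\beta=0$ take both parties' tensors to be a permutation (Latin-square) tensor as in Eq.~\eqref{example_permutation_tensor}, $T_{ikl}=\delta_{i,f(k,l)}$, so the one-round $\mathcal{S}$LTOCC map sends $(k,l)$ deterministically to $(f(k,l),f(k,l))$; a short marginal analysis over the four possible measurement orderings shows no two memoryless rounds reproduce it --- whichever party conditions on the second measurement is forced to use a tensor with constant (rank-one) layers, contradicting the bistochasticity required by \eqref{thermal_cond}, while protocols in which the same party measures twice cannot make both output marginals depend on both inputs. The paper does not attempt your strengthening: its inclusion is the definitional one, a single $\mathcal{S}$LTOCC round \emph{being} two rounds of the (memory-assisted) parallel protocol of Eq.~\eqref{one_round}, i.e. $\mathcal{S}\text{LTOCC}_1\subset\text{LTOCC}_2+\text{M}$ as drawn in Fig.~\ref{fig:inclusions}, with concatenation giving the $2n$-round statement; you should target $\text{LTOCC}_{2n}(+\text{M})$ rather than try to strip the memory.

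For strictness you converge on the paper's own example --- with $\Lambda^{(A)}=\mathbb{I}$ your tensor $T^{(B)}_{jkl}=\delta_{jl}$ is just the identity protocol --- but you explicitly leave open the decisive step (``identifying the invariant''), and that step \emph{is} the paper's entire argument: bithermality \eqref{tristo_def} together with stochasticity implies that a parallel bithermal round maps any input with a Gibbs factor to an output both of whose marginals are Gibbs, e.g. $\sum_{j,k,l}T^{(A)}_{ikl}T^{(B)}_{jkl}\,\gamma_k q_l=\sum_l q_l\sum_k T^{(A)}_{ikl}\gamma_k=\gamma_i$ and symmetrically for Bob, whereas the identity fixes $\gamma\otimes q$, whose second marginal is not Gibbs. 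Writing this out settles $n=1$ immediately. For general $n$ you (and, strictly speaking, anyone taking this route) still need to control compositions, because intermediate states are correlated and the ``Gibbs marginals in, Gibbs marginals out'' property is not automatic for correlated inputs; for the identity target a clean closing move at finite $\beta$ with nondegenerate spectrum is to note that a product of stochastic matrices equal to $\mathbb{I}$ forces every factor to be a permutation matrix, and a bithermal parallel round with point-mass columns is impossible when all $\gamma_i$ are distinct (the first condition of \eqref{tristo_def} forces each conditioned map to be the identity permutation, which then violates the second). So the proposal as it stands has a wrong containment argument and an unfinished strictness argument; the fix is to adopt the paper's definitional inclusion and to replace your ``invariant to be found'' by the explicit Gibbs-marginal computation above.
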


    \begin{proof}
    The inclusion follows from the fact that each round $\mathcal{S}$LTOCC, by definition, can be implemented in two rounds of LTOCC. 
    The simplest proof of the proper inclusion is based on the observation that both $\mathcal{S}$LTOCC output marginals need to be Gibbs wherever the Gibbs state appears as one of the inputs. On the other hand, this is not the case for all LTOCCs, which can be presented using the identity protocol in which all (conditional) terminal operations are just identity.
    \end{proof}

    An alternative and more insightful argumentation requires the demonstration that $\mathcal{S}$LTOCC's building block -- bithermal tensor forms a proper subset of the set of thermal tensors. Here we focus once again on one round of $\mathcal{S}$LTOCC for convenience notation, however, the presented argument is general.
    To do so, we note that $T_{ijk}$ in $\mathcal{S}$LTOCC is restricted by more linear conditions than in LTOCC~\eqref{tristo_def}. 
    Furthermore, each equation defining thermal tensors $\forall_l\sum_{k} T_{ikl}^{(A)} \gamma_k = \gamma_i $ does not include the elements from different layers of the tensor (different values of $l$), whereas the second set of equalities does so. Hence, one set of equations cannot impose the other, so the additional equalities provide further restrictions.
    The proper inclusion of the bithermal tensors' set inside the thermal tensors' set for local dimension $d = 2$ is also presented in the next section.

    \begin{thm} \label{thm:stric_sup_SLTOCC}
        Symmetric Thermal Operations and Classical Communication ($\mathcal{S}$LTOCC) form a strict subset of $\mathcal{S}$LTOCC with randomness.
    \end{thm}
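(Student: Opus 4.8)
The plan is to follow the template of the proof of Theorem~\ref{thm:strict_sup_LTOCC}: show that the set of $\mathcal{S}$LTOCC maps is not convex, while admixing shared randomness produces, by definition, convex combinations of such maps. The inclusion $\mathcal{S}$LTOCC $\subseteq$ $\mathcal{S}$LTOCC with randomness is immediate (take a point distribution over the shared variable), so all the work is to exhibit a single map realisable with randomness but not without it. The structural invariant I would isolate is this: by~\eqref{one_round} every one-round $\mathcal{S}$LTOCC map has the product form $M_{ij,kl}=T^{(A)}_{ikl}T^{(B)}_{jkl}$ with $T^{(A)},T^{(B)}$ bithermal, so for every fixed value of the input pair $(k,l)$ the ``conditional output matrix'' $[M_{ij,kl}]_{ij}$ is the outer product of the $(k,l)$-column of $T^{(A)}$ with that of $T^{(B)}$, hence has rank at most one. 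A convex combination of $\mathcal{S}$LTOCC maps need not have this property, and that is exactly the feature I will exploit.

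First I would record that, exactly as observed for thermal tensors in the proof of Theorem~\ref{thm:strict_sup_LTOCC}, the bithermal tensors form a convex polytope, being cut out of the space of tensors indexed by $(i,k,l)$ by the linear equalities~\eqref{tristo_def} together with $0\le T_{ikl}\le 1$. I would then check that the ``thermalising'' tensor $T_{ikl}=\gamma_i$ is bithermal and, by linearising the defining equations around it, produce a nonzero direction $\varepsilon_{ikl}=a_ib_kc_l$ with $\sum_i a_i=0$, $\sum_k b_k\gamma_k=0$, $\sum_l c_l\gamma_l=0$ (such nonzero $a,b,c$ exist for $d\ge 2$), so that $T_{ikl}=\gamma_i+t\,\varepsilon_{ikl}$ is bithermal for all sufficiently small $|t|$. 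In particular, for $d\ge 2$ and finite $\beta$ there are at least two distinct bithermal tensors $T_0\ne T_1$. Moreover, since any two bithermal tensors have identical hypercolumn sums, two columns that are proportional must in fact be equal; hence $T_0\ne T_1$ forces the existence of some pair $(k,l)$ for which the columns $u_0:=((T_0)_{ikl})_i$ and $u_1:=((T_1)_{ikl})_i$ are linearly independent.

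Next I would take the candidate $M^{\star}_{ij,kl}:=\tfrac12\,(T_0)_{ikl}(T_0)_{jkl}+\tfrac12\,(T_1)_{ikl}(T_1)_{jkl}$. This is an $\mathcal{S}$LTOCC map with shared randomness: Alice and Bob flip a fair coin and both use $T_0$ on one outcome and $T_1$ on the other. But it is not an $\mathcal{S}$LTOCC map, since at the distinguished input pair $(k,l)$ its conditional output matrix equals $\tfrac12\,u_0u_0^{\mathsf T}+\tfrac12\,u_1u_1^{\mathsf T}$, which has rank $2$ because $u_0,u_1$ are linearly independent, contradicting the rank-one invariant above. Hence $M^{\star}$ lies in $\mathcal{S}$LTOCC with randomness but not in $\mathcal{S}$LTOCC, giving the strict inclusion; the multi-round case follows by applying the same reasoning to a single round with the remaining rounds set to (conditional) identities, exactly as in Theorem~\ref{thm:strict_sup_LTOCC}.

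The step I expect to be the main obstacle is the structural one at the start — guaranteeing that the bithermal polytope contains more than one point, i.e.\ that a nontrivial bithermal tensor exists besides $T_{ikl}=\gamma_i$, since otherwise the statement would be vacuous. The linearisation computation sketched above should settle this for all $d\ge 2$ and $\beta<\infty$, and it dovetails with the explicit $d=2$ analysis promised in the next section (the proper inclusion of bithermal tensors inside thermal tensors), from which one can read off a concrete second bithermal tensor if a fully self-contained witness is desired. A minor secondary point is spelling out the reduction to a single round for arbitrary $n$, which needs only the same remarks the paper already invokes when passing from one-round to $n$-round protocols with randomness.
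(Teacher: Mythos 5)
Your proposal is correct for the statement as the paper uses it (the one-round case, which is what enters the inclusion diagram), and it follows the same overall template as the paper — non-convexity of the randomness-free class versus convexification by shared randomness, ``mutatis mutandis'' from Theorem~\ref{thm:strict_sup_LTOCC} — but it is genuinely more explicit at the key step. Where the paper argues abstractly that decompositions of $\mathcal{S}$LTOCC maps into extremal building blocks only realise product weights and infers non-convexity from the fact that product distributions form a proper subset of all joint distributions, you supply a concrete separating invariant: for fixed input $(k,l)$ the conditional output block of a one-round map $M_{ij,kl}=T^{(A)}_{ikl}T^{(B)}_{jkl}$ is an outer product, hence rank one, whereas the fair mixture of two maps built from distinct bithermal tensors $T_0\neq T_1$ has a rank-two block (your observation that two distinct hypercolumns, being probability vectors, cannot be proportional is right). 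You also close a point the paper leaves implicit, namely that the bithermal polytope contains more than one element: the perturbation $T_{ikl}=\gamma_i+t\,a_ib_kc_l$ with $\sum_i a_i=\sum_k b_k\gamma_k=\sum_l c_l\gamma_l=0$ indeed stays bithermal for small $|t|$ (all constraints in~\eqref{tristo_def} are linear and $\gamma$ has strictly positive entries at finite $\beta$), so the statement is non-vacuous for every $d\ge 2$. This is a cleaner and more self-contained witness than the paper's argument.

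One caveat on your closing multi-round remark: you cannot pad an $\mathcal{S}$LTOCC protocol with ``identity rounds,'' because the identity tensor $T_{ikl}=\delta_{ik}$ violates the second condition in~\eqref{tristo_def} ($\sum_l \delta_{ik}\gamma_l=\delta_{ik}\neq\gamma_i$) — this is precisely the Gibbs-marginal property the paper itself exploits in Theorem~\ref{thm:stric_sup_LTOCC_SLTOCC}. Moreover, the rank-one invariant holds only for a single round; after composing two parallel rounds the conditional blocks are sums of outer products and need not have rank one, so the one-round witness does not automatically certify $\mathcal{S}\text{LTOCC}_n\subsetneq \mathcal{S}\text{LTOCC}_n+\text{R}$ for $n\ge 2$. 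This does not damage the main argument (the paper's own proof is no more detailed on this point), but if you want the multi-round claim you need a separate argument rather than identity padding.
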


    \begin{proof}
        The proof follows \textit{mutatis mutandis} the line of reasoning of Theorem \ref{thm:strict_sup_LTOCC}.
    \end{proof}

    Summary of inclusion relations between variants of LTOCC model with different amounts of memory and control, as discussed in this section, is presented pictorially in Fig.~\ref{fig:inclusions}.

    \medskip
    Finally, we put forward a conjecture concerning the relation between LTOCC and SLTO in general.

    \begin{conj}\label{conj:subset_of_SLTO}
        Local thermal operations and classical operations with an arbitrary number of rounds and shared randomness yield a set of operations, which closure is equivalent to the semilocal thermal operations when we restrict both classes to energy-incoherent states,
        \begin{equation}
            \overline{\bigcup_{n=1}^\infty (\text{LTOCC}_n + \text{R})} = \text{SLTO}.
        \end{equation}
    \end{conj}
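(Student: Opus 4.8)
The plan is to prove the two inclusions separately. The inclusion $\overline{\bigcup_n (\text{LTOCC}_n+\text{R})} \subseteq \text{SLTO}$ is essentially already in hand: Theorem \ref{thm:subset_of_SLTO} gives $\text{LTOCC}_n+\text{R} \subseteq \text{SLTO}$ for every $n$, hence the union is contained in SLTO, and by the Set-closure corollary of SLTO (the set of admissible transitions between finite-dimensional energy-incoherent states is closed) the closure of the union is still contained in SLTO. So the entire content of the conjecture is the reverse inclusion $\text{SLTO} \subseteq \overline{\bigcup_n (\text{LTOCC}_n+\text{R})}$, restricted to energy-incoherent states. By Corollary \ref{cor:SLTO_sim}, every SLTO action on energy-incoherent states is realised by a single stochastic matrix $M$ on the joint population simplex of $AB$ satisfying $M(\gamma^{(A)}\otimes\gamma^{(B)}) = \gamma^{(A)}\otimes\gamma^{(B)}$. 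Thus the task reduces to a \emph{decomposition / approximation} statement: every stochastic matrix on $\fR^{d_A d_B}$ preserving the product Gibbs vector can be approximated arbitrarily well, in operator norm say, by finite compositions-with-shared-randomness of the elementary LTOCC moves $M_{ij,kl}=\Lambda^{(A)}_{ik}T^{(B)}_{jkl}$ and their mirror images.

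The key steps I would carry out are as follows. \textbf{(1)} Reduce to extreme points: since LTOCC$+$R is convex (shared randomness convexifies, cf.\ the proof of Theorem \ref{thm:strict_sup_LTOCC}) and the set of product-Gibbs-preserving stochastic matrices is a polytope, it suffices to show every \emph{vertex} of that polytope lies in the closure of $\bigcup_n \text{LTOCC}_n$. \textbf{(2)} Characterise a generating set of ``elementary'' product-Gibbs-preserving transformations. The natural candidates are the two-level ``$\beta$-swaps'' / thermalisation maps of Horodecki--Oppenheim acting on the joint $d_A d_B$-level system, which are known to generate (by composition and mixing) all stochastic Gibbs-preserving matrices, i.e.\ all of SLTO on incoherent states via Theorem \ref{thm_HLPgeneralisation}. \textbf{(3)} Show that each such joint elementary move on a pair of levels $(ij),(kl)$ can itself be synthesised, up to arbitrary precision, by a finite LTOCC protocol: use Alice's measurement to broadcast her level, then have Bob apply a classically-conditioned local thermal operation, and iterate over Alice's outcomes; whenever the two levels to be mixed differ only in Alice's index one direction of the protocol suffices, and the cross case (levels differing in both indices) is handled by composing an Alice-move and a Bob-move, each conditioned on the other's broadcast outcome — this is exactly where multiple rounds (and, for the mixing coefficients, shared randomness) are consumed. \textbf{(4)} Assemble: a finite product of such synthesised elementary moves reproduces the chosen vertex to within $\varepsilon$, and convex-combining via shared randomness covers the whole polytope; taking $\varepsilon\to 0$ and a union over round-number gives the closure statement.

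The main obstacle — and the reason this is stated as a conjecture rather than a theorem — is step (3), the \emph{completeness} of elementary LTOCC moves for generating all product-Gibbs-preserving maps. The difficulty is structural: a single LTOCC round with a measurement on Alice's side leaves Alice's marginal ensemble \emph{pointwise} invariant (the factor $\delta_{ik}$ in $M_{ij,kl}=\delta_{ik}T^{(B)}_{jkl}$ forbids any redistribution of Alice's populations except through her own local thermal post-processing $\Lambda^{(A)}$), so in one round one cannot implement a joint $\beta$-swap between levels $(i,l)$ and $(k,l')$ with $i\neq k$ \emph{and} $l\neq l'$ in one shot. One must argue that alternating Alice-active and Bob-active rounds, with conditioning carried forward (note this needs the more powerful memory variant of Appendix~\ref{App:Geo_evol}, consistent with the caveat under Corollary~\ref{corr:SLTO_major}), densely generates these ``diagonal'' moves; proving \emph{density} rather than exact reachability, and controlling that the classically-conditioned local thermal operations available to Bob are rich enough at every step (they are, by Theorem~\ref{thm_HLPgeneralisation}, exactly the thermomajorisation-respecting maps on $B$), is the crux. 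A clean route would be a Trotter-type argument: show the Lie-algebraic ``directions'' generated by infinitesimal Alice-conditioned-on-Bob and Bob-conditioned-on-Alice moves span the tangent space, at every interior point, of the product-Gibbs-preserving stochastic matrices, and then invoke a reachability theorem for control systems on this polytope. I expect verifying this spanning condition — especially near the boundary of the polytope, where thermomajorisation constraints degenerate — to be the genuinely hard part, which is why we leave it open.
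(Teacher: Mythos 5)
The statement you are proving is a \emph{conjecture} in the paper: the authors give no proof of the equality, and they explicitly leave it open. Your proposal is therefore best read as a roadmap, and to your credit you say so yourself. The part you do establish — the inclusion $\overline{\bigcup_n(\text{LTOCC}_n+\text{R})}\subseteq\text{SLTO}$ via Theorem~\ref{thm:subset_of_SLTO} together with the fact that, on energy-incoherent states, SLTO is the (closed) polytope of product-Gibbs-preserving stochastic matrices (Corollary~\ref{cor:SLTO_sim}) — matches what the paper already provides, and your reduction of the remaining content to a density/generation statement for that polytope is the right way to frame the open problem.

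Two concrete points in your plan would, however, fail as written. First, step (2) asserts that two-level ``$\beta$-swaps'' generate, by composition and mixing, all Gibbs-preserving stochastic matrices; this is known to be false for $d\geq 3$ — elementary (two-level) thermal operations are strictly weaker than general thermal operations even when arbitrary sequences and convex combinations are allowed (this is precisely the subtlety the paper alludes to when citing Ref.~\cite{Lostaglio_2018}), so your generating set must be replaced, e.g., by the extreme points of the Gibbs-stochastic polytope themselves, which brings you back to square one. Second, in step (3) you propose to realise the ``cross'' moves using the memory variant of the protocol. That is inconsistent with the conjecture: the class whose closure is to equal SLTO is the memoryless LTOCC$_n+$R, and the paper shows that LTOCC$+$M is \emph{not} contained in SLTO (memory protocols can correlate the product Gibbs state, cf.\ Section~\ref{sec:sltocc_sepstates}), so a memory-assisted synthesis neither produces elements of the conjectured class nor keeps the forward inclusion intact. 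Any successful attack must generate the cross $\beta$-swap-like directions using only alternating, memoryless conditioned rounds (plus shared randomness), which is exactly the part the authors leave open.
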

    Should the above conjecture, its stronger version without shared randomness or extension to genuinely quantum states, hold true, it would have interesting consequences, especially taking into account that considerations in \cite{Bera2021} have been carried out appealing only energy-incoherent states -- \textbf{finite-size engine operating with Carnot efficiency could be realised with a (potentially finite-round) LTOCC protocol without appealing to its quantum properties.} Furthermore, in LTOCC$_n$ +R all interactions between subsystems and by extension between thermal baths are purely classical, which is not the case for SLTO. Hence, the above result would indicate that at least for energy-incoherent states, the introduction of coherent interactions does not increase the set of possible operations. \bl{At the same time, it is worth noting that, in line with the discussion provided in Section \ref{sec:landauer_cost}, LTOCC protocols involve external costs connected to continued writing and erasure of classical memory used to track measurement results. Thus, the question of whether the energetic cost of memory impedes attaining Carnot limit merits further investigation.}

    \begin{figure*}[t!]
        \centering
        \begin{tikzpicture}[
        mybox/.style={draw, inner sep=2pt},
        mybox2/.style={draw, inner sep=5pt},
        mybox3/.style={draw, dashed, inner sep=8pt}]

    \node[mybox](box1) at (0,0) {
        \begin{tikzcd}[column sep=.5em]
        ~~~\, \text{LTOCC}_1 \arrow[r, symbol=\subset] \arrow[d, symbol=\subseteq] ~~~&~~~ \text{LTOCC}_1 + \text{R}\arrow[d, symbol=\subseteq] ~~~\, \\
        \text{} ~~&~~ \text{}  \\[-50 pt]
        \end{tikzcd}
    };

    % Enlarged dashed box enclosing both levels
    \node[mybox2] at (0,-1.625) [minimum width=7.1cm, minimum height=5cm] (box2) {};
    \node[mybox3] at (0,-1.95) [minimum width=7.3cm, minimum height=5.9cm] (box2) {};

    % Second frame (2 rounds) inside the enlarged box
    \node at (0,-2.4) {
        \begin{tikzcd}[column sep=.5em]
        \text{LTOCC}_2 \arrow[r, symbol=\subset] & \text{LTOCC}_2 + \text{R} \\[-25pt]
        \arrow[d, symbol=\subseteq] & \arrow[d, symbol=\subseteq] \\
        \text{LTOCC}_2 + \text{M} \arrow[r, symbol=\subset] & \text{LTOCC}_2 + \text{R} + \text{M} \\
        \arrow[u, symbol=\subset] & \arrow[u, symbol=\subset] \\[-25pt]
        \mathcal{S}\text{LTOCC}_1 \arrow[r, symbol=\subset] & \mathcal{S}\text{LTOCC}_1 + \text{R} \\[-20 pt]
        \end{tikzcd}
    };

    % Labels and hierarchy indicators
    \node[shift = {(4.9,0)}]{1 round};
    \node[shift = {(6.5,0)}]{$\subset$ SLTO};
    \node[shift = {(4.9,-1)},rotate = -90]{$\subseteq$};
    \node[shift = {(4.9,-2.2)}]{2 rounds};
    \node[shift = {(4.9,-2.7)},rotate = -90]{$\subseteq$};
    \node[shift = {(4.9,-3.35)},rotate = -90]{$\cdots$};
    \node[shift = {(4.9,-4)},rotate = -90]{$\subseteq$};
    \node[shift = {(4.9,-4.5)}]{n rounds};
\end{tikzpicture}
    \caption{
    \textbf{LTOCC inclusion hierarchy:}
    Each frame corresponds to a different number of rounds in the protocol. In general, variants with shared randomness (+R) contain the ones not including it (Theorems \ref{thm:strict_sup_LTOCC} and \ref{thm:stric_sup_SLTOCC}), $n$-round protocols contain all $n'$-round protocols for $n'\leq n$ (Observation \ref{obs:more_rounds_inclusion}), protocols with memory (+M) contain the ones without it (Observation \ref{obs:memory_inclusion}), and protocols without memory (including all one round protocols) form a subset of SLTO (Theorem \ref{thm:subset_of_SLTO}). 
    }
    \label{fig:inclusions}
    \end{figure*}
    \medskip

    For the sake of simplicity we introduced the entire LTOCC($+$M) framework in  restriction to the energy-incoherent states. An illustrative example of this paradigm -- implementing classical logic gates CNOT and SWAP by LTOCC is presented in Appendix \ref{app:subsec_logic_gates}. However, they are well defined for general quantum states as well, see Appendix \ref{App:semilocal}. Furtheremore in Appendix \ref{App:coch_evol} we provide partial resoults on coherences' evolutions under LTOCC($+$M).

    \section{Basic building blocks -- thermal and bithermal tensors}

    \label{sec:BTTensors}

	In this section, we shift our attention to the new building block of the presented LTOCC framework -- thermal tensor -- since the properties of thermal operations have already been widely studied, e.g.~\cite{Mazurek_2018}. 
    We start with the observation that the layers of thermal tensors are independent Gibbs-preserving stochastic operations. Therefore the space of possible thermal tensors as a whole can be summarised easily as $d$-th tensor power of the space of Gibbs-preserving matrices, which has already been studied in detail~\cite{Birkhoff_thm, Mazurek_2018, Low_temerature_cooling_maps}. Thus, solutions to certain problems, like extremal operations, are straightforward, given that they are known for Gibbs-preserving matrices.  
    
    Taking this into account, we narrow down a majority of the discussion to a mathematically interesting and unexplored subset constituting the building blocks of $\mathcal{S}$LTOCC -- bithermal tensors, with the additional requirement of symmetry, to present its wide range of properties in such a restricted case. Additionally, we note that this set is interesting in its own right due to its semblance to already explored problems -- bithermal and tristochastic tensors bear a similar relation as thermal-stochastic and bistochastic matrices. We start the discussion by considering the geometry of the bithermal tensors' set particularly to unveil the extremal operations -- a thermal equivalent of study of Birkhoff-like polytope for tristochastic tensors.
	
	Similarly to the tri-stochastic tensors, the bithermal tensors are defined by a finite set of linear (in)equalities~\eqref{tristo_def}, and thus they form a convex polytope. Moreover, to fully describe this set it is sufficient to consider only a discrete set of extremal points.
	Let us start by mentioning the limit of 
	infinite
	temperature, $\beta = 0$ -- tristochastic tensors.
	In this case, the problem of characterizing vertices was already extensively studied in~\cite{extremal_multistoch}, where they have been divided into the permutation and non-permutation extreme points.
    \medskip
	
	In the finite temperature case, $\beta > 0$,
	the situation is more convoluted, even when restricted to standard thermal operations~\cite{Lostaglio_2018}. As presented in~\cite{Mazurek_2018}, the structure of vertices for the thermal equivalent of Birkhoff polytope 
	depends on the relations between system's energies and temperature. Thus we start the discussion by presenting a simple, yet comprehensive, example of a bithermal operation on a two-level system.
	
	Note that for a two-level systems the defining relations of bithermal operations~\eqref{tristo_def}, determine all tensor values as a function of only one of them. Thus geometrically it is equivalent to a
	line segment. To describe its extremal points let us denote the energies by $E_0 < E_1$, moreover to simplify the notation we set $g_{1,0} = e^{-\beta (E_1 - E_0)}$. The extremal bithermal tensors are:
    \begin{widetext}
	
	\begin{equation}
		\label{A1}
		T_{ijk}^{(1)} = (T_{0,j,k}^{(1)}|T_{1,j,k}^{(1)})
		= \left(\begin{matrix}
			1-g_{1,0} & 1 \\
			1 & 0 \\
		\end{matrix}\right.
		\left|\begin{matrix}
			g_{1,0} & 0 \\
			0 & 1 \\
		\end{matrix}\right)~,
	\end{equation}
	
	\begin{equation}
		\label{A2}
		T_{ijk}^{(2)} = (T_{0,j,k}^{(2)}|T_{1,j,k}^{(2)}) 
		= \left(\begin{matrix}
			1-g_{1,0}(1-g_{1,0}) & 1-g_{1,0} \\
			1-g_{1,0} & 1 \\
		\end{matrix}\right.
		\left|\begin{matrix}
			g_{1,0}(1-g_{1,0}) & g_{1,0} \\
			g_{1,0} & 0 \\
		\end{matrix}\right)~.
	\end{equation}
        
    \end{widetext}
	Note that while the layers of $T^{(1)}$ are extremal thermal operations, it no longer holds for the layers of $T^{(2)}$. 
	This is because the tensor $T^{(2)}$ can be realized as $T^{(1)}$ followed by a thermal swap $S^{(\beta)}$, or as a $T^{(1)}$ pre-processed by thermal swap on one of subsystems: 
	\begin{equation}
    \label{t_swap}
        \begin{aligned}
		& T_{ijk}^{(2)} = \sum_l S_{il}^{(\beta)} T_{ljk}^{(1)} ~,~~ 
		T_{ijk}^{(2)} = \sum_l T_{ilk}^{(1)} S_{lj}^{(\beta)} ~, \\
		& T_{ijk}^{(2)} = \sum_l T_{ljl}^{(1)} S_{lk}^{(\beta)} ~,~~ 
		  \text{where }~ S^{(\beta)}=\left(
		\begin{matrix}
			1-g_{1,0} & 1\\
			g_{1,0} & 0
		\end{matrix}
		\right).
        \end{aligned}
	\end{equation}
	This interconnection between extremal tensors is standard for tri-stochastic tensors~\cite{extremal_multistoch}, however in finite temperature it results in one tensor being ``more thermalized'' than the second one and multiple applications of thermal swaps generally leads to non-extremal tensors.

    Before discussing larger local dimensions we would like to contrast the presented structure of bithermal tensors with thermal tensors for local dimension $2$. Following~\cite{Mazurek_2018} we note that in this dimension there are two extremal thermal operations identity $\id$ and thermal swap $S^{(\beta)}$~\eqref{t_swap}. Thus each of the two layers of extremal thermal tensor $T$ can be either extremal thermal operations giving four, not two, (linearly independent) extremal operations.
		
	\begin{figure*}[t]
		\centering
		\includegraphics[width=0.8\textwidth]{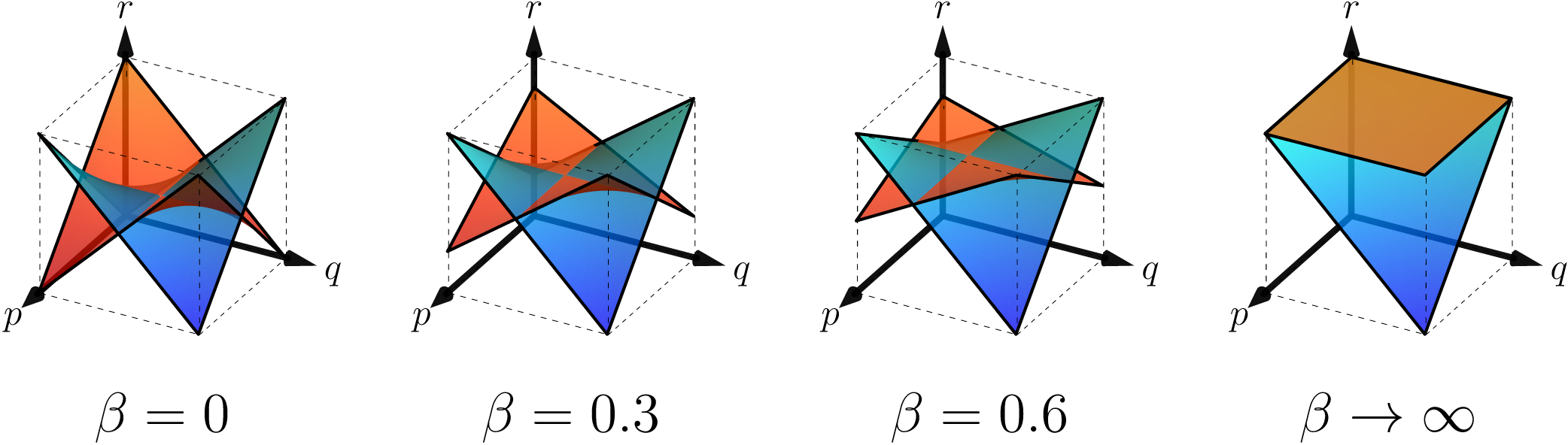}
		\caption{Attainable states $\vb{r} = T(\vb{p},\vb{q})$ in dimension $d = 2$ using LTOCC lie between the extreme surfaces generated by $T_1$ (red-orange) and $T_2$ (blue-cyan). Each distribution is characterized by its first coefficient, e.g. $\vb{p} = (p,1-p)$, and the extremal values are denoted by shaded lines.
        Although fully symmetric for $\beta = 0$, the symmetry is lost for finite temperatures, $\beta > 0$. }
		\label{fig:Attainable_states}
	\end{figure*}

    \subsection{Extremal points of thermal  ``Birkhoff'' polytope in infinite and large temperatures}

    We start this section by recalling 
    thermal and bithermal tensors in the limit of infinite temperatures $\beta \to 0$. In this case the condition of thermality~\eqref{thermal_cond} simplifies to:
    \begin{equation*}
     \forall_k   \sum_{j} T_{ijk} \eta_j = \eta_i \iff \forall_k \sum_{j} T_{ijk} = 1
    \end{equation*}
    where $\eta_i = 1/d$ is a flat state. Thus the layers of thermal tensors reduce to bi-stochastic matrices. Therefore, extremal stochastic tensors are the ones with permutation matrices as their layers since those are the extremal bi-stochastic matrices.

    Similarly, for bithermal tensors, their defining conditions~\eqref{tristo_def} in the infinite temperature limit reduces to the conditions for tristochastic tensors~\eqref{tristoch_def0}. Thus the extremal tensors in infinite temperature are, among others, permutation tensors.

	To tackle the general structure of bithermal operations' set in dimensions $d>2$ we start by considering the scenario of large temperatures as a perturbation from the infinite temperature $\beta = 0$. 
	To study the form of extremal bithermal tensors one can consider its series expansion around tri-stochastic tensors:
	\begin{equation}
		\label{A_high_T}
		% A_{ijk} = T_{ijk} + \beta A_{ijk}^{(1)} + O(\beta^2)~.
		T_{ijk} = \widetilde{T}_{ijk} + \beta A_{ijk}^{(1)} + O(\beta^2)~.
	\end{equation}
	In the limit of large temperature, we may expand the thermal state as
 
	Thus in the first order, the conditions for bithermality~\eqref{tristo_def} lead to:
    \begin{widetext}
	\begin{equation}
		\label{series_bounds}
		\sum_i A_{ijk}^{(1)} = 0 ,~~~~ \sum_j A_{ijk}^{(1)}  = \sum_{j} \tilde{T}_{ijk}E_j - E_i  ,~~~~ \sum_k A_{ijk}^{(1)}  = \sum_{k} \tilde{T}_{ijk}E_k - E_i  
	\end{equation}
    \end{widetext}
	Moreover, we must ensure the bithermal tensor $T_{ijk}$ is non-negative. Thus treating $A_{ijk}^{(1)}$ as tangent vector in the point $T_{ijk}$, this requirement corresponds to:
	\begin{equation}
		\label{series_borders}
		A_{ijk}^{(1)} \geq 0 ~\text{ if }~ \tilde{T}_{ijk} = 0. 
	\end{equation}
	
	The condition~\eqref{series_borders} does not restrict the values of all tangent vector $A_{ijk}^{(1)}$ elements since not all directions from $\tilde{T}_{ijk}$ are prohibited. In particular $A_{ijk}^{(1)}$ can always point in the centre of the set of bithermal tensors with arbitrary amplitude. 
	However, we are interested in the extremal bithermal tensors. Thus we focus on extremal tensor $T_{ijk}$ and extremal tangent vectors $T_{ijk}^{(1)}$, which saturate as many of inequalities~\eqref{series_borders} as possible.
	
	This approach not only lets us recover the first order of extremal tensors expansion in local dimension $d = 2$~\eqref{A1},\eqref{A2}, but also find the first order expansion in dimension $d = 3$ around an exemplar permutation tensor~\eqref{example_permutation_tensor}, which yields $67$ distinct approximants. The solutions, together with the code used to obtain them are presented in the GitHub repository~\cite{Github}. 
	
	We note the enormous growth of extremal tensors' number in finite temperature, even in such a small local dimension. This indicates that the complexity of the set of bithermal operations is immense even in comparison with standard thermal operations, for which in local dimension $d = 3$ there were no more than $3$ extremal operations corresponding to one permutation~\cite{Mazurek_2018}.

	\subsection{Bithermal tensors approaching zero temperature}
	
	The opposite extreme of zero temperature $\beta \to \infty$ turns out to be surprisingly simple. 
	As stated in~\cite{Low_temerature_cooling_maps}, for temperatures low enough, if the thermal bath has any spectral gap, effectively its entire population lies in the ground state. In such a scenario, thermal operations reduce to cooling maps defined by condition $\Lambda_{ij} = 0 $ if $i > j$. Moreover, in the Appendix~\ref{App:cold}, we present an alternative derivation of cooling maps, as a zero temperature limit $\beta\to \infty$ of thermal operations.
	
	Knowing the limit form of thermal operations at zero temperature, we can combine them into thermal and bithermal operations. 
	
	\begin{obs}[Extremal bicooling tensors]
		In zero temperature, assuming a non-degenerated Hamiltonian spectrum, the set of thermal and bithermal tensors simplify to polytopes of stochastic matrices satisfying respectively $T_{ijk}  = 0$ if $i > j$ and  
        \begin{equation}
        \label{ZetoTem}
        T_{ijk} = 0 \text{ if } i>j \text{ or } i >k
        \end{equation}
        The extremal thermal and bithermal tensors are the ones with only $0$ and $1$ entries, restricted only by the abovestated condition.
	\end{obs}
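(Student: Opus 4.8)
The plan is to take the limit $\beta \to \infty$ of the conditions defining thermal and bithermal tensors and show that, in that limit, the Gibbs-preservation constraints collapse to the stated support restrictions. First I would recall from Appendix~\ref{App:cold} (which I may assume) that a single Gibbs-preserving stochastic matrix $\Lambda$ in the limit $\beta\to\infty$ must satisfy $\Lambda_{ij}=0$ whenever $i>j$; the heuristic is that as $\beta\to\infty$ the Gibbs vector $\boldsymbol\gamma$ concentrates entirely on the ground state, $\gamma_0\to 1$ and $\gamma_i/\gamma_0 = e^{-\beta(E_i-E_0)}\to 0$ for $i>0$, and the preservation equation $\Lambda\boldsymbol\gamma = \boldsymbol\gamma$ together with non-negativity forces all ``uphill'' transitions to vanish in the limit. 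Since by the remark at the start of Section~\ref{sec:BTTensors} the layers of a thermal tensor are exactly independent Gibbs-preserving stochastic matrices, for a thermal tensor $T_{ijk}$ each fixed-$k$ layer $\Lambda^{(k)}_{ij} := T_{ijk}$ inherits $T_{ijk}=0$ for $i>j$, giving the first claim.

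For the bithermal case I would apply the two defining families of constraints~\eqref{tristo_def} separately. The first family, $\forall_k\ \sum_l T_{ikl}\gamma_l = \gamma_i$ (reading indices appropriately), says that for each fixed value of the ``non-summed'' system-index the tensor acts as a Gibbs-preserving map in the remaining pair, so the $\beta\to\infty$ limit forces $T_{ijk}=0$ when $i>k$; the second family, $\forall_l\ \sum_k T_{ikl}\gamma_k = \gamma_j$, symmetrically forces $T_{ijk}=0$ when $i>j$. Combining the two gives exactly~\eqref{ZetoTem}. The key point to make carefully here is that the two constraint families constrain two \emph{different} slicings of the cube and therefore impose genuinely independent vanishing conditions — this is precisely the same independence observation already used in the proof of Theorem~\ref{thm:stric_sup_SLTOCC}, so I would cite it rather than re-derive it.

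For the extremality statement, once the support is pinned down, everything is linear: the zero-temperature thermal (resp.\ bithermal) tensors form the polytope cut out inside the nonnegative orthant by the stochasticity normalisations $\sum_i T_{ijk}=1$ (together, for bithermal tensors, with the additional normalisations coming from~\eqref{tristo_def} in the $\beta\to\infty$ limit, which become $\sum_i T_{ijk}=\sum_k T_{ijk}\big|_{j=0}=1$-type conditions with the off-support entries deleted) and the support equalities $T_{ijk}=0$ outside the allowed index set. I would then argue that a vertex of such a polytope — an intersection of enough facets to be zero-dimensional — must have all coordinates at $0$ or $1$: any entry strictly between $0$ and $1$ could be perturbed in a direction that preserves the linear equalities (the reduced system, after deleting forced zeros, is a product of ordinary stochastic-matrix polytopes, whose vertices are $0/1$ by Birkhoff), contradicting extremality; conversely every $0/1$ tensor obeying the support and normalisation constraints is clearly an extreme point. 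The main obstacle I anticipate is not conceptual but one of bookkeeping: making the $\beta\to\infty$ limiting argument rigorous rather than heuristic — i.e.\ showing that the \emph{set} of tensors satisfying the $\beta$-constraints converges (in the appropriate sense, using the closedness results already established) to the polytope defined by the support conditions, rather than merely that individual entries tend to zero — and keeping the index conventions for the three cube-slicings straight throughout.
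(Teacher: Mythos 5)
Your proposal follows essentially the same route as the paper: it invokes the zero-temperature characterisation of Gibbs-preserving matrices from Appendix~\ref{App:cold}, applies it to each relevant slicing of the tensor to obtain the support condition \eqref{ZetoTem}, and then uses the decoupling of the columns into independent probability vectors (simplices restricted to the allowed support) to conclude that the extreme points are exactly the $0/1$ tensors. The set-convergence subtlety you flag as the main obstacle is not resolved in the paper either — it is acknowledged in the remark immediately following the proof, where the bicooling tensors are simply \emph{defined} as the polytope cut out by \eqref{ZetoTem}, so your argument is at the same level of rigour as the original.
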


 Hereafter we will call the zero temperature version of bithermal operations the \textit{bicooling} operations.
	
	\begin{proof}
		For the thermal requirement in the zero temperature $\beta\to\infty$, we have $T_{ijk} = 0 \text{ if } i > j$ and $T_{ijk} = 0 \text{ if } i > k$, thus~\eqref{ZetoTem} follows trivialy.
		From the condition~\eqref{ZetoTem} we see that the columns of the (bi-)cooling tensor are decoupled and can be treated as independent probability vectors subject only to the condition~\eqref{ZetoTem}. 
		Thus the form of extremal tensors, as a combination of extremal columns, follows immediately.
	\end{proof}
	
	We note, that the set defined by the limiting form of bithermality conditions~\eqref{ZetoTem} might be larger than the zero-temperature limit of the set of bithermal tensors.
    However, we choose the first one as bicooling tensors, since in the discussed framework the fundamental physical operations are thermal operations. In the limit of zero temperature, one can combine them into any tensors satisfying conditions~\eqref{ZetoTem}.

	\subsection{Explicit construction of a family of extreme bithermal tensors in low temperatures}
	
	Although the general construction of extremal bithermal tensors seems unattainable, based on the results introduced in~\cite{Mazurek_2018} we may construct \bl{a particular} family of extreme bithermal tensors, which exist for every Hamiltonian when the temperature is low enough. 
	Hereafter, for the sake of clarity, we adopt the notation $g_{i,j} = e^{-\beta(E_i - E_j)}$~\cite{Mazurek_2018}. 
    In dimension $2$ our example is just an extremal tensor~\eqref{A1}
	\begin{equation}
		T_{ijk}|_{d  = 2} = (T_{i,0,k}|T_{i,1,k})_{d = 2}
		= \left(\begin{matrix}
			1-g_{1,0} & 1 \\
			g_{1,0} & 0 \\
		\end{matrix}\right.
		\left|\begin{matrix}
			1 & 0 \\
			0 & 1 \\
		\end{matrix}\right)~,
	\end{equation}
	Note, that we sliced the tensor differently so that each slice is now a thermal operation that Bob may perform.
    \begin{widetext}
    In dimension $3$ the structure is given as follows
	\begin{equation}
		\begin{aligned} 
			T_{ijk}|_{d  = 3} & = (T_{i,0,k}|T_{i,1,k}|T_{i,2,k})|_{d = 3}  \\
             & = \left(\begin{matrix}
				1-g_{1,0} - g_{2,0} & 1 & 1\\
				g_{1,0} & 0 & 0\\
				g_{2,0} & 0 & 0 \\
			\end{matrix}\right.
			\left|\begin{matrix}
				1 & 0 & 0 \\
				0 & 1-g_{2,1} & 1 \\
				0 & g_{2,1} & 0 \\
			\end{matrix}\right.
			\left|\begin{matrix}
				1 & 0 & 0 \\
				0 & 1 & 0 \\
				0 & 0 & 1 \\
			\end{matrix}\right) =\left(\begin{matrix} %%%%
				1-g_{1,0} - g_{2,0} & \rvline & \begin{matrix} 1 & 1\\\end{matrix} \\
				\hline
				\begin{matrix}
					g_{1,0} \\ g_{2,0} \\
				\end{matrix} & \rvline &
				\bigzero
			\end{matrix} \right.
			~\left|~~\begin{matrix} %%%%
				1 & \rvline & \begin{matrix} 0 & 0\\\end{matrix} \\
				\hline
				\begin{matrix}
					0 \\ 0 \\
				\end{matrix} & \rvline &
				T_{i,0,k}|_{d = 2}
			\end{matrix} \right.
			~\left|~~\begin{matrix} %%%%
				1 & \rvline & \begin{matrix} 0 & 0\\\end{matrix} \\
				\hline
				\begin{matrix}
					0 \\ 0 \\
				\end{matrix} & \rvline &
				T_{i,1,k}|_{d = 2}
			\end{matrix} \right)
		\end{aligned}
	\end{equation}
	And in the higher dimensions we follow the same inductive procedure 
	\begin{equation}
		\label{example_n}
		\begin{aligned} 
			&T_{ijk}|_{d = n}= (T_{i,0,k}|\cdots|T_{i,n,k})|_{d = n} =\\ 
			& =\left(\begin{matrix} %%%%
				1-\sum_{i = 1}^n g_{i,0} & \rvline & \begin{matrix} 1 & \cdots & 1\\\end{matrix} \\
				\hline
				\begin{matrix}
					g_{1,0} \\ \vdots \\ g_{d,0} \\
				\end{matrix} & \rvline &
				\bigzero
			\end{matrix} \right.
			~\left|~~\begin{matrix} %%%%
				1 & \rvline & \begin{matrix} 0 &\cdots & 0\\\end{matrix} \\
				\hline
				\begin{matrix}
					0 \\ \vdots\\ 0 \\
				\end{matrix} & \rvline &
				T_{i,0,k}|_{d = n-1}
			\end{matrix} ~~\right|
			~~~\cdots~~~\left|~~\begin{matrix} %%%%
				1 & \rvline & \begin{matrix} 0& \cdots & 0\\\end{matrix} \\
				\hline
				\begin{matrix}
					0 \\ \vdots \\ 0 \\
				\end{matrix} & \rvline &
				T_{i,n-1,k}|_{d = n-1}
			\end{matrix} \right)~.
		\end{aligned}
	\end{equation}
    \end{widetext}
	
	To guarantee the non-negativity of each element in the tensor we must demand
	\begin{equation}
		\sum_{i = 1}^n g_{i,0} \leq 1 ~,~~ \sum_{i = 2}^n g_{i,1} \leq 1 ~,~~ \cdots ,~ g_{n-1,n-2} + g_{n,n-2} \leq 1 ~.
	\end{equation}
	Each of those equations introduces some upper bound for the temperature, thus it is sufficient to satisfy the tightest of them. The above bithermal tensor is symmetric in the sense $T_{ijk} = T_{ikj}$, see fig. \ref{fig:Example_graphical}, thus the opposite way of slicing it would give the same formulas.
	
	Note that each layer of this tensor is an extreme thermal operation. It has a block structure consisting of identity and an extremal thermal operation presented in~\cite{Mazurek_2018} (see fig. \ref{fig:Example_graphical}). Since none of the tensor layers can be decomposed as a convex combination of thermal operations, the entire tensor itself cannot be either, thus it is extremal.
	
	Furthermore, each layer cannot be decomposed as a product of two other thermal operations, as demonstrated in~\cite{Mazurek_2018}, so this tensor cannot be obtained from anything ``simpler'' using local thermal pre- and post-processing.
	Finally, in the limit of zero temperature, the bithermal tensor~\eqref{example_n} simplifies to a bi-cooling tensor of least possible cooling i.e.
	\begin{equation}
		T_{ijk}|_{d = n,T = 0} = 
		\left\{\begin{matrix}
			1& \text{ if } i = \min(j,k), \\
			0& \text{ otherwise},
		\end{matrix}\right.
	\end{equation}
	thus in each column, the populations are placed at the highest possible temperature.
		
	\begin{figure}[h]
		\centering
		\includegraphics[width=1\linewidth]{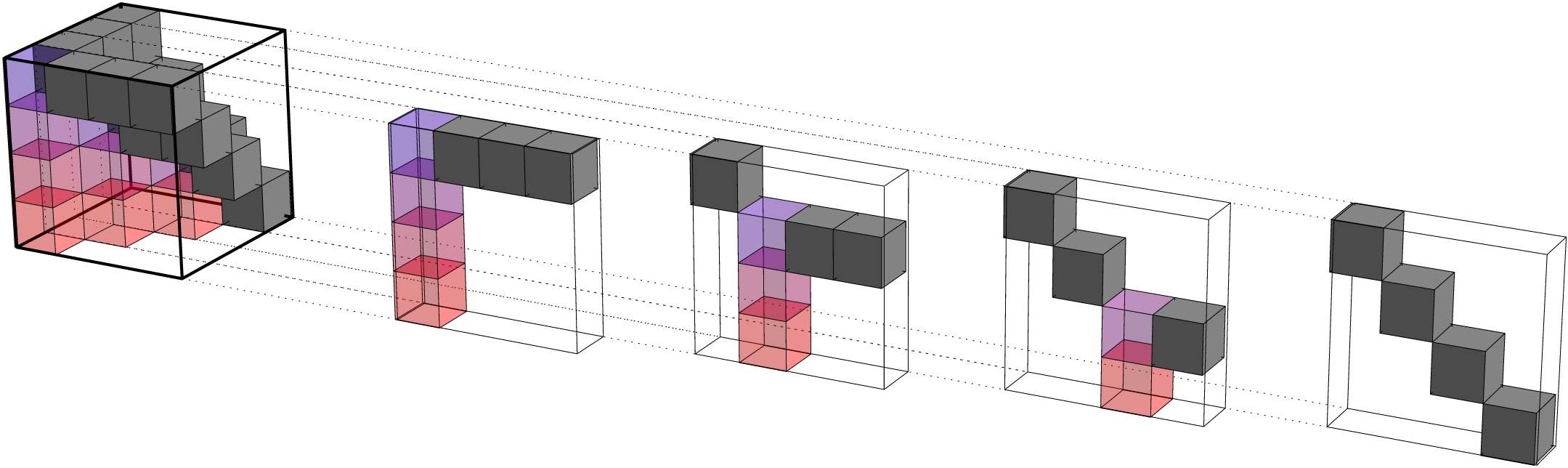}
		\caption{Graphical representation of extremal bithermal tensor~\eqref{example_n} in dimension $d = 4$ as a cube and its separation into consecutive slices. Grey cells correspond to value $1$, red-to-blue cells to other nonzero values, and empty cells to $0$.}
		\label{fig:Example_graphical}
	\end{figure}

	\subsection{Majorization-like restrictions on outputs of (bi)thermal tensors}
	
	Having discussed the general structure of the set of thermal and bithermal tensors, including methods for obtaining the extreme points, we shift our attention to the images of pairs of states under LTOCC and $\mathcal{S}$LTOCC. We start our discussion by considering local results. In other words, we will consider what Bob can obtain in his subsystem after reciving information from Alice, excluding the correlations between subsystems. Thus, we will focus on one (bi)thermal tensor, rather than the entire scheme.
	We stress that although the entire sub-section is presented from the perspective of thermal operations, the presented results hold also in the $\beta \to 0$ limit, corresponding to (tri)stochastic tensors. %\jcz{W sumie, czy ten fragment rzeczywiście ma sens?}
	
	It turns out that there is a strong connection between the images of (bi)thermal tensors and thermomajorization. In the following theorem in square brackets, we present its extended version for bithermal tensors.
	
	\begin{thm}[(Bi)thermality from thermomajorisation]
    \label{first_majorizations}
		The tensor $T$ is a [bi-]thermal tensor, $\forall_q$ $T({\vb{\gamma}},\vb{q}) = {\vb{\gamma}}$ [and $T({\vb{q},\vb{\gamma}}) = {\vb{\gamma}}$], if and only if for any input distributions $\vb{p},\vb{q}$ the output is a probability distribution satisfying thermomajorization relations $\vb{p} \succ_{(\beta)} T(\vb{p},\vb{q})$ [and $\vb{q} \succ_{(\beta)} T(\vb{p},\vb{q})$].
	\end{thm}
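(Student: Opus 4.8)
The plan is to reduce everything to the matrix statement of Theorem~\ref{thm_HLPgeneralisation} by freezing one argument of the tensor. First I would prove the ``only if'' direction. Suppose $T$ is a thermal tensor, i.e.\ for every $\vb{q}$ one has $T(\vb{\gamma},\vb{q}) = \vb{\gamma}$; equivalently, by~\eqref{thermal_cond}, for each fixed $k$ the matrix $\Lambda^{(k)}_{ij} := T_{ijk}$ is a stochastic Gibbs-preserving matrix. Given an arbitrary pair of inputs $\vb{p},\vb{q}$, write
\begin{equation*}
    T(\vb{p},\vb{q})_i = \sum_{jk} T_{ijk}\, p_j q_k = \sum_k q_k \Bigl(\sum_j \Lambda^{(k)}_{ij} p_j\Bigr) = \sum_k q_k\, (\Lambda^{(k)}\vb{p})_i .
\end{equation*}
Since each $\Lambda^{(k)}$ is Gibbs-preserving, Theorem~\ref{thm_HLPgeneralisation} gives $\vb{p} \succ_\beta \Lambda^{(k)}\vb{p}$ for every $k$; that is, the thermomajorisation curve $f^{\,\beta}_{\vb{p}}$ lies above $f^{\,\beta}_{\Lambda^{(k)}\vb{p}}$ pointwise. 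Because $\vb{q}$ is a probability vector, $T(\vb{p},\vb{q})$ is a convex combination of the $\Lambda^{(k)}\vb{p}$, so it is a genuine probability distribution, and I would invoke the standard fact that the thermomajorisation curve of a convex mixture is dominated by the corresponding convex mixture of curves, hence by $f^{\,\beta}_{\vb{p}}$ itself. (This concavity/mixing property of thermomajorisation curves is the one genuinely non-trivial input; it follows because the map $\vb{r}\mapsto f^{\,\beta}_{\vb{r}}(x)$, written as $\min$ over $\beta$-orderings of partial sums, is concave in $\vb{r}$ for each fixed $x$ — I would state this as a short lemma or cite it.) This yields $\vb{p}\succ_\beta T(\vb{p},\vb{q})$. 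The bithermal bracketed claim is symmetric: freezing $j$ instead of $k$ shows each $T_{i\cdot k}$ as a function of the second slot is also Gibbs-preserving, and the same argument gives $\vb{q}\succ_\beta T(\vb{p},\vb{q})$.

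For the ``if'' direction I would argue contrapositively, or better, directly specialise the hypothesis. Assume that for all inputs $\vb{p},\vb{q}$ the output is a probability distribution with $\vb{p}\succ_\beta T(\vb{p},\vb{q})$. First, taking $\vb{q} = \be_k$ (the $k$-th standard basis vector) for each $k$ shows that $\vb{r}\mapsto \sum_j T_{ijk} r_j = \Lambda^{(k)}\vb{r}$ maps probability vectors to probability vectors, so each $\Lambda^{(k)}$ is stochastic. Next, still with $\vb{q}=\be_k$, put $\vb{p} = \vb{\gamma}$: the hypothesis gives $\vb{\gamma}\succ_\beta \Lambda^{(k)}\vb{\gamma}$. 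But the Gibbs state is the unique $\succ_\beta$-minimal distribution (its thermomajorisation curve is the diagonal segment from $(0,0)$ to $(1,1)$, which lies weakly below every curve, so nothing is strictly thermo-below it except itself). Hence $\Lambda^{(k)}\vb{\gamma} = \vb{\gamma}$ for every $k$, which is exactly the thermal-tensor condition $\forall_l\,\sum_k T_{ikl}\gamma_k = \gamma_i$ in~\eqref{thermal_cond}, i.e.\ $T(\vb{\gamma},\vb{q}) = \vb{\gamma}$ for all $\vb{q}$ by linearity. For the bithermal bracketed version, additionally using $\vb{q}\succ_\beta T(\vb{p},\vb{q})$ with $\vb{p}=\be_j$ and $\vb{q}=\vb{\gamma}$ forces the other family of slices to be Gibbs-preserving as well, giving the second condition in~\eqref{tristo_def}.

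The main obstacle I expect is the concavity/mixing step in the ``only if'' direction: one must be careful that $f^{\,\beta}_{\sum_k q_k \vb{r}_k} \le \sum_k q_k f^{\,\beta}_{\vb{r}_k}$ pointwise, which is immediate once one writes the thermomajorisation curve at a fixed abscissa as the infimum of linear functionals of the population vector (ranging over all $\beta$-orderings / all choices of which mass to ``pull forward''), but requires a clean statement because different $\vb{r}_k$ generally have different $\beta$-orders. Everything else — the reduction to slices, stochasticity from basis-vector inputs, and the extremal characterisation of $\vb{\gamma}$ under $\succ_\beta$ — is routine given Theorems~\ref{thm_HLP} and~\ref{thm_HLPgeneralisation}. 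I would also remark, as the theorem statement itself flags with the parenthetical $(\beta)$, that setting $\beta=0$ collapses thermomajorisation to majorisation and recovers the corresponding statement for (tri)stochastic tensors with no change to the argument.
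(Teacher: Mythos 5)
Your reverse direction is essentially the paper's: slice the tensor into layers, obtain stochasticity of each layer from basis-vector inputs, and force $\Lambda^{(k)}\vb{\gamma}=\vb{\gamma}$ from $\vb{\gamma}\succ_\beta \Lambda^{(k)}\vb{\gamma}$ together with the fact that $\vb{\gamma}$ is the unique bottom element of the thermomajorisation order (the paper leaves that last step implicit and cites a majorization textbook for stochasticity; your more explicit version is fine, as is the symmetric treatment of the bithermal bracket).

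The forward direction is where you diverge, and where you introduce an avoidable weak point. The paper's argument is one line: for fixed $\vb{q}$ the matrix $M_{ij}=\sum_k T_{ijk}q_k$ is stochastic and Gibbs-preserving — Gibbs preservation follows by linearity from $T(\vb{\gamma},\vb{q})=\vb{\gamma}$, equivalently from \eqref{thermal_cond}, since a convex combination of Gibbs-preserving stochastic matrices is again one — so Theorem~\ref{thm_HLPgeneralisation} applied once to $M$ gives $\vb{p}\succ_\beta M\vb{p}=T(\vb{p},\vb{q})$. You instead apply Theorem~\ref{thm_HLPgeneralisation} to every layer $\Lambda^{(k)}$ separately and then mix the \emph{outputs}, which obliges you to invoke the curve-mixing inequality $f^{\,\beta}_{\sum_k q_k\vb{r}_k}(x)\le\sum_k q_k f^{\,\beta}_{\vb{r}_k}(x)$. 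That inequality is true, but your stated justification is backwards: at fixed abscissa the thermomajorisation curve is $f^{\,\beta}_{\vb{r}}(x)=\max\{\sum_i t_i r_i:\ t_i\in[0,1],\ \sum_i t_i\gamma_i=x\}$, a \emph{maximum} (not a minimum) over linear functionals of $\vb{r}$ — the $\beta$-ordering maximises the partial sums — hence $f^{\,\beta}_{\vb{r}}(x)$ is \emph{convex}, not concave, in $\vb{r}$, and it is precisely convexity that yields the needed domination; concavity would give the opposite inequality and your argument would not close. So either repair the lemma's proof along these lines, or better, drop it entirely and perform the convex mixture at the level of the Gibbs-preserving matrices rather than their images, which is exactly the paper's shortcut.
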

	
	\begin{proof}
		Let $T$ be a thermal tensor then for any distribution $\vb{q}$, $T(\cdot~,\vb{q})$ is a thermal operation, thus the implication in one direction follows.
		
		To prove the implication in the opposite direction, we note that any layer of a tensor $B^{(j)} = T(\cdot,e_j)$ is a matrix such that $\vb{p} \succ_{(\beta)} B^{(j)}(\vb{p})$. Thus each layer $B^{(j)}$ preserves a Gibbs state. Moreover, its non-negativity and normalization can be proven analogically as in~\cite{Majorization_book} (theorem A.4).
		Since each layer $B^{(j)}$ is a thermal operation, $T$ must be a thermal tensor.
		
		In the case of bithermal tensors, one only needs to repeat the above arguments replacing the first with the second input state.
	\end{proof}
	
	As one can see, the problem of majorization between inputs and outputs is very similar to the scenario with thermal operations.
    For the thermal tensors, one can also provide an analogue of Theorem \ref{thm_HLPgeneralisation}: 

	\begin{thm}
		\label{thm_HLPgeneralisation_tri}
		There exists a thermal tensor $T$, $T(\vb{\gamma},\vb{q}) = \vb{\gamma}$, mapping $(\vb{p}, \vb{q})$ to $\vb{r}$ if and only if $\vb{p} \succ_{\beta} \vb{r}$.
	\end{thm}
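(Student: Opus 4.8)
The plan is to read off both directions from Theorem~\ref{thm_HLPgeneralisation}, using the structural observation -- already exploited in the proof of Theorem~\ref{first_majorizations} -- that a thermal tensor $T$ (thermal in its first input slot) is precisely a family of Gibbs-preserving stochastic matrices $B^{(k)}_{ij}:=T_{ijk}$ indexed by the second-input slot $k$, and that feeding a probability vector $\vb{q}$ into that slot simply takes the $\vb{q}$-average of these matrices, $T(\cdot,\vb{q})=\sum_{k}q_{k}B^{(k)}$.

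For the ``only if'' direction I would argue as follows: if $T$ is such a thermal tensor with $T(\vb{\gamma},\vb{q})=\vb{\gamma}$ and $T(\vb{p},\vb{q})=\vb{r}$, then I can either invoke Theorem~\ref{first_majorizations} directly to conclude $\vb{p}\succ_{\beta}T(\vb{p},\vb{q})=\vb{r}$, or spell it out: the matrix $\Lambda:=\sum_{k}q_{k}B^{(k)}$ is stochastic (a convex combination of stochastic matrices) and satisfies $\Lambda\vb{\gamma}=\vb{\gamma}$ because each layer $B^{(k)}$ preserves $\vb{\gamma}$, so $\vb{r}=\Lambda\vb{p}$ and Theorem~\ref{thm_HLPgeneralisation} yields $\vb{p}\succ_{\beta}\vb{r}$.

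For the ``if'' direction, starting from $\vb{p}\succ_{\beta}\vb{r}$, I would first use Theorem~\ref{thm_HLPgeneralisation} to obtain a stochastic matrix $\Lambda$ with $\Lambda\vb{\gamma}=\vb{\gamma}$ and $\Lambda\vb{p}=\vb{r}$, and then define the tensor with constant layers, $T_{ijk}:=\Lambda_{ij}$ for every $k$. The verifications are routine: $T_{ijk}\ge 0$; $\sum_{i}T_{ijk}=\sum_{i}\Lambda_{ij}=1$, so $T$ is a stochastic tensor; $\sum_{j}T_{ijk}\gamma_{j}=\gamma_{i}$ for every $k$, which gives $T(\vb{\gamma},\vb{q})_i=\sum_{k}q_k\gamma_i=\gamma_i$ and hence thermality; and finally $T(\vb{p},\vb{q})_{i}=\sum_{k}q_{k}(\Lambda\vb{p})_{i}=r_{i}$, so $(\vb{p},\vb{q})\mapsto\vb{r}$ as required. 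Note this construction ignores $\vb{q}$ entirely, which is harmless.

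As for obstacles: there are essentially none beyond bookkeeping -- all the real content is borrowed from Theorem~\ref{thm_HLPgeneralisation}. The only point that deserves a moment's care is confirming that the constant-layer tensor $T_{ijk}=\Lambda_{ij}$ simultaneously meets all the defining conditions of a thermal tensor (non-negativity, normalisation of hypercolumns over the output index, and preservation of $\vb{\gamma}$ in the first slot irrespective of the second input), which the computation above does, and making sure the convex-combination-of-Gibbs-preserving-matrices step in the converse genuinely uses that $\vb{q}$ is a normalised probability vector.
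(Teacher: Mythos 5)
Your proposal is correct and follows essentially the same route as the paper: the ``only if'' direction via Theorem~\ref{first_majorizations} (equivalently, averaging the Gibbs-preserving layers against $\vb{q}$ and applying Theorem~\ref{thm_HLPgeneralisation}), and the ``if'' direction by taking the Gibbs-preserving matrix $\Lambda$ from Theorem~\ref{thm_HLPgeneralisation} and building the constant-layer tensor $T_{ijk}=\Lambda_{ij}$, exactly as in the paper's construction $T_{ikl}=\Lambda_{ik}$. The verifications you list (non-negativity, hypercolumn normalisation, Gibbs preservation in the first slot, and $T(\vb{p},\vb{q})=\vb{r}$) match the paper's computation.
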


    \begin{proof}
    The implication in the first direction follows directly from the above Theorem \ref{first_majorizations}. The implication in the other follows from the fact that if $\vb{p} \succ_{\beta} \vb{r}$, then, by the Theorem \ref{thm_HLPgeneralisation}, there exists a Gibbs-preserving stochastic matrix $\Lambda$ such that $\Lambda(\vb{p}) = \vb{r}$. Thus we may construct a thermal tensor $T$ with all layers equal to $\Lambda$: $T_{ikl} = \Lambda_{ik}$ to obtain: $T(\vb{p},\vb{q})_i = \sum_{kl}\Lambda_{ik} p_k q_l = \sum_k \Lambda_{ik} p_k = r_i$, which ends the proof. 
    \end{proof}

    On the other hand, the problem of finding a \textit{bithermal} tensor, that may transform a pair of distributions into a desired one is much more complicated. The majorization of possible output by inputs is only a necessary condition for possible mapping.
	
	\begin{lem}
		Let $\vb{p},\vb{q},\vb{r}$ be three probability distributions, then there exists a bithermal tensor $T$ mapping $\vb{p},\vb{q}$ into $\vb{r}$: $\vb{r} = T(\vb{p},\vb{q})$ if and only if $\vb{r}$ is majorized by a convex combination of majorization curves of $T^{(i)}(\vb{p},\vb{q})$ for all extremal bithermal tensors $T^{(i)}$.
	\end{lem}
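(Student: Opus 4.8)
The plan is to realise the set of admissible images as an explicit polytope and then to recognise membership in it through thermomajorisation. First I would observe that, as already noted for~\eqref{tristo_def}, the bithermal tensors form a convex polytope with a finite vertex set $\{T^{(i)}\}$, and that the evaluation map $T\mapsto T(\vb{p},\vb{q})$ is linear; hence the set of all $\vb{r}$ of the form $T(\vb{p},\vb{q})$ with $T$ bithermal is exactly the polytope $\mathcal{R}:=\operatorname{conv}\{\vb{r}^{(i)}\}$, where $\vb{r}^{(i)}:=T^{(i)}(\vb{p},\vb{q})$. The lemma then reduces to the claim that $\vb{r}\in\mathcal{R}$ if and only if there are weights $\lambda_i\ge 0$ with $\sum_i\lambda_i=1$ such that $f^{\,\beta}_{\vb{r}}(x)\le\sum_i\lambda_i f^{\,\beta}_{\vb{r}^{(i)}}(x)$ for all $x\in[0,1]$.

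For the ``only if'' part I would use the variational form of the thermomajorisation curve, $f^{\,\beta}_{\vb{v}}(x)=\max\{\langle\vb{t},\vb{v}\rangle:\vb{0}\le\vb{t}\le\vb{1},\ \langle\vb{t},\vb{\gamma}\rangle=x\}$, which exhibits $\vb{v}\mapsto f^{\,\beta}_{\vb{v}}(x)$ as a convex function for each fixed $x$. Writing $\vb{r}=\sum_i\lambda_i\vb{r}^{(i)}$ then gives $f^{\,\beta}_{\vb{r}}\le\sum_i\lambda_i f^{\,\beta}_{\vb{r}^{(i)}}$ pointwise, which is exactly the asserted domination of $\vb{r}$'s curve by the convex combination of the curves of the $\vb{r}^{(i)}$.

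The ``if'' part is the substantial one, and I would split it into two steps. Step one: $\mathcal{R}$ is closed downwards under thermomajorisation. Indeed the set of bithermal tensors is stable under post-composition in the output index by a Gibbs-preserving stochastic matrix, $(\Lambda\cdot T)_{ikl}=\sum_m\Lambda_{im}T_{mkl}$ --- positivity and all three normalisation/Gibbs equalities survive --- so $\mathcal{R}$ is stable under $\vb{s}\mapsto\Lambda\vb{s}$, and by Theorem~\ref{thm_HLPgeneralisation} any $\vb{r}$ with $\vb{s}\succ_\beta\vb{r}$ for some $\vb{s}\in\mathcal{R}$ again lies in $\mathcal{R}$. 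Thus it suffices to extract, from a satisfied curve condition, some $\vb{s}\in\mathcal{R}$ with $f^{\,\beta}_{\vb{s}}\ge f^{\,\beta}_{\vb{r}}$. Step two does this by duality: the curve condition is a finite linear feasibility problem, since it need only be imposed at the finitely many breakpoints of the curves $f^{\,\beta}_{\vb{r}^{(i)}}$, and its minimax dual says that it fails exactly when there is a probability measure $\mu$ on those breakpoints with $\int f^{\,\beta}_{\vb{r}}\,d\mu>\max_i\int f^{\,\beta}_{\vb{r}^{(i)}}\,d\mu$. On the other hand $\vb{r}\notin\mathcal{R}$ means $\vb{r}$ is separated from $\{\vb{r}^{(i)}\}$ by a linear functional; using the variational formula to represent the averaged functionals $\vb{v}\mapsto\int f^{\,\beta}_{\vb{v}}\,d\mu$ as support functions of explicit convex sets, and exploiting the downward closedness from step one, one argues that the separating functional may be taken of exactly this averaged form, so that $\vb{r}\notin\mathcal{R}$ forces the curve condition to fail. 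Combining the two dichotomies closes the equivalence.

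I expect the genuine difficulty to be the last point of step two: showing that the non-trivial facets of $\operatorname{conv}\{\vb{r}^{(i)}\}$ are cut out precisely by the thermomajorisation functionals $\vb{v}\mapsto\int f^{\,\beta}_{\vb{v}}\,d\mu$ rather than by arbitrary linear inequalities. In the infinite-temperature limit $\beta\to 0$ this is easy --- all curves break at the same abscissae and $\mathcal{R}$ is in addition permutation-invariant, so it is determined by the majorisation curves of its vertices --- but for $\beta>0$ the curves of the various $\vb{r}^{(i)}$ have misaligned breakpoints, the combination $\sum_i\lambda_i f^{\,\beta}_{\vb{r}^{(i)}}$ need not itself be a thermomajorisation curve, and this is precisely what makes a single extracted $\vb{s}$ less obvious to produce. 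If instead one reads ``majorized by a convex combination of majorization curves'' as simply ``$\vb{r}$ is a convex combination of the $\vb{r}^{(i)}$'', both directions collapse to the polytope/linearity observation of the first paragraph, with the curve bound of the second paragraph an immediate corollary.
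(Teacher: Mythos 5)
Your ``only if'' direction is fine and in fact cleaner than the paper's: the variational formula exhibits $\vb{v}\mapsto f^{\,\beta}_{\vb{v}}(x)$ as convex for each $x$, so a vertex decomposition $T=\sum_i\lambda_iT^{(i)}$ immediately gives $f^{\,\beta}_{\vb{r}}\le\sum_i\lambda_i f^{\,\beta}_{\vb{r}^{(i)}}$. Your Step one (post-composing a bithermal tensor with a Gibbs-preserving stochastic matrix in the output index preserves all defining conditions, hence the achievable set $\mathcal{R}=\operatorname{conv}\{\vb{r}^{(i)}\}$ is downward closed under $\succ_\beta$) is also exactly an ingredient of the paper's argument, stated there as ``such post-processing can be included in the bithermal tensor without changing its defining properties.''

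The divergence is in the ``if'' direction, and there your proposal as written is incomplete: the duality argument of Step two is only a plan, and you yourself flag that the decisive point --- that the facets of $\mathcal{R}$ are cut out by thermomajorisation functionals, or equivalently that pointwise domination of $f^{\,\beta}_{\vb{r}}$ by $\sum_i\lambda_i f^{\,\beta}_{\vb{r}^{(i)}}$ yields a single $\vb{s}\in\mathcal{R}$ with $\vb{s}\succ_\beta\vb{r}$ --- is not established. The paper does not do this work either: it reads the hypothesis as saying that the state $\tilde{T}(\vb{p},\vb{q})=\sum_i\lambda_i\vb{r}^{(i)}$ thermomajorises $\vb{r}$, takes the convex combination of the extremal tensors with the same weights, and finishes by Gibbs-preserving post-processing (your Step one). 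Under the literal pointwise reading you adopt, that inference would need the inequality in the direction opposite to convexity, since $f^{\,\beta}_{\sum_i\lambda_i\vb{r}^{(i)}}\le\sum_i\lambda_i f^{\,\beta}_{\vb{r}^{(i)}}$, so your hesitation identifies a genuine looseness rather than a missing trick on your side. Your closing remark --- that if the condition is read as ``$\vb{r}$ is thermomajorised by some convex combination of the $\vb{r}^{(i)}$'' then everything collapses to linearity of $T\mapsto T(\vb{p},\vb{q})$ over the bithermal polytope plus absorbable post-processing --- is effectively the paper's proof; with that reading your argument is complete, while for the literal curve-combination statement the key step remains open in both your attempt and the paper.
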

	
	\begin{proof}
		Suppose some combination of majorization curves for $T^{(i)}(\vb{p},\vb{q})$ majorizes $\vb{r}$. In that case, one can perform a convex combination with the same weights between appropriate extremal tensors $T^{(i)}$, obtaining tensor $\tilde{T}$ such that $\tilde{T}(\vb{p},\vb{q}) \succ_{(\beta)} \vb{r}$. Thus additional thermal post-processing is sufficient to construct the desired tensor.
		Note that such post-processing can be included in bithermal tensor without changing its defining properties.
		
		On the other hand, if there is no convex combination of majorization curves that majorizes $\vb{r}$, there is also no combination of extremal tensors which would lead $\vb{p},\vb{q}$ into state distribution majorizing $\vb{r}$ - thus there is no desired tensor. 
	\end{proof}
	
	From the above lemma %\jcz{To 12 powinna być Lemma czy Theorem? Tu chyba do przemyślenia nazewnictwo w tej sekcji...}
    , we can deduce the following observations, which simplify the discrimination of possible transforming bithermal tensors.
	
	\begin{obs}
		Let $l$ be a curve obtained by taking a maximum over majorization curves for $T^{(i)}(\vb{p},\vb{q})$ at each point, with $T^{(i)}$ being extremal bithermal tensors. If $l$ does not majorize the curve corresponding to $\vb{r}$, then there exists no  bithermal  tensor $T$, such that $T(\vb{p},\vb{q}) = \vb{r}$.
	\end{obs}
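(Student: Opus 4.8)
The plan is to obtain this observation directly as a corollary of the preceding Lemma, using only the elementary fact that a pointwise maximum dominates every convex combination of the curves it is taken over; no new machinery is required.

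First I would fix notation: write $f_{\vb{s}}$ for the (thermo)majorisation curve of a probability vector $\vb{s}$, and let $\{T^{(i)}\}_i$ denote the finite collection of extremal bithermal tensors (finite because the bithermal tensors form a convex polytope, as established at the start of this section). By the very definition of $l$ as the pointwise maximum, $f_{T^{(i)}(\vb{p},\vb{q})}(x) \le l(x)$ for every $x\in[0,1]$ and every $i$. Consequently, for any weights $w_i\ge 0$ with $\sum_i w_i = 1$, the convex combination $C(x) := \sum_i w_i\, f_{T^{(i)}(\vb{p},\vb{q})}(x)$ satisfies $C(x) \le \sum_i w_i\, l(x) = l(x)$ for all $x$; that is, $l$ dominates every convex combination of the curves $f_{T^{(i)}(\vb{p},\vb{q})}$.

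Next I would argue by contraposition. Suppose a bithermal tensor $T$ with $T(\vb{p},\vb{q}) = \vb{r}$ exists. By the Lemma, $\vb{r}$ is majorised by some convex combination $C$ of the curves $f_{T^{(i)}(\vb{p},\vb{q})}$, i.e.\ $f_{\vb{r}}(x) \le C(x)$ for all $x$. Chaining this with the previous inequality yields $f_{\vb{r}}(x) \le C(x) \le l(x)$ for all $x\in[0,1]$, which says precisely that $l$ majorises the curve corresponding to $\vb{r}$. The contrapositive is exactly the claim: if $l$ does not majorise $f_{\vb{r}}$, then no bithermal tensor can map $(\vb{p},\vb{q})$ to $\vb{r}$.

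There is essentially no substantive obstacle here; the one point requiring care is purely a matter of interpretation — the curve $l$ need not be concave, nor need it be the (thermo)majorisation curve of any single probability vector, so ``$l$ majorises $f_{\vb{r}}$'' must be read simply as the pointwise inequality $l \ge f_{\vb{r}}$ on $[0,1]$, and it is exactly this pointwise statement that the argument establishes. One also uses implicitly that there are only finitely many extremal $T^{(i)}$, so the pointwise maximum is attained and $l$ is a well-defined piecewise-linear curve, but this was already secured by the polytope structure recorded earlier.
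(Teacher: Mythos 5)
Your proposal is correct and follows essentially the same route the paper intends: the Observation is stated as a direct deduction from the preceding Lemma, using exactly the fact that the pointwise maximum $l$ dominates every convex combination of the curves $f_{T^{(i)}(\vb{p},\vb{q})}$, so the contrapositive of the Lemma's necessity direction gives the claim. Your remark that $l$ need not itself be a thermomajorisation curve of any distribution, and that domination must be read pointwise, is a sensible clarification consistent with the paper's usage.
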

	
	\begin{obs}\label{obs_extremal}
		(Numerical) For tristochastic tensors, it is sufficient to consider only extremal permutation tensors. 
		% \raf{Start a long fight with analytical proof in dim = 3?}
	\end{obs}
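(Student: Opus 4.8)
The plan is to reduce Observation~\ref{obs_extremal} to the following polyhedral identity: for every pair of probability vectors $\vb p,\vb q\in\fR^d$, the image of the tristochastic polytope under $T\mapsto T(\vb p,\vb q)$ coincides with the convex hull of the images $P(\vb p,\vb q)$ of the permutation (Latin-square) tensors, with $P(\vb p,\vb q)_i=\sum_{(j,k):\,\sigma(j,k)=i}p_jq_k$ for the Latin square $\sigma$ underlying $P$. Granted this, the observation follows quickly. Write $f_{\vb v}$ for the majorization curve of a distribution $\vb v$; since each partial sum $\sum_{j\le k}v_j^{\downarrow}$ is convex in $\vb v$, one has $f_{\sum_i\lambda_i\vb v_i}\le\sum_i\lambda_i f_{\vb v_i}$ for all convex weights. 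Hence if $\vb r$ is reachable from $(\vb p,\vb q)$ in the sense of the preceding Lemma, i.e.\ $f_{\vb r}\le\sum_i\lambda_i f_{T^{(i)}(\vb p,\vb q)}$ with the $T^{(i)}$ extremal, then expanding each $T^{(i)}(\vb p,\vb q)$ via the identity as a convex combination of permutation-tensor images yields $f_{\vb r}\le\sum_j\mu_j f_{P_j(\vb p,\vb q)}$ with $P_j$ permutation tensors; conversely, permutation tensors are themselves extremal tristochastic tensors, so restricting the Lemma --- and the preceding observation about the upper-envelope curve --- to permutation tensors loses nothing.

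To prove the identity, the inclusion ``$\supseteq$'' is immediate since permutation tensors are tristochastic. For ``$\subseteq$'' it suffices, by linearity, to show that every extremal tristochastic tensor $S$ satisfies $S(\vb p,\vb q)\in\operatorname{conv}\{P(\vb p,\vb q):P\text{ permutation}\}$, equivalently that every \emph{product-form} linear functional $T\mapsto\sum_{ijk}c_i\,p_j\,q_k\,T_{ijk}$ is maximized over the tristochastic polytope at a permutation tensor. I would first dispatch the tractable regimes: if $\vb q=\be_{k_0}$ is a point mass, the slice $(T_{ijk_0})_{ij}$ is forced to be bistochastic (and attains every bistochastic matrix), so by Birkhoff--von Neumann and the rearrangement inequality the functional's maximum is $\langle\vb{c}^{\downarrow},\vb{p}^{\downarrow}\rangle$, attained at a permutation of $\vb p$ realized by a permutation tensor (extend the optimal $k_0$-slice permutation to a Latin square); the case of $\vb p$ a point mass is symmetric, and $d=2$ is trivial because the tristochastic polytope is then just the segment joining its two permutation vertices.

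For the general case, the natural tool is Birkhoff applied not to a slice but to the full layer $M^{(\vb q)}_{ij}:=\sum_k T_{ijk}q_k$, which is bistochastic for every tristochastic $T$ (both marginal sums equal $\sum_k q_k=1$), whence $T(\vb p,\vb q)=M^{(\vb q)}\vb p=\sum_\alpha\mu_\alpha(\vb q)\,\Pi_\alpha\vb p$ is a $\vb q$-dependent mixture of permutations of $\vb p$. The catch is that the weights $\mu_\alpha(\vb q)$ are \emph{not} free --- already for $d=2$ they must lie in an interval determined by $\vb q$ --- and it is exactly these constrained mixtures that coincide with permutation-tensor images; so what remains is to show that the polytope of attainable layers $\{M^{(\vb q)}:T\text{ tristochastic}\}$ has all of its vertices of the Latin-square form. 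I expect this to be the main obstacle: the tristochastic ``Birkhoff'' polytope is genuinely non-integral (it has non-permutation vertices, cf.~\cite{Birkhoff_tensors} and~\eqref{example_strange_tensor}), so the argument cannot be agnostic to the product structure of the cost tensor $c_ip_jq_k$, and no dimension-independent combinatorial certificate for that structure is presently known. A complete proof would thus require either a full classification of the vertices of the tristochastic polytope --- which the paper itself flags as open --- together with a case check that every non-permutation vertex $S$ satisfies $S(\vb p,\vb q)\in\operatorname{conv}\{P(\vb p,\vb q)\}$ for all $\vb p,\vb q$, or a direct majorization inequality bounding $S(\vb p,\vb q)$ by Latin-square images. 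For $d=3$ the former route is feasible, since the extremal tristochastic tensors are explicitly enumerated in~\cite{extremal_multistoch}: one is left with finitely many linear-feasibility problems, one per region on which the orderings of $\vb p$, $\vb q$ and the candidate outputs are constant, which is precisely the numerical verification underlying the statement; the case $d\ge4$ remains open.
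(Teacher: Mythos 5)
You should know at the outset that the paper itself does not prove Observation~\ref{obs_extremal}: it is labelled ``(Numerical)'', and the text immediately following it states that the claim is only conjectured on the basis of extensive numerical experiments, supported by a two-line heuristic --- permutation tensors are the only tristochastic tensors mapping every pair of basis vectors to a basis vector, so their images dominate on sharp inputs, and the passage to general inputs is an unproven convexity extrapolation. Your proposal is therefore not being compared against an actual proof. Your route is genuinely different and more structured than that heuristic: you reduce the claim to the polyhedral statement that the image of the tristochastic polytope under $T\mapsto T(\vb{p},\vb{q})$ coincides with the convex hull of the Latin-square images, you settle the point-mass and $d=2$ cases rigorously (your point-mass case is essentially the paper's heuristic made precise, via Birkhoff--von Neumann plus the rearrangement inequality and the completion of a single permutation slice to a Latin square), and you locate the genuine obstruction correctly: the tristochastic polytope has non-permutation vertices such as~\eqref{example_strange_tensor}, so Birkhoff applied to the contracted layer $M^{(\vb{q})}_{ij}=\sum_k T_{ijk}q_k$ does not by itself control which mixtures of permutations of $\vb{p}$ are attainable. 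Your honest conclusion --- a finite check is feasible for $d=3$ using the enumeration of extremal tensors in~\cite{extremal_multistoch}, while general $d$ remains open --- matches the actual status of the statement.

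Two caveats. First, your reduction target is strictly stronger than what the Observation needs: in the context of the preceding Lemma and the upper-envelope Observation, it suffices that for every extremal tristochastic $S$ and every $(\vb{p},\vb{q})$ the majorization curve of $S(\vb{p},\vb{q})$ be dominated by a convex combination of the curves of permutation-tensor images; this follows from, but does not require, membership of $S(\vb{p},\vb{q})$ in the convex hull of those images, and it is only this weaker, curve-level statement that the numerics behind the Observation address. If the convex-hull identity were to fail, you would be discarding a version of the claim that might still be provable. Second, the assertion that the $k_0$-slice ``attains every bistochastic matrix'' needs a one-line completion argument (for instance, fill each of the remaining $d-1$ slices with $(J-B)/(d-1)$, $J$ the all-ones matrix, which is bistochastic), and likewise the extension of an optimal permutation slice to a full Latin square deserves an explicit word. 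Neither caveat changes the verdict: like the paper, you do not have a proof, but your reduction and the special cases you do prove are sound and sharper than the heuristic the paper offers.
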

	
	Note that Observation \ref{obs_extremal} is only conjectured, based on extensive numerical experiments. However, the heuristic argument in its favour can be given as follows. The only tensors that map all pairs of basis vectors $e_i, e_j$ into basis vectors are permutation tensors. Thus for any pair of input basis vectors its image under any permutation tensor majorizes its image under any tristochastic tensor. Since all finite probability distributions are convex combinations of basis vectors, we expect that the dominance of permutation tensors holds in the general case.  
	
	\begin{figure}[h]
		\label{fig:architecture}
		\centering
        \includegraphics[width=1.1\columnwidth]{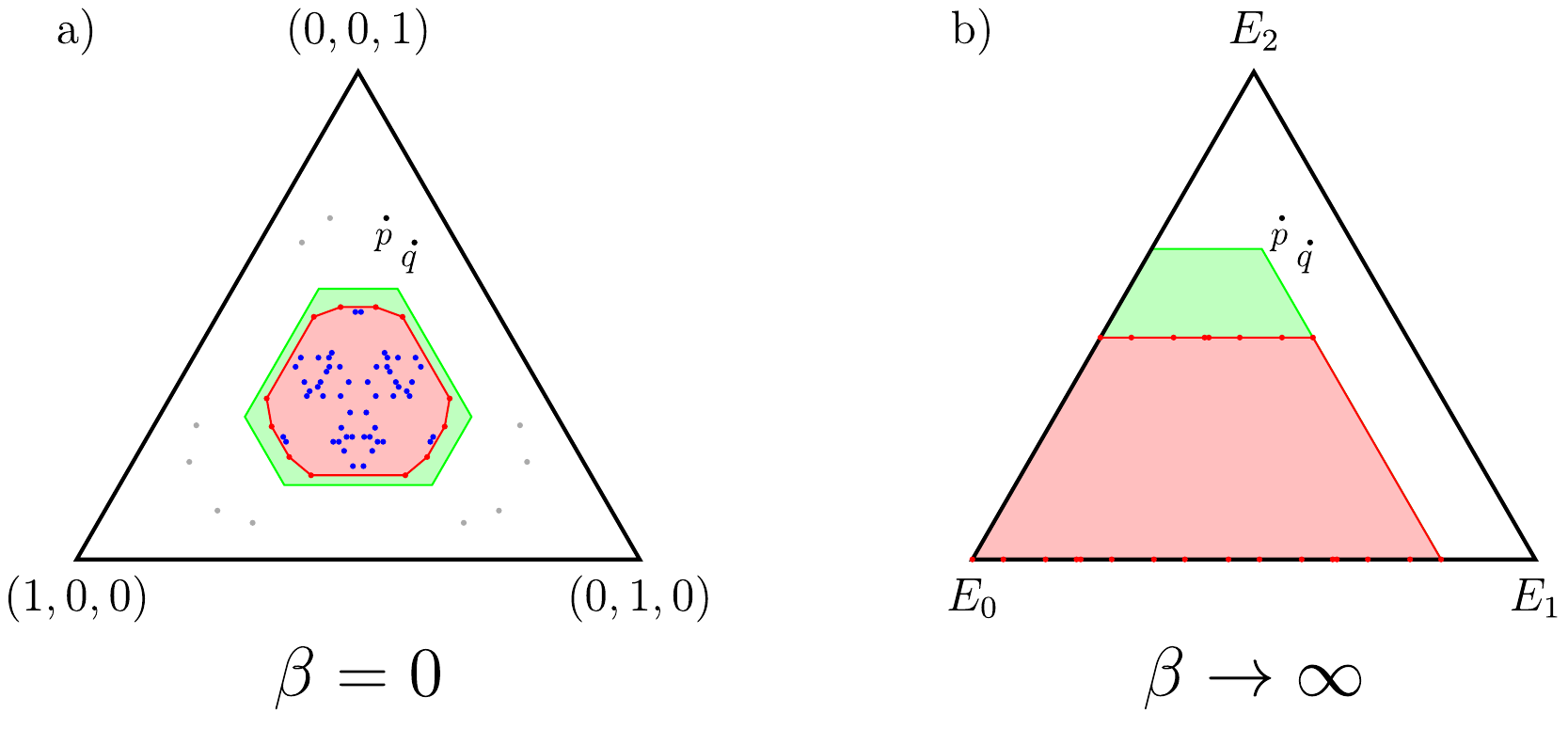}
		\caption{The allowed region for ourput state $\vb{r} = T(\vb{p},\vb{q})$ in local dimention $d = 3$. The two input states are denoted by black dots. All images of $T(\vb{p},\vb{q})$ are convex combinations of extremal images, denoted by coloured dots, thus they lie inside the red border. The green region is the upper bound on states accessible via bithermal operation according to Theorem \ref{thm_allowed}. (a) In infinite temperature $\beta = 0$ images via permutation tensors and other extremal tensors are represented by red and blue dots, respectively. The grey dots correspond to the images of $\vb{p}$ and $\vb{q}$ under permutation pre-processing, which does not affect the allowed region. (b) In zero temperature, $\beta \rightarrow \infty$ images via extremal tensors are denoted in red.}
	\end{figure}
	
	Fortunately, there also exists a stronger necessary condition for the existence of bithermal tensor $ T(\vb{p},\vb{q})\ = \vb{r}$ for given distributions $\vb{p},\vb{q},\vb{r}$.
	To present it let us introduce the following notion.
	
	For any distribution $\vb{p}$ let us define its projection $\tilde{\vb{p}}$ onto the boundary of the probability simplex, which we will refer to as boundary distribution of $\vb{p}$,
	\begin{equation}
		\tilde{\vb{p}} = \frac{1}{1-w_p}\vb{p} - \frac{w_p}{1-w_p} {\gamma},
	\end{equation}
	with $w_p = \min_i \frac{p_i}{{\gamma}_i}$; note that $\tilde{p}$ has at least one zero entry. 
	% Thus $p$ is uniquely decomposed into a Gibbs state ${\gamma}$ and a boundary distribution $\tilde{p}$ that lies on the boundary of probability simplex.  
	
	\begin{thm}[Enhanced thermomajorization for bithermal tensors]
		\label{thm_allowed}
		Let $\vb{p},\vb{q},\vb{r}$ be three probability distributions, and $\tilde{\vb{p}}$, $\tilde{\vb{q}}$ boundary distributions of $\vb{p},\vb{q}$. 
		Moreover, let us define
		\begin{equation}
			\overline{\vb{r}} =  \frac{1}{(1 - w_p)(1 - w_q)} \vb{r} -   \frac{1 - (1 - w_p)(1 - w_q)}{(1 - w_p)(1 - w_q)} \gamma
		\end{equation}
		Then the necessary condition for $\vb{r}$ to be achivable by a bithermal tensor from $\vb{p},\vb{q}$ is not only $\vb{p} \succ_{\beta} \vb{r}$ and $\vb{q} \succ_{\beta} \vb{r}$ but also $\tilde{\vb{p}} \succ_{\beta} \overline{\vb{r}}$ and $\tilde{\vb{q}} \succ_{\beta} \overline{\vb{r}}$.
	\end{thm}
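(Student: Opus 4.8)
The plan is to leverage two structural features of a bithermal tensor $T$: its bilinearity in the two input arguments, and the pair of Gibbs-fixing identities $T(\vb{\gamma},\vb{q})=\vb{\gamma}$, $T(\vb{p},\vb{\gamma})=\vb{\gamma}$ valid for all inputs. The underlying idea is that the map $\vb{p}\mapsto\tilde{\vb{p}}$, which slides a distribution outward away from $\vb{\gamma}$ until it hits the boundary of the simplex, is --- up to an explicit affine renormalisation --- intertwined with $T$.

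First I would record the convex decompositions that are built into the definition of the boundary distributions,
\[
\vb{p}=(1-w_p)\,\tilde{\vb{p}}+w_p\,\vb{\gamma}\,,\qquad \vb{q}=(1-w_q)\,\tilde{\vb{q}}+w_q\,\vb{\gamma}\,,
\]
and check that $\tilde{\vb{p}},\tilde{\vb{q}}$ are genuine probability vectors: the choice $w_p=\min_i p_i/\gamma_i$ makes every entry of $\tilde{\vb{p}}$ nonnegative, normalisation is automatic, and $0\le w_p\le 1$ with $w_p=1$ only when $\vb{p}=\vb{\gamma}$ --- a degenerate case in which $\vb{r}=T(\vb{\gamma},\vb{q})=\vb{\gamma}$ and the statement is vacuous --- so $1-w_p>0$, and likewise for $\vb{q}$.

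The key step is then to expand $\vb{r}=T(\vb{p},\vb{q})$ by bilinearity into the four terms $T(\tilde{\vb{p}},\tilde{\vb{q}})$, $T(\tilde{\vb{p}},\vb{\gamma})$, $T(\vb{\gamma},\tilde{\vb{q}})$, $T(\vb{\gamma},\vb{\gamma})$ with weights $(1-w_p)(1-w_q)$, $(1-w_p)w_q$, $w_p(1-w_q)$, $w_pw_q$. By bithermality the last three terms each collapse to $\vb{\gamma}$, so their combined contribution is $[1-(1-w_p)(1-w_q)]\,\vb{\gamma}$; comparing with the definition of $\overline{\vb{r}}$ gives exactly
\[
\overline{\vb{r}}=T(\tilde{\vb{p}},\tilde{\vb{q}})\,,
\]
which in particular shows $\overline{\vb{r}}$ is itself a probability distribution. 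Now I would simply invoke Theorem~\ref{first_majorizations}: since $T$ is bithermal, every pair of input distributions is thermomajorised by its image under $T$ in \emph{both} arguments. Applying this to $(\vb{p},\vb{q})$ recovers $\vb{p}\succ_{\beta}\vb{r}$ and $\vb{q}\succ_{\beta}\vb{r}$, while applying it to $(\tilde{\vb{p}},\tilde{\vb{q}})$ yields $\tilde{\vb{p}}\succ_{\beta}\overline{\vb{r}}$ and $\tilde{\vb{q}}\succ_{\beta}\overline{\vb{r}}$, which is the assertion.

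I do not expect a genuine obstacle here; the only care required is the bookkeeping that $\tilde{\vb{p}},\tilde{\vb{q}},\overline{\vb{r}}$ stay legitimate distributions and the trivial handling of the cases $\vb{p}=\vb{\gamma}$ or $\vb{q}=\vb{\gamma}$. It is perhaps worth remarking that the projection $\vb{p}\mapsto\tilde{\vb{p}}$ is idempotent --- $\tilde{\vb{p}}$ has a vanishing entry, so $\min_i\tilde p_i/\gamma_i=0$ and $\widetilde{\tilde{\vb{p}}}=\tilde{\vb{p}}$ --- which explains why this sharpening of thermomajorisation is a one-shot refinement and cannot be iterated to produce an infinite family of necessary conditions.
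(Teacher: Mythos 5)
Your proposal is correct and follows essentially the same route as the paper: decompose $\vb{p}$ and $\vb{q}$ via their boundary distributions, use bilinearity and the bithermality conditions to show $T(\tilde{\vb{p}},\tilde{\vb{q}})=\overline{\vb{r}}$, and then conclude the thermomajorisation relations from Theorem~\ref{first_majorizations}. Your extra bookkeeping (nonnegativity of $\tilde{\vb{p}}$, the degenerate case $\vb{p}=\vb{\gamma}$, idempotence of the boundary projection) is a welcome but inessential refinement of the same argument.
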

	
	\begin{proof}
		Assume that there exists some bithermal tensor such that $T(\vb{p},\vb{q}) = \vb{r}$. Then by the definition of boundary distributions:
		\begin{equation}
			\begin{aligned}
				&T(\vb{p},\vb{q}) = T\left((1 - w_p)\tilde{\vb{p}} + w_p {\gamma},(1 - w_q)\tilde{\vb{q}} + w_q {\gamma}\right) \\
				& ~~~~=  (1 - w_p)(1 - w_q)T(\tilde{\vb{p}},\tilde{\vb{q}}) + \left[1 - (1 - w_p)(1 - w_q) \right]{\gamma}
			\end{aligned}
		\end{equation}
		Therefore the image of $\tilde{\vb{p}},\tilde{\vb{q}}$ is exactly $\overline{\vb{r}}$, hence it must be a probability distribution and  $\tilde{\vb{p}} \succ_{\beta} \overline{\vb{r}}$, $\tilde{\vb{q}} \succ_{\beta} \overline{\vb{r}}$.
	\end{proof}
	
	Thus we effectively improve the requirements proceeding from thermomajorization relations, leveraging the fact, that the allowed operations are bithermal.

    % \section{Examples and properties of LTOCC protocols}
    \section{Nonlocal correlations under LTOCC}
    \label{sec:nonloc_corr}

    %\jcz{Do przejrzenia!}

    % In this section we will provide two examples of application of LTOCC and their extensions. In the first subsection we will consider the problem of approximating basic nonlocal logic gates -- CNOT and SWAP -- using LTOCC with memory restricted to each round of the protocol. Then, we will consider the case of parallel LTOCC and possibility of generating correlations between the subsystems.

    In the following two sections, we ask a natural questions concerning relation between LTOCC and nonlocality. First, we demonstrate that starting from initially uncorrelated energy-incoherent systems in product state one may achieve significant level of correlations between the systems, in particular demosntrating the power of LTOCC with memory. Next, we move to present two aspects of LTOCC in the CHSH scenario as an example of test of Bell nonlocality. We show that on one hand it is generically impossible to detect breaking of CHSH inequality in a single-copy scenario due to restrictions imposed by local thermal operations; on the other hand, when multiple copies are present, the upper bound achievable by LTOCC is below Tsirelson bound, and tends to it in the limit of infinite copies, leading to possiblity of detecting athermal resource usage in finite-copy setting.
    \medskip

    \subsection{Generating correlations from product input states}
    \label{sec:sltocc_sepstates}

	% Finally, we return to the general discussion of one-round and two-round LTOCC between two parties to demonstrate its ability to generate correlations. 
    First, we will focus on the ability of one-round and two-round LTOCC between two parties to demonstrate its ability to generate nonlocal correlations when acting on product energy-incoherent states.
	
	First and foremost we stress that neither LTOCC+M nor $\mathcal{S}$LTOCC are (semilocal) thermal operations, contrary to the protocols without memory. Both of these protocols can induce correlations even on Gibbs ensemble $\gamma = \gamma^{(A)} \otimes\gamma^{(B)}$, which can be most clearly seen in the case of extremal operations in zero or infinite temperatures.
    However, before discussing these extreme scenarios let us first consider LTOCC without memory in a wider range of temperatures.

    \begin{obs}
    Let $T_{jkl}$ be a thermal tensor, $\sum_{l}T_{jkl}\gamma_l = \gamma_j$, having an identity [or permutation matrix in infinite temperature case] as at least one layer $T_{jkl_0} = \delta_{j,k}$ [or $T_{jkl_0} = \delta_{j,\sigma(k)}$ for some permutation $\sigma$]. Then the pair of states $\vb{p}, \vb{q}$, second of which is a sharp state $q_l = \delta_{l,l_0}$ can be mapped into the maximally correlated state in one-round LTOCC.
    \end{obs}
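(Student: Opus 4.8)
The plan is to exhibit an explicit one-round LTOCC protocol that, on the input $\vb{p}\otimes\vb{q}$ with $\vb{q}=\be_{l_0}$ sharp, produces the maximally correlated state $\sum_i p_i\,\op{ii}$ (equivalently, the population vector $r_{ij}=p_i\delta_{ij}$). Recall from~\eqref{new_one_ounds2} that a one-round LTOCC in which Bob measures takes the form $M_{ij,kl}=T^{(A)}_{ikl}\Lambda^{(B)}_{jl}$, and the thermality constraint~\eqref{thermal_cond} requires $\forall_k\sum_l T^{(A)}_{ikl}\gamma^{(A)}_l=\gamma^{(A)}_i$ together with $\Lambda^{(B)}$ Gibbs-preserving. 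Here, however, it is cleaner to let \emph{Alice} measure first: she measures her subsystem (trivially, since we ultimately want her populations untouched), transmits the outcome, and Bob applies a conditional thermal operation. But the roles in the statement are set up so that $\vb{q}$ is Bob's sharp state; so instead I would have Bob measure his sharp state — the measurement outcome is deterministically $l_0$ — and send it to Alice, who does nothing, while the key move is that Bob, \emph{conditioned on nothing nontrivial}, must still turn his sharp state into a copy of Alice's distribution. That cannot be done by Bob alone. The correct reading is the mirror protocol~\eqref{new_one_ounds1}: Alice measures, obtains outcome $k$ with probability $p_k$, sends it to Bob, and Bob applies the thermal tensor layer $T^{(B)}_{jkl}$; we then pick the layer indexed by the transmitted value to be the identity-like layer.

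Concretely, I would proceed as follows. First, set Alice's post-processing $\Lambda^{(A)}=\id$, so that after her measurement her register holds $k$ with probability $p_k$ and her system is left in $\op{k}$. Second, use the hypothesis on $T^{(B)}_{jkl}$: by assumption there is a distinguished layer index $l_0$ with $T^{(B)}_{jkl_0}=\delta_{j,k}$ (or $\delta_{j,\sigma(k)}$ in the $\beta=0$ case); since Bob's input state is the sharp state $q_l=\delta_{l,l_0}$, only this layer ever acts. Third, trace through~\eqref{new_one_ounds1}: the output population vector is
\begin{equation}
r_{ij}=\sum_{kl}\Lambda^{(A)}_{ik}T^{(B)}_{jkl}\,p_k\,q_l=\sum_{k}\delta_{ik}\,\delta_{j,k}\,p_k=p_i\,\delta_{ij},
\end{equation}
which is exactly the maximally correlated state (and $p_i\delta_{i,\sigma(i)}$ up to a fixed local permutation in the infinite-temperature case, still maximally correlated). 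Fourth, verify admissibility: $T^{(B)}$ is a legitimate thermal tensor by hypothesis, $\Lambda^{(A)}=\id$ is trivially Gibbs-preserving, so the composite $M_{ij,kl}=\Lambda^{(A)}_{ik}T^{(B)}_{jkl}$ is a bona fide one-round LTOCC map.

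The only subtlety — and thus the ``main obstacle,'' though it is a mild one — is bookkeeping about which party measures and hence which of~\eqref{new_one_ounds1}/\eqref{new_one_ounds2} applies, together with the observation that the sharp input on Bob's side guarantees the irrelevant layers $T^{(B)}_{jkl}$ for $l\neq l_0$ never contribute, so no constraint beyond the existence of the single good layer is needed. One should also note that such a thermal tensor with an identity layer does exist whenever the identity is an admissible Gibbs-preserving operation, i.e. always; the infinite-temperature permutation variant is handled identically, absorbing $\sigma$ into Alice's (or Bob's) free relabelling, which leaves the correlation structure maximal. I would close by remarking that the same construction lifts verbatim to energy-coherent inputs on Alice's side, since the measurement pins her to the incoherent branch, but that is outside the stated scope.
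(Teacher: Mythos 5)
Your proof is correct and takes essentially the same route as the paper's: set $\Lambda^{(A)}_{ik}=\delta_{ik}$ (or $\delta_{i,\sigma(k)}$ at $\beta=0$) in the one-round form \eqref{new_one_ounds1}, let the sharp input $q_l=\delta_{l,l_0}$ select the identity layer, and compute $r_{ij}=\sum_{kl}\Lambda^{(A)}_{ik}T^{(B)}_{jkl}p_kq_l=p_i\delta_{ij}$; the paper only adds the explicit check $H(A|B)=H(B|A)=0$ to formalize ``maximally correlated.'' One caveat on your closing aside: a thermal tensor with an identity layer at $l_0$ does \emph{not} exist ``always'' at finite temperature, since $T_{jkl_0}=\delta_{jk}$ forces each conditional operation to map the sharp state $e_{l_0}$ to the sharp state $e_k$, which requires $e_{l_0}\succ_\beta e_k$ for every $k$ (i.e.\ $l_0$ a maximal-energy level, as in construction \eqref{example_n}); this does not affect your proof, because the existence of such a tensor is a hypothesis of the observation.
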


    \begin{proof}
    Let us take $\Lambda_{ik} = \delta_{i,k}$ (or $\Lambda_{ik} = \delta_{i,\sigma(k)}$ in infinite temperature) and consider one-round LTOCC $M_{ij,kl} =\Lambda_{ik}T_{jkl}$ acting on uncorelaterd distribution $\vb{p}\otimes\vb{q}$. Then the output distribution $\pi_{ij}$ is given by
    \begin{equation*}
    \begin{aligned}
    \pi_{ij} & = \sum_{kl} M_{ij,kl} p_k q_l = \sum_{kl} \Lambda_{ik}T_{jkl} p_k \delta_{l,l_0} = \sum_{k} \delta_{i,k} \delta_{j,k} p_k \\
    & = \delta_{i,j}p_i~, \
    \end{aligned}
    \end{equation*}
    and analogously for the extended case in the infinite temperature.
	Since output distribution is diagonal, by measuring one output, one knows about the result of the second output as well. Formally, while calculating conditional entropy between outputs one obtains
		\begin{equation}
        \begin{aligned}
		& H(A|B) = H(B|A) = -\sum_{ij} \pi_{ij} \ln\left(\frac{\pi_{ij}}{p_i}\right) \\
        &= -\sum_{ij} \delta_{i,j} p_{i} \ln\left(\frac{p_i}{p_i}\right) = -\sum_{i}p_i \ln(1) = 0.
        \end{aligned}
		\end{equation}
    \end{proof}

   Notice that the requirement for thermal tensor demanded above can be guaranteed not only in zero or infinite temperature but is satisfied, for example, by construction~\eqref{example_n}.
   Additional discussion and examples of correlating properties of LTOCC in finite temperature are presented in Appendix~\ref{App:Geo_evol}.

   The correlation-generating power of LTOCC increases dramatically if we allow retention of memory between rounds.
   Suppose we allow ourselves to use one round of parallel LTOCC, especially $\mathcal{S}$LTOCC, which is a subset of two rounds of LTOCC with memory. In that case the above observation can be extended to encompass non-sharp states, and even Gibbs state.

	\begin{obs}
		Consider a $\mathcal{S}LTOCC$ operation consisting of two copies of the same extremal (bi)thermal tensor  $M_{ij,kl} = T_{ikl}T_{jkl}$, which entries are either $0$ or $1$. Then for any product input state $\vb{p}\otimes \vb{q}$ the output state $\vb{\pi} = M(\vb{p}\otimes \vb{q})$ is maximally correlated.
	\end{obs}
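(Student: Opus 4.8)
The plan is to exploit the elementary fact that a stochastic tensor all of whose entries are $0$ or $1$ is automatically deterministic. First I would note that the hypercolumn normalisation $\sum_i T_{ikl} = 1$, together with $T_{ikl}\in\{0,1\}$, forces for each pair of input indices $(k,l)$ exactly one index $i$ with $T_{ikl}=1$; call it $f(k,l)$, so that $T_{ikl}=\delta_{i,f(k,l)}$ and $T$ acts on basis vectors as the deterministic map $(\be_k,\be_l)\mapsto\be_{f(k,l)}$.

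Next I would substitute this into the one-round parallel $\mathcal{S}$LTOCC transition matrix~\eqref{one_round} with $T^{(A)}=T^{(B)}=T$. Because the two parties use the \emph{same} tensor, the two Kronecker deltas carry the same index $f(k,l)$,
\begin{equation*}
    M_{ij,kl} = T_{ikl}\,T_{jkl} = \delta_{i,f(k,l)}\,\delta_{j,f(k,l)} = \delta_{ij}\,\delta_{i,f(k,l)},
\end{equation*}
so that, for an arbitrary product input,
\begin{equation*}
    \pi_{ij} = \sum_{kl} M_{ij,kl}\, p_k q_l = \delta_{ij}\!\!\sum_{(k,l):\,f(k,l)=i}\!\! p_k q_l \;=:\; \delta_{ij}\, r_i .
\end{equation*}
The vector $\vb{r}$ is a bona fide probability distribution ($r_i\geq 0$ and $\sum_i r_i=\sum_{kl}p_kq_l=1$), and $\vb{\pi}$ is supported entirely on the diagonal $i=j$, i.e.\ it is a classically perfectly correlated state.

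Finally I would conclude, exactly as in the preceding observation, that such a diagonal joint distribution is maximally correlated: both marginals equal $\vb{r}$, so repeating that computation gives $H(A|B)=H(B|A)=-\sum_i r_i\ln(r_i/r_i)=0$. I do not anticipate a genuine obstacle here; the only points requiring care are spelling out why a $0/1$ stochastic tensor must be functional, and making explicit that the hypothesis ``two copies of the \emph{same} tensor'' is essential — with two distinct extremal $0/1$ tensors $T^{(A)},T^{(B)}$ one would instead obtain $\pi_{ij}=\sum_{(k,l):\,f_A(k,l)=i,\ f_B(k,l)=j}p_kq_l$, which is generically not diagonal and hence not maximally correlated, so the symmetric structure of $\mathcal{S}$LTOCC is doing real work.
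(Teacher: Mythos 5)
Your proposal is correct and follows essentially the same route as the paper: both arguments use that the $0/1$ entries together with hypercolumn normalisation leave exactly one nonzero entry per column $(k,l)$, so $T_{ikl}T_{jkl}=\delta_{ij}T_{ikl}$ and the output $\pi_{ij}=\delta_{ij}r_i$ with $\vb{r}=T(\vb{p},\vb{q})$ is diagonal, hence maximally correlated. Your explicit functional notation $T_{ikl}=\delta_{i,f(k,l)}$ and the remark on why the two tensors must coincide are just a slightly more spelled-out version of the same argument.
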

	
	\begin{proof}
		Let us denote $T(\vb{p},\vb{q}) = \vb{r}$, then, since in each column ($\forall_{j,k}$) of $T_{ijk}$ there is only a single entry equal to $1$ and the rest of them are $0$, we get
		\begin{equation}
        \begin{aligned}
		\pi_{ij} & = \sum_{kl} M_{ij,lk}p_kq_l = \sum_{kl}T_{ikl}T_{jkl}p_k q_l = \sum_{kl}\delta_{i,j}T_{ijk}p_k q_l \\
        & = \delta_{ij} r_i~.
        \end{aligned}
		\end{equation}
		Thus the output distribution is once again diagonal.
	\end{proof}

    The above requirements for thermal tensors are satisfied by extremal bithermal tensors in the limit of zero temperature $\beta\to \infty$ and a subset of extremal tensors in infinite temperature $\beta \to 0$ known as permutation tensors~\cite{Birkhoff_tensors}. The above Observation points to the fact that LTOCC with long-term memory is a significantly more powerful framework than LTOCC without memory. 

    % \newpage

    \subsection{Thermally restricted Bell nonlocality}
    \label{sec:bell_LTOCC}
    
    Finally, let us consider a standard bipartite Bell scenario as the setting for quantum nonlocal correlations. Alice and Bob, each with a set of local measurement settings, are to determine whether a shared state $\ket{\Psi_{AB}}$ can be modelled by local hidden variable model or not. \rd{In order to keep the discussion specific, we} will focus on 
    % the two-setting two-outcome scenario, more specifically the inequality corresponding to 
    the CHSH scenario, given in terms of correlators as \cite{CHSH1969, Cirelson1980}
    \begin{widetext}
        \begin{equation} \label{eq:CHSH_Tsirelson}
            \abs{E(a_0,b_0) + E(a_0,b_1) - E(a_1,b_0) + E(a_1,b_1)} \equiv S \leq \begin{cases}
                2 & \text{Classical (CHSH),} \\
                2\sqrt{2} & \text{Quantum (Tsirelson).}
            \end{cases}
        \end{equation}
    \end{widetext}
    where $E(a_i, b_j)$ are expectation values corresponding to measuremnt settings $a_i, b_j$ for Alice and Bob, respectively. The measurement outcomes are further restricted to $\pm 1$.
    
    Both classical and quantum bounds hold for arbitrary dimension $d$, but it is instructive to note that for a two-qubit canonical EPR state $\ket{\Psi_{+2}} = (\ket{00}+\ket{11})/\sqrt{2}$ the Tsirelson bound is saturated by taking setting $E(a_i,b_j) = \ev{A_iB_j}{\Psi_+}$ with $A_0 = Z, A_1 = X, B_0 = (Z+X)/\sqrt{2}, B_1 = (Z-X)/\sqrt{2}$. Notice that all but $A_0$ observable require incoherent measurements to be implemented. This leads to the following general observation \rd{applicable to any Bell inequality}.
    \begin{obs}
        Consider a pair of systems of local dimension $d$ with non-degenerate local Hamiltonians, capable of implementing LTOCC. In this setting it is impossible to break classical bound of inequality \eqref{eq:CHSH_Tsirelson}.
    \end{obs}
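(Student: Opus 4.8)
The plan is to reduce the CHSH expression attainable under LTOCC to a quantity that is manifestly bounded by $2$, exploiting the fact that LTOCC acts on energy-incoherent states as a stochastic map preserving the product Gibbs state, hence in particular cannot distribute any entanglement and reduces to a local hidden variable (LHV) model. The key structural input is Theorem~\ref{thm:subset_of_SLTO} together with Corollary~\ref{corr:SLTO_major}: any state prepared between Alice and Bob by an $n$-round LTOCC (even with shared randomness) from an initially product, energy-incoherent input is reachable by a Gibbs-preserving stochastic matrix $M$ acting on $\vb{p}\otimes\vb{q}$, and in particular the joint state the parties end up sharing is \emph{classically correlated} -- diagonal in the local energy eigenbases. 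The first step is therefore to argue that whatever global state $\rho_{AB}$ Alice and Bob can produce under LTOCC starting from product resources is of the form $\rho_{AB} = \sum_{kl} r_{kl}\,\op{E_k^{(A)}}\otimes\op{E_l^{(B)}}$, i.e.\ it is a separable state whose correlations are encoded entirely in a classical probability distribution $r_{kl}$.

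Next I would invoke the standard fact that any such classically correlated state admits an LHV model: writing $\lambda = (k,l)$ distributed according to $r_{kl}$, every local measurement outcome distribution factorizes conditionally on $\lambda$, so the correlators decompose as $E(a_i,b_j) = \sum_\lambda r_\lambda\, \bar a_i(\lambda)\,\bar b_j(\lambda)$ with $\bar a_i(\lambda),\bar b_j(\lambda)\in[-1,1]$. Feeding this decomposition into the CHSH combination $S$ and using the elementary inequality $|\bar a_0(\bar b_0+\bar b_1) - \bar a_1(\bar b_0 - \bar b_1)| \le |\bar b_0+\bar b_1| + |\bar b_0-\bar b_1| \le 2$ pointwise in $\lambda$, followed by convexity (the average over $\lambda$ of quantities bounded by $2$ is bounded by $2$), yields $S\le 2$. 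This is the classical CHSH bound, so no violation is possible.

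The one subtlety -- and the step I expect to be the main obstacle -- is to handle the measurements themselves carefully, since the observables $A_1, B_0, B_1$ appearing in the Tsirelson-saturating strategy are \emph{not} diagonal in the local Hamiltonian eigenbasis, so a priori one might worry that implementing these measurements is itself an athermal resource that sidesteps the LTOCC restriction. The resolution is to note that in the LTOCC paradigm the only thermodynamically constrained object is the \emph{state preparation / transformation}; the measurements used to evaluate the Bell inequality are arbitrary (they are part of the test, not part of the free operations), but they are applied to a state that is already of the classically-correlated form above. Since an LHV model for a state is a property of the state alone -- any state diagonal in a product basis is local regardless of which measurements are performed on it -- the argument goes through. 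I would phrase this explicitly: LTOCC cannot take a product energy-incoherent input to anything outside the set of states with a local hidden variable description, and the CHSH value of any such state under any local measurement choice is at most $2$. Hence the observation follows, with the caveat (worth stating) that it presumes the shared resource was itself produced under LTOCC from incoherent product states rather than supplied externally as a pre-shared entangled state.
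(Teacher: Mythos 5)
Your argument proves a different statement from the one the Observation makes, and the ``subtlety'' you flag is resolved in exactly the wrong direction. In the paper's setting the shared state is \emph{not} assumed to be produced by LTOCC from incoherent product inputs -- it is an externally supplied, possibly maximally entangled state $\ket{\Psi_{AB}}$ (indeed, the surrounding discussion and Theorem~\ref{thm:LTO_bound_CHSH} explicitly take $n$ copies of $\ket{\Psi_{+d}}$ as the shared resource). What \emph{is} constrained by LTOCC are the measurements: with non-degenerate local Hamiltonians the only free measurements are projective measurements in the energy eigenbasis (and local thermal pre-processing cannot rotate them into anything else), so all of Alice's (and all of Bob's) available observables commute with each other. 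The paper's proof is then one line: compatible local measurements can never exhibit Bell nonlocality (measurement incompatibility is necessary for a Bell violation, citing Wolf et al.), hence no breaking of the classical bound of \eqref{eq:CHSH_Tsirelson}, \emph{whatever} state is shared. Your resolution -- ``the measurements used to evaluate the Bell inequality are arbitrary; they are part of the test, not part of the free operations'' -- contradicts this framework; if the measurements were unrestricted, the Observation would simply be false (a shared EPR pair with the standard settings saturates $2\sqrt{2}$), and the multi-copy bound $2(D_{\text{ndeg}}+\sqrt{2}D_{\text{deg}})/d^n$ of Theorem~\ref{thm:LTO_bound_CHSH} would be meaningless.

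Concretely, the gap is this: your LHV argument shows only that states which LTOCC can prepare from energy-incoherent product inputs are classically correlated and hence local under \emph{any} measurements. That is true but is not the claim at hand, and your own closing caveat (``it presumes the shared resource was itself produced under LTOCC \ldots{} rather than supplied externally as a pre-shared entangled state'') concedes the case the Observation is actually about. To repair the proof you must argue about the measurement side: show that in the single-copy, non-degenerate-Hamiltonian setting every POVM implementable within LTOCC is diagonal in the local energy eigenbasis, conclude that Alice's two settings (and Bob's two settings) are jointly measurable, and invoke the equivalence between measurement incompatibility and the possibility of Bell-inequality violation. No assumption on the shared state is then needed, which is precisely what makes the Observation nontrivial and sets up the contrast with the degenerate multi-copy case.
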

    \begin{proof}
        The set of free measurements of LTOCC consists of incoherent measurements, which are compatible. At the same time Bell nonlocality can be detected if and only if the measurements are incompatible \cite{Wolf2009}, leading to contradiction.
    \end{proof}
    Thus, in the single-copy energy-nondegnerate scenario it is impossible to detect breaking of 
    % the CHSH inequality when restricted to LTOCC.
    \rd{any Bell nonlocality when restricted to LTOCC.}

    Let us however consider, again for illustratory purpose, a scenario where Alice and Bob share a two-copy state $\ket{\Psi_{+2}}^{\otimes 2} = \ket{\Psi_{+4}}$. Since both subsystems are identical, it is natural to take $H^{\oplus 2} \equiv H_{AB}\otimes\mathbb{I} + \mathbb{I}\otimes H_{AB}$, with tensor product between copies. Let us set 
    \begin{equation*}
    A'_i = \mathbb{I}_1\oplus A_i \oplus \mathbb{I}_1  = \mqty(1 & 0 &0\\ 0 & A_i & 0 \\  0 & 0 & 1) 
    \end{equation*}
     as observables implementable by two-outcome measurement local to Alice, and similarly $B'_j$ for Bob. Direct calculation shows that
    \begin{equation}
        \abs{\ev{
        A'_0B'_0 +
        A'_0B'_1 -
        A'_1B'_0 +
        A'_1B'_1}{\Psi_{+4}}} = 1+\sqrt{2}.
    \end{equation}
    These observables can be realised by incoherent measurements due to the energy-degenerate subspace $\operatorname{span}(\{\ket{01},\ket{10}\})$. Furthermore, post-selection with respect to this subspace saturates Tsirelson bound. It follows that
    \begin{equation}
        \abs{\ev{
        A'_0B'_0 +
        A'_0B'_1 -
        A'_1B'_0 +
        A'_1B'_1}{\Psi_{+4}}} \leq 1+\sqrt{2}.
    \end{equation}
    for any $A'_i$ and $B'_j$ realisable with incoherent measurement and local thermal operations. This leads to a general theorem.
    \begin{thm} \label{thm:LTO_bound_CHSH}
        Consider Alice and Bob sharing an $n$-copies of a $d$-dimensional maximally entangled state state $\ket{\Psi_{+d}}^{\otimes n} = \ket{\Psi_{+d^n}}\equiv\ket{\Psi}$, with non-degenerate Hamiltonian $H_{AB}$ for each copy, such that both parties have the same energy degenerated subspaces for $H_A^{\oplus n}$ and $H_B^{\oplus n}$. Then the expectation of CHSH scenario is bounded as
        \begin{equation}\label{eq:LTO_bound_CHSH}
            \abs{\ev{
        A_0B_0 + A_0B_1 - A_1B_0 + A_1B_1}{\Psi}}   \leq 2\frac{D_{\text{ndeg}} + \sqrt{2}D_{\text{deg}}}{d^n}
        \end{equation}
        with $D_{deg}$ being the sum of dimensions of energy-degenerated subspaces and $D_{ndeg}$ sum of dimensions of non-degenerated subspaces.
        In particular for Hamiltonian with equal energy spacing $D_{\text{ndeg}} = 2,\; D_{\text{deg}} = d^n - 2$ and for Hamiltonian with rationally-independend energy spacing $D_{\text{ndeg}} = d,\; D_{\text{deg}} = d^n - d$.
        
    \end{thm}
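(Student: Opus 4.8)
The plan is to reduce the $n$-copy CHSH value to a convex combination of the value achievable within each joint energy eigenspace of $H_A^{\oplus n}\otimes\mathbb I + \mathbb I\otimes H_B^{\oplus n}$, weighted by the overlap of the maximally entangled state with that eigenspace, and then bound each block-contribution by $2$ or $2\sqrt2$ according to whether the block is energy-degenerate or not. First I would exploit the structure of the free operations: within LTOCC Alice and Bob may only perform incoherent measurements and local thermal operations, so the only observables $A_i,B_j$ they can implement are block-diagonal with respect to the local Hamiltonian eigenbasis, i.e. $A_i = \bigoplus_E A_i^{(E)}$ where the sum runs over Alice's local energy levels $E$ and $A_i^{(E)}$ acts on the (possibly degenerate) eigenspace $\mathcal H_A^{(E)}$; within a one-dimensional block $A_i^{(E)} = \pm 1$ is forced, while within a degenerate block it can be an arbitrary $\pm1$-valued observable. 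The same holds for Bob. Because $\ket{\Psi_{+d^n}}$ is maximally entangled, it can be written as $\ket{\Psi} = \frac{1}{\sqrt{d^n}}\sum_k \ket{k}_A\ket{k}_B$ in the shared Hamiltonian eigenbasis, and since by hypothesis Alice's and Bob's degenerate subspaces coincide, the state decomposes as a direct sum $\ket{\Psi} = \sum_E \sqrt{\frac{\dim\mathcal H^{(E)}}{d^n}}\,\ket{\Psi_+^{(E)}}$, where $\ket{\Psi_+^{(E)}}$ is the maximally entangled state supported on $\mathcal H_A^{(E)}\otimes\mathcal H_B^{(E)}$ and the sum is over the common energy levels $E$.

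Next I would insert the block decomposition of the observables and the state into the CHSH correlator. Since $A_i,B_j$ and all four product terms are block-diagonal and $\ket{\Psi}$ has no cross-block coherence, the expectation splits cleanly:
\begin{equation}
\ev{A_iB_j}{\Psi} = \sum_E \frac{\dim\mathcal H^{(E)}}{d^n}\,\ev{A_i^{(E)}B_j^{(E)}}{\Psi_+^{(E)}}.
\end{equation}
Hence the full CHSH expression is
\begin{equation}
\begin{aligned}
&\Big|\ev{A_0B_0+A_0B_1-A_1B_0+A_1B_1}{\Psi}\Big|\\
&\quad\le \sum_E \frac{\dim\mathcal H^{(E)}}{d^n}\, S^{(E)},
\end{aligned}
\end{equation}
where $S^{(E)}$ is the CHSH value obtained inside the block $\mathcal H^{(E)}$ using the local observables $A_i^{(E)},B_j^{(E)}$. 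Now I invoke the two regimes separately. If $\dim\mathcal H^{(E)}=1$ (non-degenerate block), then $A_i^{(E)},B_j^{(E)}\in\{\pm1\}$ are scalars, so $S^{(E)}$ is a sum of four numbers in $\{\pm1\}$ with a fixed sign pattern, which is at most $2$ in absolute value — this is exactly the algebraic content of the classical CHSH bound $S\le 2$ in \eqref{eq:CHSH_Tsirelson}. If $\dim\mathcal H^{(E)}>1$ (degenerate block), then $A_i^{(E)},B_j^{(E)}$ are genuine quantum $\pm1$-observables on that subspace and $\ket{\Psi_+^{(E)}}$ is a maximally entangled state there, so Tsirelson's bound gives $S^{(E)}\le 2\sqrt2$, and this is attainable by choosing the canonical rotated observables inside the block (as illustrated for the two-copy equal-spacing case in the excerpt).

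Combining, $\Big|\ev{\cdots}{\Psi}\Big| \le \frac{1}{d^n}\big(2\sum_{E:\,\dim=1}1 + 2\sqrt2\sum_{E:\,\dim>1}\dim\mathcal H^{(E)}\big) = 2\frac{D_{\mathrm{ndeg}}+\sqrt2\,D_{\mathrm{deg}}}{d^n}$, which is \eqref{eq:LTO_bound_CHSH}; the two stated specializations follow by counting degeneracies of $H_{AB}^{\oplus n}$, noting that for equal energy spacing only the extreme total-energy levels $0$ and $n(E_{\max})$ are non-degenerate (giving $D_{\mathrm{ndeg}}=2$), whereas for rationally independent spacings the non-degenerate sector has dimension $d$ (each single copy contributing its full dimension in a product of distinct energies). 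The main obstacle I anticipate is making the block-decomposition step fully rigorous: one must argue carefully that LTOCC-realizable observables are \emph{exactly} the block-diagonal $\pm1$-observables (not merely contained in them), that the post-selection onto a degenerate subspace used to saturate Tsirelson's bound is itself an incoherent, LTOCC-compatible operation, and that the optimization over all such $A_i,B_j$ genuinely decouples across blocks — the last point relying on the assumed coincidence of Alice's and Bob's degeneracy structure, without which $\ket{\Psi}$ would not decompose as an orthogonal sum of block-wise maximally entangled states. Everything else is the routine convexity estimate above together with the two classical bounds already quoted in \eqref{eq:CHSH_Tsirelson}.
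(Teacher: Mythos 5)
Your proposal is correct and follows essentially the same route as the paper: decompose the $n$-copy local space into non-degenerate and degenerate energy subspaces, note that LTOCC-realisable observables are block-diagonal so the maximally entangled state splits into block-wise maximally entangled pieces, bound each one-dimensional block by the classical value $2$ and each degenerate block by Tsirelson's $2\sqrt2$, and saturate with direct-sum observables as in Eq.~\eqref{eq:direct_sum_obs}. Your explicit convexity estimate $\abs{\ev{\cdots}{\Psi}}\leq\sum_E \frac{\dim\mathcal H^{(E)}}{d^n}S^{(E)}$ in fact spells out the upper-bound direction that the paper compresses into its brief postselection remark, and your flagged caveat (that realisable observables are exactly the block-diagonal $\pm1$ ones, with memory and communication handled separately) is precisely what the paper defers to Theorem~\ref{thm:LTOCC_ent_dec1} and Appendix~\ref{App:coch_evol}.
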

    \begin{proof}
        The $n$-copy non-degenerate Hamiltonian provides splitting of the local state space $\mathcal{H}_{d}^{\otimes n} = \mathcal{H}_{\text{ndeg}} \oplus \mathcal{H}_{\text{deg}}$ into a non-degenerate part $\mathcal{H}_{\text{ndeg}} \equiv \text{span}(\qty{\ket{\vb{b}_i}}_{l = 1}^{D_{\text{ndeg}}})$, where $\vb{b}_i$ are strings describing appropriate Hamiltonian eigenvectors, and degenerate spaces $\mathcal{H}_{\text{deg}} \equiv \bigoplus_k \mathcal{H}_{d_k}$ of dimension $d_k$ each.

        \begin{figure*}
            \centering
            \includegraphics[width=.9\linewidth]{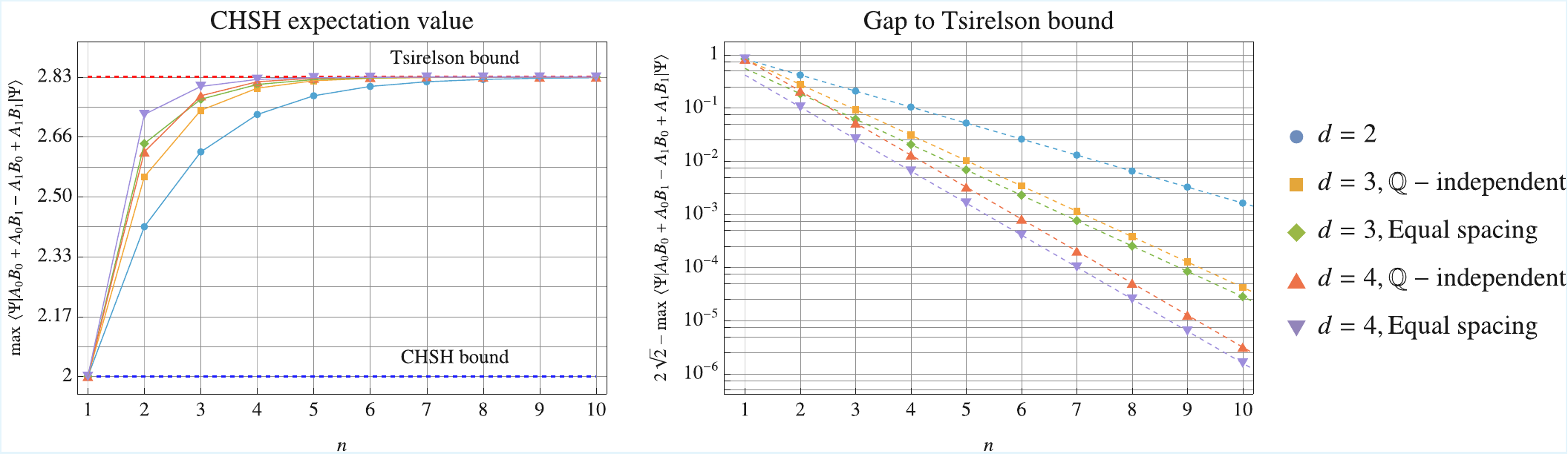}
            \caption{\bl{Comparison of upper bounds for expectation values in CHSH scenario \eqref{eq:thermal_CHSH_bounds} based on local hidden variable models (CHSH bound), local operations with classical communication (Tsirelson bounds) and LTOCC with $n$ copies of a joint quantum state $\ket{\Psi}\in\mathcal{H}_d^{\otimes 2}$. Local dimensions $d = 2,3,4$ are considered both with equal and rationally-independent energy spacing. Overall convergence rate of $\mathcal{O}(d^{-n})$ does not depend on the local energy level structure, with a multiplicative improvement for the equally spaced energy levels.}}
            \label{fig:Bell_plot}
        \end{figure*}
        
        The presented value is attainable by setting $E(a_i,b_j) = \ev{A_i B_j}{\Psi}$ with observables given by
        \begin{align} \label{eq:direct_sum_obs}
            A_i = \Pi_{\text{ndeg}}\oplus\qty(\bigoplus_k A_{i,k})
        \end{align}
        where $\Pi_{\text{ndeg}}$ is the projector onto the non-degenerate subsapce $\mathcal{H}_{\text{ndeg}}$ with $A_{i,k}$ saturating Tsirelson bound \eqref{eq:CHSH_Tsirelson} for $\ket{\Psi_{+d_k}}$ state; the operators $B_j$ are constructed by analogy. $A_i$ and $B_j$ can be realised with local thermal operations due to degeneracy of the $n$-copy Hamiltonian in each of the subspaces $\mathcal{H}_{d_k}$. The optimality of the bound is shown by considering postselection with respect to the degenerate subspaces $\mathcal{H}_{\text{deg}}$, where the same observables saturate quantum bound \eqref{eq:CHSH_Tsirelson}.

        Finally we show the values of $D_{\text{ndeg}}$ and $D_{\text{deg}}$. When the Hamiltonian has equal spacing, then the only nondegenerated subpsaces are $\text{span}(|1\ra^{\otimes n})$ and  $\text{span}(|d\ra^{\otimes n})$. For any other state one can find at lest one more, with the same energy by lovering the level in one system, and raising the level in other one.
        On the other hand, when the energy spacing is rationally independend, then the degenerated subspaces are
        $\qty{\operatorname{span}(\ket{i\hdots i})}_{i=1}^d$ and for any other state one can find state with the same energy by permuting systems.
    \end{proof}

    \begin{cor}
        The same bound \eqref{eq:LTO_bound_CHSH} holds for arbitrary $n$-copy entangled state $\rho^{\otimes n}$.
    \end{cor}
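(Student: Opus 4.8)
The plan is to rerun the energy-block decomposition already used in the proof of Theorem~\ref{thm:LTO_bound_CHSH}, and then control the two ways in which a general tensor power $\rho^{\otimes n}$ differs from the maximally entangled state $\ket{\Psi}=\ket{\Psi_{+d}}^{\otimes n}$: how its weight is spread over the energy blocks, and how much entanglement survives inside each block. First recall that every LTOCC-realisable two-outcome $\pm1$ observable is block-diagonal with respect to the decomposition $\mathcal{H}_d^{\otimes n}=\mathcal{H}_{\text{ndeg}}\oplus\bigoplus_k\mathcal{H}_{d_k}$, which is the same for Alice and Bob, acting as a sign on each one-dimensional block and as an arbitrary $\pm1$-eigenvalue Hermitian on each degenerate block. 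Hence the CHSH operator $\mathcal{B}=A_0B_0+A_0B_1-A_1B_0+A_1B_1$ splits as $\mathcal{B}=\bigoplus_{\alpha,\beta}\mathcal{B}_{\alpha\beta}$ with $\|\mathcal{B}_{\alpha\beta}\|\le 2$ whenever $\alpha$ or $\beta$ is non-degenerate -- there $\mathcal{B}_{\alpha\beta}=\pm 2A_j^{(\alpha)}$ or $\pm 2B_j^{(\beta)}$ -- and $\|\mathcal{B}_{\alpha\beta}\|\le 2\sqrt2$ (Tsirelson) otherwise. Writing $\sigma=\rho^{\otimes n}$, $\sigma_{\alpha\beta}=(\Pi^{(A)}_\alpha\otimes\Pi^{(B)}_\beta)\sigma(\Pi^{(A)}_\alpha\otimes\Pi^{(B)}_\beta)$ and $w_{\alpha\beta}=\Tr\sigma_{\alpha\beta}$, one gets $\la\mathcal{B}\ra_\sigma=\sum_{\alpha,\beta}\Tr[\mathcal{B}_{\alpha\beta}\sigma_{\alpha\beta}]$, where in addition each term is $\le 2w_{\alpha\beta}$ whenever the normalised block $\sigma_{\alpha\beta}/w_{\alpha\beta}$ is separable, by the local-hidden-variable bound.

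Next I would exploit the tensor-power structure. The reduced state of $\rho^{\otimes n}$ on Alice is $\rho_{A'}^{\otimes n}$, so the total weight in the degenerate sector satisfies $\sum_{\alpha,\beta\ \text{deg}} w_{\alpha\beta}=\Tr[(\Pi^{(A)}_{\text{deg}}\otimes\Pi^{(B)}_{\text{deg}})\sigma]\le \Tr[\Pi^{(A)}_{\text{deg}}\,\rho_{A'}^{\otimes n}]$, and likewise with $B$. For $\ket{\Psi}$ this weight equals exactly $D_{\text{deg}}/d^n$, it is confined to the diagonal blocks $\alpha=\beta$, and each such block carries a maximally entangled state, so summing the per-block bounds reproduces precisely $2\frac{D_{\text{ndeg}}+\sqrt2 D_{\text{deg}}}{d^n}$. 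Any deviation from $\ket{\Psi}$ degrades this sum in one of two complementary ways: either it moves weight onto blocks with a one-dimensional factor or onto cross blocks $\alpha\ne\beta$, where the coefficient drops from $2\sqrt2$ to $2$; or, if it raises $\Tr[\Pi^{(A)}_{\text{deg}}\,\rho_{A'}^{\otimes n}]$ above $D_{\text{deg}}/d^n$, it can only do so by pushing $\rho_{A'}$ away from maximal mixing and hence $\rho$ towards a product state, which forces the blocks $\sigma_{\alpha\beta}/w_{\alpha\beta}$ close to separable and activates the stronger bound $\le 2w_{\alpha\beta}$ there. In both cases the total remains below the value achieved by $\ket{\Psi}$; note $D_{\text{ndeg}}$ and $D_{\text{deg}}$ are the spectral quantities of Theorem~\ref{thm:LTO_bound_CHSH} and are independent of $\rho$, so the stated bound \eqref{eq:LTO_bound_CHSH} is exactly the right one. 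The extension to mixed $\rho$ costs nothing, since nothing above used purity.

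The step I expect to be the main obstacle is making the trade-off in the previous paragraph quantitative and uniform over all admissible Hamiltonians (equal spacing, rationally independent spacing, and the general case): one must prove that the loss of realisable CHSH inside the degenerate sector always at least compensates for the extra block weight that a tensor power can place there. A cleaner route that avoids balancing the two effects by hand is to first observe that $\max_{\text{LTOCC obs}}\la\mathcal{B}\ra_\sigma$ depends only on the block-diagonal part of $\sigma$ -- since $\mathcal{B}$ is block-diagonal -- and then to show that, among all tensor powers, the block-diagonal part of $\ket{\Psi}\bra{\Psi}$ is extremal for this functional; establishing that extremality, e.g.\ by a convexity/symmetrisation argument restricted to the degenerate sector together with the Tsirelson operator bound $\mathcal{B}_{\alpha\beta}^2=4\Pi_{\alpha\beta}+[A_0^{(\alpha)},A_1^{(\alpha)}][B_0^{(\beta)},B_1^{(\beta)}]$, is where the real work lies.
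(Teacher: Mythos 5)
Your first paragraph reproduces exactly the structure the paper relies on: LTOCC-realisable observables are block-diagonal in the local energy decomposition, the CHSH operator splits into blocks bounded by $2$ whenever a non-degenerate factor is involved and by $2\sqrt{2}$ on degenerate$\times$degenerate blocks, and the theorem's value is recovered for $\ket{\Psi_{+d^n}}$ because that state is maximally entangled inside each degenerate block with weights $d_k/d^n$. The paper's proof of the corollary is essentially just the qualitative remark that an arbitrary $\rho^{\otimes n}$ fails to be (proportional to) maximally entangled when restricted to the degenerate blocks, so with observables of the form \eqref{eq:direct_sum_obs} the attainable value is strictly lower than \eqref{eq:LTO_bound_CHSH}; your skeleton therefore matches the intended route.

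The problem is the step you yourself flag as the obstacle, because as written it contains a false implication rather than a missing epsilon. The chain ``more weight on the degenerate sector $\Rightarrow$ $\rho_{A'}$ away from maximal mixing $\Rightarrow$ $\rho$ towards a product state $\Rightarrow$ blocks nearly separable'' does not hold: a marginal far from maximally mixed is perfectly compatible with strong entanglement concentrated on a subspace. Concretely, for $d=3$, equal spacing and $n=2$, take $\ket{\phi}=\frac{1}{2}\ket{00}+\frac{1}{\sqrt{2}}\ket{11}+\frac{1}{2}\ket{22}$ per copy. Then $\Tr[\Pi^{(A)}_{\text{deg}}\,\rho_{A'}^{\otimes 2}]=\frac{7}{8}>\frac{7}{9}=D_{\text{deg}}/d^n$, yet $\ket{\phi}^{\otimes 2}$ is \emph{maximally} entangled within the two two-dimensional degenerate blocks (total weight $\frac{1}{2}$), so those blocks are as far from separable as possible, and the naive per-block sum $2\sqrt{2}\cdot\frac{7}{8}+2\cdot\frac{1}{8}\approx 2.72$ even exceeds the claimed bound $\approx 2.64$. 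Hence the bound can only be rescued by controlling, block by block, how much CHSH value the actual (non-maximally-entangled) conditional states admit — which is precisely the quantitative trade-off you defer, both in the heuristic version and in the proposed extremality/symmetrisation alternative. Since neither is carried out, the proposal identifies the right decomposition but does not constitute a proof of the corollary.
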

    \begin{proof}
        The proof of Theorem \ref{thm:LTO_bound_CHSH} rests on the fact that $\ket{\Psi_{+d^n}}$ is proportional to the maximally entangled state when restricted to any of the degenerate subspaces $\mathcal{H}_{\text{deg}}$. The same is not the case for arbitrary entangled states, and thus achievable value with operators of the form \eqref{eq:direct_sum_obs}, enforced by the structure of LTO, is strictly lower than \eqref{eq:LTO_bound_CHSH}.
    \end{proof}

    % \rd{
    % It remains to be shown that there exists a nontrivial bound $X$ such that
    % \begin{widetext}
    % \begin{equation}
    %     \abs{S} \leq \underbrace{2\frac{D_{\text{ndeg}} + \sqrt{2}D_{\text{deg}}}{d^n}}_{\text{LTO}} \leq \underbrace{X}_{\text{LTOCC}_1?} \leq \underbrace{2\sqrt{2}.}_{\text{Tsirelson}}
    % \end{equation}
    % If we can show $X$ even for $n = 2$, then it is an argument for future search for such intermediate values for $n>2$
    % \end{widetext}}

    % Next we present that in most of the cases one cannot enhance this results using LTOCC without memory,
    Interestingly, the statement extends to local thermal operations aided by both communication and memory.

    % \begin{thm}\label{thm:LTOCC_ent_dec1}
    % Consider Bob and Alice sharing an $n$-copies of bipartite $d$-dimensional entangled state state $\ket{\Psi}^{\otimes n} = \ket{\Psi_{d^n}}\equiv\ket{\Psi}$, with non-degenerate Hamiltonian $H_{AB}$ for each copy,such that both parties have the same energy-degenerate subspaces $H_A^{n}$ and $H_B^n$ and operations restricted to LTOCC+M (based on energy-incoherent projective measurement) \jcz{no to z pamięcią czy bez? chodziło o R?}.
    % Then the expectation of CHSH scenario is bounded as
    %     \begin{equation}\label{eq:LTO_bound_CHSH}
    %         \abs{\ev{
    %     A_0B_0 + A_0B_1 - A_1B_0 + A_1B_1}{\Psi}}   \leq 2\frac{D_{\text{ndeg}} + \sqrt{2}D_{\text{deg}}}{d^n}.
    %     \end{equation}
    % as well.
    % \end{thm}

    \begin{restatable}{thm}{CHSHLTOCCbound}[Extension of Theorem \ref{thm:LTO_bound_CHSH}]\label{thm:LTOCC_ent_dec1}
        Bound \eqref{eq:LTO_bound_CHSH} on the CHSH operator holds also for local thermal operations extended by communication and memory [LTOCC(+M)]
    \end{restatable}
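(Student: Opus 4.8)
The plan is to show that adding classical communication and memory cannot raise the attainable CHSH value beyond the bound \eqref{eq:LTO_bound_CHSH}, by arguing that any LTOCC(+M) protocol, when applied before the final two-outcome measurements, still yields effective measurement operators that are block-diagonal with respect to the local energy-degeneracy decomposition $\mathcal{H}_d^{\otimes n} = \mathcal{H}_{\mathrm{ndeg}} \oplus \mathcal{H}_{\mathrm{deg}}$ — and the Tsirelson-type bound for such block-diagonal observables is exactly the right-hand side of \eqref{eq:LTO_bound_CHSH}. Concretely, an LTOCC(+M) protocol is, on the level of the full quantum description, a composition of local thermal operations (which are Gibbs-preserving and, crucially, act within each energy sector since they commute with the local Hamiltonian on the incoherent degrees of freedom), incoherent energy measurements on one side, and operations on the other side conditioned on the classical outcomes. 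The key structural fact is that the whole protocol, composed with the final $\pm 1$ measurements $A_i, B_j$, defines an effective local POVM on each side whose operators respect the degeneracy structure: no free operation available to Alice or Bob can generate coherence between distinct local energy levels, so all the ``useful'' non-commuting observables must live inside the degenerate blocks $\mathcal{H}_{d_k}$, exactly as in the proof of Theorem \ref{thm:LTO_bound_CHSH}.

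The steps, in order, are as follows. First, recall from Theorem \ref{thm:subset_of_SLTO} that LTOCC$_n$(+R) without memory is contained in SLTO, and that SLTO acting on the incoherent populations is Gibbs-preserving stochastic (Corollary \ref{cor:SLTO_sim}); for the memory case one instead uses the explicit channel description from Section \ref{sec:framework}, noting the memory registers are classical and traced out. Second, observe that on the CHSH side what matters is not the state transformation itself but the effective measurement: absorb the entire LTOCC(+M) protocol into the Heisenberg-picture definition of Alice's and Bob's observables, so that $A_i \mapsto \tilde A_i := \mathcal{E}_A^\dagger(A_i)$ and $B_j \mapsto \tilde B_j := \mathcal{E}_B^\dagger(B_j)$ for the relevant (conditional, local, trace-preserving) maps. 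Third — the heart of the argument — show that $\tilde A_i$ is block-diagonal with respect to $\mathcal{H}_{\mathrm{ndeg}} \oplus \bigoplus_k \mathcal{H}_{d_k}$: this follows because the free operations (incoherent measurements, local thermal operations, classically-conditioned local thermal operations) have Kraus operators that are block-diagonal in the energy eigenbasis in this coarse-grained sense, hence the dual maps preserve the block-diagonal subalgebra, and $\mathbb{I}$, $\Pi_{\mathrm{ndeg}}$, and observables supported inside a single $\mathcal{H}_{d_k}$ are all block-diagonal. Fourth, once $\tilde A_i, \tilde B_j$ are block-diagonal with $\|\tilde A_i\|, \|\tilde B_j\|\le 1$ and each $\mathcal{H}_{d_k}$ restriction is at most a qubit-CHSH situation (or the non-degenerate sector, where commuting observables give value $\le 2$), decompose the CHSH correlator sector by sector and apply the standard Tsirelson bound $2\sqrt 2$ inside each degenerate block and the classical bound $2$ on the non-degenerate block; weighting by the overlaps $\langle \Psi | \Pi_{\mathrm{deg}}^{(k)} | \Psi\rangle = d_k/d^n$ and $\langle\Psi|\Pi_{\mathrm{ndeg}}|\Psi\rangle = D_{\mathrm{ndeg}}/d^n$ (using that $\ket{\Psi_{+d^n}}$ restricted to each sector is proportional to a maximally entangled state) reproduces exactly $2\,(D_{\mathrm{ndeg}} + \sqrt 2\, D_{\mathrm{deg}})/d^n$. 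Attainability is inherited verbatim from Theorem \ref{thm:LTO_bound_CHSH}, since LTO $\subseteq$ LTOCC(+M).

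The main obstacle I anticipate is rigorously establishing the third step — that the dual of an arbitrary LTOCC(+M) protocol maps observables into the energy-block-diagonal algebra — in the presence of \emph{memory}. Without memory each round is manifestly a local thermal (hence block-diagonal-preserving) channel on each side, so a clean induction on the number of rounds works; with memory, the conditioning couples rounds, and one must verify that the retained classical registers do not allow an effective operation that is block-diagonal-breaking on the quantum subsystem. The resolution is that memory registers are genuinely classical (diagonal in a fixed basis and decohered by the pinching measurements), so at every stage the quantum marginal seen by each party evolves under a convex mixture of local thermal operations labelled by classical data — a mixture of block-diagonal-preserving maps is still block-diagonal-preserving — and hence the Heisenberg-picture observables remain in the correct block algebra; making this precise may require spelling out the CPTP structure of the maps $\Phi$ and $\Xi$ from Section \ref{sec:framework} and an induction over rounds, but no genuinely new idea beyond what already appears in the proof of Theorem \ref{thm:subset_of_SLTO}.
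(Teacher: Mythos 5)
Your proposal is correct in substance, but it takes a visibly different technical route from the paper. The paper works in the Schr\"odinger picture: in Appendix~\ref{App:coch_evol} it defines modes of coherence $\rho^{(\omega_A,\omega_B)}$ and proves (Lemmas~\ref{lem:SLTO_coch_mod} and \ref{lem:LTOCC_coch_mod}) that they evolve independently under SLTO and under LTOCC($+$M) -- the SLTO case via the time-translation covariances implied by the commutation relations \eqref{SLTO_constr}, the memory/equal-temperature case by decomposing the protocol into energy-incoherent measurements and conditioned local thermal operations. Theorem~\ref{thm:LTOCC_ent_dec1} then follows by noting that only the $\rho^{(0,0)}$ mode is accessible to the free (block-diagonal) measurements, and the entanglement-bearing coherences sit in modes with $\omega_A\neq 0$ or $\omega_B\neq 0$, which no LTOCC($+$M) protocol can transport into $\rho^{(0,0)}$; the numerical bound is simply inherited from Theorem~\ref{thm:LTO_bound_CHSH}. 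Your argument is the Heisenberg-picture dual of this: preservation of the energy-block-diagonal operator algebra under the adjoints of the free maps, followed by an explicit sector-by-sector Tsirelson/classical accounting weighted by $d_k/d^n$. What your route buys is a self-contained re-derivation of the bound (more explicit than the paper's appeal to Theorem~\ref{thm:LTO_bound_CHSH}); what the paper's route buys is a clean, reusable covariance statement (mode preservation) that makes your step three rigorous in one stroke.

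Two points to tighten. First, your justification of step three via ``Kraus operators are block-diagonal'' is not correct as stated: thermal operations generically have Kraus operators connecting different energy levels (e.g.\ amplitude-damping-type $\ket{E_0}\!\bra{E_1}$). The correct statement is that the channels are covariant under (weighted) time translation, equivalently that their Kraus operators can be chosen with a \emph{definite energy shift}; this is exactly what guarantees that the dual map sends the block-diagonal algebra into itself, and it is precisely the content the paper extracts from \eqref{SLTO_constr1}--\eqref{SLTO_constr2}. Second, with communication the global map is separable rather than a product, so the pulled-back correlator is a sum $\sum_\mu (\mathcal{E}_A^\mu)^\dagger(A_i)\otimes(\mathcal{E}_B^\mu)^\dagger(B_j)$ over classical messages, not $\tilde A_i\otimes\tilde B_j$; your per-sector Tsirelson step then needs the (scenario-level) assumption that the exchanged classical data does not encode the measurement settings, which the paper also only handles informally in its ThCHSH discussion. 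Neither point requires a new idea, but both should be spelled out if you want the argument at the same level of rigour as the paper's lemmas.
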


    \proof{For the clarity of the discussion the proof of this Theorem, toghether with necesery Lemmas, is presented in the Appendix \ref{App:coch_evol}.}

    Theorem \ref{thm:LTOCC_ent_dec1} shows three distinct critical values of the CHSH scenario:
    
    \begin{widetext}
        \begin{equation} \label{eq:thermal_CHSH_bounds}
            \ev{A_0B_0 + A_0B_1 - A_1B_0 + A_1B_1}{\Psi} \leq \begin{cases}
                2 & \text{Classical},\\
                2\frac{D_{\text{ndeg}} + \sqrt{2}D_{\text{deg}}}{d^n} & \text{Thermal (LTOCC+M)},\\
                2\sqrt{2} & \text{Quantum}.
            \end{cases}
        \end{equation}
    \end{widetext}
    % This presents us with a possibility of a game, which we will sketch qualitatively. The task for Alice and Bob is to maximize value of CHSH game with an access to a fixed number of copies of a fixed state. They are restricted to LTOCC framework with additional memory. After a given number of rounds the judge, Charlie, is handed with the all the results and all the measurement settings. Use of an Bell-nonlocal state allows to break the classical bound, thus certifying shared entanglement. On the other hand, breaking the thermal bound certifies cheating. If, in addition, Charlie has access to register of communication, he could in principle verify that it was cheating involving athermal operations (without abusing locality) and the game is lost. Framework of LTOCC may lead, in principle, to similar nonlocal games with resource-induced losing conditions.

    This leads to a proposal of a new thermally-restricted CHSH (ThCHSH) scenario, in which Alice and Bob are tasked with maximising value of CHSH correlator using an $n$-copy entangled state. When no memory is involved, the central evaluating party, Charlie, can verify that Alice and Bob have access to non-thermal operations by verifying breaking of the bound \eqref{eq:LTO_bound_CHSH}. The scenario can even be extended by memory shared between Alice and Bob, which can be used to communicate results of intermediate measurements under LTOCC+M paradigm; inspection of memory state by Charlie can be used to verify that parties have not communicated the measurement settings. While being a proof-of-principle rather than loophole-free robust scenario, it provides a way to certify use of resources via Bell-like experiments.

    \bl{Figure \ref{fig:Bell_plot} demonstrates the generic properties of convergence of the value of three bounds presented in eq. \eqref{eq:thermal_CHSH_bounds}. Although specific values depend on the strucutre of energy gaps, showing higher violation when the spectrum is $\mathbb{Q}$-dependent, the general convergence behaviour is dependent primarily on the local dimensions, with behaviour of convergence to Tsirelson bound as $\mathcal{O}(d^{-n})$, with multiplicative constant depending explicitly on the number of energy-nondegenerate subspaces. }

    \rd{Finally, let us recall that we focused on CHSH mainly for the sake of concreteness. The arguments corresponding to energy-degenerate subspaces are, however, fully general, and thus we put forward the following general statement
    \begin{thm}
        Consider an arbitrary bipartite Bell scenario with an expectation value $S$ bound by classical and quantum bounds $\mathcal{B}_{\text{cl}}$ and $\mathcal{B}_{\text{q}}$. A realisation using $n$ copies of $d$-dimensional state under the same restrictions as in Theorem \ref{thm:LTO_bound_CHSH}
        \begin{equation}
            S \leq \frac{\mathcal{B}_{\text{cl}} D_{\text{ndeg}} + \mathcal{B}_{\text{q}}D_{\text{deg}}}{d^n}.
        \end{equation}
    \end{thm}}
	
	\section{Summary and Outlook}\label{Sec:out}

    In this work, we introduce a novel framework where parties operate under the distant laboratories paradigm, subject to thermodynamic constraints, coining the notion of local thermal operations and classical communication (LTOCC). We begin by presenting a hierarchy of classes of allowed operations, categorized by the level of control and the complexity of communication between the parties, and present the inclusion relations between them, including a special case of parallel LTOCC and its symmetric variant $\mathcal{S}$LTOCC. In particular, we show that LTOCC with shared randomness form a subset of semilocal thermal operations (SLTO), pointing to a possibility of realising finite-size thermal engines operating at Carnot efficiency, as described in \cite{Bera2021}, within the distant laboratory setting subject to classical communication restrictions. To do so, we provide new results for SLTO, presenting, among others, that the set of such operations is convex and closed under composition.
    
    Since thermal tensors, encoding conditional thermal operations when restricted to the energy-incoherent states, constitute basic building blocks of the LTOCC framework, the second part of the work is devoted to the study of their properties. In particular, a major part of our considerations is concerned with the bithermal tensors, which constitute a symmetric subset of thermal tensors.
    The structure of the full set of such tensors can be understood as a thermal extension of the Birkhoff polytope of multistochastic tensors, and as such is an interesting mathematical problem in and of itself.
    
    % Finally, we consider  application of LTOCC protocols to specific scenarios, starting with the task of approximating standard logic operations, CNOT and SWAP, while maintaining local equilibria of the systems under consideration. This serves to demonstrate the limitations of such operations -- even when restricted to the energy-incoherent regime, LTOCC impose limits on how accurately one can perform logic gates. The second example considers possibility of generating correlations using LTOCC and their symmetric variant, demonstrating that under certain conditions it is possible to generate perfect correlations using either of the protocols.

    Finally, we consider the question of nonlocal correlations under LTOCC, natural in the context of the distant laboratories paradigm. We first show ability of LTOCC to generate correlations when input states are energy-incoherent, highlighting significant capability when extended by memory. Then we move on to the Bell nonlocality in the context of thermally restricted operations, for which we show a clear gap between thermally-restricted and maximal breaking of CHSH inequality. This serves to demonstrate possibility of certifying resource utility in the context of Bell nonlocality.

    The results presented here lay the foundation for a new framework that combines two established concepts: the distant laboratories setting and thermodynamic limitations on operations. Given the novelty of this framework, several key questions remain open for future research. The first is a deeper exploration of multi-round scenarios, which may allow for a broader range of accessible state transitions, and formulation of stronger relations between protocols with different numbers of rounds.
    \bl{Next topic concerns characterization of LTOCC$+$M operations, which may distort Gibbs state. Contrary to LTOCC without memory, these protocols cannot be implemented without persistent classical registers, raising the question about the interplay between Landauer cost of erasing classical memory and athermality of quantum system created by corresponding operations.}
    
    Another important question is the introduction of coherences into the framework, with first steps presented in an example of detection gap for CHSH inequality. This is in general a challenging problem, as coherent thermal operations are difficult to describe in closed form even for a single qubit~\cite{Lostaglio2015QCoherence, Korzekwa2017}. The distinction between thermal operations and Gibbs-preserving operations, which emerge on the quantum level, further complicates this analysis. However, taking into account the conjecture that LTOCC with shared randomness and SLTO are equal when restricted to energy-incoherent processes, extension to coherent states in future work may provide tools necessary to identify the gap between classical and genuinely quantum effects in thermodynamics of finite-size systems interacting with multiple thermal baths.

    Moreover, in the face of recent developments at the frontier of quantum thermodynamics and nonlocality researchers within the community have begun to investigate interplays between characteristic features of both subfields, such as possibility of generating entanglement using out-of-equilibrium states~\cite{oliveira2024entanglement}, detection of nonlocal properties of states using their thermodynamic properties~\cite{Oliveira2024heatEntanglementWit} or ever-growing body of literature concerned with steady-state entanglement from autonomous machines operating in thermal framework~\cite{BohrBrask2015, khandelwal2024maximal}. In all of the above cases, we believe that considering the problem within the LTOCC framework would further understanding of both thermodynamical and informational tradeoffs incurred throughout the implementation of the protocol.
	
	Finally, the introduced framework could be applied to quantum communication protocols, potentially revealing new thermodynamic trade-offs, such as those between system temperature and achievable key rates or thermodynamical entropy of states and informational entropy of shared randomness. Such analysis would allow for a better understanding of thermally optimal operations in the context of protocols such as entanglement generation, cooling or work extraction. However, this would require either shift to multicopy scenarios, similar to the one presented for the CHSH inequality breaking, or studying measurements that are not restricted to the energy eigenbasis, raising questions about the inherent thermodynamic costs. These trade-offs between information-theoretic gains, like key rates, and thermodynamic costs, such as work required, represent another avenue for future research.
	
	All of these questions, while significant, fall outside the scope of this initial presentation of the LTOCC framework. They should be viewed as a roadmap for future studies, aimed at expanding and refining this new theoretical approach, and some of them will be explored in a future manuscript \cite{Czartowski2025QuantumLTOCC}.

\begin{acknowledgments}

The authors thank A. de Oliveira Jr., Jeongrak Son, Kamil Korzekwa and Karol {\.Z}yczkowski for useful discussions and comments concerning the paper.  J. Cz. would like to thank Ray Ganardi for useful comments concerning Bell nonlocality in the presence of thermal operations. Furthermore, we thank A. de Oliveira Jr. for kindly providing us with the artistic impression in Fig.~\ref{fig:Famework_sheme}. 
    
RB acknowledges support by the National Science Centre, Poland, under the contract number 2021/03/Y/ST2/00193
within the QuantERA II Programme that has received funding from the European Union’s Horizon 2020 research and innovation programme under Grant Agreement No 101017733.
JCz is supported by the start-up grant of the
Nanyang Assistant Professorship at the Nanyang Technological University in Singapore, awarded to Nelly Ng.
\end{acknowledgments}

\appendix

	\section{Cooling maps}\label{App:cold_original}
	
	In this Appendix, we discuss the theoretical framework designed to grasp thermal operations in low temperatures: \textit{cooling maps}~\cite{Low_temerature_cooling_maps}. Although the original construction was concerned with the general scenario of mixed quantum states, we let ourselves limit the discussion to diagonal states, similar to the bulk of the paper.
	
	Consider a quantum system together with a thermal bath with a nonzero spectral gap. If the temperature tends to zero, the contribution to the bath's Gibbis state from excited states decreases exponentially. Thus for low enough temperatures, one may approximate
	\begin{equation}
		\vb{\gamma} \approx |E_0\ra\la E_0|~.
	\end{equation}
	Therefore the energy-preserving interaction with such a bath~\eqref{E_prev} can only deexcite the system of interest. Indeed, if we consider a diagonal input state with populations $\vb{p} = \operatorname{diag}_H(\rho)$, the action of the cooling map does not create coherences, as in the case of thermal operations, and modify the populations according to~\cite{Low_temerature_cooling_maps}
	\begin{equation}
		\vb{q} = P\, \vb{p}
	\end{equation}
	where $P$ is upper-triangular (UT) stochastic matrix: $P_{j,k} = 0$ if $j > k$.
	
	Similarly as for bistrochastic matrices and thermal operations, one can define a suitable version of majorization also for cooling maps. However this time the deexcitation properties of cooling maps naturally fix the order of basis states, so there are no rearrangements.
	
	\begin{defi}[\cite{Low_temerature_cooling_maps}]\label{def_TO_Majorisation}
		Given two $d$-dimensional probability distributions $\vb{p}$ and $\vb{q}$, we say that $\vb{p}$ \emph{UT-majorises} $\vb{q}$, and denote it by $\vb{p} \succ_{UT} \vb{q}$, if and only if the following condition holds:
		\begin{equation}
			\label{eq_T0_majorisation}
			\sum_{i=1}^k p_i\geq\sum_{i=1}^k q_i\quad \text{for all} \quad  k\in\{1\dots d\}.
		\end{equation}
	\end{defi}
	
	This definition lets us state the counterpart of Theorems \ref{thm_HLP} and \ref{thm_HLPgeneralisation} in zero temperature:
	
	\begin{thm}[Theorem~1 of Ref.~\cite{Low_temerature_cooling_maps}]
		There exists a cooling map, corresponding to upper triangular matrix $P$, $P \vb{\gamma}=\vb{\gamma}$, mapping $\vb{p}$ to $\vb{q}$ if and only if $\vb{p} \succ_{UT} \vb{q}$.
	\end{thm}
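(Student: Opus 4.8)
\emph{Setup.} In the zero-temperature limit $\vb{\gamma}$ carries all of its weight on the ground state, $\vb{\gamma}=(1,0,\dots,0)$, so the Gibbs-preservation requirement $P\vb{\gamma}=\vb{\gamma}$ is not an extra constraint at all: upper-triangularity forces $P_{i1}=0$ for $i>1$, and column normalisation then gives $P_{11}=1$, i.e.\ $P\vb{\gamma}=\vb{\gamma}$ automatically. Hence the statement reduces to showing that $\{P\vb{p}:P\text{ upper-triangular stochastic}\}$ equals $\{\vb{q}:\vb{p}\succ_{UT}\vb{q}\}$ --- the $\beta\to\infty$ counterpart of Theorems~\ref{thm_HLP} and~\ref{thm_HLPgeneralisation}, with the simplification that deexcitation fixes the ordering of the levels, so no $\beta$-reordering enters and the whole argument is bookkeeping with partial sums. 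I would prove the two implications separately.

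\emph{Necessity.} Assume $P$ is upper-triangular and stochastic with $P\vb{p}=\vb{q}$, and fix $k$. Then
\begin{equation}
\sum_{i=1}^{k}q_i=\sum_{i=1}^{k}\sum_{j}P_{ij}p_j=\sum_{j}p_j\Bigl(\sum_{i=1}^{k}P_{ij}\Bigr),
\end{equation}
and for a column index $j\le k$ upper-triangularity gives $P_{ij}=0$ for $i>j$, so $\sum_{i=1}^{k}P_{ij}=\sum_{i=1}^{j}P_{ij}=1$, whereas for $j>k$ one only has $0\le\sum_{i=1}^{k}P_{ij}\le 1$. Therefore $\sum_{i=1}^{k}q_i=\sum_{j\le k}p_j+\sum_{j>k}p_j\bigl(\sum_{i\le k}P_{ij}\bigr)\ge\sum_{i=1}^{k}p_i$, with equality at $k=d$; this is precisely the partial-sum (UT-majorisation) relation of Definition~\ref{def_TO_Majorisation}.

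\emph{Sufficiency.} I would argue by induction on $d$, peeling off the top level. The case $d=1$ is trivial. For the step, the partial-sum hypothesis at $k=d-1$ reads $q_d\le p_d$. If $p_d>0$, let $P_1$ be the identity except in column $d$, where $(P_1)_{dd}=q_d/p_d$ and $(P_1)_{d-1,d}=1-q_d/p_d$; this is upper-triangular stochastic, and $P_1\vb{p}$ has last entry $q_d$ and truncated head $\vb{p}'=(p_1,\dots,p_{d-2},p_{d-1}+p_d-q_d)$ (if $p_d=0$ then $q_d=0$ and one takes $P_1=\mathbb{I}$). The truncated target $\vb{q}'=(q_1,\dots,q_{d-1})$ sums to $1-q_d$ just like $\vb{p}'$, and $\sum_{i=1}^{k}q'_i\ge\sum_{i=1}^{k}p'_i$ holds for every $k\le d-1$ (with equality at $d-1$), so the inductive hypothesis yields an upper-triangular stochastic $P_2'$ with $P_2'\vb{p}'=\vb{q}'$. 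Padding $P_2'$ with a $1$ in position $(d,d)$ and zeros elsewhere in the last row and column gives an upper-triangular stochastic $P_2$, and $P:=P_2P_1$ is upper-triangular stochastic (products of such matrices again have this form) and satisfies $P\vb{p}=\vb{q}$.

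\emph{Main obstacle.} The proof is essentially routine; the one genuinely delicate point is keeping the orientation of the partial-sum inequalities aligned with the fact that a cooling map can only \emph{deexcite} --- so that the relation appearing in necessity and the one assumed in sufficiency coincide --- together with the degenerate cases $p_d=0$ (and $p_j=0$ in the elementary transforms) in the reduction. An alternative derivation of necessity would take the $\beta\to\infty$ limit of the thermomajorisation condition of Theorem~\ref{thm_HLPgeneralisation} via the construction of Appendix~\ref{App:cold}, but the limiting ratios $p_i/\gamma_i$ diverge, so I would keep the self-contained partial-sum argument above as the main line, since it also makes the upper-triangular structure transparent.
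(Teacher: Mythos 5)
First, note that the paper does not actually prove this statement: it is quoted verbatim from Ref.~\cite{Low_temerature_cooling_maps} (and the paper's own Appendix~\ref{App:cold} only rederives the \emph{structural} characterisation $\Lambda_{ij}=0$ for $i>j$, not the majorisation criterion), so your self-contained two-directional argument is genuinely supplementary. As a proof of the transfer problem for upper-triangular stochastic matrices it is essentially sound: the column-sum computation for necessity, the inductive peeling with the two-entry elementary UT column for sufficiency, and the remark that $P\vb{\gamma}=\vb{\gamma}$ is automatic once $\vb{\gamma}=(1,0,\dots,0)$ are all correct (modulo the trivial point that the induction should be phrased for equal-sum nonnegative vectors or rescaled by $1/(1-q_d)$, with the degenerate case $q_d=1$ handled separately).

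The genuine problem is the direction of the inequality, which you flag as "the one delicate point" and then get wrong relative to the paper's own Definition~\ref{def_TO_Majorisation}. Your necessity computation yields $\sum_{i=1}^{k}q_i\geq\sum_{i=1}^{k}p_i$ for all $k$, and your sufficiency induction assumes exactly this (e.g.\ you read off $q_d\leq p_d$ at $k=d-1$). But Definition~\ref{def_TO_Majorisation} as printed says $\vb{p}\succ_{UT}\vb{q}$ means $\sum_{i=1}^{k}p_i\geq\sum_{i=1}^{k}q_i$, i.e.\ the \emph{reverse}; so what you have proved is "a UT map from $\vb{p}$ to $\vb{q}$ exists iff $\vb{q}\succ_{UT}\vb{p}$" under the printed convention, not the stated equivalence, and your claim that your inequality "is precisely the partial-sum relation of Definition~\ref{def_TO_Majorisation}" is false as written. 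A two-level check makes this concrete: with the ground state at index $1$, $\vb{p}=(\tfrac12,\tfrac12)$ can be cooled to $\vb{q}=(0.7,0.3)$ by a UT stochastic matrix, yet $p_1\geq q_1$ fails, so $\vb{p}\succ_{UT}\vb{q}$ does not hold under the literal definition. In other words, the paper's definition and theorem are mutually inconsistent as printed (most plausibly the definition's inequality, or its summation convention, is a transcription slip, since the physically correct condition for deexcitation-only maps is the one you derived); a correct write-up must either fix the definition or state the theorem as $\vb{q}\succ_{UT}\vb{p}$, rather than silently identifying the two opposite relations as you do.
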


	\section{Alternative derivation of cooling maps}\label{App:cold}
	
	In this Appendix, we present an alternative derivation of cooling maps in the case of energy-incoherent states as a thermal operation in the limit of zero temperature $\beta \to \infty$.
	
	\begin{thm}
		\label{zero_temp_bound}
		In the zero temperature limit $\beta\to\infty$, assuming nondegenerated Hamiltonian spectrum, the stochastic matrix $\Lambda$ corresponds to thermal operation if and only if $\Lambda_{i,j} = 0$ for $i > j$.  
		Furthermore, the extremal operations correspond to matrices with $0$ and $1$ entries, restricted only by the condition $\Lambda_{ij} = 0$ if $i>j$.
	\end{thm}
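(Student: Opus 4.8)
The plan is to read the statement as a claim about the limiting shape of the polytope of thermal operations. Let $\mathcal{T}_\beta$ denote the set of Gibbs-preserving stochastic matrices at inverse temperature $\beta$ — which, on energy-incoherent states, is exactly the set of thermal operations, cf.\ Theorem~\ref{thm_HLPgeneralisation} and the surrounding discussion — and let $\mathcal{U} := \{\Lambda \text{ stochastic} : \Lambda_{ij}=0 \text{ for } i>j\}$, the ordering being the one induced by $E_0 < E_1 < \cdots < E_{d-1}$, which is strict by non-degeneracy; put $\delta := \min_{i\neq j}\abs{E_i-E_j}>0$. I would prove $\mathcal{T}_\beta\to\mathcal{U}$ as $\beta\to\infty$ by establishing the two inclusions separately, and then read off the extreme points of $\mathcal{U}$.

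For necessity (every zero-temperature thermal operation is upper triangular) I would use only the Gibbs-preservation equation: if $\Lambda\in\mathcal{T}_\beta$ then $\sum_k\Lambda_{ik}\gamma_k=\gamma_i$ with $\gamma_k\propto e^{-\beta E_k}$, hence $\sum_k\Lambda_{ik}e^{-\beta(E_k-E_i)}=1$; for $i>k$ one has $E_i>E_k$, so $e^{-\beta(E_k-E_i)}\ge e^{\beta\delta}$, and non-negativity of all summands forces $\Lambda_{ik}\le e^{-\beta\delta}$. Thus every element of $\mathcal{T}_\beta$ has sub-diagonal entries at most $e^{-\beta\delta}$, so any limit point as $\beta\to\infty$ lies in $\mathcal{U}$. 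This settles the ``only if'' direction.

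For sufficiency (every upper-triangular stochastic matrix is attained in the limit) — which I expect to be the main obstacle — I would first reduce to the vertices of $\mathcal{U}$, namely the $\{0,1\}$-valued upper-triangular stochastic matrices $P$ (each assigning the basis vector $e_{\pi(j)}$, with $\pi(j)\le j$, to its $j$-th column), using that $\mathcal{T}_\beta$ is convex: a convex combination of matrices approaching distinct vertices of $\mathcal{U}$ approaches the corresponding point of $\mathcal{U}$ while remaining Gibbs-preserving at every $\beta$. Each such $P$ I would realise as a limit of products of two-level thermal swaps: for $a<b$ let $S^{(\beta)}_{a,b}$ be the thermal swap of levels $a,b$ embedded as the identity on the remaining levels — it is Gibbs-preserving, and as $\beta\to\infty$ it tends to the matrix $\tau_{a,b}$ that sends level $b$ to level $a$ and fixes every other level. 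Listing the moved levels $j_1<j_2<\cdots<j_m$ (those with $\pi(j)<j$), I would set $\Lambda^{(\beta)} := S^{(\beta)}_{\pi(j_m),j_m}\cdots S^{(\beta)}_{\pi(j_1),j_1}$, applying the swap with the smallest source first; each $\Lambda^{(\beta)}$ is then Gibbs-preserving and stochastic. The crux is to verify $\lim_{\beta\to\infty}\Lambda^{(\beta)}=P$ by tracking basis vectors: a level $\ell\notin\{j_i\}$ is never moved (no factor has source $\ell$); a level $\ell=j_i$ is untouched by the earlier factors, which only transport levels $<j_i$ into levels $<j_i$, is moved to $\pi(j_i)<j_i$ by $S^{(\beta)}_{\pi(j_i),j_i}$, and is then fixed by every later factor, whose source $j_k>j_i>\pi(j_i)$ differs from $\pi(j_i)$. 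So the smallest-source-first ordering is precisely what prevents a population from being ``over-cooled'', and $\Lambda^{(\beta)}\to P$; convexity of $\mathcal{T}_\beta$ then upgrades this to all of $\mathcal{U}$.

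Finally, the ``furthermore'' about extreme points is a routine polytope computation: $\mathcal{U}$ splits column-by-column as a Cartesian product of simplices, the $j$-th factor being $\operatorname{conv}\{e_0,\dots,e_j\}$, and the extreme points of a product of polytopes are the tuples of extreme points of the factors; since the extreme points of each simplex are its vertices, the extreme points of $\mathcal{U}$ are exactly the matrices with a single $1$ in every column placed at a row index not exceeding the column index — i.e.\ the $\{0,1\}$-valued matrices constrained only by $\Lambda_{ij}=0$ for $i>j$.
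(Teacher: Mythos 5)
Your proof is correct, and its second half takes a genuinely different route from the paper's. For necessity both arguments use only Gibbs preservation plus nonnegativity, but yours is more direct: you read the entrywise bound $\Lambda_{ik}\le e^{-\beta(E_i-E_k)}\le e^{-\beta\delta}$ straight off the single row equation $\sum_k\Lambda_{ik}\gamma_k=\gamma_i$, whereas the paper balances the population flow across the cut between levels with $E\le E_m$ and $E>E_m$ and bounds the aggregated sub-diagonal block by $e^{-\beta(E_{m+1}-E_m)}$ -- same mechanism, with your version giving an explicit decay rate per entry. For sufficiency the paper invokes detailed balance: it observes that stochastic matrices in detailed balance with the Gibbs state are automatically Gibbs-preserving, and argues that at $\beta=\infty$ the detailed-balance constraint degenerates, leaving only stochasticity and upper-triangularity; this is short but somewhat heuristic, since it does not exhibit finite-$\beta$ Gibbs-preserving matrices converging to a prescribed upper-triangular target (one would still need to repair column normalisation after adding the exponentially small lower-triangular detailed-balance partners). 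Your construction supplies exactly that: embedded two-level thermal swaps are exactly Gibbs-preserving at every finite $\beta$, the smallest-source-first ordering of the factors makes the product converge to any prescribed $\{0,1\}$-valued vertex (your basis-vector tracking argument is sound), and convexity of the Gibbs-preserving polytope extends the limit from vertices to all of $\mathcal{U}$. So your route is more constructive and makes rigorous a step the paper treats loosely, at the cost of being longer. The extreme-point part (columns decouple, product of simplices) coincides with the paper's argument.
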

	
	\begin{proof}
		We start the proof by discussing the necessity of the above condition, to later focus on its sufficiency.
		Let us divide the populations of the Gibbs ensemble into two sets. The first are those corresponding to energies smaller of equal $E_m$ and the other correspond to larger energies.
		Since the operation $\Lambda$ is thermal, Gibbs ensemble is an invariant state, $\Lambda{\gamma} = {\gamma}$, so the flow of populations between those two sets, described by $\Gamma$ elements must balance, which can be shown explicitly,
		\begin{equation}
			\begin{aligned}
				& \sum_{i > m} \sum_{j \leq m} \Lambda_{ij} {\gamma}_j = \sum_{i>m}\qty(\sum_j \Lambda_{ij}{\gamma}_j - \sum_{j > m} \Lambda_{ij}{\gamma}_j) = \\
				& = \sum_{i}\qty({\gamma}_i - \sum_{j > m} \Lambda_{ij}{\gamma}_j)  - \sum_{i \leq m}\qty({\gamma}_i - \sum_{j > m} \Lambda_{ij}{\gamma}_j) = \\
                & 1 - \sum_{j >m}{\gamma}_j - \sum_{i \leq m} {\gamma}_i + \sum_{i \leq m} \sum_{j < m} \Lambda_{ij} {\gamma}_j  \\
				& =\sum_{i \leq m} \sum_{j < m} \Lambda_{ij} {\gamma}_j~.
			\end{aligned}
		\end{equation}
		To obtain the bounds on $\Lambda_{ij}$ elements, we may extrapolate the populations of Gibbs states using two inequalities: ${\gamma}_i \geq {\gamma}_m = e^{- \beta E_m}$ for $i \leq m$, and ${\gamma}_i \leq {\gamma}_{m+1} = e^{- \beta E_{m+1}}$ for $i >m$. Thus the above equality can be changed into
		\begin{equation}
			\begin{aligned}
				\sum_{i > m} \sum_{j \leq m} \Lambda_{ij} e^{-\beta E_{m}} &\leq \sum_{i \leq m} \sum_{j > m} \Lambda_{ij} e^{-\beta E_{m+1}} \\
				\sum_{i > m} \sum_{j \leq m} \Lambda_{ij}  &\leq \qty(\sum_{i \leq m} \sum_{j > m} \Lambda_{ij}) e^{-\beta (E_{m+1} - E_m)}. \\
			\end{aligned} 
		\end{equation}
		Since the right-hand side of the equation tends to zero and all terms on the left-hand side are positive we obtain the desired result.

		It turns out, that, the condition from this Lemma is also sufficient. This can be seen by considering the channel, which corresponds to a detailed balance of Gibbs state, ie $\Lambda_{ij}{\gamma}_j = \Lambda_{ji}{\gamma}_i$, without any summation involved. Each such stochastic channel is by its construction also thermal. However, in the limit of zero temperature $\beta = \infty$, the condition of detailed balance $\Lambda_{ij} =  \Lambda_{ji} e^{-\beta (E_i - E_j)}$, is always satisfied given $\Lambda_{ij} = 0$ if $i > j$. Hence the freedom of all nonvanishing $\Lambda$ elements is restricted only by stochasticity.
		
		Because the condition $\Lambda_{ij} = 0$ if $i > j$ does not couple the elements of $\Lambda_{ij}$, we can treat each column of $\Lambda$ separately as a probability vector which immediately leads to the extremal points.
	\end{proof}

    \section{Detailed discussion of one- and two-round LTOCC}\label{App:Geo_evol}

    In this Appendix, we will consider the evolution of states under LTOCC for one- and two-round protocols. In what follows, when it is important, we will consider Alice and Bob having at their disposal systems of equal dimension, $d_A = d_B = d$ and the local Hamiltonians $H^{(A)} = H^{(B)} = H$, but different inverse temperatures $\beta^{(A)}$ and $\beta^{(B)}$. We will assume that they initialize their systems independently, thus starting in a state $\vb{r}^{(0)} = \vb{p}^{(0)}\otimes\vb{q}^{(0)}$, where the superscript of the global state will refer to the number of rounds. For reference we will be using $d=3$, so that we can represent marginal states of Alice and Bob on their respective 2-dimensional probability simplices, as shown in Fig.~\ref{fig:geo_0_steps}. This appendix will be using slightly different notation in comparison to the remainder of the manuscript.

    In the first round of the protocol (thus realising LTOCC$_1$), we consider Alice measuring her state and forwarding the obtained information to Bob, In turn, Bob can act on his system with a collection of thermal maps $\Xi^{(1,i)}$ conditioned on the information obtained from Alice. Finally, Alice post-processes her system using thermal operation $\Lambda^{(1)}$. In order to properly set the stage, we will write down the first round of the protocol in fully quantum form, starting from a state $\operatorname{diag}(\vb{p}^{(0)}) \otimes \operatorname{diag}(\vb{q}^{(0)}) \equiv \rho\otimes\sigma$:
    {
    \allowdisplaybreaks
    \begin{subequations}
        \begin{align}
            &\text{Measurement:} \; \rho\otimes\sigma  \mapsto \qty(\sum_i  \Tr(K_i \rho K_i^\dagger) K_iK_i^\dagger)\otimes \sigma \\ 
            & \hspace{2 cm} = \sum_i p_i \op{E_i}\otimes\sigma\otimes\underbrace{\op{i}_{m}}_{\text{Memory}}\\
            & \text{Conditioning:} \; 
             \mapsto
            \sum_i p_i \op{E_i}\otimes\qty(\sum_j \widetilde{K}_{ij}\sigma\widetilde{K}_{ij}^\dagger)\otimes\op{i}_{m} \\
            & \hspace{2 cm} \equiv \sum_i p_i \op{E_i}\otimes\Xi^{(1,i)}\qty(\sigma)\otimes\op{i}_{m} \\
            &\text{Postprocessing:} \; \mapsto \sum_i p_i \Lambda^{(1)}(\op{E_i})\otimes\Xi^{(1,i)}\qty(\sigma)\otimes\op{i}_{m} \\
            &\text{Memory erasure:} \; \mapsto \sum_i p_i \Lambda^{(1)}(\op{E_i})\otimes\Xi^{(1,i)}\qty(\sigma)
        \end{align}
    \end{subequations}
    }
    where we set measurement operators $K_i = \qty(\ket{E_i}\otimes\ket{i}_m)\bra{E_i}$ to reflect the energy-incoherent projective measurement. The Krauss operators $\tilde{K}_{ij}$ correspond to Bob's thermal operations $\Xi^{(1,i)}$. 
    % \raf{Co to $M_i$ i czemu tego potrzebujemy?}

\begin{figure}[b]
    \includegraphics[width=1\linewidth]{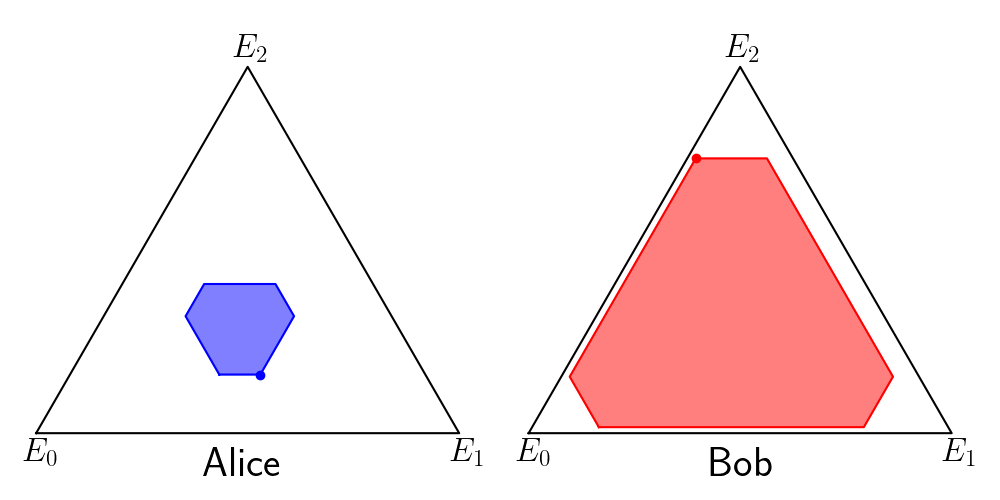}
        \caption{Exemplary starting point $\vb{r}^{(0)} = \vb{p}^{(0)}\otimes \vb{q}^{(0)}$ with Alice's state $\vb{p}^{(0)} = (0.39, 0.45, 0.16)$ (blue point) and Bob's state $\vb{q}^{(0)} = (0.23,0.02,0.75)$ (red point), subject to temperatures $\beta^{(A)} = 0.1,\,\beta^{(B)} = 0.2$ respectively and energies of both subsystems $\vb{E} = (0,1,2)$. The initial state of the system is given as a product, which justifies the representation in terms of marginals. The solid line represents the boundaries of the future thermal cone under standard local thermal operations. 
        % \jcz{Wypełnienie (blade) plus podpisy Alicja/Bob}
        }
        \label{fig:geo_0_steps}
    \end{figure}

    There are two things to note from this sequence of transformations. First, is that the sequence does not require $\sigma$ or $\rho$ to be diagonal on the outset, and hence the entirety of considerations applies also to protocols with coherent inputs. This leads to perhaps a more interesting observation, that after one round of such protocol Alice's reduction is fully classical, but Bob's reduction may retain coherence, thus giving him access to an in-principle coherent quantum ensemble of states. We will expand upon this notion in a future manuscript \cite{Czartowski2025QuantumLTOCC}.
    
    With the above general remarks in place, we may return to the energy-incoherent setting, which leads to a joint state in the form
    \begin{equation}
    \begin{aligned}
    \vb{r}_{\text{LTOCC}_1} & = \sum_{i=1}^d p^{(0)}_i \Lambda^{(1)}(\vb{s}(i))\otimes\Xi^{(1,i)}(\vb{q}^{(0)}) \\
    & = \sum_{i=1}^d p_i \vb{p}^{(1,i)}\otimes\vb{q}^{(1,i)}
    \end{aligned}
    \end{equation}
    where $\vb{s}(i)$ is a sharp state with $s_j(i) = \delta_{ij}$.
    Essentially, at this stage Alice and Bob have access to an ensemble of states. From the standpoint of Alice her reduced state is simply given by
    \begin{equation}
        \vb{p}^{(1)} = \sum_i p_i \vb{p}^{(1,i)} = \Lambda^{(1)}\qty(\vb{p}^{(0)}).
    \end{equation}
    This, in fact, corresponds to a situation when Alice sends out a bit without retaining it, and thus loses access to it directly after measurement. On the other hand, Bob has effectively an ensemble of states $\qty{p_i,\Xi^{(1,i)}(\vb{q})}$. However, due to erasure of memory directly after its use, the marginal state is given by
    \begin{equation}
        \vb{q}^{(1)} = \qty[\sum_i p_i \Xi^{(1,i)}]\qty(\vb{q}^{(0)}).
    \end{equation} 
    Despite the complexity of the set of thermal operations, as highlighted in \cite{Mazurek_2018}, due to thermomajorization relations it generates no more than $d!$ extreme image points for any input state $\vb{q}$ on Bob's side. Combined with $d$ levels of Alice's state, we find that the resulting joint state belongs to a convex hull of at most $(d!)^d$ extreme points. 
    Furthermore, it is easy to show that extremal points of Bob's marginal distribution are achieved by fixing his operation to be extremal and independent from Alice's measurement result, $\Xi^{(1,i)} = \Xi^{(1)}$. However, this would restrict the state to the product subset, $\vb{r}_{\text{LTOCC}_1}\in\Delta_{d-1}\otimes\Delta_{d-1}\subset\Delta_{d^2-1}$. Departure from product form can be quantified by using a standard measure of information in the joint state, which cannot be found in either of the systems locally -- mutual information, given by
    \begin{equation}
    \begin{aligned}
    I(A:B) &= \sum_{i,j=1}^d r_{ij}\Big(\log\big(r_{ij}\big) - \log\big(\sum_ir_{ij}\big) - \log\big(\sum_{j}r_{ij}\big)\Big) \\
    & \leq \log(d)
    \end{aligned}
    \end{equation}
    with the upper bound achieved when $r_{ij} = \delta_{ij} d^{-1}$. This figure of merit has been used in  fig.~\ref{fig:geo_1_step} to visualize the effect of the correlation generated by the use of classical communication. Additionally, we used the conditional entropies

    \begin{align}
        H(B|A) = - \sum_{i,j=1}^d r_{ij}\qty(\log(r_{ij}) - \log(\sum_jr_{ij})), \\
        H(A|B) = - \sum_{i,j=1}^d r_{ij}\qty(\log(r_{ij}) - \log(\sum_ir_{ij})), 
    \end{align}
    which are depicted in center and rightmost panels of fig. \ref{fig:geo_1_step}. Note that Alice's marginal has not been affected by the conditional operation, and thus $\sum_{ij}r_{ij}\log(\sum_jr_{ij}) = \text{const.}$ This makes mutual information and entropy conditioned on Bob's system equal up to an additive constant and a sign. By the same token, a similar connection exists between the entropy of the state $S(\vb{r})$ and the entropy conditioned on Alice's system. 

    First, we highlight that these entropic quantities are not functions of Alice's or Bob's reductions, but of the joint state. This is reflected in Fig~\ref{fig:geo_1_step} by plotting maximal numerically achieved values for mutual information, as opposed to minimal. This is due to the fact that the minimum of mutual information is trivially equal to zero, as each point of Bob's marginal simplex is achievable by setting his operation to be a constant thermal operation, $\Xi^{(1,i)} = \Xi^{(1)}$, and thus retaining product form of the joint state. On the similar grounds, we plot only minimal numerical values of both conditional entropies.

    \begin{figure*}[t]
        \centering
        \includegraphics[width=0.7\textwidth]{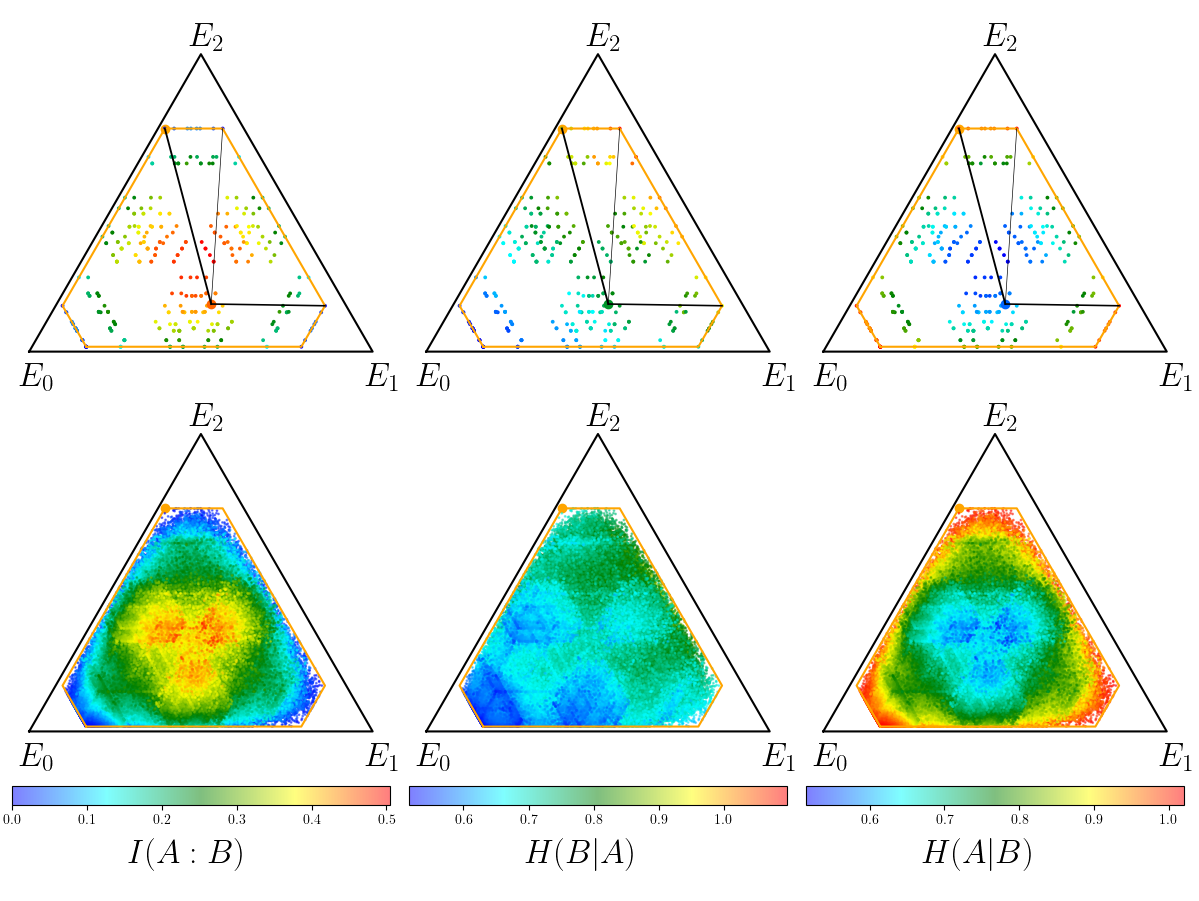}
        \caption{After a single round of LTOCC protocol we can investigate the informational quantities such as mutual information $I(A:B)$ (left panel) and conditional entropies $H(B|A)$ (center) and $H(A|B)$ (right). Top plots demonstrate achievable values when restricted to combinations of only extreme operations on Bob's side, while bottom plots demonstrate maximal (minimal) achievable values of mutual information (conditional entropy) in the situation when Bob has access to completely arbitrary local thermal operations on his side. 
        }
        \label{fig:geo_1_step}
    \end{figure*}

    At this stage, access to memory does not play a significant role, but for a full picture, before erasing the memory state, the overall state of the system is given as

    \begin{equation}
        \vb{r}^{(m)}_{\text{LTOCC}_1} = \sum_{i=1}^d p^{(0)}_i \underbrace{\vb{p}^{(1,i)}\otimes\vb{q}^{(1,i)}}_{\text{Alice and Bob}}\otimes\underbrace{\vb{s}(i)}_{\text{Memory}}
    \end{equation}

    The two-round protocol depends on whether we erase the memory after the first round, or not. Starting with the no-memory scenario, it is simplest to understand when we rewrite the probability $\vb{r}^{(1)}$ with respect to the sharp states of Bob, which is always possible,
    \begin{equation}
        \vb{r}_{\text{LTOCC}_1} = \sum_{j=1}^d q^{(1)}_j \tilde{\vb{p}}^{(1,j)} \otimes \vb{s}(j)
    \end{equation}
    where $\tilde{p}^{(1,j)} = r^{(1)}_{ij}/(\sum_i r^{(1)}_{ij})$ is just Alice's distribution conditional on Bob measuring $j$. At this point, Bob measures and sends the result to Alice, after which Alice implements a family of conditioned operations, while Bob post-processes his state to obtain a state

    \begin{equation}
        \vb{r}^{(m)}_{\text{LTOCC}_2} = \sum_{j=1}^d q^{(1)}_j \underbrace{\Xi^{(2,j)}\qty(\tilde{\vb{p}}^{(1,j)} )\otimes \Lambda^{(2)}\qty(\vb{s}(j))}_{\text{Alice and Bob}}\otimes \underbrace{\vb{s}(j)}_{\text{Memory}}.
    \end{equation}

    The rest of the protocol proceeds in the same manner as before. Note that it is easy to reason heuristically why Corollary~\ref{corr:SLTO_major} works -- conditioned processing of Bob's state is done using local thermal operations, and thus from his perspective, it obeys the thermomajorization $\prec_{\gamma^{(B)}}$. In the post-processing step Alice uses only local thermal operation, and thus clearly obeys thermomajorization $\prec_{\gamma^{(A)}}$ locally. Taking this together with the fact that baths are not correlated, it is reasonable to expect that the joint evolution will obey thermomajorization with respect to $\prec_{\gamma^{(A)}\otimes\gamma^{(B)}}$. This is only signalised here, and actual proofs will follow in Appendix~\ref{App:semilocal}.
    
    It is far more interesting to see what happens when Bob measures his own system and uses memory from the previous round. In this case, the post-measurement state is in fact given by
    \begin{equation}
        \sum_{i,j=1}^d p^{(0)}_i q^{(1,i)}_j\underbrace{\vb{p}^{(1,i)}\otimes\vb{s}(j)}_{\text{Alice and Bob}}\otimes\underbrace{\vb{s}(i)\otimes\vb{s}(j)}_{\text{Memory}}
    \end{equation}

    Using this and assuming that now Alice can condition the operations on both retained memories, while for Bob it is sufficient to condition it on the first memory, we obtain a post-operation state as
    \begin{widetext}        
    \begin{align}
        \vb{r}^{(2,m)}_{\text{LTOCC}_2+\text{M}} & = \sum_{i,j=1}^d p^{(0)}_i q^{(1,i)}_j\underbrace{\Xi^{(2,ij)}(\vb{p}^{(1,i)})\otimes\Xi^{(2,i)}(\vb{s}(j))}_{\text{Alice and Bob}}\otimes\underbrace{\vb{s}(i)\otimes\vb{s}(j)}_{\text{Memory}} \\
        & = \sum_{i,j=1}^d p^{(0)}_i q^{(1,i)}_j\Xi^{(2,ij)}(\Lambda^{(1)}(\vb{s}(i)))\otimes\Xi^{(2,i)}(\vb{s}(j))\otimes\vb{s}(i)\otimes\vb{s}(j).
    \end{align}
    \end{widetext}

    First immediate observation is that results with and without memory are not equal in principle, $\vb{r}^{(2,m)}_{\text{LTOCC}_2+\text{M}}\neq \vb{r}^{(m)}_{\text{LTOCC}_2}$. What is more telling about the power of storing correlations is that now both Alice and Bob are implementing thermal operations directly on sharp states, and can coordinate the operations using shared memory in a similar manner to shared randomness. One simple way to leverage this is to set Alice's first postprocessing to be trivial $\Xi^{(1)} = \mathbb{I}$, and then choose the remaining operations $\Xi^{(2,ij)}\otimes\Xi`^{(2,i)}$ so that they minimize energy of any input state $\vb{s}(i)\otimes\vb{s}(j)$, thus overcoming the limitations of thermal operations.

    \subsection{Remarks on relation with resource non-generating operations}

    As a final remark, we would like to discuss the relation of LTOCC without memory to the framework of fusing local resource theories as introduced in \cite{son2024robust}. There, the authors considered possible ways to fuse local resource theories into composite resource theories. In particular, they define a concept of minimal and maximal compositions. Given local sets of free states $\mathcal{F}^{(A)}$ and $\mathcal{F}^{(B)}$, they were defined respectively as
    \begin{widetext}
        \begin{align}
            \mathcal{F}^{(A)}\otimes_{\text{min}} \mathcal{F}^{(B)} & = \operatorname{conv}\qty{
            \rho^{(A)}\otimes\rho^{(B)}|
            \rho^{(A)}\in\mathcal{F}^{(A)},\,
            \rho^{(B)}\in\mathcal{F}^{(B)}}, \\
            \mathcal{F}^{(A)}\otimes_{\text{max}} \mathcal{F}^{(B)} & = \qty{\rho^{(AB)}|\Tr_A\rho^{(AB)}\in\mathcal{F}^{(B)},\,\Tr_B\rho^{(AB)}\in\mathcal{F}^{(A)}}.
        \end{align}
    \end{widetext}
    Next, let us assume that free states of Alice to be arbitrary density operators,  $\mathcal{F}^{(A)}_{\text{LO}} = \mathcal{D}\qty(\mathcal{H}_d)$, and Bob to be restricted to his local thermal equilibrium, $\mathcal{F}^{(B)}_{\text{TO}} = \qty{\gamma^{(B)}}$. Now, a simple inclusion holds
    \begin{align}
        \text{LTOCC}_{1,A\rightarrow B} \subset &\,\text{RNG}\qty(\mathcal{F}^{(A)}_{\text{LO}}\otimes_{\text{min}} \mathcal{F}^{(B)}_{\text{TO}}),\\
        \text{LTOCC}_{1,A\leftarrow B} \subset &\,\text{RNG}\qty(\mathcal{F}^{(A)}_{\text{LO}}\otimes_{\text{max}} \mathcal{F}^{(B)}_{\text{TO}}).
    \end{align}
    where RNG denotes resource non-generating operations.

    The inclusions are easily understood -- no matter what are the results of measurements on Alice's side, if Bob is restricted to thermal operations on his side, he will not change the local Gibbs state, and thus remain uncorrelated with Alice. On the other hand, if it is Bob who is performing the measurements, which are assumed to be energy-incoherent, he will preserve his marginal as the Gibbs state. 
    while generating possibly nontrivial correlations with Alice. In fact, these relations hold even when extended to a fully quantum regime.

    On similar grounds, one can argue that a general inclusion of the form
    \begin{equation}
        \text{LTOCC+R}\subset \,\text{RNG}\qty(\mathcal{F}^{(A)}_{\text{TO}}\otimes_{\text{min}} \mathcal{F}^{(B)}_{\text{TO}})
    \end{equation}
    holds. 

    \begin{widetext}
    \section{LTOCC without memory as a subset of semilocal thermal Operations.}\label{App:semilocal}

    In this Appendix, we discuss the properties of LTOCC protocols without memory, while processing general quantum states. In particular, we prove that all LTOCC protocols form a subset of semi-local thermal operations (SLTO) as defined in \cite{Bera2021}. 
    However, before moving on to the proof of the relation between LTOCC and SLTO we need to demonstrate two crucial properties of SLTO  -- namely, closedness and convexity of SLTO -- which will serve the role of lemmas for the final inclusion property. Additionally, unless stated explicitly, we work with thermal baths satisfying standard properties as put forward in \cite{horodecki2013fundamental, Bera2021} 

    \SLTOclosed*

    \begin{proof}
    Let $\qty(\mathcal{B}_1^{(A)},\,\mathcal{B}_1^{(B)})$, $\qty(\mathcal{B}_2^{(A)},\,\mathcal{B}_2^{(B)})$ be pairs of local thermal baths for Alice and Bob systems employed to implement $\Lambda_1^{(AB)}$ and $\Lambda_2^{(AB)}$, respectively. Furthermore let $H_1^{\mathcal{B}^{(A)}}$, $H_1^{\mathcal{B}^{(B)}}$, $H_2^{\mathcal{B}^{(A)}}$, $H_2^{\mathcal{B}^{(B)}}$ be corresponding Hamiltonians, and $U_1$, $U_2$ be unitary operations corresponding to the aforementioned semilocal thermal operations.

    To construct composition $\Lambda_1^{(AB)}\circ\Lambda_2^{(AB)}$ we define new baths by composing the baths from both processes: $\mathcal{B}^{(A)} := \mathcal{B}_1^{(A)}\otimes \mathcal{B}_2^{(A)}$, $\mathcal{B}^{(B)} := \mathcal{B}_1^{(B)}\otimes \mathcal{B}_2^{(B)}$ together with composite Hamiltonians $H^{\mathcal{B}^{(A)}} = H_1^{\mathcal{B}^{(A)}} + H_2^{\mathcal{B}^{(A)}}$, $H^{\mathcal{B}^{(B)}} = H_1^{\mathcal{B}^{(B)}} + H_2^{\mathcal{B}^{(B)}}$, and define joint unitary operation 
    \begin{equation*}
        U = (\id_{\mathcal{B}_1^{(A)} \mathcal{B}_1^{(B)}} \otimes U_2)(U_1 \otimes\id_{\mathcal{B}_2^{(A)} \mathcal{B}_2^{(B)}}).
    \end{equation*}
    From this, it follows that the global operation after tracing out thermal baths is given by
    \begin{equation*}
        \Lambda^{(AB)}(\rho^{(AB)}) = \Tr_{\mathcal{B}^{(A)},\mathcal{B}^{(B)}}\left[U(\gamma^{\mathcal{B}^{(A)}} \otimes \gamma^{\mathcal{B}^{(B)}} \otimes \rho^{(AB)})U^\dagger \right]~.
    \end{equation*}

    To  show that $\Lambda^{(AB)} $ is indeed a composition of $\Lambda_1^{(AB)}$ and $\Lambda_2^{(AB)}$ it is sufficient to decompose $U$ and partial traces:
    \begin{equation*}
    \begin{aligned}
    & \Lambda^{(AB)} = \Tr_{\mathcal{B}^{(A)},\mathcal{B}^{(B)}}\left[U(\gamma^{\mathcal{B}^{(A)}} \otimes \gamma^{\mathcal{B}^{(B)}} \otimes \rho^{(AB)})U^\dagger \right]= \\
    & = \Tr_{\mathcal{B}_2^{(A)} \mathcal{B}_2^{(B)}}\Big[\Tr_{\mathcal{B}_1^{(A)},\mathcal{B}_1^{(B)}}\Big[ (U_2  \otimes \id_{\mathcal{B}_1^{(A)} \mathcal{B}_1^{(B)}} )(U_1 \otimes \id_{\mathcal{B}_2^{(A)} \mathcal{B}_2^{(B)}}) \\
    &\hspace{6 cm} (\gamma^{\mathcal{B}^{(A)}} \otimes \gamma^{\mathcal{B}^{(B)}} \otimes \rho^{(AB)}) (U_1 \otimes\id_{\mathcal{B}_2^{(A)} \mathcal{B}_2^{(B)}})^\dagger(U_2 \otimes \id_{\mathcal{B}_1^{(A)} \mathcal{B}_1^{(B)}} )^{\dagger} \Big]\Big] \\
    & = \Tr_{\mathcal{B}_2^{(A)} \mathcal{B}_2^{(B)}}\Big[U_2 \Tr_{\mathcal{B}_1^{(A)},\mathcal{B}_1^{(B)}}\Big[ (U_1 \otimes \id_{\mathcal{B}_2^{(A)} \mathcal{B}_2^{(B)}})\\
    &\hspace{7 cm} (\gamma^{\mathcal{B}^{(A)}} \otimes \gamma^{\mathcal{B}^{(B)}} \otimes \rho^{(AB)}) (U_1 \otimes \id_{\mathcal{B}_2^{(A)} \mathcal{B}_2^{(B)}})^\dagger \Big]U_2^\dagger\Big] = \\
    & =  \Tr_{\mathcal{B}_2^{(A)} \mathcal{B}_2^{(B)}}\Big[U_2(\gamma_2^{\mathcal{B}^{(A)}} \otimes \gamma_2^{\mathcal{B}^{(B)}} \otimes \Lambda_1^{(AB)}(\rho^{AB}))U_2^\dagger\Big] = \\
    & = \Lambda_2^{(AB)}(\Lambda_1^{(AB)}(\rho^{(AB)})).
    \end{aligned}
    \end{equation*}

    Next, to demonstrate that such an operation still conserves global energy we consider first of the commutation relations \eqref{SLTO_constr}:
    \begin{equation*}
    \begin{aligned}
    &  [U,H^{(A)} + H^{(B)} + H^{\mathcal{B}^{(A)}} + H^{\mathcal{B}^{(B)}}] = \\
    & (U_2 \otimes  \id_{\mathcal{B}_1^{(A)} \mathcal{B}_1^{(B)}} ) [(U_1 \otimes \id_{\mathcal{B}_2^{(A)} \mathcal{B}_2^{(B)}} ), H^{(A)} + H^{(B)} + H_1^{\mathcal{B}^{(A)}} +  H_2^{\mathcal{B}^{(A)}} + H_1^{\mathcal{B}^{(B)}} + H_2^{\mathcal{B}^{(B)}}] + \\
    & +  [(U_2 \otimes \id_{\mathcal{B}_1^{(A)} \mathcal{B}_1^{(B)}}  ), H^{(A)} + H^{(B)} + H_1^{\mathcal{B}^{(A)}} +  H_2^{\mathcal{B}^{(A)}} + H_1^{\mathcal{B}^{(B)}} + H_2^{\mathcal{B}^{(B)}}](U_1 \otimes  \id_{\mathcal{B}_2^{(A)} \mathcal{B}_2^{(B)}} ) = \\
    &  = 0.
    \end{aligned}
    \end{equation*}
    Since both unitaries commuted with appropriate Hamiltonians, 
    the commutation is preserved under extension by identity.
    The second commutation relation related to the conservation of energies weighed by inverse temperatures is satisfied by the same token.
    \end{proof}

    \SLTOconvex*

    \begin{proof}
    Let us define baths, Hamiltonians and unitaries corresponding to $\Lambda_1^{(AB)}$ and $\Lambda_2^{(AB)}$ as in the first paragraph of the proof of Theorem \ref{SLTO_closed}. To construct the semilocal thermal operation corresponding to a convex combination of $\Lambda_1^{(AB)}$ and $\Lambda_2^{(AB)}$ we define new thermal baths $\mathcal{B}^{(A)} = \mathcal{B}_1^{(A)} \oplus \mathcal{B}_2^{(A)}$ and $\mathcal{B}^{(B)} = \mathcal{B}_1^{(B)} \otimes \mathcal{B}_2^{(B)}$, with Hamiltonians:
    \begin{equation*}
    \begin{aligned}
    & H^{\mathcal{B}^{(A)}} = (H_1^{\mathcal{B}^{(A)}} + \Delta_1^{\mathcal{B}^{(A)}} \id_{\mathcal{B}_1^{(A)}})\oplus(H_2^{\mathcal{B}^{(A)}} + \Delta_2^{\mathcal{B}^{(A)}} \id_{\mathcal{B}_2^{(A)}}) \\
    & H^{\mathcal{B}^{(B)}} = H_1^{\mathcal{B}^{(B)}} \otimes \id_{\mathcal{B}_2^{(B)}} + H_2^{\mathcal{B}^{(B)}}\otimes \id_{\mathcal{B}_1^{(B)}} \\
    \end{aligned}
    \end{equation*}
    where the energy shifts $\Delta_{x}^{\mathcal{B}^{(A)}}$ are defined as
    \begin{equation}
    \Delta_1^{\mathcal{B}^{(A)}} = \frac{-1}{\beta^{(A)}}\log\left( \frac{\alpha}{Z_{{\mathcal{B}_1^{(A)}}}}\right)~,~~~\Delta_2^{\mathcal{B}^{(A)}} = \frac{-1}{\beta^{(A)}}\log\left( \frac{1-\alpha}{Z_{{\mathcal{B}_2^{(A)}}}}\right)
    \end{equation}
    with $Z_{{\mathcal{B}_x^{(A)}}}$ being a partition function for a given bath.

    This choice of Hamiltonians allows us to express
    thermal state of Alice's bath as a direct sum of two thermal states corresponding to two channels, with weights $\alpha$ and $1-\alpha$ e.g.:
    \begin{equation*}
    \begin{aligned}
        \frac{1}{Z_{\mathcal{B}^{(A)}}} e^{- \beta^{(A)}H^{B^{(A)}}} = \frac{1}{Z_{\mathcal{B}^{(A)}}} \left(e^{- \beta^{(A)} H_1^{\mathcal{B}^{(A)}}} \frac{\alpha}{Z_{\mathcal{B}_1^{(A)}}} \right) \oplus \left(e^{- \beta^{(A)}H_2^{\mathcal{B}^{(A)}}} \frac{1-\alpha}{Z_{\mathcal{B}_2^{(A)}}} \right) = (\alpha \gamma_1^{\mathcal{B}^{A}})\oplus((1- \alpha) \gamma_2^{\mathcal{B}^{A}})
    \end{aligned}
    \end{equation*}
    where $Z_{\mathcal{B}^{(A)}} = 1$ is a partition function of combined baths. On the other hand, Bob's bath's thermal state is of a product form $\gamma^{\mathcal{B}^{(B)}} = \gamma_1^{\mathcal{B}^{(B)}} \otimes \gamma_2^{\mathcal{B}^{(B)}}$.

    Finally we define the joint unitary as $U = (U_1 \otimes  \id_{\mathcal{B}_2^{(B)}})\oplus (U_2 \otimes \id_{\mathcal{B}_1^{(B)}})$ where the direct sum should be understood as between spaces $\mathcal{B}_1^{(A)}\otimes \mathcal{B}^{(B)}\otimes S^{(A)}\otimes S^{(B)}$ and $\mathcal{B}_2^{(A)}\otimes \mathcal{B}^{(B)}\otimes S^{(A)}\otimes S^{(B)}$. We define channel $\Lambda^{(AB)}$ is a standard way:
    \begin{equation*}
    \begin{aligned}
    & \Lambda^{(AB)}(\rho^{(AB)}) := \Tr_{\mathcal{B}^{(A)},\mathcal{B}^{(B)}}\left[U\left(\left((\alpha\,\gamma_1^{\mathcal{B}^{(A)}})\oplus( (1-\alpha) \gamma_2^{\mathcal{B}^{(A)}})\right)\otimes (\gamma_1^{\mathcal{B}^{(B)}} \otimes \gamma_2^{\mathcal{B}^{(B)}})\otimes \rho^{(AB)}\right)U^\dagger \right] = \\
    & = \alpha \Tr_{\mathcal{B}_1^{(A)}, \mathcal{B}^{(B)}} \left[(U_1 \otimes  \id_{\mathcal{B}_2^{(B)}}))\left(\gamma_1^{\mathcal{B}^{(A)}}\otimes (\gamma_1^{\mathcal{B}^{(B)}} \otimes \gamma_2^{\mathcal{B}^{(B)}})\otimes \rho^{(AB)}\right)(U_1 \otimes  \id_{\mathcal{B}_2^{(B)}}))^\dagger \right] + \\
    & \hspace{0.3 cm} + (1-\alpha) \Tr_{\mathcal{B}_2^{(A)}, \mathcal{B}^{(B)}} \left[(U_2 \otimes  \id_{\mathcal{B}_1^{(B)}}))\left(\gamma_2^{\mathcal{B}^{(A)}}\otimes (\gamma_1^{\mathcal{B}^{(B)}} \otimes \gamma_2^{\mathcal{B}^{(B)}})\otimes \rho^{(AB)}\right)(U_2 \otimes  \id_{\mathcal{B}_1^{(B)}}))^\dagger \right] =\\
    & = \alpha \Tr_{\mathcal{B}_1^{(A)}, \mathcal{B}_1^{(B)}} \left[U_1\left(\gamma_1^{\mathcal{B}^{(A)}}\otimes \gamma_1^{\mathcal{B}^{(B)}} \otimes \rho^{(AB)}\right)U_1^\dagger \right] +  \\
    & \hspace{0.3 cm} + (1-\alpha) \Tr_{\mathcal{B}_2^{(A)}, \mathcal{B}_2^{(B)}} \left[U_2\left(\gamma_2^{\mathcal{B}^{(A)}}\otimes \gamma_2^{\mathcal{B}^{(B)}} \otimes \rho^{(AB)}\right)U_2^\dagger \right] \\
    & = \alpha \Lambda_1^{(AB)}(\rho^{(AB)}) +( 1-\alpha) \Lambda_2^{(AB)}(\rho^{(AB)})
    \end{aligned}
    \end{equation*}

    For completeness of the proof, it is necessary to check if the newly defined unitary $U$ satisfies the commutation relations \eqref{SLTO_constr}. Notice that the unitary $U = (U_1 \otimes  \id_{\mathcal{B}_2^{(B)}})\oplus (U_2 \otimes \id_{\mathcal{B}_1^{(B)}})$ is block diagonal with respect to blocks $\mathcal{B}_1^{(A)}\otimes \mathcal{B}^{(B)}\otimes S^{(A)}\otimes S^{(B)}$ and $\mathcal{B}_2^{(A)}\otimes \mathcal{B}^{(B)}\otimes S^{(A)}\otimes S^{(B)}$, and so are all Hamiltonians. Therefore, if the commutation relations are satisfied on both of those subspaces separately, they are satisfied for the entire system. In the subspace $\mathcal{B}_1^{(A)}\otimes \mathcal{B}^{(B)}\otimes S^{(A)}\otimes S^{(B)}$ the first commutation relation simplifies to
    \begin{equation*}
    \begin{aligned}
    & [(U_1 \otimes  \id_{\mathcal{B}_2^{(B),2}}) \, , H^{(A)} + H^{(B)} +(H_1^{\mathcal{B}^{(A)}}+ \Delta_1^{\mathcal{B}^{(A)}}\id_{\mathcal{B}_1^{(A)}} ) + H^{\mathcal{B}^{(B)}} ] = \\
    & =[U_1 \, , H^{(A)} + H^{(B)} +(H_1^{\mathcal{B}^{(A)}}+ \Delta_1^{\mathcal{B}^{(A)}}\id_{\mathcal{B}_1^{(A)}} ) + H_1^{\mathcal{B}^{(B)}} ] = \\
    & =[U_1 \, , H^{(A)} + H^{(B)} +H_1^{\mathcal{B}^{(A)}} + H^{\mathcal{B}_1^{(B)}} ] + [U_1, \Delta_1^{\mathcal{B}^{(A)}}\id_{\mathcal{B}_1^{(A)}} )] = 0 + 0 = 0
    \end{aligned}
    \end{equation*}
    where the first commutator is equal to zero because $U_1$ satisfies the condition \eqref{SLTO_constr1}, and the second commutator is zero trivially. 
    The proof is completed by noting that the considerations for the second subspace in the direct sum follow the same line of reasoning, and likewise for the second commutation relation \eqref{SLTO_constr2}.
    \end{proof}
    
    Finally, we are equipped to prove the inclusion of LTOCC in SLTO, beginning with the case of single-round LTOCC.

    % \subsetSLTO*
    \begin{lem} \label{lem:one_round_LTOCC_in_SLTO}
    One-round Local Thermal Operations and Classical communication (LTOCC$_1$) (with projective energy-incoherent measurements) form a subset of semilocal thermal operations (SLTO), even when considering non energy-incoherent states.
    \end{lem}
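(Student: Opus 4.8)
The plan is to realise every one-round LTOCC channel as a composition of three channels, each of which is separately a semilocal thermal operation, and then to invoke closure of SLTO under composition (Theorem~\ref{SLTO_closed}). A protocol of the form~\eqref{new_one_ounds1} decomposes as: (i) Alice performs a projective energy measurement on $\mathcal H_A$, copies the outcome onto a classical register, and forwards it to Bob; (ii) Bob applies the conditional thermal operation $\Xi^{(B)}_k$ -- the map with matrix $\big(T^{(B)}_{jkl}\big)_{jl}$ -- dictated by the received label $k$; (iii) Alice post-processes with the local thermal operation $\Lambda^{(A)}$, after which the register is discarded. At the level of channels this is
\begin{equation*}
\mathcal N(\rho^{AB}) = \big(\Lambda^{(A)}\otimes\id_B\big)\!\left[\sum_k \op{E_k}_A \otimes \Xi^{(B)}_k\!\big(\bra{E_k}_A\rho^{AB}\ket{E_k}_A\big)\right],
\end{equation*}
a formula which does not presuppose that $\rho^{AB}$ is energy-incoherent; the mirror scenario~\eqref{new_one_ounds2} is treated by exchanging the roles of $A$ and $B$.

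Two of the three pieces are immediate. A local thermal operation $\Lambda^{(A)}$ on Alice's side, dilated by a unitary $W$ on $\mathcal H_A\otimes\mathcal B^{(A)}$ with $\comm{W}{H^{(A)}+H^{\mathcal B^{(A)}}}=0$, extends to an SLTO by choosing any bath $\mathcal B^{(B)}$ (e.g.\ with trivial Hamiltonian) and setting $U=W\otimes\id_{B}\otimes\id_{\mathcal B^{(B)}}$: since $U$ is built from an operator commuting with $H^{(A)}+H^{\mathcal B^{(A)}}$ and acts trivially on $B$ and $\mathcal B^{(B)}$, both constraints~\eqref{SLTO_constr1}--\eqref{SLTO_constr2} hold term by term. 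The same argument, with $A$ and $B$ swapped, shows that any local thermal operation on Bob's side is an SLTO.

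The heart of the argument is that the ``measure--forward--conditional-operation'' step, i.e.\ (ii) composed with (i), is a single SLTO. I would take Alice's bath $\mathcal B^{(A)}$ to be a register of dimension $d_A=\dim\mathcal H_A$ with trivial Hamiltonian $H^{\mathcal B^{(A)}}=0$, so $\gamma^{\mathcal B^{(A)}}=\id/d_A$; take $V_k$ to be a common-bath Stinespring dilation on $\mathcal H_B\otimes\mathcal B^{(B)}$ of each conditional thermal map $\Xi^{(B)}_k$, so that $\comm{V_k}{H^{(B)}+H^{\mathcal B^{(B)}}}=0$ for every $k$; and let $\{C_k\}_{k=0}^{d_A-1}$ be mutually orthogonal unitaries on $\mathcal B^{(A)}$ with $\Tr\!\big(C_kC_m^\dagger\big)=d_A\,\delta_{km}$, e.g.\ the cyclic shifts $C_k=X^k$. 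Then set
\begin{equation*}
U \;=\; \sum_k \op{E_k}_A\otimes V_k^{(B\mathcal B^{(B)})}\otimes C_k^{(\mathcal B^{(A)})}.
\end{equation*}
Evaluating $\Tr_{\mathcal B^{(A)},\mathcal B^{(B)}}\!\big[U(\gamma^{\mathcal B^{(A)}}\!\otimes\gamma^{\mathcal B^{(B)}}\!\otimes\rho^{AB})U^\dagger\big]$, the trace over $\mathcal B^{(A)}$ produces a factor $\Tr\!\big(C_k(\id/d_A)C_m^\dagger\big)=\delta_{km}$ that annihilates all off-diagonal $A$-terms $k\neq m$, leaving exactly $\sum_k\op{E_k}_A\otimes\Xi^{(B)}_k\big(\bra{E_k}_A\rho^{AB}\ket{E_k}_A\big)$ after the trace over $\mathcal B^{(B)}$. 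The commutation relations~\eqref{SLTO_constr} are checked termwise: $U$ commutes with $H^{(A)}$ because it is block diagonal in the eigenbasis of $H^{(A)}$, with $H^{(B)}+H^{\mathcal B^{(B)}}$ because every $V_k$ does, and with $H^{\mathcal B^{(A)}}=0$ trivially -- and hence with every real linear combination of $H^{(A)}$, $H^{\mathcal B^{(A)}}$, $H^{(B)}+H^{\mathcal B^{(B)}}$, which is all that~\eqref{SLTO_constr1} and~\eqref{SLTO_constr2} require. Composing the three SLTOs and applying Theorem~\ref{SLTO_closed} finishes the proof.

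The step I expect to be the genuine obstacle is exactly this last construction: faithfully recording a classical energy-measurement outcome without violating energy/heat conservation. Writing the outcome into a \emph{pure} ancilla with a nontrivial Hamiltonian would generically break~\eqref{SLTO_constr1}; the resolution built into the plan is to use a Hamiltonian-free (infinite-temperature) register on Alice's side, which doubles as the decohering environment realising the projective measurement and which -- because~\eqref{SLTO_constr2} imposes no constraint on a trivial-Hamiltonian bath -- can be acted on by the arbitrary permutations $C_k$ at no ``thermodynamic cost''. Once the single-round case is in place, the extensions required for Theorem~\ref{thm:subset_of_SLTO} follow by reusing Theorem~\ref{SLTO_closed} across rounds and Theorem~\ref{SLTO_convex} to absorb shared randomness.
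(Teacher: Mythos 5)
Your proposal is correct and follows essentially the same route as the paper: a controlled unitary $U=\sum_k P_k^{(A)}\otimes V_k\otimes C_k$ with cyclic shifts $C_k$ acting on a fully degenerate (maximally mixed) register serving as Alice's bath, orthogonality of the shifts eliminating the off-diagonal $A$-terms, termwise verification of the commutation relations \eqref{SLTO_constr}, and then composition-closure of SLTO (Theorem~\ref{SLTO_closed}) to absorb the local post-processing. The only elements the paper adds are an explicit justification of your ``common-bath'' dilation step (it takes the tensor product of the individual baths for Bob's conditional maps, with each $U_k$ extended by identity on the others) and a final embedding of the finite degenerate register into the degenerate energy subspaces of a realistic large bath, with the remainder blocks set to identity and the resulting error vanishing in the thermodynamic limit.
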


    \begin{proof}
    Let us start the proof by decomposing one round LTOCC into conditioned thermal operations and thermal postprocessing. The latter is clearly a semi-local thermal operation since it involves thermal interaction between only one subsystem and its bath. Thus, without loss of generality in what follows, we assume the postprocessing to be trivial. 
    
    In order to demonstrate that conditioned thermal operations can be realised as SLTO, we rewrite them in an equivalent way that does not explicitly use classical memory. Consider a set of Bob's thermal operations $\{\Lambda_k^{(B)} \}_{k = 1}^n$, which are to be conditioned on Alice's measurement outcomes. Each of the operations will be performed using some thermal bath $\mathcal{B}_k^{(B)}$ with Hamiltonian $H_k^{\mathcal{B}^{(B)}}$  and unitary $U_k$. 
    
    To show that conditioned operation $\sum_{k = 1}^n\Lambda_k^{(B)}(P_k^{(A)} \rho^{(AB)}P_k^{(A)})$, where $P_k^{(A)}$ are projectors constituting Alice's measurement, can be performed as SLTO, let us define a joint Bob's bath $\mathcal{B}^{(B)} = \bigotimes_{k} \mathcal{B}_k^{(B)}$ with total Hamiltonian $H^{\mathcal{B}^{(B)}} = \sum_k \mathcal{H}_k^{B^{(B)}}$.
    We start the proof by focusing on an $n$-dimensional degenerate subspace of Alice's bath, thus setting $\mathcal{B}^{(A)} = \fC^{n}$ with Hamiltonian $H^{\mathcal{B}^{(A)}} \propto \id_n$, and later discuss how to generalize it to the full picture.
    Note that the Alice bath is completely degenerate, hence, it cannot be used to move any population between energy levels. Thus as such, it cannot be used to perform thermal operations on Alice's subsystem. However, it may serve the role of an ancilla for energy-incoherent measurements, which we are going to leverage.
    The global unitary in such a case is given by
    \begin{equation}
    \label{scary_unitary}
    U = \sum_{k,l} P_k^{(A)} \otimes U_k \otimes \id_{\mathcal{B}^{(B)} \setminus \mathcal{B}_k^{(B)} }  \otimes  |(k+l) \text{ mod } n \ra \la l|_{\mathcal{B}^{(A)}}
    \end{equation}
    where  the identity $\id_{\mathcal{B}^{(B)} \setminus \mathcal{B}_k^{(B)} }$ is set on all baths except the k$^{th}$ one.
    Semilocal thermal operation defined in such a way is then written as:
    \begin{equation}
    \begin{aligned}
    \label{eq:another_monster}
    &\Tr_{\mathcal{B}^{(A)}\mathcal{B}^{(B)}}[U(\gamma^{\mathcal{B}^{(B)}} \otimes \rho^{(AB)}  \otimes \gamma^{\mathcal{B}^{(A)}} )U^\dagger ] = \\
    & =\Tr_{\mathcal{B}^{(A)}\mathcal{B}^{(B)}}\Big[\left(\sum_{k,l} P_k^{(A)} \otimes U_k \otimes \id_{\mathcal{B}^{(B)} \setminus \mathcal{B}_k^{(B)} } \otimes |(k+l) \text{ mod } n \ra \la l|_{\mathcal{B}^{(A)}} \right)
    \left(\gamma^{\mathcal{B}^{(B)}} \otimes \rho^{(AB)} \otimes \frac{1}{n} \id_{B^{(A)}} \right) \\
    & \hspace{7 cm}
    \left(\sum_{k' l'} P_{k'}^{(A)} \otimes U_{k'} \otimes \id_{\mathcal{B}^{(B)} \setminus \mathcal{B}_{k'}^{(B)} } \otimes |(k'+l') \text{ mod } n \ra \la l'|_{\mathcal{B}^{(A)}} \right)^\dagger \Big] = \\
    & = \sum_{k,l} \frac{1}{n}  \Tr_{\mathcal{B}^{(B)}}\left[\left(P_k^{(A)} \otimes U_k \otimes \id_{\mathcal{B}^{(B)} \setminus \mathcal{B}_k^{(B)} }\right)(\gamma^{\mathcal{B}^{(B)}} \otimes \rho^{(AB)})\left(P_k^{(A)} \otimes U_k \otimes \id_{\mathcal{B}^{(B)} \setminus \mathcal{B}_k^{(B)} }\right)^\dagger \right] = \\
    & = \sum_k  \Tr_{\mathcal{B}_k^{(B)}}\left[U_k(\gamma_k^{\mathcal{B}^{(B)}} \otimes (P_k^{(A)} \rho^{(AB)}P_k^{(A)} ) )U_k^\dagger \right] = \\
    & = \sum_{k} (\Lambda_k^{(B)} \otimes \id_{S^{(A)}})(P_k^{(A)} \rho^{(AB)}P_k^{(A)})
    \end{aligned}
    \end{equation}
    Hence, a conditioned thermal operation can be performed with unitary $U$. Note that the reasoning presented above is not restricted to incoherent input states.

    The last step is to show that $U$ is indeed unitary and that it satisfies properties \eqref{SLTO_constr}. The unitarity is straightforward,
    \begin{equation*}
    \begin{aligned}
    &U\; U^\dagger = \left(\sum_{k,l} P_k^{(A)} \otimes U_k \otimes \id_{\mathcal{B}^{(B)} \setminus \mathcal{B}_k^{(B)} }  \otimes  |(k+l) \text{ mod } n \ra \la l|_{\mathcal{B}^{(A)}}\right) \\
    & \hspace{6 cm}\left(\sum_{k',l'} P_{k'}^{(A)} \otimes U_{k'} \otimes \id_{\mathcal{B}^{(B)} \setminus \mathcal{B}_{k'}^{(B)} }  \otimes  |(k'+l') \text{ mod } n \ra \la l'|_{\mathcal{B}^{(A)}}\right)^{\dagger} = \\
    &= \sum_{k,k',l}  P_k^{(A)}  P_{k'}^{(A)} \otimes \left(U_k \otimes \id_{\mathcal{B}^{(B)} \setminus \mathcal{B}_k^{(B)} }  \otimes \right) \left(U_{k'}^\dagger \otimes \id_{\mathcal{B}^{(B)} \setminus \mathcal{B}_{k'}^{(B)} }   \right) \otimes |(k+l) \text{ mod } n \ra \la (k'+l) \text{ mod } n|_{\mathcal{B}^{(A)}} = \\
    & = \sum_{k,l} P_k^{(A)} \otimes \left(U_k\; U_k^\dagger\right) \otimes \id_{\mathcal{B}^{(B)} \setminus \mathcal{B}_k^{(B)} } \otimes |(k+l) \text{ mod } n \ra \la (k+l) \text{ mod } n|_{\mathcal{B}^{(A)}} =  \\
    & \sum_k P_k^{(A)} \otimes  \id_{S^{(B)}\mathcal{B}^{(B)}} \otimes \id_{\mathcal{B}^{(A)}} = \id~,
    \end{aligned}
    \end{equation*}
    where we used the orthogonality of projective measurements.
    Finally the commutation relations \eqref{SLTO_constr} follows as well
    \begin{equation}
    \label{eq:PVM_comm}
    \begin{aligned}
    & [U,  H^{(A)} + H^{(B)} + H^{\mathcal{B}^{(A)}} + H^{\mathcal{B}^{(B)}}] = \\
    & \left[\left(\sum_{k,l} P_k^{(A)} \otimes U_k \otimes \id_{\mathcal{B}^{(B)} \setminus \mathcal{B}_k^{(B)} }  \otimes  |(k+l) \text{ mod } n \ra \la l|_{\mathcal{B}^{(A)}}\right), H^{(A)} + H^{(B)} + H^{\mathcal{B}^{(B)}} + H^{\mathcal{B}^{(A)}}\right] = \\
    & = \sum_{k,l} [P_k^{(A)} \otimes U_k \otimes\id_{\mathcal{B}^{(B)} \setminus \mathcal{B}_k^{(B)} }\;, H^{(A)} + H^{(B)} + H^{\mathcal{B}^{(B)}}] = \\
    & = \sum_k [P_k^{(A)} \otimes U_k\;, H^{(A)} + H^{(B)}+ H_k^{\mathcal{B}^{(B)}}] = \\
    & = 0
    \end{aligned}
    \end{equation}
    where we used the tact that $H^{\mathcal{B}^{(A)}}$ is proportional to identity, then omitted the subspaces on which both Unitary and Hamiltonias are identity, and in the final step, we used the fact that $U_k$ commutes with $H^{(B)}+ H_k^{\mathcal{B}^{(B)}}$ and $P_k^{(A)}$ are energy-incoherent.
    The second commutation relation is satisfied by the same token.

    In order to properly take into account the entirety of Alice's bath, we consider $\mathcal{B}^{(A)}$ to be a large bath, which satisfies the properties listed in \cite{horodecki2013fundamental, Bera2021}, and for a while assume that energy degeneracy $g_{m}$, of $m^{th}$ (important) energy level of the bath, is multiple of number of measurements on Alice system. Then we divide each energy degenerated subspace into $g_{m}/n$ orthogonal subspaces and define unitary matrix $U$ in each of those small subspaces (extended by both systems and Bob's baths) in the same way as \eqref{scary_unitary}. Finally, even if we have access only to a bath, for which the degeneracies are not multiples of $n$, we can divide the energy degenerated subspaces of Alice's bath in the same way and set $U = \id$ 
    on the remaining $g_m - n\lfloor g_m/n\rfloor = \varepsilon_m$ basis states
    in each energy-degenerated subspace. 
    However, it is easy to see that in the thermodynamic limit, the ratio $\varepsilon_m/g_m \rightarrow 0$, and thus
    distortions will be negligible, along the lines of reasoning presented in \cite{horodecki2013fundamental}.
    \end{proof}

    With all the above results, we may finally demonstrate the full inclusion.

    \subsetSLTO*
    
    \begin{proof}
    % Since the one round LTOCC can be realized as SLTO and SLTO on are closed under compositios and convex combination, as shown in the Theorems \ref{SLTO_closed} and \ref{SLTO_convex}, we deduce that LTOCC$_n$ + R can also be implemented as SLTO as long as memory extension is not preformed.
    Combining Lemma~ \ref{lem:one_round_LTOCC_in_SLTO} with Theorem~\ref{SLTO_closed} and Theorem~\ref{SLTO_convex}, the result is obtained immediately.
    \end{proof}

    \noindent In other words, LTOCC$_n$ is defined as an $n$-fold composition of LTOCC$_1$, which is handled by closedness of SLTO, and LTOCC$_n$+R defines convex hull of LTOCC$_n$.

    In the proof of Theorem \ref{thm:subset_of_SLTO}, we rely on the orthogonality of projective measurements\bl{, in order to guarantee unitarity of joint evolution}.
    However, \bl{even} when considering energy-incoherent POVMs, \bl{which satisfy the minimal requirement of (average) Gibbs-preservation, this orthogonality ceases to hold. Thus above reasoning cannot be directly applied for beyond PVMs.} 
    
    \bl{Furthermore, as presented below, the action of an energy-incoherent POVM is not equivalent to a convex combination of energy-incoherent PVM. Thus, one cannot construct LTOCC protocol utilizing POVM by probabilistic mixture of LTOCC protocols based on PVMs.
    
    \begin{obs}
    Non-projective energy-incoherent POVM cannot be implemented as a convex combination of energy-incoherent PVMs. 
    \end{obs}

    \begin{proof}
    Assume that there exists an implementation of an energy-incoherent POVM $\{F_i = \sum_{j} q_{ij} \op{E_j}|\}$ as a convex combination of some, possibly many different, projective measurements. Then each element $F_i$ is decomposed into some combination of projectors $F_i = \sum_{k = 0}^{k_{max}(i)} r_{ik} P_{ik}$, where the coefficients are positive, $r_{ik}>0$, and the projectors $P_{ik}$ %\jcz{konflikt notacyjny -- potrzebujemy w ogóle podwójnego indeksu?}Dobre pytanie, narazie go zostawiam, bo dziwnie mi to bez niego wyglkąda ale moze lepiej go usunac
    are diagonal and may overlap. To simplify following calculation we define binary functions $d_{ik}(j) \in \{0,1\}$ which states if computational state $|E_j\ra$ belong to subspace of $P_{ik}$, thus $P_{ik} = \sum_j d_{ik}(j)|E_j\ra\la E_j|$. %Finally, since $q_{ij}$ describe the probability of measuring basis state $|E_{j}\ra$ by $F_i$, one has $q_{ij} = \sum_k r_{ik}d_{ik}(j)$.

    In this setting one can obtain the probability of measuring basis state $|E_j\ra$ by POVM element $F_i$ in two different ways. The direct calculation gives $p = \Tr[F_i |E_j\ra\la E_j|] = q_{ij}$, while the composition of PVMs yields $p =\Tr[F_i |E_j\ra\la E_j|] = \sum_k r_{ik} \Tr[P_{ik} |E_j\ra\la E_j|] = \sum_k r_{ik} d_{ik}(j)$. Thus we obtain consistency requirement $q_{ij} = \sum_k r_{ik}d_{ik}(j)$.
    
    The application of POVM on arbitrary quantum state $\rho$ gives, up to normalization $p_i$,
    \begin{equation}
        p_i \rho^{(i)} = \sqrt{F_i}\rho\sqrt{F_i} = \sum_{j,j'} \sqrt{q_{ij}q_{ij'}}\;|E_j\ra\la E_j| \rho |E_{j'}\ra\la E_{j'}|  = \sum_{j,j'}\sqrt{\sum_{k,k'} r_{ik} d_{ik}(j) r_{ik'} d_{ik'}(j')}\; |E_j\ra\la E_j| \rho |E_{j'}\ra\la E_{j'}|.
    \end{equation}
    On the other hand, the combination of projective measurements creates post-measurement state
    \begin{equation}
        \sum_{k} r_{ik} P_{ik}\rho P_{ik} = \sum_{j,j'} \sum_k r_{ik} d_{ik}(j)d_{ik}(j')\; |E_j\ra\la E_j| \rho |E_{j'}\ra\la E_{j'}|.
    \end{equation}
    Comparing these two equations element-wise, one obtains
    \begin{equation*}
        \forall_{j,j'}\;\; \sum_k r_{ik} d_{ik}(j)d_{ik}(j') = \sqrt{\sum_{k,k'} r_{ik} d_{ik}(j) r_{ik'} d_{ik'}(j')},
    \end{equation*}
    which, after squaring both sides, and small rearagement of terms results in
    \begin{equation*}
        \forall_{j,j'}\;\;\sum_{kk'}\Big(1 - d_{ik}(j')d_{ik'}(j)\Big)d_{ik}(j) d_{ik'}(j') r_{ik} r_{ik'} = 0.
    \end{equation*}
    Each term in the above sum is nonnegative, so the sum may be equal zero only if each summand is zero. Furthermore, since $r_{ik}>0$ we may equivalently write
    \begin{equation}
    \label{one_more_eq}
        \forall_{j,j'}\forall_{k,k'}\;\;\Big(1 - d_{ik}(j')d_{ik'}(j)\Big)d_{ik}(j) d_{ik'}(j')  = 0.
    \end{equation}
    If the measurement $F_i$ is not projective, then some projectors $\{P_{ik}\}_{k = 0}^{k_{max}(i)}$ must differ from each other. Let $P_{ik_0}$ and $P_{ik_1}$ be two such different projectors, with $|E_{j_0}\ra$ being a basis state such that $P_{ik_0}|E_{j_0}\ra = |E_{j_0}\ra$ and $P_{ik_1}|E_{j_0}\ra = 0$. Substituting $k,k',j$ by $k_0, k_1, j_0$ the equation \eqref{one_more_eq} simplifies to 
    $\forall_{j'}\;\;d_{ik'}(j')  = 0,$
    which means that the entire projection $P_{ik_1}$ is equal $0$. Therefore energy-incoherent POVM cannot be considered as any convex combination, or probabilistic post-processing, of an energy-incoherent PVM.
    \end{proof}
    
    The above observation can be strengthened by separately considering the diagonal and off-diagonal elements of the output state, which leads to the conclusion that POVMs in general preserve more coherence than composition of PVMs with the same measurement probabilities. This excludes the implementation of POVM as a combination of PVMs followed by decohering thermal post-processing.  
    
    With these observations in mind, together with the discrepancy between thermal operations and Gibbs-preserving operations for coherent states, we believe that the extension of Theorem \ref{thm:subset_of_SLTO} by  POVMs, if possible, requires significantly different proof techniques.
    Therefore, the problem of} the inclusion of LTOCC in SLTO \bl{remains open}:

    \begin{conj}\label{conj:subset_of_SLTO_POVM}
    Local thermal Operations and Classical communication with randomness (LTOCC$_n$+R) extended by measurements defined by arbitrary energy-incoherent POVMs form a subset of semilocal thermal operations (SLTO).
    \end{conj}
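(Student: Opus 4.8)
The plan is to follow the architecture of the proof of Theorem~\ref{thm:subset_of_SLTO}. By composition-closure (Theorem~\ref{SLTO_closed}) and convexity (Theorem~\ref{SLTO_convex}) of SLTO, it is enough to show that a single round of LTOCC built from one energy-incoherent POVM measurement on Alice's side, conditioning a family $\{\Lambda_k^{(B)}\}$ of thermal operations on Bob's side (the subsequent unconditional local post-processing by Alice being trivially SLTO), can be written in the form~\eqref{SLTO_def}. Concretely, the goal is a thermodynamic Naimark-type dilation: replace the orthogonal projectors $P_k^{(A)}$ appearing in the construction~\eqref{scary_unitary} by the effects of a general energy-incoherent POVM, using only Gibbs-state ancillas and an interaction unitary obeying the commutation relations~\eqref{SLTO_constr1}--\eqref{SLTO_constr2}.

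For energy-incoherent input states this can already be carried out. Writing the POVM through Kraus operators $K_i=\sum_j a_{ij}\op{E_j}$, on a diagonal state Alice's measurement is nothing but the (column-stochastic) classical channel $j\mapsto i$ with transition probabilities $a_{ij}^2=(M_i)_{jj}$; since in the memoryless protocol LTOCC$_n$+R the outcome register is erased at the end of each round, only the induced map on the joint Alice--Bob distribution matters. A direct computation shows that this joint map coincides with the one obtained by letting Alice perform a \emph{projective} energy measurement with outcome $j$ and conditioning Bob on $j$ to apply the \emph{convex combination} $\sum_i (M_i)_{jj}\,\Lambda_i^{(B)}$ of his thermal operations (the identity is unaffected by inserting Alice's unconditional thermal post-processing). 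A convex combination of thermal operations is again a thermal operation -- obtained by randomising with a suitably gapped ancilla exactly as in the proof of Theorem~\ref{SLTO_convex} -- so an energy-incoherent POVM round collapses to a projective LTOCC round, already known to be SLTO, and the argument propagates round by round.

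The genuine obstacle is the coherent case. When $\rho^{(AB)}$ is not energy-incoherent the Kraus operators $K_i$ only partially dephase it in the energy basis, the classical-channel reduction fails, and one needs an honest dilation of the POVM: an ancilla $R$ prepared in a fixed pure pointer state $\ket{0}_R$ and an interaction $W$, commuting with the local Hamiltonians, realising $W:\ket{\psi}_A\otimes\ket{0}_R\mapsto\sum_i(K_i\ket{\psi}_A)\otimes\ket{i}_R$. Within SLTO the ancilla is available only in its Gibbs state, and here two requirements collide: a degenerate register contributes only a maximally mixed -- hence unitarily invariant -- state and therefore cannot record an outcome, forcing one to use a register with a large spectral gap whose Gibbs state is nearly pure; but since the $K_i$ preserve Alice's energy, an energy-conserving $W$ can map $\ket{0}_R$ only into states $\ket{i}_R$ that are mutually energy-degenerate with $\ket{0}_R$, which is incompatible with a large gap. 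Reconciling the two -- plausibly by a two-register construction (a ``cold'', large-gap register supplying purity and a degenerate register carrying the classical label, coupled by an energy-conserving partial-swap-type interaction), supplemented by a thermodynamic-limit estimate bounding the residual error in the spirit of the $\varepsilon_m/g_m\to 0$ bound used for large baths in the proof of Lemma~\ref{lem:one_round_LTOCC_in_SLTO}, and finally verifying that the resulting $W$ can be made to satisfy both~\eqref{SLTO_constr1} and~\eqref{SLTO_constr2} -- is the step I expect to be the main difficulty, and is what currently keeps the statement at the level of a conjecture.
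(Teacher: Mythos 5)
The statement you were asked about is not proved in the paper at all: it is stated explicitly as an open conjecture, and the paper itself notes that the proof of Theorem~\ref{thm:subset_of_SLTO} ``relies on the orthogonality of projective measurements,'' which is precisely what fails for a general energy-incoherent POVM. Your proposal is therefore not in conflict with the paper --- you correctly refrain from claiming a proof and end up, like the authors, identifying the coherent case as the open step. The partial content you do supply is sound: for energy-incoherent inputs, conditioning Bob's thermal operations on a POVM outcome with effects $(M_i)_{jj}=a_{ij}^2$ is equivalent to a projective energy measurement followed by conditioning Bob on the eigenvalue $j$ and applying the mixture $\sum_i (M_i)_{jj}\Lambda_i^{(B)}$, which is again thermal by convexity; this reduction matches the paper's own remark that on diagonal states such POVMs ``can be easily achieved equivalently by classical post-processing on the memory side,'' so the incoherent-state case indeed collapses to Theorem~\ref{thm:subset_of_SLTO}.

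On the diagnosis of the obstruction, your Naimark-dilation framing (pure pointer needs a gapped register, but energy conservation confines the pointer to states degenerate with $\ket{0}_R$) is a reasonable way to see the difficulty, though it is worth noting where the paper's own construction breaks more concretely: in \eqref{scary_unitary} the operator $U=\sum_{k,l}P_k^{(A)}\otimes U_k\otimes\id\otimes\op{(k+l)\bmod n}{l}$ is unitary only because the cross sums $\sum_k P_k^{(A)}P_{k-s}^{(A)}$ vanish for $s\neq 0$; replacing $P_k^{(A)}$ by $\sqrt{M_k}$ destroys unitarity since $\sum_k\sqrt{M_k}\sqrt{M_{k-s}}\neq 0$ in general, and no amount of enlarging the degenerate register fixes this by itself. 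So the genuine gap is exactly the one you name, and your proposed two-register construction with a thermodynamic-limit error estimate is a plausible attack line but remains unexecuted --- which is consistent with the statement's status as Conjecture~\ref{conj:subset_of_SLTO_POVM} rather than a theorem.
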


    \section{Coherence evolution under SLTO and LTOCC($+$M)}\label{App:coch_evol}

    In this Appendix we consider the evolution of coherence under SLTO and LTOCC operators. First, we present auxiliary results restricting such evolutions and then apply them to bound the CHSH inequality for thermal operations presented in Theorem \ref{thm:LTO_bound_CHSH}.

    We start our discussion by extending the notion of modes of coherence, defined in \cite{Lostaglio2015QCoherence} for SLTO transformations.

    \begin{defi}
        Let us denote differences between Alice energies as $\{\omega_A\}$ and the differences between Bobs energies as $\{\omega_B\}$, then  we define a mode of coherence $\rho^{(\omega_A,\omega_B)}$ of bipartite quantum state $\rho^{(AB)}$ as
        \begin{equation}
            \rho^{(\omega_A, \omega_B)} := \sum_{\substack{E_i-E_k =\omega_A \\ \newline E_j -E_l = \omega_B}} |ij\ra\la ij|\rho^{(AB)}|kl\ra\la kl|~.
        \end{equation}
        Furthermore, we define joint mode of coherence as
        \begin{equation}
            \rho^{(\omega)} := \sum_{\omega_A + \omega_B = \omega} \rho^{(\omega_A, \omega_B)} = \sum_{(E_i+E_j) -(E_k+E_l) = \omega} |ij\ra\la ij|\rho^{(AB)}|kl\ra\la kl|~.
        \end{equation}
    \end{defi}
    Note that modes of coherence provide full decomposition of bipartite state $\rho^{(AB)} = \sum_{\omega_A,\omega_B} \rho^{(\omega_A,\omega_B)}$. Furthermore, as we presented above, they simplify the description of state's evolution under SLTO.
    
    \begin{lem}\label{lem:SLTO_coch_mod}
        If Alice and Bob operates at different temperatures $\beta^{(A)}\neq \beta^{(B)}$,
        modes of coherence evolve independently under SLTO transformations, ie, if $\Lambda^{(AB)}(\rho^{(AB)}) = \sigma^{(AB)}$, then
        \begin{equation}
            \Lambda^{(AB)}(\rho^{(\omega_A,\omega_B)}) = \sigma^{(\omega_A,\omega_B)} \;\; \forall_{\omega_A,\omega_B}
        \end{equation}
        Furthermore, even if temperatures are the same, joint modes of coherence still evolve independently: 
        \begin{equation}
            \Lambda^{(AB)}(\rho^{(\omega)}) = \sigma^{(\omega)} \;\; \forall_{\omega}
        \end{equation}
    \end{lem}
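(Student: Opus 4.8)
The plan is to exploit the two commutation relations~\eqref{SLTO_constr} defining SLTO, together with the fact that any mode of coherence $\rho^{(\omega_A,\omega_B)}$ is, by construction, an eigenoperator of the commutator actions generated by $H^{(A)}$ and $H^{(B)}$ separately. Concretely, for the adjoint action of a Hamiltonian $H = \sum_n E_n\op{n}$ on an operator $X$, one has $[H, X] = \sum_{n,m}(E_n - E_m)\op{n}X\op{m}$, so the map $X\mapsto[H,X]$ is diagonal in the ``mode'' decomposition with eigenvalue equal to the frequency. Thus $[H^{(A)}\otimes\id, \rho^{(\omega_A,\omega_B)}] = \omega_A\,\rho^{(\omega_A,\omega_B)}$ and $[\id\otimes H^{(B)}, \rho^{(\omega_A,\omega_B)}] = \omega_B\,\rho^{(\omega_A,\omega_B)}$, viewing the system as embedded (with identity) in the joint system-bath space, where the bath Hamiltonians act on the bath factors.

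First I would fix a SLTO channel $\Lambda^{(AB)}(\cdot) = \Tr_{\mathcal{B}}[U(\gamma^{\mathcal{B}^{(A)}}\otimes\gamma^{\mathcal{B}^{(B)}}\otimes(\cdot))U^\dagger]$ with $U$ obeying~\eqref{SLTO_constr1} and~\eqref{SLTO_constr2}. Since these two constraints are linear in the Hamiltonians and $\beta^{(A)}\neq\beta^{(B)}$, they are equivalent to the pair $[U, H^{(A)}+H^{\mathcal{B}^{(A)}}] = 0$ and $[U, H^{(B)}+H^{\mathcal{B}^{(B)}}] = 0$ separately (the $2\times 2$ coefficient matrix relating $\{H_{\text{tot}}^{(A)},H_{\text{tot}}^{(B)}\}$ to $\{H_{\text{tot}}^{(A)}+H_{\text{tot}}^{(B)}, \beta^{(A)}H_{\text{tot}}^{(A)}+\beta^{(B)}H_{\text{tot}}^{(B)}\}$ is invertible precisely when $\beta^{(A)}\neq\beta^{(B)}$). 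Because the Gibbs states $\gamma^{\mathcal{B}^{(A)}},\gamma^{\mathcal{B}^{(B)}}$ commute with their own bath Hamiltonians, conjugation by $U$ commutes with the adjoint action of $H_{\text{tot}}^{(A)} := H^{(A)}+H^{\mathcal{B}^{(A)}}$ on the whole system-bath operator, and the input $\gamma^{\mathcal{B}^{(A)}}\otimes\gamma^{\mathcal{B}^{(B)}}\otimes\rho^{(\omega_A,\omega_B)}$ is an eigenoperator of this adjoint action with eigenvalue $\omega_A$ (the bath factors contribute zero). Conjugation preserves this eigenvalue, and the partial trace over the baths commutes with the adjoint action of $H^{(A)}$ on the remaining system (since $\Tr_{\mathcal{B}}[[H^{\mathcal{B}},Y]] = 0$ and $\Tr_{\mathcal{B}}$ is $H^{(A)}$-covariant); hence $\Lambda^{(AB)}(\rho^{(\omega_A,\omega_B)})$ is again an eigenoperator of $[H^{(A)}\otimes\id,\cdot]$ with eigenvalue $\omega_A$, and likewise eigenvalue $\omega_B$ for $[\id\otimes H^{(B)},\cdot]$. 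Since every operator decomposes uniquely into these joint $(\omega_A,\omega_B)$-sectors, we conclude $\Lambda^{(AB)}(\rho^{(\omega_A,\omega_B)}) = \sigma^{(\omega_A,\omega_B)}$ by linearity of $\Lambda^{(AB)}$ and uniqueness of the decomposition.

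For the equal-temperature case $\beta^{(A)}=\beta^{(B)}$, only the sum $[U, H^{(A)}+H^{(B)}+H^{\mathcal{B}^{(A)}}+H^{\mathcal{B}^{(B)}}] = 0$ is guaranteed, so one can only track the \emph{total} frequency $\omega = \omega_A+\omega_B$; repeating the identical argument with the single Hamiltonian $H^{(A)}+H^{(B)}$ (and its bath counterpart) shows $\Lambda^{(AB)}(\rho^{(\omega)}) = \sigma^{(\omega)}$. The main obstacle I anticipate is the bookkeeping around the bath: one must carefully justify that $\Tr_{\mathcal{B}}$ intertwines the system-plus-bath adjoint action of $H_{\text{tot}}^{(A)}$ with the system-only adjoint action of $H^{(A)}$ — this uses that $\gamma^{\mathcal{B}}$ is stationary under its bath Hamiltonian and the cyclicity of the partial trace — and, for the $\beta^{(A)}\neq\beta^{(B)}$ step, that the linear-independence of the two constraint vectors really does split the single commutation condition into two. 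Neither is deep, but both deserve an explicit line.
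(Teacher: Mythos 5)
Your proposal is correct, and it rests on the same two ingredients as the paper's proof -- the SLTO constraints \eqref{SLTO_constr} and the fact that modes of coherence are eigenoperators of the relevant adjoint actions -- but it packages them differently. The paper keeps the two constraints joint: it derives the two finite-time covariance relations \eqref{SLTO_Ttranslation} (ordinary and $\beta$-weighted time translation), notes via \eqref{modes_translation} that a mode $\rho^{(\omega_A,\omega_B)}$ picks up the phases $e^{it(\omega_A+\omega_B)}$ and $e^{it(\beta^{(A)}\omega_A+\beta^{(B)}\omega_B)}$, and concludes because for $\beta^{(A)}\neq\beta^{(B)}$ the pair of phase labels determines $(\omega_A,\omega_B)$. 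You instead invert the linear system at the level of generators, turning \eqref{SLTO_constr1}--\eqref{SLTO_constr2} into strict local conservation $[U,H^{(A)}+H^{\mathcal{B}^{(A)}}]=0$ and $[U,H^{(B)}+H^{\mathcal{B}^{(B)}}]=0$, and then run a single covariance argument per subsystem in infinitesimal (adjoint-action eigenoperator) form, with the partial-trace intertwining step you correctly flag and justify via $\Tr_{\mathcal{B}}\comm{H^{\mathcal{B}}}{Y}=0$ and $\comm{H^{\mathcal{B}}}{\gamma^{\mathcal{B}}}=0$. The invertibility of the same $2\times2$ matrix (determinant $\beta^{(B)}-\beta^{(A)}$) is the crux in both arguments, just deployed at different points; your decoupled version arguably makes the physical content (separate energy conservation on each system--bath pair when the temperatures differ) more transparent, while the paper's phase-label version generalizes verbatim to the equal-temperature statement, which you also handle correctly by retreating to the total frequency $\omega=\omega_A+\omega_B$.
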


    \begin{proof}
    Let us start with the case $\beta^{(A)}\neq\beta^{(B)}$.
    From the defining condition of SLTO \eqref{SLTO_constr}, one can easily deduce time translational symmetry and weighted time translational symmetry of SLTO transformations
    \begin{equation}
    \label{SLTO_Ttranslation}
    \begin{aligned}
    &\Lambda^{(AB)}\left(e^{i t(H^{(A)} + H^{(B)})}\rho^{(AB)}e^{- i t(H^{(A)} + H^{(B)}) } \right) = e^{i t(H^{(A)} + H^{(B)})}\Lambda^{(AB)}\left(\rho^{(AB)} \right)e^{- i t(H^{(A)} + H^{(B)}) }~, \\      &\Lambda^{(AB)}\left(e^{i t(\beta^{(A)}H^{(A)} + \beta^{(B)}H^{(B)})}\rho^{(AB)}e^{- i t(\beta^{(A)}H^{(A)} + \beta^{(B)}H^{(B)}) } \right) = e^{i t(\beta^{(A)}H^{(A)} + \beta^{(B)}H^{(B)})}\Lambda^{(AB)}\left(\rho^{(AB)} \right)e^{- i t(\beta^{(A)}H^{(A)} + \beta^{(B)}H^{(B)}) }~. \\  
    \end{aligned}
    \end{equation}
    On the other hand, the time translation for modes of coherence is given by
    \begin{equation}
    \label{modes_translation}
    \begin{aligned}
    & e^{i t(H^{(A)} + H^{(B)})}\rho^{(\omega_A,\omega_B)}e^{- i t(H^{(A)} + H^{(B)})  }= e^{i t (\omega_A +\omega_B)}\rho^{(\omega_A,\omega_B)}, \\
    &e^{i t(\beta^{(A)}H^{(A)} + \beta^{(B)}H^{(B)})}\rho^{(\omega_A,\omega_B)}e^{- i t(\beta^{(A)}H^{(A)} + \beta^{(B)}H^{(B)}) } =  e^{i t (\beta^{(A)}\omega_A +\beta^{(B)}\omega_B)}\rho^{(\omega_A,\omega_B)}. 
    \end{aligned}
    \end{equation}
    Therefore in order to make translation symmetries \eqref{SLTO_Ttranslation} compatible with decomposition into modes of coherence \eqref{modes_translation} for any $t$, one must have
    \begin{equation*}
    \Lambda^{(AB)}(\rho^{(\omega_A,\omega_B)}) = \sigma^{(\omega_A,\omega_B)}\;\; \forall_{\omega_A,\omega_B}.
    \end{equation*}
    
    When the temperatures are the same $\beta^{(A)} = \beta^{(B)}$, both equations in \eqref{SLTO_Ttranslation} coincide, since there is no distinction between Alice and Bob's energy difference. However, since we are interested in the joint energy difference $\omega = \omega_A + \omega_B$, it suffices to repeat the above reasoning. 
    \end{proof}

    A similar property can be derived for the LTOCC($+$M) scenario as well.

    \begin{lem}\label{lem:LTOCC_coch_mod}
        Modes of coherence evolve independently under LTOCC transformations, ie, for any LTOCC$+M$ protocol denoted as $\Lambda^{(AB)}$ if $\Lambda^{(AB)}(\rho^{(AB)}) = \sigma^{(AB)}$, then
        \begin{equation}
            \Lambda^{(AB)}(\rho^{(\omega_A,\omega_B)}) = \sigma^{(\omega_A,\omega_B)} \;\; \forall_{\omega_A,\omega_B}.
        \end{equation}
    \end{lem}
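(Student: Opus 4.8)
The plan is to show that every LTOCC($+$M) protocol $\Lambda^{(AB)}$ is \emph{covariant} under the two-parameter group of local time translations
$\mathcal{U}_{t_A,t_B}(X) := (e^{-i t_A H^{(A)}}\otimes e^{-i t_B H^{(B)}})\, X\, (e^{i t_A H^{(A)}}\otimes e^{i t_B H^{(B)}})$, that is $\Lambda^{(AB)}\circ\mathcal{U}_{t_A,t_B} = \mathcal{U}_{t_A,t_B}\circ\Lambda^{(AB)}$ for all $t_A,t_B\in\mathbb{R}$. Once this is in hand the lemma follows at once: by the definition of modes of coherence one has $\mathcal{U}_{t_A,t_B}(\rho^{(\omega_A,\omega_B)}) = e^{-i(t_A\omega_A+t_B\omega_B)}\rho^{(\omega_A,\omega_B)}$, so $\rho^{(\omega_A,\omega_B)}$ is an eigenoperator of $\mathcal{U}_{t_A,t_B}$; covariance then forces $\Lambda^{(AB)}(\rho^{(\omega_A,\omega_B)})$ to be an eigenoperator with the \emph{same} eigenvalue, hence to lie entirely in the $(\omega_A,\omega_B)$ mode, and linearity of $\Lambda^{(AB)}$ together with $\rho^{(AB)}=\sum_{\omega_A,\omega_B}\rho^{(\omega_A,\omega_B)}$ yields $\Lambda^{(AB)}(\rho^{(\omega_A,\omega_B)})=\sigma^{(\omega_A,\omega_B)}$.

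To establish covariance I would decompose an $n$-round LTOCC($+$M) protocol into its elementary building blocks and verify each, using that covariant channels are closed under composition and under tensoring with the identity. Endow every classical memory register $\mathcal{M}_\alpha$ with a trivial, fully degenerate Hamiltonian $H^{\mathcal{M}_\alpha}\propto\id$, so that $\mathcal{U}_{t_A,t_B}$ acts as the identity on all registers. The blocks are: (i) the energy-incoherent projective measurement with recording, $\rho\mapsto\sum_k (P^{(A)}_k\otimes\id_B)\rho(P^{(A)}_k\otimes\id_B)\otimes\op{k}_{\mathcal{M}}$, which is covariant because the $P^{(A)}_k$ are spectral projectors of $H^{(A)}$ and hence commute with $e^{-it_A H^{(A)}}$, while the written state $\op{k}$ is inert under the trivial memory Hamiltonian; (ii) the classically-controlled local thermal operation, $\mu_{\mathcal{M}}\otimes\tau^{(AB)}\mapsto\sum_{\vec m}\bra{\vec m}\mu\ket{\vec m}\,\op{\vec m}_{\mathcal{M}}\otimes(\Lambda^{(A)}_{\vec m}\otimes\id_B)(\tau^{(AB)})$ and its mirror on Bob's side, which is covariant because each $\Lambda^{(A)}_{\vec m}$ is a thermal operation and therefore time-translation covariant with respect to $H^{(A)}$ (a standard consequence of $[U,H_S+H_E]=0$ and $[\gamma_E,H_E]=0$), acting trivially on $B$ and on $\mathcal{M}$; and (iii) the partial trace over the memory registers performed at the end, which commutes with $\mathcal{U}_{t_A,t_B}$ since the latter is trivial on $\mathcal{M}$. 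Composing these building blocks gives covariance of $\Lambda^{(AB)}$, and the eigenoperator argument above then closes the proof. As a consistency check, specialising to the no-memory case reproduces the relevant part of Lemma~\ref{lem:SLTO_coch_mod} through the inclusion of Theorem~\ref{thm:subset_of_SLTO}.

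The main obstacle is the memory bookkeeping: one must make sure that \emph{recording} a measurement outcome, later \emph{conditioning} local thermal maps on it, and finally \emph{discarding} the register do not spoil covariance. This is precisely what forces the choice of trivial register Hamiltonians, after which all register manipulations (writing, reading for control, tracing out) commute with $\mathcal{U}_{t_A,t_B}$ trivially; the only genuinely nontrivial ingredients are the time-translation covariance of thermal operations and the fact that the allowed measurements are energy-incoherent, both of which are already available. A secondary point worth stating explicitly is that covariance is required separately in $t_A$ and $t_B$, which is legitimate here because the total Hamiltonian is local, $H=H^{(A)}+H^{(B)}$, and each conditioned thermal operation acts on only one party; this is what upgrades the conclusion from joint modes $\rho^{(\omega)}$ to the finer modes $\rho^{(\omega_A,\omega_B)}$.
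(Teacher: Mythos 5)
Your proof is correct, and it takes a somewhat different, more uniform route than the paper's. The paper splits into cases: for $\beta^{(A)}\neq\beta^{(B)}$ without memory it simply invokes $\text{LTOCC}\subset\text{SLTO}$ and Lemma~\ref{lem:SLTO_coch_mod}; for the general case (memory, or equal temperatures) it decomposes the protocol into energy-incoherent measurements and conditioned local thermal operations and argues verbally that the former do not mix modes while the latter, being local, cannot change $\omega_A$ or $\omega_B$ individually. You instead prove a single statement -- covariance of every building block (measurement-with-recording, classically controlled local thermal maps, memory discard) under the two-parameter group $\mathcal{U}_{t_A,t_B}$, with trivial Hamiltonians on the classical registers -- and then conclude via the eigenoperator/character argument; this is the same symmetry mechanism that underlies the paper's SLTO lemma, but you apply it directly to the full LTOCC$+$M circuit rather than routing part of the argument through the SLTO inclusion. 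What your version buys is an explicit, case-free treatment of the memory bookkeeping and of the equal-temperature situation (where the SLTO lemma alone only controls the joint modes $\rho^{(\omega)}$), making precise the paper's remark that locality of each conditioned operation is what upgrades the conclusion to the finer modes $\rho^{(\omega_A,\omega_B)}$; the paper's version is shorter where the SLTO inclusion already does the work. One small point to keep in mind when writing it up: for the measurement block the channel changes the output space ($\mathcal{H}_A\otimes\mathcal{H}_B\to\mathcal{H}_A\otimes\mathcal{H}_B\otimes\mathcal{M}$), so covariance should be stated as the intertwining relation $\Phi\circ\mathcal{U}_{t_A,t_B}=(\mathcal{U}_{t_A,t_B}\otimes\mathrm{id}_{\mathcal{M}})\circ\Phi$, which is exactly what your trivial-register-Hamiltonian convention provides.
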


    \begin{proof}
        Let us first consider the more elegant case of $\beta^{(A)} \neq \beta^{(B)}$ and LTOCC operations without memory. Then LTOCC $\subset$ SLTO, hence the claim follows from Lemma \ref{lem:SLTO_coch_mod}.

        In the general case of LTOCC$+$M or $\beta^{(A)} = \beta^{(B)}$, one must decompose operations into measurement and (conditioned) local thermal operations, and consider them separately. First, notice that the action of energy-incoherent projective measurement is equivalent to choosing a block of density matrix corresponding to measured outcome and setting all external matrix elements to zero; 
        thus they do not mix modes of coherence.
        Next, since each applied (conditional) thermal operation belongs to SLTO and is local, it cannot move coherences between different modes of coherence even if $\beta^{(A)} = \beta^{(B)}$. It is so, because in order to stay in joint mode of coherence, one must conserve $\omega = \omega_A +\omega_B$ and only one of summands could be affected by a local operation, thus neither $\omega_A$ nor $\omega_B$ can be changed.
    \end{proof}

    \CHSHLTOCCbound*    

    \begin{proof}
    To see how the above lemma affects the entanglement detection using LTOCC, notice that by using LTO we already fully leveraged entanglement within energy-degenerated subspaces. The only way to raise the bound is to use entanglement-bearing coherences between subspaces of different energies, which one must move into degenerated subspaces in order to apply non-diagonal measurements.

    Since entanglement-carrying coherences of interest are between different degeneracies, they belong to modes $\rho^{(\omega_A,\omega_B)}$ with $\omega_A \neq 0$ or $\omega_B \neq 0$, whereas the coherences between degeneracies belong to mode $\rho^{(0,0)}$. However, due to Lemma \ref{lem:LTOCC_coch_mod}, one cannot transform them in the desired way to raise the bound.
    \end{proof}

    \section{Approximating logic gates using LTOCC}\label{app:subsec_logic_gates}

    In this final Appendix, we provide examples of LTOCC maps without memory, which approximate standard logic operations on bits: CNOT and SWAP (between subsystems).
    Since both of those operations are nonlocal, they cannot be constructed from just local thermal operations. On the other hand, both of those operations have a straightforward construction using LOCC, which we present.

    To compare the original gate $M$ with the thermal counterpart $M^{(\beta)}$ we treat them as a (rescaled) probability distribution and calculate a total variation distance between them $\delta(M, M^{(\beta)}) = \frac{1}{2}\sum_{ijkl} |M_{ij,kl}-M_{ij,kl}^{(\beta)}|$~\cite{levin2008markov}. Let us denote Alice's and Bob's inverse temperature as $\beta^{(A)}$ and $\beta^{(B)}$. For convenience let us also define $g_A = e^{- \beta^{(A)} (E_1^{(A)} - E_0^{(A)})}$, $g_B = e^{- \beta^{(B)} (E_1^{(B)} - E_0^{(B)})}$ where $E_{0}^{(A)}, E_{1}^{(A)}$, $E_{0}^{(B)}, E_{1}^{(B)}$ are the energies of ground state and excited states of Alice's and Bob's systems respectively.

    We start with the construction of thermal CNOT by the following scheme. First Alice measures her bit and sets the information to Bob who depending on the outcome performs a (thermal) swap $S^{(\beta_B)}$ between its energy levels. The resulting thermal approximant $CNOT^{(\beta)}$ has the following form:
    \begin{equation*}
        CNOT^{(\beta)}_{A\rightarrow B} = 
        \begin{pmatrix}
        1 & 0 & 0 & 0\\
        0 & 1 & 0 & 0\\
        0 & 0 & 1 - g_B & 1\\
        0 & 0 & g_B & 0
    \end{pmatrix}
    \end{equation*}
    Thus in the limit of large temperature $g_B \to 1$ the thermal approximant $CNOT^{(\beta)}$ coincides with the standard controlled NOT.
    The statistical distance between them is given by 
    \begin{equation*}
        \delta(CNOT,CNOT^{(\beta)}) =  g_B %=  e^{- \beta_B(E_1^B - E_0^B)}
    \end{equation*}

    The construction of thermal swap between subsystem $SWAP^{(\beta)}$, similarly to standard $SWAP$, requires three rounds of alternate thermal CNOT's.
    Following this, one obtains two alternative approximations
    \begin{equation*}
    \begin{aligned}
        SWAP_A^{(\beta)} & =  CNOT^{(\beta)}_{A\rightarrow B}  CNOT^{(\beta)}_{B\rightarrow A}  CNOT^{(\beta)}_{A\rightarrow B}    
          =
        \begin{pmatrix}
         1 & 0 & 0 & 0 \\
         0 & 1-g_A & g_B & 0 \\
         0 & g_A & \left(1-g_B\right){}^2 & 1-g_B \\
         0 & 0 & \left(1-g_B\right) g_B & g_B \\
        \end{pmatrix}  \\
        SWAP_B^{(\beta)} & =  CNOT^{(\beta)}_{B\rightarrow A}  CNOT^{(\beta)}_{A\rightarrow B}  CNOT^{(\beta)}_{B\rightarrow A}   
         =
        \begin{pmatrix}
         1 & 0 & 0 & 0 \\
         0 & \left(1-g_A\right){}^2 & g_B & 1-g_A \\
         0 & g_A & 1-g_B & 0 \\
         0 & \left(1-g_A\right) g_A & 0 & g_A \\
        \end{pmatrix}
    \end{aligned}
    \end{equation*}
    which, in the limit of large temperatures $g_A, g_B \to 1$, both coincide with standard $SWAP$
    \begin{equation*}
        SWAP = 
        \begin{pmatrix}
         1 & 0 & 0 & 0 \\
         0 & 0 & 1 & 0 \\
         0 & 1 & 0 & 0 \\
         0 & 0 & 0 & 1 \\
        \end{pmatrix} ,
    \end{equation*}
    and the statistical distance between them is given by 
    \begin{equation*}
    \begin{aligned}
    & \delta(SWAP,SWAP_A^{(\beta)}) = 3 - g_A - 2g_B ~, \\
    &\delta(SWAP,SWAP_B^{(\beta)}) = 3 - 2g_A - g_B.
    \end{aligned}
    \end{equation*}

    Thus, for any finite temperature, there is an inherent error associated with approximating even basic conditional logic operations when one attempts to implement them with operations preserving the thermal equilibria of the local systems.

    \end{widetext}

\bibliography{biblio}% Produces the bibliography via BibTeX.

%apsrev4-2.bst 2019-01-14 (MD) hand-edited version of apsrev4-1.bst
%Control: key (0)
%Control: author (8) initials jnrlst
%Control: editor formatted (1) identically to author
%Control: production of article title (0) allowed
%Control: page (0) single
%Control: year (1) truncated
%Control: production of eprint (0) enabled
\begin{thebibliography}{79}%
\makeatletter
\providecommand \@ifxundefined [1]{%
 \@ifx{#1\undefined}
}%
\providecommand \@ifnum [1]{%
 \ifnum #1\expandafter \@firstoftwo
 \else \expandafter \@secondoftwo
 \fi
}%
\providecommand \@ifx [1]{%
 \ifx #1\expandafter \@firstoftwo
 \else \expandafter \@secondoftwo
 \fi
}%
\providecommand \natexlab [1]{#1}%
\providecommand \enquote  [1]{``#1''}%
\providecommand \bibnamefont  [1]{#1}%
\providecommand \bibfnamefont [1]{#1}%
\providecommand \citenamefont [1]{#1}%
\providecommand \href@noop [0]{\@secondoftwo}%
\providecommand \href [0]{\begingroup \@sanitize@url \@href}%
\providecommand \@href[1]{\@@startlink{#1}\@@href}%
\providecommand \@@href[1]{\endgroup#1\@@endlink}%
\providecommand \@sanitize@url [0]{\catcode `\\12\catcode `\$12\catcode `\&12\catcode `\#12\catcode `\^12\catcode `\_12\catcode `\%12\relax}%
\providecommand \@@startlink[1]{}%
\providecommand \@@endlink[0]{}%
\providecommand \url  [0]{\begingroup\@sanitize@url \@url }%
\providecommand \@url [1]{\endgroup\@href {#1}{\urlprefix }}%
\providecommand \urlprefix  [0]{URL }%
\providecommand \Eprint [0]{\href }%
\providecommand \doibase [0]{https://doi.org/}%
\providecommand \selectlanguage [0]{\@gobble}%
\providecommand \bibinfo  [0]{\@secondoftwo}%
\providecommand \bibfield  [0]{\@secondoftwo}%
\providecommand \translation [1]{[#1]}%
\providecommand \BibitemOpen [0]{}%
\providecommand \bibitemStop [0]{}%
\providecommand \bibitemNoStop [0]{.\EOS\space}%
\providecommand \EOS [0]{\spacefactor3000\relax}%
\providecommand \BibitemShut  [1]{\csname bibitem#1\endcsname}%
\let\auto@bib@innerbib\@empty
%</preamble>
\bibitem [{\citenamefont {Eddington}(1929)}]{eddington1929nature}%
  \BibitemOpen
  \bibfield  {author} {\bibinfo {author} {\bibfnamefont {A.}~\bibnamefont {Eddington}},\ }\href {https://books.google.co.jp/books?id=JHwGAQAAIAAJ} {\emph {\bibinfo {title} {The Nature of the Physical World}}},\ Gifford lectures\ (\bibinfo  {publisher} {Macmillan},\ \bibinfo {year} {1929})\BibitemShut {NoStop}%
\bibitem [{\citenamefont {Deffner}\ and\ \citenamefont {Campbell}(2019)}]{Big_book1}%
  \BibitemOpen
  \bibfield  {author} {\bibinfo {author} {\bibfnamefont {S.}~\bibnamefont {Deffner}}\ and\ \bibinfo {author} {\bibfnamefont {S.}~\bibnamefont {Campbell}},\ }\href {https://doi.org/10.1088/2053-2571/ab21c6} {\emph {\bibinfo {title} {Quantum Thermodynamics}}},\ 2053-2571\ (\bibinfo  {publisher} {Morgan \& Claypool Publishers},\ \bibinfo {year} {2019})\BibitemShut {NoStop}%
\bibitem [{\citenamefont {Binder}\ \emph {et~al.}(2018)\citenamefont {Binder}, \citenamefont {Correa}, \citenamefont {Gogolin}, \citenamefont {Anders},\ and\ \citenamefont {Adesso}}]{Big_book2}%
  \BibitemOpen
  \bibfield  {author} {\bibinfo {author} {\bibfnamefont {F.}~\bibnamefont {Binder}}, \bibinfo {author} {\bibfnamefont {L.}~\bibnamefont {Correa}}, \bibinfo {author} {\bibfnamefont {C.}~\bibnamefont {Gogolin}}, \bibinfo {author} {\bibfnamefont {J.}~\bibnamefont {Anders}},\ and\ \bibinfo {author} {\bibfnamefont {G.}~\bibnamefont {Adesso}},\ }\href {https://doi.org/10.1007/978-3-319-99046-0} {\emph {\bibinfo {title} {Thermodynamics in the Quantum Regime Fundamental Aspects and New Directions: Fundamental Aspects and New Directions}}}\ (\bibinfo  {publisher} {Springer Cham},\ \bibinfo {year} {2018})\BibitemShut {NoStop}%
\bibitem [{\citenamefont {Vinjanampathy}\ and\ \citenamefont {Anders}(2016)}]{vinjanampathy2016quantum}%
  \BibitemOpen
  \bibfield  {author} {\bibinfo {author} {\bibfnamefont {S.}~\bibnamefont {Vinjanampathy}}\ and\ \bibinfo {author} {\bibfnamefont {J.}~\bibnamefont {Anders}},\ }\bibfield  {title} {\bibinfo {title} {Quantum thermodynamics},\ }\href {https://doi.org/10.1080/00107514.2016.1201896} {\bibfield  {journal} {\bibinfo  {journal} {Contemporary Physics}\ }\textbf {\bibinfo {volume} {57}},\ \bibinfo {pages} {545} (\bibinfo {year} {2016})}\BibitemShut {NoStop}%
\bibitem [{\citenamefont {Alicki}\ and\ \citenamefont {Fannes}(2013)}]{Alicki2013entanglement}%
  \BibitemOpen
  \bibfield  {author} {\bibinfo {author} {\bibfnamefont {R.}~\bibnamefont {Alicki}}\ and\ \bibinfo {author} {\bibfnamefont {M.}~\bibnamefont {Fannes}},\ }\bibfield  {title} {\bibinfo {title} {Entanglement boost for extractable work from ensembles of quantum batteries},\ }\href {https://doi.org/10.1103/PhysRevE.87.042123} {\bibfield  {journal} {\bibinfo  {journal} {Phys. Rev. E}\ }\textbf {\bibinfo {volume} {87}},\ \bibinfo {pages} {042123} (\bibinfo {year} {2013})}\BibitemShut {NoStop}%
\bibitem [{\citenamefont {Hovhannisyan}\ \emph {et~al.}(2013)\citenamefont {Hovhannisyan}, \citenamefont {Perarnau-Llobet}, \citenamefont {Huber},\ and\ \citenamefont {Ac\'{\i}n}}]{Hovhannisyan2013entanglement}%
  \BibitemOpen
  \bibfield  {author} {\bibinfo {author} {\bibfnamefont {K.~V.}\ \bibnamefont {Hovhannisyan}}, \bibinfo {author} {\bibfnamefont {M.}~\bibnamefont {Perarnau-Llobet}}, \bibinfo {author} {\bibfnamefont {M.}~\bibnamefont {Huber}},\ and\ \bibinfo {author} {\bibfnamefont {A.}~\bibnamefont {Ac\'{\i}n}},\ }\bibfield  {title} {\bibinfo {title} {Entanglement generation is not necessary for optimal work extraction},\ }\href {https://doi.org/10.1103/PhysRevLett.111.240401} {\bibfield  {journal} {\bibinfo  {journal} {Phys. Rev. Lett.}\ }\textbf {\bibinfo {volume} {111}},\ \bibinfo {pages} {240401} (\bibinfo {year} {2013})}\BibitemShut {NoStop}%
\bibitem [{\citenamefont {Binder}\ \emph {et~al.}(2015)\citenamefont {Binder}, \citenamefont {Vinjanampathy}, \citenamefont {Modi},\ and\ \citenamefont {Goold}}]{Binder2015}%
  \BibitemOpen
  \bibfield  {author} {\bibinfo {author} {\bibfnamefont {F.~C.}\ \bibnamefont {Binder}}, \bibinfo {author} {\bibfnamefont {S.}~\bibnamefont {Vinjanampathy}}, \bibinfo {author} {\bibfnamefont {K.}~\bibnamefont {Modi}},\ and\ \bibinfo {author} {\bibfnamefont {J.}~\bibnamefont {Goold}},\ }\bibfield  {title} {\bibinfo {title} {Quantacell: powerful charging of quantum batteries},\ }\href {https://doi.org/10.1088/1367-2630/17/7/075015} {\bibfield  {journal} {\bibinfo  {journal} {New Journal of Physics}\ }\textbf {\bibinfo {volume} {17}},\ \bibinfo {pages} {075015} (\bibinfo {year} {2015})}\BibitemShut {NoStop}%
\bibitem [{\citenamefont {Campaioli}\ \emph {et~al.}(2017)\citenamefont {Campaioli}, \citenamefont {Pollock}, \citenamefont {Binder}, \citenamefont {C\'eleri}, \citenamefont {Goold}, \citenamefont {Vinjanampathy},\ and\ \citenamefont {Modi}}]{campaioli2017enhancing}%
  \BibitemOpen
  \bibfield  {author} {\bibinfo {author} {\bibfnamefont {F.}~\bibnamefont {Campaioli}}, \bibinfo {author} {\bibfnamefont {F.~A.}\ \bibnamefont {Pollock}}, \bibinfo {author} {\bibfnamefont {F.~C.}\ \bibnamefont {Binder}}, \bibinfo {author} {\bibfnamefont {L.}~\bibnamefont {C\'eleri}}, \bibinfo {author} {\bibfnamefont {J.}~\bibnamefont {Goold}}, \bibinfo {author} {\bibfnamefont {S.}~\bibnamefont {Vinjanampathy}},\ and\ \bibinfo {author} {\bibfnamefont {K.}~\bibnamefont {Modi}},\ }\bibfield  {title} {\bibinfo {title} {Enhancing the charging power of quantum batteries},\ }\href {https://doi.org/10.1103/PhysRevLett.118.150601} {\bibfield  {journal} {\bibinfo  {journal} {Phys. Rev. Lett.}\ }\textbf {\bibinfo {volume} {118}},\ \bibinfo {pages} {150601} (\bibinfo {year} {2017})}\BibitemShut {NoStop}%
\bibitem [{\citenamefont {Mitchison}\ \emph {et~al.}(2021)\citenamefont {Mitchison}, \citenamefont {Goold},\ and\ \citenamefont {Prior}}]{Mitchison2021chargingquantum}%
  \BibitemOpen
  \bibfield  {author} {\bibinfo {author} {\bibfnamefont {M.~T.}\ \bibnamefont {Mitchison}}, \bibinfo {author} {\bibfnamefont {J.}~\bibnamefont {Goold}},\ and\ \bibinfo {author} {\bibfnamefont {J.}~\bibnamefont {Prior}},\ }\bibfield  {title} {\bibinfo {title} {Charging a quantum battery with linear feedback control},\ }\href {https://doi.org/10.22331/q-2021-07-13-500} {\bibfield  {journal} {\bibinfo  {journal} {{Quantum}}\ }\textbf {\bibinfo {volume} {5}},\ \bibinfo {pages} {500} (\bibinfo {year} {2021})}\BibitemShut {NoStop}%
\bibitem [{\citenamefont {Quach}\ \emph {et~al.}(2023)\citenamefont {Quach}, \citenamefont {Cerullo},\ and\ \citenamefont {Virgili}}]{QUACH2023quantum}%
  \BibitemOpen
  \bibfield  {author} {\bibinfo {author} {\bibfnamefont {J.}~\bibnamefont {Quach}}, \bibinfo {author} {\bibfnamefont {G.}~\bibnamefont {Cerullo}},\ and\ \bibinfo {author} {\bibfnamefont {T.}~\bibnamefont {Virgili}},\ }\bibfield  {title} {\bibinfo {title} {Quantum batteries: The future of energy storage?},\ }\href {https://doi.org/https://doi.org/10.1016/j.joule.2023.09.003} {\bibfield  {journal} {\bibinfo  {journal} {Joule}\ }\textbf {\bibinfo {volume} {7}},\ \bibinfo {pages} {2195} (\bibinfo {year} {2023})}\BibitemShut {NoStop}%
\bibitem [{\citenamefont {Campaioli}\ \emph {et~al.}(2024)\citenamefont {Campaioli}, \citenamefont {Gherardini}, \citenamefont {Quach}, \citenamefont {Polini},\ and\ \citenamefont {Andolina}}]{campaioli2024colloquium}%
  \BibitemOpen
  \bibfield  {author} {\bibinfo {author} {\bibfnamefont {F.}~\bibnamefont {Campaioli}}, \bibinfo {author} {\bibfnamefont {S.}~\bibnamefont {Gherardini}}, \bibinfo {author} {\bibfnamefont {J.~Q.}\ \bibnamefont {Quach}}, \bibinfo {author} {\bibfnamefont {M.}~\bibnamefont {Polini}},\ and\ \bibinfo {author} {\bibfnamefont {G.~M.}\ \bibnamefont {Andolina}},\ }\bibfield  {title} {\bibinfo {title} {Colloquium: Quantum batteries},\ }\href {https://doi.org/10.1103/RevModPhys.96.031001} {\bibfield  {journal} {\bibinfo  {journal} {Rev. Mod. Phys.}\ }\textbf {\bibinfo {volume} {96}},\ \bibinfo {pages} {031001} (\bibinfo {year} {2024})}\BibitemShut {NoStop}%
\bibitem [{\citenamefont {Morrone}\ \emph {et~al.}(2023)\citenamefont {Morrone}, \citenamefont {Rossi}, \citenamefont {Smirne},\ and\ \citenamefont {Genoni}}]{Morrone2023}%
  \BibitemOpen
  \bibfield  {author} {\bibinfo {author} {\bibfnamefont {D.}~\bibnamefont {Morrone}}, \bibinfo {author} {\bibfnamefont {M.~A.~C.}\ \bibnamefont {Rossi}}, \bibinfo {author} {\bibfnamefont {A.}~\bibnamefont {Smirne}},\ and\ \bibinfo {author} {\bibfnamefont {M.~G.}\ \bibnamefont {Genoni}},\ }\bibfield  {title} {\bibinfo {title} {Charging a quantum battery in a non-markovian environment: a collisional model approach},\ }\href {https://doi.org/10.1088/2058-9565/accca4} {\bibfield  {journal} {\bibinfo  {journal} {Quantum Science and Technology}\ }\textbf {\bibinfo {volume} {8}},\ \bibinfo {pages} {035007} (\bibinfo {year} {2023})}\BibitemShut {NoStop}%
\bibitem [{\citenamefont {Razzoli}\ \emph {et~al.}(2024)\citenamefont {Razzoli}, \citenamefont {Gemme}, \citenamefont {Khomchenko}, \citenamefont {Sassetti}, \citenamefont {Ouerdane}, \citenamefont {Ferraro},\ and\ \citenamefont {Benenti}}]{razzoli2024cyclic}%
  \BibitemOpen
  \bibfield  {author} {\bibinfo {author} {\bibfnamefont {L.}~\bibnamefont {Razzoli}}, \bibinfo {author} {\bibfnamefont {G.}~\bibnamefont {Gemme}}, \bibinfo {author} {\bibfnamefont {I.}~\bibnamefont {Khomchenko}}, \bibinfo {author} {\bibfnamefont {M.}~\bibnamefont {Sassetti}}, \bibinfo {author} {\bibfnamefont {H.}~\bibnamefont {Ouerdane}}, \bibinfo {author} {\bibfnamefont {D.}~\bibnamefont {Ferraro}},\ and\ \bibinfo {author} {\bibfnamefont {G.}~\bibnamefont {Benenti}},\ }\bibfield  {title} {\bibinfo {title} {Cyclic solid-state quantum battery: Thermodynamic characterization and quantum hardware simulation},\ }\href {https://arxiv.org/abs/2407.07157} {\bibfield  {journal} {\bibinfo  {journal} {arXiv preprint arXiv: 2407.07157}\ } (\bibinfo {year} {2024})}\BibitemShut {NoStop}%
\bibitem [{\citenamefont {Boykin}\ \emph {et~al.}(2002)\citenamefont {Boykin}, \citenamefont {Mor}, \citenamefont {Roychowdhury}, \citenamefont {Vatan},\ and\ \citenamefont {Vrijen}}]{Boykin2002algorithmic}%
  \BibitemOpen
  \bibfield  {author} {\bibinfo {author} {\bibfnamefont {P.~O.}\ \bibnamefont {Boykin}}, \bibinfo {author} {\bibfnamefont {T.}~\bibnamefont {Mor}}, \bibinfo {author} {\bibfnamefont {V.}~\bibnamefont {Roychowdhury}}, \bibinfo {author} {\bibfnamefont {F.}~\bibnamefont {Vatan}},\ and\ \bibinfo {author} {\bibfnamefont {R.}~\bibnamefont {Vrijen}},\ }\bibfield  {title} {\bibinfo {title} {Algorithmic cooling and scalable {NMR} quantum computers},\ }\href {https://doi.org/10.1073/pnas.241641898} {\bibfield  {journal} {\bibinfo  {journal} {Proceedings of the National Academy of Sciences}\ }\textbf {\bibinfo {volume} {99}},\ \bibinfo {pages} {3388–3393} (\bibinfo {year} {2002})}\BibitemShut {NoStop}%
\bibitem [{\citenamefont {Rodríguez-Briones}\ \emph {et~al.}(2017)\citenamefont {Rodríguez-Briones}, \citenamefont {Li}, \citenamefont {Peng}, \citenamefont {Mor}, \citenamefont {Weinstein},\ and\ \citenamefont {Laflamme}}]{RodriguezBriones2017}%
  \BibitemOpen
  \bibfield  {author} {\bibinfo {author} {\bibfnamefont {N.~A.}\ \bibnamefont {Rodríguez-Briones}}, \bibinfo {author} {\bibfnamefont {J.}~\bibnamefont {Li}}, \bibinfo {author} {\bibfnamefont {X.}~\bibnamefont {Peng}}, \bibinfo {author} {\bibfnamefont {T.}~\bibnamefont {Mor}}, \bibinfo {author} {\bibfnamefont {Y.}~\bibnamefont {Weinstein}},\ and\ \bibinfo {author} {\bibfnamefont {R.}~\bibnamefont {Laflamme}},\ }\bibfield  {title} {\bibinfo {title} {Heat-bath algorithmic cooling with correlated qubit-environment interactions},\ }\href {https://doi.org/10.1088/1367-2630/aa8fe0} {\bibfield  {journal} {\bibinfo  {journal} {New Journal of Physics}\ }\textbf {\bibinfo {volume} {19}},\ \bibinfo {pages} {113047} (\bibinfo {year} {2017})}\BibitemShut {NoStop}%
\bibitem [{\citenamefont {Baugh}\ \emph {et~al.}(2005)\citenamefont {Baugh}, \citenamefont {Moussa}, \citenamefont {Ryan}, \citenamefont {Nayak},\ and\ \citenamefont {Laflamme}}]{Baugh2005}%
  \BibitemOpen
  \bibfield  {author} {\bibinfo {author} {\bibfnamefont {J.}~\bibnamefont {Baugh}}, \bibinfo {author} {\bibfnamefont {O.}~\bibnamefont {Moussa}}, \bibinfo {author} {\bibfnamefont {C.~A.}\ \bibnamefont {Ryan}}, \bibinfo {author} {\bibfnamefont {A.}~\bibnamefont {Nayak}},\ and\ \bibinfo {author} {\bibfnamefont {R.}~\bibnamefont {Laflamme}},\ }\bibfield  {title} {\bibinfo {title} {Experimental implementation of heat-bath algorithmic cooling using solid-state nuclear magnetic resonance},\ }\href {https://doi.org/10.1038/nature04272} {\bibfield  {journal} {\bibinfo  {journal} {Nature}\ }\textbf {\bibinfo {volume} {438}},\ \bibinfo {pages} {470} (\bibinfo {year} {2005})}\BibitemShut {NoStop}%
\bibitem [{\citenamefont {Brassard}\ \emph {et~al.}(2014)\citenamefont {Brassard}, \citenamefont {Elias}, \citenamefont {Fernandez}, \citenamefont {Gilboa}, \citenamefont {Jones}, \citenamefont {Mor}, \citenamefont {Weinstein},\ and\ \citenamefont {Xiao}}]{Brassard2014}%
  \BibitemOpen
  \bibfield  {author} {\bibinfo {author} {\bibfnamefont {G.}~\bibnamefont {Brassard}}, \bibinfo {author} {\bibfnamefont {Y.}~\bibnamefont {Elias}}, \bibinfo {author} {\bibfnamefont {J.~M.}\ \bibnamefont {Fernandez}}, \bibinfo {author} {\bibfnamefont {H.}~\bibnamefont {Gilboa}}, \bibinfo {author} {\bibfnamefont {J.~A.}\ \bibnamefont {Jones}}, \bibinfo {author} {\bibfnamefont {T.}~\bibnamefont {Mor}}, \bibinfo {author} {\bibfnamefont {Y.}~\bibnamefont {Weinstein}},\ and\ \bibinfo {author} {\bibfnamefont {L.}~\bibnamefont {Xiao}},\ }\bibfield  {title} {\bibinfo {title} {Experimental heat-bath cooling of spins},\ }\href {https://doi.org/10.1140/epjp/i2014-14266-0} {\bibfield  {journal} {\bibinfo  {journal} {The European Physical Journal Plus}\ }\textbf {\bibinfo {volume} {129}},\ \bibinfo {pages} {266} (\bibinfo {year} {2014})}\BibitemShut {NoStop}%
\bibitem [{\citenamefont {Elias}\ \emph {et~al.}(2011)\citenamefont {Elias}, \citenamefont {Mor},\ and\ \citenamefont {Weinstein}}]{Elias2011semioptimal}%
  \BibitemOpen
  \bibfield  {author} {\bibinfo {author} {\bibfnamefont {Y.}~\bibnamefont {Elias}}, \bibinfo {author} {\bibfnamefont {T.}~\bibnamefont {Mor}},\ and\ \bibinfo {author} {\bibfnamefont {Y.}~\bibnamefont {Weinstein}},\ }\bibfield  {title} {\bibinfo {title} {Semioptimal practicable algorithmic cooling},\ }\href {https://doi.org/10.1103/PhysRevA.83.042340} {\bibfield  {journal} {\bibinfo  {journal} {Phys. Rev. A}\ }\textbf {\bibinfo {volume} {83}},\ \bibinfo {pages} {042340} (\bibinfo {year} {2011})}\BibitemShut {NoStop}%
\bibitem [{\citenamefont {Zeng}\ \emph {et~al.}(2021)\citenamefont {Zeng}, \citenamefont {Sun},\ and\ \citenamefont {Yuan}}]{zeng2021universal}%
  \BibitemOpen
  \bibfield  {author} {\bibinfo {author} {\bibfnamefont {P.}~\bibnamefont {Zeng}}, \bibinfo {author} {\bibfnamefont {J.}~\bibnamefont {Sun}},\ and\ \bibinfo {author} {\bibfnamefont {X.}~\bibnamefont {Yuan}},\ }\bibfield  {title} {\bibinfo {title} {Universal quantum algorithmic cooling on a quantum computer},\ }\href {arxiv.org/abs/2109.15304} {\bibfield  {journal} {\bibinfo  {journal} {arXiv preprint arXiv: 2109.15304}\ } (\bibinfo {year} {2021})}\BibitemShut {NoStop}%
\bibitem [{\citenamefont {Laflamme}\ \emph {et~al.}(2022)\citenamefont {Laflamme}, \citenamefont {Lin},\ and\ \citenamefont {Mor}}]{laflamme2022algorithmic}%
  \BibitemOpen
  \bibfield  {author} {\bibinfo {author} {\bibfnamefont {R.}~\bibnamefont {Laflamme}}, \bibinfo {author} {\bibfnamefont {J.}~\bibnamefont {Lin}},\ and\ \bibinfo {author} {\bibfnamefont {T.}~\bibnamefont {Mor}},\ }\bibfield  {title} {\bibinfo {title} {Algorithmic cooling for resolving state preparation and measurement errors in quantum computing},\ }\href {https://doi.org/10.1103/PhysRevA.106.012439} {\bibfield  {journal} {\bibinfo  {journal} {Phys. Rev. A}\ }\textbf {\bibinfo {volume} {106}},\ \bibinfo {pages} {012439} (\bibinfo {year} {2022})}\BibitemShut {NoStop}%
\bibitem [{\citenamefont {Linden}\ \emph {et~al.}(2010)\citenamefont {Linden}, \citenamefont {Popescu},\ and\ \citenamefont {Skrzypczyk}}]{linden2010howSmall}%
  \BibitemOpen
  \bibfield  {author} {\bibinfo {author} {\bibfnamefont {N.}~\bibnamefont {Linden}}, \bibinfo {author} {\bibfnamefont {S.}~\bibnamefont {Popescu}},\ and\ \bibinfo {author} {\bibfnamefont {P.}~\bibnamefont {Skrzypczyk}},\ }\bibfield  {title} {\bibinfo {title} {How small can thermal machines be? {The} smallest possible refrigerator},\ }\href {https://doi.org/10.1103/PhysRevLett.105.130401} {\bibfield  {journal} {\bibinfo  {journal} {Phys. Rev. Lett.}\ }\textbf {\bibinfo {volume} {105}},\ \bibinfo {pages} {130401} (\bibinfo {year} {2010})}\BibitemShut {NoStop}%
\bibitem [{\citenamefont {Brunner}\ \emph {et~al.}(2012)\citenamefont {Brunner}, \citenamefont {Linden}, \citenamefont {Popescu},\ and\ \citenamefont {Skrzypczyk}}]{brunner2012virtual}%
  \BibitemOpen
  \bibfield  {author} {\bibinfo {author} {\bibfnamefont {N.}~\bibnamefont {Brunner}}, \bibinfo {author} {\bibfnamefont {N.}~\bibnamefont {Linden}}, \bibinfo {author} {\bibfnamefont {S.}~\bibnamefont {Popescu}},\ and\ \bibinfo {author} {\bibfnamefont {P.}~\bibnamefont {Skrzypczyk}},\ }\bibfield  {title} {\bibinfo {title} {Virtual qubits, virtual temperatures, and the foundations of thermodynamics},\ }\href {https://doi.org/10.1103/PhysRevE.85.051117} {\bibfield  {journal} {\bibinfo  {journal} {Phys. Rev. E}\ }\textbf {\bibinfo {volume} {85}},\ \bibinfo {pages} {051117} (\bibinfo {year} {2012})}\BibitemShut {NoStop}%
\bibitem [{\citenamefont {Lipka-Bartosik}\ \emph {et~al.}(2024)\citenamefont {Lipka-Bartosik}, \citenamefont {Perarnau-Llobet},\ and\ \citenamefont {Brunner}}]{lipka2024thermodynamic}%
  \BibitemOpen
  \bibfield  {author} {\bibinfo {author} {\bibfnamefont {P.}~\bibnamefont {Lipka-Bartosik}}, \bibinfo {author} {\bibfnamefont {M.}~\bibnamefont {Perarnau-Llobet}},\ and\ \bibinfo {author} {\bibfnamefont {N.}~\bibnamefont {Brunner}},\ }\bibfield  {title} {\bibinfo {title} {Thermodynamic computing via autonomous quantum thermal machines},\ }\href {https://doi.org/10.1126/sciadv.adm8792} {\bibfield  {journal} {\bibinfo  {journal} {Science Advances}\ }\textbf {\bibinfo {volume} {10}},\ \bibinfo {pages} {eadm8792} (\bibinfo {year} {2024})}\BibitemShut {NoStop}%
\bibitem [{\citenamefont {Guzmán}\ \emph {et~al.}(2023)\citenamefont {Guzmán}, \citenamefont {Erker}, \citenamefont {Gasparinetti}, \citenamefont {Huber},\ and\ \citenamefont {Halpern}}]{guzman2023useful}%
  \BibitemOpen
  \bibfield  {author} {\bibinfo {author} {\bibfnamefont {J.~A.~M.}\ \bibnamefont {Guzmán}}, \bibinfo {author} {\bibfnamefont {P.}~\bibnamefont {Erker}}, \bibinfo {author} {\bibfnamefont {S.}~\bibnamefont {Gasparinetti}}, \bibinfo {author} {\bibfnamefont {M.}~\bibnamefont {Huber}},\ and\ \bibinfo {author} {\bibfnamefont {N.~Y.}\ \bibnamefont {Halpern}},\ }\bibfield  {title} {\bibinfo {title} {Useful autonomous quantum machines},\ }\href {arxiv.org/abs/2307.08739} {\bibfield  {journal} {\bibinfo  {journal} {arXiv preprint arXiv: 2307.08739}\ } (\bibinfo {year} {2023})}\BibitemShut {NoStop}%
\bibitem [{\citenamefont {Campbell}\ \emph {et~al.}(2026)\citenamefont {Campbell} \emph {et~al.}}]{campbell2025roadmap}%
  \BibitemOpen
  \bibfield  {author} {\bibinfo {author} {\bibfnamefont {S.}~\bibnamefont {Campbell}} \emph {et~al.},\ }\bibfield  {title} {\bibinfo {title} {Roadmap on quantum thermodynamics},\ }\href {https://doi.org/10.1088/2058-9565/ae1e27} {\bibfield  {journal} {\bibinfo  {journal} {Quantum Science and Technology}\ }\textbf {\bibinfo {volume} {11}},\ \bibinfo {pages} {012501} (\bibinfo {year} {2026})}\BibitemShut {NoStop}%
\bibitem [{\citenamefont {Janzing}(2000)}]{Janzing2000}%
  \BibitemOpen
  \bibfield  {author} {\bibinfo {author} {\bibfnamefont {D.}~\bibnamefont {Janzing}},\ }\bibfield  {title} {\bibinfo {title} {{Thermodynamic Cost of Reliability and Low Temperatures: Tightening Landauer's Principle and the Second Law}},\ }\href {https://doi.org/10.1023/a:1026422630734} {\bibfield  {journal} {\bibinfo  {journal} {International Journal of Theoretical Physics}\ }\textbf {\bibinfo {volume} {39}},\ \bibinfo {pages} {2717–2753} (\bibinfo {year} {2000})}\BibitemShut {NoStop}%
\bibitem [{\citenamefont {Horodecki}\ and\ \citenamefont {Oppenheim}(2013{\natexlab{a}})}]{horodecki2013}%
  \BibitemOpen
  \bibfield  {author} {\bibinfo {author} {\bibfnamefont {M.}~\bibnamefont {Horodecki}}\ and\ \bibinfo {author} {\bibfnamefont {J.}~\bibnamefont {Oppenheim}},\ }\bibfield  {title} {\bibinfo {title} {({Q}uantumness in the context of) {R}esource theories},\ }\href {https://doi.org/10.1142/S0217979213450197} {\bibfield  {journal} {\bibinfo  {journal} {International Journal of Modern Physics B}\ }\textbf {\bibinfo {volume} {27}},\ \bibinfo {pages} {1345019} (\bibinfo {year} {2013}{\natexlab{a}})}\BibitemShut {NoStop}%
\bibitem [{\citenamefont {Horodecki}\ and\ \citenamefont {Oppenheim}(2013{\natexlab{b}})}]{horodecki2013fundamental}%
  \BibitemOpen
  \bibfield  {author} {\bibinfo {author} {\bibfnamefont {M.}~\bibnamefont {Horodecki}}\ and\ \bibinfo {author} {\bibfnamefont {J.}~\bibnamefont {Oppenheim}},\ }\bibfield  {title} {\bibinfo {title} {Fundamental limitations for quantum and nanoscale thermodynamics},\ }\href {https://doi.org/10.1038/ncomms3059} {\bibfield  {journal} {\bibinfo  {journal} {Nat. Comm.}\ }\textbf {\bibinfo {volume} {4}},\ \bibinfo {pages} {2059} (\bibinfo {year} {2013}{\natexlab{b}})}\BibitemShut {NoStop}%
\bibitem [{\citenamefont {Brandão}\ \emph {et~al.}(2015{\natexlab{a}})\citenamefont {Brandão}, \citenamefont {Horodecki}, \citenamefont {Ng}, \citenamefont {Oppenheim},\ and\ \citenamefont {Wehner}}]{brandao2015secondlaws}%
  \BibitemOpen
  \bibfield  {author} {\bibinfo {author} {\bibfnamefont {F.}~\bibnamefont {Brandão}}, \bibinfo {author} {\bibfnamefont {M.}~\bibnamefont {Horodecki}}, \bibinfo {author} {\bibfnamefont {N.}~\bibnamefont {Ng}}, \bibinfo {author} {\bibfnamefont {J.}~\bibnamefont {Oppenheim}},\ and\ \bibinfo {author} {\bibfnamefont {S.}~\bibnamefont {Wehner}},\ }\bibfield  {title} {\bibinfo {title} {The second laws of quantum thermodynamics},\ }\href {https://doi.org/10.1073/pnas.1411728112} {\bibfield  {journal} {\bibinfo  {journal} {Proceedings of the National Academy of Sciences}\ }\textbf {\bibinfo {volume} {112}},\ \bibinfo {pages} {3275} (\bibinfo {year} {2015}{\natexlab{a}})}\BibitemShut {NoStop}%
\bibitem [{\citenamefont {Lostaglio}\ \emph {et~al.}(2018)\citenamefont {Lostaglio}, \citenamefont {Alhambda},\ and\ \citenamefont {Perry}}]{Lostaglio_2018}%
  \BibitemOpen
  \bibfield  {author} {\bibinfo {author} {\bibfnamefont {M.}~\bibnamefont {Lostaglio}}, \bibinfo {author} {\bibfnamefont {A.~M.}\ \bibnamefont {Alhambda}},\ and\ \bibinfo {author} {\bibfnamefont {C.}~\bibnamefont {Perry}},\ }\bibfield  {title} {\bibinfo {title} {Elementary thermal operations},\ }\href {https://doi.org/10.22331/q-2018-02-08-52} {\bibfield  {journal} {\bibinfo  {journal} {Quantum}\ }\textbf {\bibinfo {volume} {2}},\ \bibinfo {pages} {52} (\bibinfo {year} {2018})}\BibitemShut {NoStop}%
\bibitem [{\citenamefont {Junior}\ \emph {et~al.}(2022)\citenamefont {Junior}, \citenamefont {Czartowski}, \citenamefont {\ifmmode~\dot{Z}\else \.{Z}\fi{}yczkowski},\ and\ \citenamefont {Korzekwa}}]{deoliveira2022geometric}%
  \BibitemOpen
  \bibfield  {author} {\bibinfo {author} {\bibfnamefont {A.~d.~O.}\ \bibnamefont {Junior}}, \bibinfo {author} {\bibfnamefont {J.}~\bibnamefont {Czartowski}}, \bibinfo {author} {\bibfnamefont {K.}~\bibnamefont {\ifmmode~\dot{Z}\else \.{Z}\fi{}yczkowski}},\ and\ \bibinfo {author} {\bibfnamefont {K.}~\bibnamefont {Korzekwa}},\ }\bibfield  {title} {\bibinfo {title} {Geometric structure of thermal cones},\ }\href {https://doi.org/10.1103/PhysRevE.106.064109} {\bibfield  {journal} {\bibinfo  {journal} {Phys. Rev. E}\ }\textbf {\bibinfo {volume} {106}},\ \bibinfo {pages} {064109} (\bibinfo {year} {2022})}\BibitemShut {NoStop}%
\bibitem [{\citenamefont {Czartowski}\ \emph {et~al.}(2023)\citenamefont {Czartowski}, \citenamefont {de~Oliveira~Junior},\ and\ \citenamefont {Korzekwa}}]{czartowski2023thermalrecall}%
  \BibitemOpen
  \bibfield  {author} {\bibinfo {author} {\bibfnamefont {J.}~\bibnamefont {Czartowski}}, \bibinfo {author} {\bibfnamefont {A.}~\bibnamefont {de~Oliveira~Junior}},\ and\ \bibinfo {author} {\bibfnamefont {K.}~\bibnamefont {Korzekwa}},\ }\bibfield  {title} {\bibinfo {title} {Thermal recall: Memory-assisted markovian thermal processes},\ }\href {https://doi.org/10.1103/PRXQuantum.4.040304} {\bibfield  {journal} {\bibinfo  {journal} {PRX Quantum}\ }\textbf {\bibinfo {volume} {4}},\ \bibinfo {pages} {040304} (\bibinfo {year} {2023})}\BibitemShut {NoStop}%
\bibitem [{\citenamefont {Czartowski}\ and\ \citenamefont {de~Oliveira~Junior}(2024)}]{czartowski2024catalyse}%
  \BibitemOpen
  \bibfield  {author} {\bibinfo {author} {\bibfnamefont {J.}~\bibnamefont {Czartowski}}\ and\ \bibinfo {author} {\bibfnamefont {A.}~\bibnamefont {de~Oliveira~Junior}},\ }\bibfield  {title} {\bibinfo {title} {Catalytic transformatio ns for thermal operations},\ }\href {https://doi.org/10.1103/PhysRevResearch.6.033203} {\bibfield  {journal} {\bibinfo  {journal} {Phys. Rev. Res.}\ }\textbf {\bibinfo {volume} {6}},\ \bibinfo {pages} {033203} (\bibinfo {year} {2024})}\BibitemShut {NoStop}%
\bibitem [{\citenamefont {Bera}\ \emph {et~al.}(2021)\citenamefont {Bera}, \citenamefont {Lewenstein},\ and\ \citenamefont {Bera}}]{Bera2021}%
  \BibitemOpen
  \bibfield  {author} {\bibinfo {author} {\bibfnamefont {M.~L.}\ \bibnamefont {Bera}}, \bibinfo {author} {\bibfnamefont {M.}~\bibnamefont {Lewenstein}},\ and\ \bibinfo {author} {\bibfnamefont {M.~N.}\ \bibnamefont {Bera}},\ }\bibfield  {title} {\bibinfo {title} {Attaining {Carnot} efficiency with quantum and nanoscale heat engines},\ }\bibfield  {journal} {\bibinfo  {journal} {npj Quantum Information}\ }\textbf {\bibinfo {volume} {7}},\ \href {https://doi.org/10.1038/s41534-021-00366-6} {10.1038/s41534-021-00366-6} (\bibinfo {year} {2021})\BibitemShut {NoStop}%
\bibitem [{\citenamefont {Bera}\ \emph {et~al.}(2022)\citenamefont {Bera}, \citenamefont {Juli\`a-Farr\'e}, \citenamefont {Lewenstein},\ and\ \citenamefont {Bera}}]{Bera2022}%
  \BibitemOpen
  \bibfield  {author} {\bibinfo {author} {\bibfnamefont {M.~L.}\ \bibnamefont {Bera}}, \bibinfo {author} {\bibfnamefont {S.}~\bibnamefont {Juli\`a-Farr\'e}}, \bibinfo {author} {\bibfnamefont {M.}~\bibnamefont {Lewenstein}},\ and\ \bibinfo {author} {\bibfnamefont {M.~N.}\ \bibnamefont {Bera}},\ }\bibfield  {title} {\bibinfo {title} {Quantum heat engines with carnot efficiency at maximum power},\ }\href {https://doi.org/10.1103/PhysRevResearch.4.013157} {\bibfield  {journal} {\bibinfo  {journal} {Phys. Rev. Res.}\ }\textbf {\bibinfo {volume} {4}},\ \bibinfo {pages} {013157} (\bibinfo {year} {2022})}\BibitemShut {NoStop}%
\bibitem [{\citenamefont {Bennett}\ \emph {et~al.}(1993{\natexlab{a}})\citenamefont {Bennett}, \citenamefont {Brassard}, \citenamefont {Cr\'epeau}, \citenamefont {Jozsa}, \citenamefont {Peres},\ and\ \citenamefont {Wootters}}]{PhysRevLett.70.1895}%
  \BibitemOpen
  \bibfield  {author} {\bibinfo {author} {\bibfnamefont {C.~H.}\ \bibnamefont {Bennett}}, \bibinfo {author} {\bibfnamefont {G.}~\bibnamefont {Brassard}}, \bibinfo {author} {\bibfnamefont {C.}~\bibnamefont {Cr\'epeau}}, \bibinfo {author} {\bibfnamefont {R.}~\bibnamefont {Jozsa}}, \bibinfo {author} {\bibfnamefont {A.}~\bibnamefont {Peres}},\ and\ \bibinfo {author} {\bibfnamefont {W.~K.}\ \bibnamefont {Wootters}},\ }\bibfield  {title} {\bibinfo {title} {{Teleporting an unknown quantum state via dual classical and Einstein-Podolsky-Rosen channels}},\ }\href {https://doi.org/10.1103/PhysRevLett.70.1895} {\bibfield  {journal} {\bibinfo  {journal} {Phys. Rev. Lett.}\ }\textbf {\bibinfo {volume} {70}},\ \bibinfo {pages} {1895} (\bibinfo {year} {1993}{\natexlab{a}})}\BibitemShut {NoStop}%
\bibitem [{\citenamefont {Ekert}(1991)}]{PhysRevLett.67.661}%
  \BibitemOpen
  \bibfield  {author} {\bibinfo {author} {\bibfnamefont {A.~K.}\ \bibnamefont {Ekert}},\ }\bibfield  {title} {\bibinfo {title} {{Quantum cryptography based on Bell's theorem}},\ }\href {https://doi.org/10.1103/PhysRevLett.67.661} {\bibfield  {journal} {\bibinfo  {journal} {Phys. Rev. Lett.}\ }\textbf {\bibinfo {volume} {67}},\ \bibinfo {pages} {661} (\bibinfo {year} {1991})}\BibitemShut {NoStop}%
\bibitem [{\citenamefont {Hillery}\ \emph {et~al.}(1999)\citenamefont {Hillery}, \citenamefont {Bu\ifmmode~\check{z}\else \v{z}\fi{}ek},\ and\ \citenamefont {Berthiaume}}]{PhysRevA.59.1829}%
  \BibitemOpen
  \bibfield  {author} {\bibinfo {author} {\bibfnamefont {M.}~\bibnamefont {Hillery}}, \bibinfo {author} {\bibfnamefont {V.}~\bibnamefont {Bu\ifmmode~\check{z}\else \v{z}\fi{}ek}},\ and\ \bibinfo {author} {\bibfnamefont {A.}~\bibnamefont {Berthiaume}},\ }\bibfield  {title} {\bibinfo {title} {Quantum secret sharing},\ }\href {https://doi.org/10.1103/PhysRevA.59.1829} {\bibfield  {journal} {\bibinfo  {journal} {Phys. Rev. A}\ }\textbf {\bibinfo {volume} {59}},\ \bibinfo {pages} {1829} (\bibinfo {year} {1999})}\BibitemShut {NoStop}%
\bibitem [{\citenamefont {Cavalcanti}\ \emph {et~al.}(2011)\citenamefont {Cavalcanti}, \citenamefont {Almeida}, \citenamefont {Scarani},\ and\ \citenamefont {Acín}}]{Cavalcanti2011}%
  \BibitemOpen
  \bibfield  {author} {\bibinfo {author} {\bibfnamefont {D.}~\bibnamefont {Cavalcanti}}, \bibinfo {author} {\bibfnamefont {M.~L.}\ \bibnamefont {Almeida}}, \bibinfo {author} {\bibfnamefont {V.}~\bibnamefont {Scarani}},\ and\ \bibinfo {author} {\bibfnamefont {A.}~\bibnamefont {Acín}},\ }\bibfield  {title} {\bibinfo {title} {Quantum networks reveal quantum nonlocality},\ }\href {https://doi.org/10.1038/ncomms1193} {\bibfield  {journal} {\bibinfo  {journal} {Nature Communications}\ }\textbf {\bibinfo {volume} {2}},\ \bibinfo {pages} {184} (\bibinfo {year} {2011})}\BibitemShut {NoStop}%
\bibitem [{\citenamefont {Son}\ \emph {et~al.}(2024)\citenamefont {Son}, \citenamefont {Ganardi}, \citenamefont {Minagawa}, \citenamefont {Buscemi}, \citenamefont {Lie},\ and\ \citenamefont {Ng}}]{son2024robust}%
  \BibitemOpen
  \bibfield  {author} {\bibinfo {author} {\bibfnamefont {J.}~\bibnamefont {Son}}, \bibinfo {author} {\bibfnamefont {R.}~\bibnamefont {Ganardi}}, \bibinfo {author} {\bibfnamefont {S.}~\bibnamefont {Minagawa}}, \bibinfo {author} {\bibfnamefont {F.}~\bibnamefont {Buscemi}}, \bibinfo {author} {\bibfnamefont {S.~H.}\ \bibnamefont {Lie}},\ and\ \bibinfo {author} {\bibfnamefont {N.~H.~Y.}\ \bibnamefont {Ng}},\ }\bibfield  {title} {\bibinfo {title} {Robust catalysis and resource broadcasting: The possible and the impossible},\ }\href@noop {} {\bibfield  {journal} {\bibinfo  {journal} {arXiv preprint arXiv: 2412.06900}\ } (\bibinfo {year} {2024})}\BibitemShut {NoStop}%
\bibitem [{\citenamefont {Brandão}\ \emph {et~al.}(2015{\natexlab{b}})\citenamefont {Brandão}, \citenamefont {Horodecki}, \citenamefont {Ng}, \citenamefont {Oppenheim},\ and\ \citenamefont {Wehner}}]{Brando2015}%
  \BibitemOpen
  \bibfield  {author} {\bibinfo {author} {\bibfnamefont {F.}~\bibnamefont {Brandão}}, \bibinfo {author} {\bibfnamefont {M.}~\bibnamefont {Horodecki}}, \bibinfo {author} {\bibfnamefont {N.}~\bibnamefont {Ng}}, \bibinfo {author} {\bibfnamefont {J.}~\bibnamefont {Oppenheim}},\ and\ \bibinfo {author} {\bibfnamefont {S.}~\bibnamefont {Wehner}},\ }\bibfield  {title} {\bibinfo {title} {The second laws of quantum thermodynamics},\ }\href {https://doi.org/10.1073/pnas.1411728112} {\bibfield  {journal} {\bibinfo  {journal} {Proceedings of the National Academy of Sciences}\ }\textbf {\bibinfo {volume} {112}},\ \bibinfo {pages} {3275–3279} (\bibinfo {year} {2015}{\natexlab{b}})}\BibitemShut {NoStop}%
\bibitem [{\citenamefont {Lostaglio}(2019)}]{Lostaglio2019}%
  \BibitemOpen
  \bibfield  {author} {\bibinfo {author} {\bibfnamefont {M.}~\bibnamefont {Lostaglio}},\ }\bibfield  {title} {\bibinfo {title} {An introductory review of the resource theory approach to thermodynamics},\ }\href {https://doi.org/10.1088/1361-6633/ab46e5} {\bibfield  {journal} {\bibinfo  {journal} {Reports on Progress in Physics}\ }\textbf {\bibinfo {volume} {82}},\ \bibinfo {pages} {114001} (\bibinfo {year} {2019})}\BibitemShut {NoStop}%
\bibitem [{\citenamefont {Bhatia}(1996)}]{bhatia1996matrix}%
  \BibitemOpen
  \bibfield  {author} {\bibinfo {author} {\bibfnamefont {R.}~\bibnamefont {Bhatia}},\ }\href {https://books.google.pl/books?id=F4hRy1F1M6QC} {\emph {\bibinfo {title} {Matrix Analysis}}},\ Graduate Texts in Mathematics\ (\bibinfo  {publisher} {Springer New York},\ \bibinfo {year} {1996})\BibitemShut {NoStop}%
\bibitem [{\citenamefont {Korzekwa}(2017)}]{Korzekwa2017}%
  \BibitemOpen
  \bibfield  {author} {\bibinfo {author} {\bibfnamefont {K.}~\bibnamefont {Korzekwa}},\ }\bibfield  {title} {\bibinfo {title} {Structure of the thermodynamic arrow of time in classical and quantum theories},\ }\href {https://doi.org/10.1103/PhysRevA.95.052318} {\bibfield  {journal} {\bibinfo  {journal} {Phys. Rev. A}\ }\textbf {\bibinfo {volume} {95}},\ \bibinfo {pages} {052318} (\bibinfo {year} {2017})}\BibitemShut {NoStop}%
\bibitem [{\citenamefont {Ruch}\ \emph {et~al.}(1978)\citenamefont {Ruch}, \citenamefont {Schranner},\ and\ \citenamefont {Seligman}}]{Rusch1978}%
  \BibitemOpen
  \bibfield  {author} {\bibinfo {author} {\bibfnamefont {E.}~\bibnamefont {Ruch}}, \bibinfo {author} {\bibfnamefont {R.}~\bibnamefont {Schranner}},\ and\ \bibinfo {author} {\bibfnamefont {T.~H.}\ \bibnamefont {Seligman}},\ }\bibfield  {title} {\bibinfo {title} {The mixing distance},\ }\href {https://doi.org/10.1063/1.436364} {\bibfield  {journal} {\bibinfo  {journal} {J. Chem. Phys.}\ }\textbf {\bibinfo {volume} {69}},\ \bibinfo {pages} {386} (\bibinfo {year} {1978})}\BibitemShut {NoStop}%
\bibitem [{\citenamefont {Narasimhachar}\ and\ \citenamefont {Gour}(2015)}]{Low_temerature_cooling_maps}%
  \BibitemOpen
  \bibfield  {author} {\bibinfo {author} {\bibfnamefont {V.}~\bibnamefont {Narasimhachar}}\ and\ \bibinfo {author} {\bibfnamefont {G.}~\bibnamefont {Gour}},\ }\bibfield  {title} {\bibinfo {title} {Low-temperature thermodynamics with quantum coherence},\ }\bibfield  {journal} {\bibinfo  {journal} {Nature Communications ~}\ }\textbf {\bibinfo {volume} {6}},\ \href {https://doi.org/10.1038/ncomms8689} {10.1038/ncomms8689} (\bibinfo {year} {2015})\BibitemShut {NoStop}%
\bibitem [{\citenamefont {Chitambar}\ \emph {et~al.}(2014)\citenamefont {Chitambar}, \citenamefont {Leung}, \citenamefont {Mančinska}, \citenamefont {Ozols},\ and\ \citenamefont {Winter}}]{Chitambar2014}%
  \BibitemOpen
  \bibfield  {author} {\bibinfo {author} {\bibfnamefont {E.}~\bibnamefont {Chitambar}}, \bibinfo {author} {\bibfnamefont {D.}~\bibnamefont {Leung}}, \bibinfo {author} {\bibfnamefont {L.}~\bibnamefont {Mančinska}}, \bibinfo {author} {\bibfnamefont {M.}~\bibnamefont {Ozols}},\ and\ \bibinfo {author} {\bibfnamefont {A.}~\bibnamefont {Winter}},\ }\bibfield  {title} {\bibinfo {title} {Everything you always wanted to know about locc (but were afraid to ask)},\ }\href {https://doi.org/10.1007/s00220-014-1953-9} {\bibfield  {journal} {\bibinfo  {journal} {Communications in Mathematical Physics}\ }\textbf {\bibinfo {volume} {328}},\ \bibinfo {pages} {303–326} (\bibinfo {year} {2014})}\BibitemShut {NoStop}%
\bibitem [{\citenamefont {Bennett}\ \emph {et~al.}(1993{\natexlab{b}})\citenamefont {Bennett}, \citenamefont {Brassard}, \citenamefont {Crépeau}, \citenamefont {Jozsa}, \citenamefont {Peres},\ and\ \citenamefont {Wootters}}]{Bennett1993}%
  \BibitemOpen
  \bibfield  {author} {\bibinfo {author} {\bibfnamefont {C.~H.}\ \bibnamefont {Bennett}}, \bibinfo {author} {\bibfnamefont {G.}~\bibnamefont {Brassard}}, \bibinfo {author} {\bibfnamefont {C.}~\bibnamefont {Crépeau}}, \bibinfo {author} {\bibfnamefont {R.}~\bibnamefont {Jozsa}}, \bibinfo {author} {\bibfnamefont {A.}~\bibnamefont {Peres}},\ and\ \bibinfo {author} {\bibfnamefont {W.~K.}\ \bibnamefont {Wootters}},\ }\bibfield  {title} {\bibinfo {title} {{Teleporting an unknown quantum state via dual classical and Einstein-Podolsky-Rosen channels}},\ }\href {https://doi.org/10.1103/physrevlett.70.1895} {\bibfield  {journal} {\bibinfo  {journal} {Physical Review Letters}\ }\textbf {\bibinfo {volume} {70}},\ \bibinfo {pages} {1895–1899} (\bibinfo {year} {1993}{\natexlab{b}})}\BibitemShut {NoStop}%
\bibitem [{\citenamefont {Bennett}\ and\ \citenamefont {Wiesner}(1992)}]{Bennett1992}%
  \BibitemOpen
  \bibfield  {author} {\bibinfo {author} {\bibfnamefont {C.~H.}\ \bibnamefont {Bennett}}\ and\ \bibinfo {author} {\bibfnamefont {S.~J.}\ \bibnamefont {Wiesner}},\ }\bibfield  {title} {\bibinfo {title} {{Communication via one- and two-particle operators on Einstein-Podolsky-Rosen states}},\ }\href {https://doi.org/10.1103/physrevlett.69.2881} {\bibfield  {journal} {\bibinfo  {journal} {Physical Review Letters}\ }\textbf {\bibinfo {volume} {69}},\ \bibinfo {pages} {2881–2884} (\bibinfo {year} {1992})}\BibitemShut {NoStop}%
\bibitem [{\citenamefont {Nielsen}(1999)}]{Nielsen1999}%
  \BibitemOpen
  \bibfield  {author} {\bibinfo {author} {\bibfnamefont {M.~A.}\ \bibnamefont {Nielsen}},\ }\bibfield  {title} {\bibinfo {title} {Conditions for a class of entanglement transformations},\ }\href {https://doi.org/10.1103/PhysRevLett.83.436} {\bibfield  {journal} {\bibinfo  {journal} {Phys. Rev. Lett.}\ }\textbf {\bibinfo {volume} {83}},\ \bibinfo {pages} {436} (\bibinfo {year} {1999})}\BibitemShut {NoStop}%
\bibitem [{\citenamefont {Vidal}(1999)}]{Vidal1999}%
  \BibitemOpen
  \bibfield  {author} {\bibinfo {author} {\bibfnamefont {G.}~\bibnamefont {Vidal}},\ }\bibfield  {title} {\bibinfo {title} {Entanglement of pure states for a single copy},\ }\href {https://doi.org/10.1103/PhysRevLett.83.1046} {\bibfield  {journal} {\bibinfo  {journal} {Phys. Rev. Lett.}\ }\textbf {\bibinfo {volume} {83}},\ \bibinfo {pages} {1046} (\bibinfo {year} {1999})}\BibitemShut {NoStop}%
\bibitem [{\citenamefont {Li}\ and\ \citenamefont {Ng}(2013)}]{Li2013}%
  \BibitemOpen
  \bibfield  {author} {\bibinfo {author} {\bibfnamefont {W.}~\bibnamefont {Li}}\ and\ \bibinfo {author} {\bibfnamefont {M.~K.}\ \bibnamefont {Ng}},\ }\bibfield  {title} {\bibinfo {title} {On the limiting probability distribution of a transition probability tensor},\ }\href {https://doi.org/10.1080/03081087.2013.777436} {\bibfield  {journal} {\bibinfo  {journal} {Linear and Multilinear Algebra}\ }\textbf {\bibinfo {volume} {62}},\ \bibinfo {pages} {362–385} (\bibinfo {year} {2013})}\BibitemShut {NoStop}%
\bibitem [{\citenamefont {Hu}\ and\ \citenamefont {Qi}(2014)}]{Hu2014}%
  \BibitemOpen
  \bibfield  {author} {\bibinfo {author} {\bibfnamefont {S.}~\bibnamefont {Hu}}\ and\ \bibinfo {author} {\bibfnamefont {L.}~\bibnamefont {Qi}},\ }\bibfield  {title} {\bibinfo {title} {Convergence of a second order markov chain},\ }\href {https://doi.org/10.1016/j.amc.2014.05.011} {\bibfield  {journal} {\bibinfo  {journal} {Applied Mathematics and Computation}\ }\textbf {\bibinfo {volume} {241}},\ \bibinfo {pages} {183–192} (\bibinfo {year} {2014})}\BibitemShut {NoStop}%
\bibitem [{\citenamefont {Liu}\ \emph {et~al.}(2019)\citenamefont {Liu}, \citenamefont {Li},\ and\ \citenamefont {Vong}}]{Liu2019}%
  \BibitemOpen
  \bibfield  {author} {\bibinfo {author} {\bibfnamefont {D.}~\bibnamefont {Liu}}, \bibinfo {author} {\bibfnamefont {W.}~\bibnamefont {Li}},\ and\ \bibinfo {author} {\bibfnamefont {S.}~\bibnamefont {Vong}},\ }\bibfield  {title} {\bibinfo {title} {Relaxation methods for solving the tensor equation arising from the higher‐order markov chains},\ }\bibfield  {journal} {\bibinfo  {journal} {Numerical Linear Algebra with Applications}\ }\textbf {\bibinfo {volume} {26}},\ \href {https://doi.org/10.1002/nla.2260} {10.1002/nla.2260} (\bibinfo {year} {2019})\BibitemShut {NoStop}%
\bibitem [{\citenamefont {{\.Z}yczkowski}\ \emph {et~al.}(2003)\citenamefont {{\.Z}yczkowski}, \citenamefont {Kuś}, \citenamefont {Słomczyński},\ and\ \citenamefont {Sommers}}]{KZ_2003}%
  \BibitemOpen
  \bibfield  {author} {\bibinfo {author} {\bibfnamefont {K.}~\bibnamefont {{\.Z}yczkowski}}, \bibinfo {author} {\bibfnamefont {M.}~\bibnamefont {Kuś}}, \bibinfo {author} {\bibfnamefont {W.}~\bibnamefont {Słomczyński}},\ and\ \bibinfo {author} {\bibfnamefont {H.-J.}\ \bibnamefont {Sommers}},\ }\bibfield  {title} {\bibinfo {title} {Random unistochastic matrices},\ }\href {https://doi.org/10.1088/0305-4470/36/12/333} {\bibfield  {journal} {\bibinfo  {journal} {Journal of Physics A: Mathematical and General}\ }\textbf {\bibinfo {volume} {36}},\ \bibinfo {pages} {3425} (\bibinfo {year} {2003})}\BibitemShut {NoStop}%
\bibitem [{\citenamefont {Korzekwa}\ \emph {et~al.}(2018)\citenamefont {Korzekwa}, \citenamefont {Czachórski}, \citenamefont {Puchała},\ and\ \citenamefont {Życzkowski}}]{korzekwa2018coherifying}%
  \BibitemOpen
  \bibfield  {author} {\bibinfo {author} {\bibfnamefont {K.}~\bibnamefont {Korzekwa}}, \bibinfo {author} {\bibfnamefont {S.}~\bibnamefont {Czachórski}}, \bibinfo {author} {\bibfnamefont {Z.}~\bibnamefont {Puchała}},\ and\ \bibinfo {author} {\bibfnamefont {K.}~\bibnamefont {Życzkowski}},\ }\bibfield  {title} {\bibinfo {title} {Coherifying quantum channels},\ }\href {https://doi.org/10.1088/1367-2630/aaaff3} {\bibfield  {journal} {\bibinfo  {journal} {New Journal of Physics}\ }\textbf {\bibinfo {volume} {20}},\ \bibinfo {pages} {043028} (\bibinfo {year} {2018})}\BibitemShut {NoStop}%
\bibitem [{\citenamefont {Aniello}(2019)}]{Aniello2019}%
  \BibitemOpen
  \bibfield  {author} {\bibinfo {author} {\bibfnamefont {P.}~\bibnamefont {Aniello}},\ }\bibfield  {title} {\bibinfo {title} {Covariant stochastic products of quantum states},\ }\href {https://doi.org/10.1088/1742-6596/1416/1/012002} {\bibfield  {journal} {\bibinfo  {journal} {Journal of Physics: Conference Series}\ }\textbf {\bibinfo {volume} {1416}},\ \bibinfo {pages} {012002} (\bibinfo {year} {2019})}\BibitemShut {NoStop}%
\bibitem [{\citenamefont {Bistro\'{n}}\ \emph {et~al.}(2023{\natexlab{a}})\citenamefont {Bistro\'{n}}, \citenamefont {\'{S}mia{\l}ek},\ and\ \citenamefont {\.Zyczkowski}}]{bistron2023tristochastic}%
  \BibitemOpen
  \bibfield  {author} {\bibinfo {author} {\bibfnamefont {R.}~\bibnamefont {Bistro\'{n}}}, \bibinfo {author} {\bibfnamefont {W.}~\bibnamefont {\'{S}mia{\l}ek}},\ and\ \bibinfo {author} {\bibfnamefont {K.}~\bibnamefont {\.Zyczkowski}},\ }\bibfield  {title} {\bibinfo {title} {Tristochastic operations and products of quantum states},\ }\href {https://doi.org/10.1088/1751-8121/acff9d} {\bibfield  {journal} {\bibinfo  {journal} {J. Phys. A}\ }\textbf {\bibinfo {volume} {56}},\ \bibinfo {pages} {455301} (\bibinfo {year} {2023}{\natexlab{a}})}\BibitemShut {NoStop}%
\bibitem [{\citenamefont {Bistro\'{n}}\ \emph {et~al.}(2023{\natexlab{b}})\citenamefont {Bistro\'{n}}, \citenamefont {Czartowski},\ and\ \citenamefont {Życzkowski}}]{RBJCKZ24}%
  \BibitemOpen
  \bibfield  {author} {\bibinfo {author} {\bibfnamefont {R.}~\bibnamefont {Bistro\'{n}}}, \bibinfo {author} {\bibfnamefont {J.}~\bibnamefont {Czartowski}},\ and\ \bibinfo {author} {\bibfnamefont {K.}~\bibnamefont {Życzkowski}},\ }\bibfield  {title} {\bibinfo {title} {Genuinely quantum families of 2-unitary matrices},\ }\href {https://arxiv.org/abs/2312.17719} {\bibfield  {journal} {\bibinfo  {journal} {arXiv preprint arXiv: 2312.17719}\ } (\bibinfo {year} {2023}{\natexlab{b}})}\BibitemShut {NoStop}%
\bibitem [{\citenamefont {Chang}\ \emph {et~al.}(2016)\citenamefont {Chang}, \citenamefont {Paksoy},\ and\ \citenamefont {Zhang}}]{stoch_multistoch}%
  \BibitemOpen
  \bibfield  {author} {\bibinfo {author} {\bibfnamefont {H.}~\bibnamefont {Chang}}, \bibinfo {author} {\bibfnamefont {V.~E.}\ \bibnamefont {Paksoy}},\ and\ \bibinfo {author} {\bibfnamefont {F.}~\bibnamefont {Zhang}},\ }\bibfield  {title} {\bibinfo {title} {Polytopes of stochastic tensors},\ }\href {https://doi.org/10.1215/20088752-3605195} {\bibfield  {journal} {\bibinfo  {journal} {Ann. Funct. Anal.}\ }\textbf {\bibinfo {volume} {7}},\ \bibinfo {pages} {386} (\bibinfo {year} {2016})}\BibitemShut {NoStop}%
\bibitem [{\citenamefont {Birkhoff}(1946)}]{Birkhoff_thm}%
  \BibitemOpen
  \bibfield  {author} {\bibinfo {author} {\bibfnamefont {G.}~\bibnamefont {Birkhoff}},\ }\bibfield  {title} {\bibinfo {title} {Tres observaciones sobre el algebra lineal},\ }\href@noop {} {\bibfield  {journal} {\bibinfo  {journal} {Univ. Nac. Tacuman, Rev. Ser. A}\ }\textbf {\bibinfo {volume} {5}},\ \bibinfo {pages} {147–151} (\bibinfo {year} {1946})}\BibitemShut {NoStop}%
\bibitem [{\citenamefont {Cui}\ \emph {et~al.}(2014)\citenamefont {Cui}, \citenamefont {Li},\ and\ \citenamefont {Ng}}]{Birkhoff_tensors}%
  \BibitemOpen
  \bibfield  {author} {\bibinfo {author} {\bibfnamefont {L.-B.}\ \bibnamefont {Cui}}, \bibinfo {author} {\bibfnamefont {W.}~\bibnamefont {Li}},\ and\ \bibinfo {author} {\bibfnamefont {M.~K.}\ \bibnamefont {Ng}},\ }\bibfield  {title} {\bibinfo {title} {{Birkhoff--von Neumann Theorem for Multistochastic Tensors}},\ }\href {https://doi.org/10.1137/120896499} {\bibfield  {journal} {\bibinfo  {journal} {SIAM Journal on Matrix Analysis and Applications}\ }\textbf {\bibinfo {volume} {35}},\ \bibinfo {pages} {956} (\bibinfo {year} {2014})}\BibitemShut {NoStop}%
\bibitem [{\citenamefont {Watanabe}\ and\ \citenamefont {Takagi}(2024)}]{Watanabe2024}%
  \BibitemOpen
  \bibfield  {author} {\bibinfo {author} {\bibfnamefont {K.}~\bibnamefont {Watanabe}}\ and\ \bibinfo {author} {\bibfnamefont {R.}~\bibnamefont {Takagi}},\ }\bibfield  {title} {\bibinfo {title} {Black box work extraction and composite hypothesis testing},\ }\bibfield  {journal} {\bibinfo  {journal} {Physical Review Letters}\ }\textbf {\bibinfo {volume} {133}},\ \href {https://doi.org/10.1103/physrevlett.133.250401} {10.1103/physrevlett.133.250401} (\bibinfo {year} {2024})\BibitemShut {NoStop}%
\bibitem [{\citenamefont {Landauer}(2000)}]{Landauer}%
  \BibitemOpen
  \bibfield  {author} {\bibinfo {author} {\bibfnamefont {R.}~\bibnamefont {Landauer}},\ }\bibfield  {title} {\bibinfo {title} {Irreversibility and heat generation in the computing process},\ }\href {https://doi.org/10.1147/rd.441.0261} {\bibfield  {journal} {\bibinfo  {journal} {IBM Journal of Research and Development}\ }\textbf {\bibinfo {volume} {44}},\ \bibinfo {pages} {261} (\bibinfo {year} {2000})}\BibitemShut {NoStop}%
\bibitem [{\citenamefont {Bennett}(2003)}]{BENNETT2003501}%
  \BibitemOpen
  \bibfield  {author} {\bibinfo {author} {\bibfnamefont {C.~H.}\ \bibnamefont {Bennett}},\ }\bibfield  {title} {\bibinfo {title} {{Notes on Landauer's principle, reversible computation, and Maxwell's Demon}},\ }\href {https://doi.org/https://doi.org/10.1016/S1355-2198(03)00039-X} {\bibfield  {journal} {\bibinfo  {journal} {Studies in History and Philosophy of Science Part B: Studies in History and Philosophy of Modern Physics}\ }\textbf {\bibinfo {volume} {34}},\ \bibinfo {pages} {501} (\bibinfo {year} {2003})},\ \bibinfo {note} {quantum Information and Computation}\BibitemShut {NoStop}%
\bibitem [{\citenamefont {Czartowski}\ and\ \citenamefont {Bistron\'{n}}(2025)}]{Czartowski2025QuantumLTOCC}%
  \BibitemOpen
  \bibfield  {author} {\bibinfo {author} {\bibfnamefont {J.}~\bibnamefont {Czartowski}}\ and\ \bibinfo {author} {\bibfnamefont {R.}~\bibnamefont {Bistron\'{n}}},\ }\bibfield  {title} {\bibinfo {title} {Local quantum thermal operations and classical communication}} (\bibinfo {year} {2025}),\ \bibinfo {note} {manuscript in preparation}\BibitemShut {NoStop}%
\bibitem [{\citenamefont {Mazurek}\ and\ \citenamefont {Horodecki}(2018)}]{Mazurek_2018}%
  \BibitemOpen
  \bibfield  {author} {\bibinfo {author} {\bibfnamefont {P.}~\bibnamefont {Mazurek}}\ and\ \bibinfo {author} {\bibfnamefont {M.}~\bibnamefont {Horodecki}},\ }\bibfield  {title} {\bibinfo {title} {Decomposability and convex structure of thermal processes},\ }\href {https://doi.org/10.1088/1367-2630/aac057} {\bibfield  {journal} {\bibinfo  {journal} {New Journal of Physics}\ }\textbf {\bibinfo {volume} {20}},\ \bibinfo {pages} {053040} (\bibinfo {year} {2018})}\BibitemShut {NoStop}%
\bibitem [{\citenamefont {Ke}\ \emph {et~al.}(2016)\citenamefont {Ke}, \citenamefont {Li},\ and\ \citenamefont {Xiao}}]{extremal_multistoch}%
  \BibitemOpen
  \bibfield  {author} {\bibinfo {author} {\bibfnamefont {R.}~\bibnamefont {Ke}}, \bibinfo {author} {\bibfnamefont {W.}~\bibnamefont {Li}},\ and\ \bibinfo {author} {\bibfnamefont {M.}~\bibnamefont {Xiao}},\ }\bibfield  {title} {\bibinfo {title} {Characterization of extreme points of multi-stochastic tensors},\ }\href {https://doi.org/10.1515/cmam-2016-0005} {\bibfield  {journal} {\bibinfo  {journal} {Computational Methods in Applied Mathematics}\ }\textbf {\bibinfo {volume} {16}},\ \bibinfo {pages} {459–474} (\bibinfo {year} {2016})}\BibitemShut {NoStop}%
\bibitem [{\citenamefont {Bistro\'{n}}()}]{Github}%
  \BibitemOpen
  \bibfield  {author} {\bibinfo {author} {\bibfnamefont {R.}~\bibnamefont {Bistro\'{n}}},\ }\href@noop {} {\bibinfo {title} {Extremal bithermal tensors}},\ \bibinfo {note} {\url{https://github.com/RafalBistron/Extremal_bithermal_tensors}}\BibitemShut {NoStop}%
\bibitem [{\citenamefont {Marshall}\ \emph {et~al.}(2011)\citenamefont {Marshall}, \citenamefont {Olkin},\ and\ \citenamefont {Arnold}}]{Majorization_book}%
  \BibitemOpen
  \bibfield  {author} {\bibinfo {author} {\bibfnamefont {A.~W.}\ \bibnamefont {Marshall}}, \bibinfo {author} {\bibfnamefont {I.}~\bibnamefont {Olkin}},\ and\ \bibinfo {author} {\bibfnamefont {B.~C.}\ \bibnamefont {Arnold}},\ }\href {https://doi.org/10.1007/978-0-387-68276-1} {\emph {\bibinfo {title} {Inequalities: Theory of Majorization and Its Applications}}}\ (\bibinfo  {publisher} {Springer New York},\ \bibinfo {year} {2011})\BibitemShut {NoStop}%
\bibitem [{\citenamefont {Clauser}\ \emph {et~al.}(1969)\citenamefont {Clauser}, \citenamefont {Horne}, \citenamefont {Shimony},\ and\ \citenamefont {Holt}}]{CHSH1969}%
  \BibitemOpen
  \bibfield  {author} {\bibinfo {author} {\bibfnamefont {J.~F.}\ \bibnamefont {Clauser}}, \bibinfo {author} {\bibfnamefont {M.~A.}\ \bibnamefont {Horne}}, \bibinfo {author} {\bibfnamefont {A.}~\bibnamefont {Shimony}},\ and\ \bibinfo {author} {\bibfnamefont {R.~A.}\ \bibnamefont {Holt}},\ }\bibfield  {title} {\bibinfo {title} {Proposed experiment to test local hidden-variable theories},\ }\href {https://doi.org/10.1103/PhysRevLett.23.880} {\bibfield  {journal} {\bibinfo  {journal} {Phys. Rev. Lett.}\ }\textbf {\bibinfo {volume} {23}},\ \bibinfo {pages} {880} (\bibinfo {year} {1969})}\BibitemShut {NoStop}%
\bibitem [{\citenamefont {Cirel’son}(1980)}]{Cirelson1980}%
  \BibitemOpen
  \bibfield  {author} {\bibinfo {author} {\bibfnamefont {B.~S.}\ \bibnamefont {Cirel’son}},\ }\bibfield  {title} {\bibinfo {title} {Quantum generalizations of bell’s inequality},\ }\href {https://doi.org/10.1007/bf00417500} {\bibfield  {journal} {\bibinfo  {journal} {Letters in Mathematical Physics}\ }\textbf {\bibinfo {volume} {4}},\ \bibinfo {pages} {93–100} (\bibinfo {year} {1980})}\BibitemShut {NoStop}%
\bibitem [{\citenamefont {Wolf}\ \emph {et~al.}(2009)\citenamefont {Wolf}, \citenamefont {Perez-Garcia},\ and\ \citenamefont {Fernandez}}]{Wolf2009}%
  \BibitemOpen
  \bibfield  {author} {\bibinfo {author} {\bibfnamefont {M.~M.}\ \bibnamefont {Wolf}}, \bibinfo {author} {\bibfnamefont {D.}~\bibnamefont {Perez-Garcia}},\ and\ \bibinfo {author} {\bibfnamefont {C.}~\bibnamefont {Fernandez}},\ }\bibfield  {title} {\bibinfo {title} {Measurements incompatible in quantum theory cannot be measured jointly in any other no-signaling theory},\ }\href {https://doi.org/10.1103/PhysRevLett.103.230402} {\bibfield  {journal} {\bibinfo  {journal} {Phys. Rev. Lett.}\ }\textbf {\bibinfo {volume} {103}},\ \bibinfo {pages} {230402} (\bibinfo {year} {2009})}\BibitemShut {NoStop}%
\bibitem [{\citenamefont {Lostaglio}\ \emph {et~al.}(2015)\citenamefont {Lostaglio}, \citenamefont {Korzekwa}, \citenamefont {Jennings},\ and\ \citenamefont {Rudolph}}]{Lostaglio2015QCoherence}%
  \BibitemOpen
  \bibfield  {author} {\bibinfo {author} {\bibfnamefont {M.}~\bibnamefont {Lostaglio}}, \bibinfo {author} {\bibfnamefont {K.}~\bibnamefont {Korzekwa}}, \bibinfo {author} {\bibfnamefont {D.}~\bibnamefont {Jennings}},\ and\ \bibinfo {author} {\bibfnamefont {T.}~\bibnamefont {Rudolph}},\ }\bibfield  {title} {\bibinfo {title} {Quantum coherence, time-translation symmetry, and thermodynamics},\ }\href {https://doi.org/10.1103/PhysRevX.5.021001} {\bibfield  {journal} {\bibinfo  {journal} {Phys. Rev. X}\ }\textbf {\bibinfo {volume} {5}},\ \bibinfo {pages} {021001} (\bibinfo {year} {2015})}\BibitemShut {NoStop}%
\bibitem [{\citenamefont {de~Oliveira~Junior}\ \emph {et~al.}(2024{\natexlab{a}})\citenamefont {de~Oliveira~Junior}, \citenamefont {Son}, \citenamefont {Czartowski},\ and\ \citenamefont {Ng}}]{oliveira2024entanglement}%
  \BibitemOpen
  \bibfield  {author} {\bibinfo {author} {\bibfnamefont {A.}~\bibnamefont {de~Oliveira~Junior}}, \bibinfo {author} {\bibfnamefont {J.}~\bibnamefont {Son}}, \bibinfo {author} {\bibfnamefont {J.}~\bibnamefont {Czartowski}},\ and\ \bibinfo {author} {\bibfnamefont {N.~H.~Y.}\ \bibnamefont {Ng}},\ }\bibfield  {title} {\bibinfo {title} {Entanglement generation from athermality},\ }\href {https://doi.org/10.1103/PhysRevResearch.6.033236} {\bibfield  {journal} {\bibinfo  {journal} {Phys. Rev. Res.}\ }\textbf {\bibinfo {volume} {6}},\ \bibinfo {pages} {033236} (\bibinfo {year} {2024}{\natexlab{a}})}\BibitemShut {NoStop}%
\bibitem [{\citenamefont {de~Oliveira~Junior}\ \emph {et~al.}(2024{\natexlab{b}})\citenamefont {de~Oliveira~Junior}, \citenamefont {Brask},\ and\ \citenamefont {Lipka-Bartosik}}]{Oliveira2024heatEntanglementWit}%
  \BibitemOpen
  \bibfield  {author} {\bibinfo {author} {\bibfnamefont {A.}~\bibnamefont {de~Oliveira~Junior}}, \bibinfo {author} {\bibfnamefont {J.~B.}\ \bibnamefont {Brask}},\ and\ \bibinfo {author} {\bibfnamefont {P.}~\bibnamefont {Lipka-Bartosik}},\ }\bibfield  {title} {\bibinfo {title} {Heat as a witness of quantum properties},\ }\href {https://arxiv.org/abs/2408.06418} {\bibfield  {journal} {\bibinfo  {journal} {arXiv preprint arXiv: 2408.06418}\ } (\bibinfo {year} {2024}{\natexlab{b}})}\BibitemShut {NoStop}%
\bibitem [{\citenamefont {Bohr~Brask}\ \emph {et~al.}(2015)\citenamefont {Bohr~Brask}, \citenamefont {Haack}, \citenamefont {Brunner},\ and\ \citenamefont {Huber}}]{BohrBrask2015}%
  \BibitemOpen
  \bibfield  {author} {\bibinfo {author} {\bibfnamefont {J.}~\bibnamefont {Bohr~Brask}}, \bibinfo {author} {\bibfnamefont {G.}~\bibnamefont {Haack}}, \bibinfo {author} {\bibfnamefont {N.}~\bibnamefont {Brunner}},\ and\ \bibinfo {author} {\bibfnamefont {M.}~\bibnamefont {Huber}},\ }\bibfield  {title} {\bibinfo {title} {Autonomous quantum thermal machine for generating steady-state entanglement},\ }\href {https://doi.org/10.1088/1367-2630/17/11/113029} {\bibfield  {journal} {\bibinfo  {journal} {New Journal of Physics}\ }\textbf {\bibinfo {volume} {17}},\ \bibinfo {pages} {113029} (\bibinfo {year} {2015})}\BibitemShut {NoStop}%
\bibitem [{\citenamefont {Khandelwal}\ \emph {et~al.}(2024)\citenamefont {Khandelwal}, \citenamefont {Annby-Andersson}, \citenamefont {Diotallevi}, \citenamefont {Wacker},\ and\ \citenamefont {Tavakoli}}]{khandelwal2024maximal}%
  \BibitemOpen
  \bibfield  {author} {\bibinfo {author} {\bibfnamefont {S.}~\bibnamefont {Khandelwal}}, \bibinfo {author} {\bibfnamefont {B.}~\bibnamefont {Annby-Andersson}}, \bibinfo {author} {\bibfnamefont {G.~F.}\ \bibnamefont {Diotallevi}}, \bibinfo {author} {\bibfnamefont {A.}~\bibnamefont {Wacker}},\ and\ \bibinfo {author} {\bibfnamefont {A.}~\bibnamefont {Tavakoli}},\ }\bibfield  {title} {\bibinfo {title} {Maximal steady-state entanglement in autonomous quantum thermal machines},\ }\href {https://arxiv.org/abs/2401.01776} {\bibfield  {journal} {\bibinfo  {journal} {arXiv preprint arXiv: 2401.01776}\ } (\bibinfo {year} {2024})}\BibitemShut {NoStop}%
\bibitem [{\citenamefont {Levin}\ \emph {et~al.}(2008)\citenamefont {Levin}, \citenamefont {Peres},\ and\ \citenamefont {Wilmer}}]{levin2008markov}%
  \BibitemOpen
  \bibfield  {author} {\bibinfo {author} {\bibfnamefont {D.}~\bibnamefont {Levin}}, \bibinfo {author} {\bibfnamefont {Y.}~\bibnamefont {Peres}},\ and\ \bibinfo {author} {\bibfnamefont {E.}~\bibnamefont {Wilmer}},\ }\href {http://pages.uoregon.edu/dlevin/MARKOV/} {\emph {\bibinfo {title} {Markov Chains and Mixing Times}}}\ (\bibinfo  {publisher} {American Mathematical Soc.},\ \bibinfo {year} {2008})\BibitemShut {NoStop}%
\end{thebibliography}%

\end{document}